\newcommand{\coherent}[1]{#1\textrm{-internally coherent}}
\newcommand{\regular}[1]{#1\textrm{-regular}}
\newcommand{\internal}[1]{\mathrm{int}\left( #1 \right)}
\newcommand{\Compl}[2]{\mathrm{Compl}_{#1}\bigl(#2\bigr)}
\numberwithin{equation}{section}
\numberwithin{figure}{section}
\newcommand{\stack}[1]{\operatorname{stack}\left(#1\right)}
\newcommand{\packing}[1]{#1\textrm{-packing}}
\newcommand{\partitioning}[1]{#1\textrm{-partition}}
\newcommand{\multipacking}[2]{\big(#1, #2\big)\textrm{-multi-packing}}
\newcommand{\multipartitioning}[2]{\big(#1, #2\big)\textrm{-multi-partition}}
\newcommand{\singlepacking}[2]{\left(#1, #2\right)\textrm{-single-packing}}
\newcommand{\singlepartitioning}[2]{\left(#1, #2\right)\textrm{-single-partition}}
\newcommand{\multicliquepacking}[2]{\multipacking{#1}{K_{#2}}}
\newcommand{\singlecliquepacking}[2]{\singlepacking{#1}{K_{#2}}}
\newcommand{\partialpacking}[1]{\textrm{partial }#1{-Packing}}
\newcommand{\graphpackprob}[1]{\textnormal{\texttt{Generalized Graph Packing}}(#1)}
\newcommand{\multicliquepackprob}[1]{\textnormal{\texttt{Multi-Clique Packing}}(#1)}
\newcommand{\singlecliquepackprob}[1]{\textnormal{\texttt{Single-Clique Packing}}(#1)}
\newcommand{\graphpartprob}[1]{\textnormal{\texttt{Generalized Graph Partitioning}}(#1)}
\newcommand{\multicliquepartprob}[1]{\textnormal{\texttt{Multi-Clique Partitioning}}(#1)}
\newcommand{\singlecliquepartprob}[1]{\textnormal{\texttt{Single-Clique Partitioning}}(#1)}
\newcommand{\kkclique}{k \times k\textsc{ Permutation Clique}}
\newcommand{\kkindset}{k \times k\textsc{ Permutation Independent Set}}
\newcommand{\arb}[1]{arb-$\left( #1 \right) $}
\newcommand{\dist}[1]{dist-$\left( #1 \right) $}
\newcommand{\strict}[1]{strict-$\left( #1 \right) $}
\newcommand{\generaltype}[1]{\mathrm{TYPE}_{#1}}
\newcommand{\cliquetype}[1]{\mathrm{TYPE}_{#1}}
\newcommand{\partialpackings}[2]{\mathrm{PP}(#1, #2)}
\newcommand{\numfullcopies}[1]{\nu(#1)}
\crefname{claim}{Claim}{Claims}
\crefname{enumi}{Item}{Items} % For \cref
\Crefname{enumi}{Item}{Items} % For \Cref
\crefname{equation}{}{} %so that \eqref{eq:A} is the same as \cref{eq:A}
\newcommand{\from}{\colon}% Colon for definition of functions
\newcommand{\im}{\mathrm{im}}
\newcommand{\dom}[1]{\mathrm{dom}\left( #1 \right)}
\DeclarePairedDelimiter{\abs}{\lvert}{\rvert}
\DeclarePairedDelimiter{\iverson}{\llbracket}{\rrbracket}
\newcommand{\border}[2]{\mathrm{Border}^{#2}_{#1}}
\newcommand{\partition}{\mathbf{part}}
\newcommand{\upvertex}{\uparrow}
\newcommand{\downvertex}{\downarrow}
\newcommand{\covernum}[1]{\#\mathrm{Cover}\left( #1 \right)}
\newcommand{\aleft}[1]{a_{\mathrm{left}}^{(#1)}}
\newcommand{\aright}[1]{a_{\mathrm{right}}^{(#1)}}
\newcommand{\aup}[1]{a_{\mathrm{top}}^{(#1)}}
\newcommand{\bleft}[1]{b_{\mathrm{left}}^{(#1)}}
\newcommand{\bright}[1]{b_{\mathrm{right}}^{(#1)}}
\newcommand{\coverrel}[2]{\mathrm{COVER}_{#1}^{#2}}
\newcommand{\eqrel}[2]{\mathrm{EQ}_{#1}^{#2}}
\newcommand{\cneqrel}[1]{\mathrm{NEQ}^{#1}}
\newcommand{\sumrel}[2]{\mathrm{SUM}_{#1}^{#2}}
\newcommand{\copyrel}{\mathrm{COPY}}
\newcommand{\neqgadget}{\mathrm{NEQ}}
\newcommand{\w}{\mathbf{w}}
\newcommand{\pw}{\mathbf{pw}}
\newcommand{\blocks}{\mathcal{B}_H}
\newcommand{\relheavy}[2]{\mathcal{S}^{#1}_{#2} }
\newcommand{\srelheavy}[1]{\mathcal{S}_{#1} }
\newcommand{\ppseth}{$\pw$\textsc{-SETH}}
\newcommand{\seth}{\textsc{SETH}}
\newcommand{\ethh}{\textsc{ETH}}
\newcommand{\sat}{\textsc{SAT}}
\newcommand{\csp}{\textsc{CSP}}
\newcommand{\kcsp}[1]{$#1$-\textsc{CSP}}
\newcommand{\restr}{\mathbin{\downharpoonright}}
\newcommand{\incr}{\mathbin{\upharpoonright}}
\newcommand{\funcrem}[2]{{#1}\restr_{-#2}}
\newcommand{\funcrest}[2]{{#1}\restr_{#2}}
\newcommand{\funcadd}[3]{{#1}\incr_{[#2 \to #3]}}
\newcommand{\genupdintr}{\mathrm{UpdIntr}}
\newcommand{\genupdforgetone}{\mathrm{UpdForget1}}
\newcommand{\genupdforgettwo}{\mathrm{UpdForget2}}
\newcommand{\cliqueupdforget}{\mathrm{UpdForget}}
\let\eps\epsilon
\newcounter{openquestion}
\newcommand{\defproblem}[3]{
  \vspace{1mm}
\noindent\fbox{
  \begin{minipage}{0.96\textwidth}
  \begin{tabular*}{\textwidth}{@{\extracolsep{\fill}}lr} #1 \\ \end{tabular*}
  {\bf{Input:}} #2  \\
  {\bf{Question:}} #3
  \end{minipage}
  }
  \vspace{1mm}
}
\newcommand{\defproblemoutput}[3]{
  \vspace{1mm}
\noindent\fbox{
  \begin{minipage}{0.96\textwidth}
  \begin{tabular*}{\textwidth}{@{\extracolsep{\fill}}lr} #1 \\ \end{tabular*}
  {\bf{Input:}} #2  \\
  {\bf{Output:}} #3
  \end{minipage}
  }
  \vspace{1mm}
}
\title{Generalized Graph Packing Problems Parameterized by Treewidth} 
\author{Bar\i\c{s} Can Esmer}
  {CISPA Helmholtz Center for Information Security, Germany}
  {baris-can.esmer@cispa.de}
  {https://orcid.org/0000-0001-5694-1465}{The author is part of Saarbrücken Graduate School of Computer Science, Germany.}
\author{D\'aniel Marx}
  {CISPA Helmholtz Center for Information Security, Germany}
  {marx@cispa.de}
  {https://orcid.org/0000-0002-5686-8314}{}
\authorrunning{B.\,C.~Esmer, D.~Marx}  
\keywords{Graph Packing, Graph Partitioning, Parameterized Complexity, Treewidth, Pathwidth, pw-SETH, Single-Exponential Lower Bound, Slightly Superexponential Lower Bound}
\begin{document}

\maketitle

\begin{abstract}
	\textsc{$H$-Packing} is the problem of finding a maximum number of vertex-disjoint copies of $H$ in a given graph $G$. \textsc{$H$-Partition} is the special case of finding a set of vertex-disjoint copies that cover each vertex of $G$ exactly once. Our goal is to study these problems and some generalizations on bounded-treewidth graphs. The case of $H$ being a triangle is well understood: given a tree decomposition of $G$ having treewidth $\textup{tw}$, the \textup{$K_3$-Packing} problem can be solved in time $2^\textup{tw}\cdot n^{O(1)}$, while Lokshtanov et al.~[{\it ACM Transactions on Algorithms} 2018] showed, under the Strong Exponential-Time Hypothesis (SETH), that there is no $(2-\epsilon)^\textup{tw}\cdot n^{O(1)}$ algorithm for any $\epsilon>0$ even for \textup{$K_3$-Partition}. Similar results can be obtained for any other clique $K_d$ for $d\ge 3$. We provide generalizations in two directions:
\begin{itemize}
\item We consider a generalization of the problem where every vertex can be used at most $c$ times for some $c\ge 1$. When $H$ is any clique $K_d$ with $d\ge 3$, then we give upper and lower bounds showing that the optimal running time increases to $(c+1)^\textup{tw}\cdot n^{O(1)}$. We consider two variants depending on whether a copy of $H$ can be used multiple times in the packing.
\item If $H$ is not a clique, then the dependence of the running time on treewidth may not be even single exponential. Specifically, we show that if $H$ is any fixed graph where not every 2-connected component is a clique, then there is no $2^{o(\textup{tw}\log \textup{tw})}\cdot n^{O(1)}$ algorithm for \textsc{$H$-Partition}, assuming the Exponential-Time Hypothesis (ETH).
\end{itemize}

\end{abstract}

\section{Introduction}\label{sec:intro}
Parameterized complexity theory
has proven instrumental in systematically understanding the computational
complexity of various combinatorial problems under different parameterizations.
Parameterization by treewidth implies tractability for a large number of fundamental algorithmic problems. A prominent line of research has
emerged around classifying the complexity of classical NP-hard graph problems
under this parameterization framework \cite{lokshtanovKnownAlgorithmsGraphs2018,lampisPrimalPathwidthSETH2025}.

In their influential work, Lokshtanov et al.
\cite{lokshtanovKnownAlgorithmsGraphs2018} studied six classical combinatorial problems
for which parameterized algorithms are known where the running-time dependence on treewidth is single exponential. They showed that, under Strong Exponential
Time Hypothesis (SETH), the running times
are optimal in the base of the exponent. Following this work, significant efforts
have been devoted to generalizing and extending these results.
In particular, five out of six problems in \cite{lokshtanovKnownAlgorithmsGraphs2018}
have been put into a wider context and generalized to an infinite family of problems:  $q$-Coloring 
 was generalized into $H$-homomorphism problems \cite{canesmerFundamentalProblemsBoundedTreewidth2024,DBLP:journals/siamcomp/OkrasaR21,DBLP:conf/esa/OkrasaPR20}, Independent Set (equivalent to
Vertex Cover), MaxCut and Odd Cycle Transversal \cite{canesmerListHomomorphismsDeleting2024} into $H$-homomorphism deletion problem, and Dominating Set into general $(\sigma,\rho)$-dominating set problems \cite{fockeTightComplexityBounds2025}. Curticapean and Marx \cite{curticapean_tight_2016} showed tight lower bounds for the problem of counting perfect matchings, which was later extended into general factor problems. However, from the initial six problems studied by Lokshtanov et al.
\cite{lokshtanovKnownAlgorithmsGraphs2018}, the triangle packing problem has remained unaddressed by prior generalization efforts.

The $2^{tw} \cdot n^{\mathcal{O}\left(1\right)}$ running time for triangle packing stated in \cite{lokshtanovKnownAlgorithmsGraphs2018} is actually valid for any clique packing problem,
instead of triangle with just three vertices. In this paper, we investigate further generalizations of
the triangle packing problem. This line of research naturally gives rise to two
conceptually distinct directions:

\begin{enumerate}
	\item \textbf{Allowing vertices to be covered multiple times:} The
		motivation for this generalization stems naturally from closely
		related problems such as fractional packing, where vertices can
		inherently contribute fractionally or multiple times to
		different packings. 
		To illustrate, \cref{fig:triangle_partition} shows a graph that
		does not admit a triangle partition, yet allows a collection of
		triangles that cover each vertex exactly twice.
		Motivated by this observation, we formally define and explore a
		generalized packing problem, where each vertex of $G$ can be
		covered at most $c$ times, for any fixed integer $c \geq 1$.
		Unlike $c = 1$ case, this generalization demands careful
		consideration of the definition, specifically on whether 
		a copy of $H$ may appear multiple times in the solution.
		
	\item \textbf{Considering packings of arbitrary graphs:} The second
		natural direction involves generalizing triangle packing by
		replacing triangles with an arbitrary fixed graph $H$. This
		problem has already been studied in the literature under the
		name of $H$-partition in
		\cite{hellGeneralizedMatchingProblems1981}. Specifically, in
		\cite{hellGeneralizedMatchingProblems1981} the authors prove
		the NP-Hardness of the $H$-partition problem where $H$ has a
		connected component with at least three vertices. In this paper
		we let $H$ be a connected graph with at least three vertices
		and study the complexity of the $H$-partition
		problem parameterized by treewidth.
\end{enumerate}

\begin{figure}[htpb]
  \centering
  \begin{subfigure}[t]{0.48\textwidth}
    \centering
    \includegraphics[page=1, width=\textwidth]{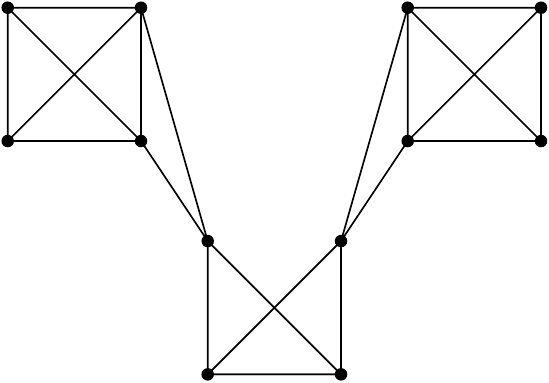}
    \caption{The graph under consideration.}
    \label{fig:triangle_partition_1}
  \end{subfigure}
  \hfill
  \begin{subfigure}[t]{0.48\textwidth}
    \centering
   \includegraphics[page=2, width=\textwidth]{figures/triangle_partition.pdf}
    \caption{Each color represents a triangle in the collection. Observe that each vertex is
    covered exactly twice.}
    \label{fig:triangle_partition_2}
  \end{subfigure}
  \caption{A graph that does not admit a triangle partition, but allows a collection of triangles such that each vertex is covered exactly twice.}
  \label{fig:triangle_partition}
\end{figure}

\subsection{Our Results}\label{sec:results}
Let $H$ be a fixed graph and $c \geq 1$ be an integer.
Given a graph $G$, a subgraph $Z$ of $G$ is called a copy of $H$
if $Z$ is isomorphic to $H$.
Moreover, if $Z$ is a copy of $H$ in $G$,
we say that $Z$ covers $v$ if $v \in V(Z)$.

\begin{definition}\label{definition:multi_packing}
	We say that a multiset (respectively, set) $\mathcal{S} = \Bigl\{ \bigl( V_1, E_1 \bigr) , \ldots, \bigl(V_k, E_k\bigr)\Bigr\}$ of subgraphs of $G$
	is a $\multipacking{c}{H}$ (respectively, $\singlepacking{c}{H}$) in $G$ if
	\begin{enumerate}
		\item each $(V_i, E_i)$ is isomorphic to $H$ for $1 \leq i \leq k$
		\item each vertex $v$ of $G$ is covered at most $c$ times by the subgraphs in $\mathcal{S}$.
	\end{enumerate}
	The collection $\mathcal{S}$ is called a $\multipartitioning{c}{H}$ (respectively, $\singlepartitioning{c}{H}$) of $G$ if each vertex $v \in V(G)$
	is covered exactly $c$ times.
\end{definition}

Observe that when $c$ is equal to $1$, two copies of $H$ are not allowed to have
any common vertices. Therefore, in this case, the notions of $\singlepacking{c}{H}$
and $\multipacking{c}{H}$ coincide, and we write $\packing{H}$ to simplify the notation.
We define $\partitioning{H}$ in a similar way.
The first problem we introduce, $\graphpackprob{H}$, asks the maximum size of an $\packing{H}$ in $G$.

\defproblemoutput{$\graphpackprob{H}$} 
{A graph $G$}
{The size of a largest $\packing{H}$ in $G$.}

For a fixed $H$, we show that $\graphpackprob{H}$
can be solved in time $2^{\mathcal{O}\left(w \cdot \log w\right)} \cdot n^{\mathcal{O}\left(1\right)}$
for graphs of treewidth $w$.

\begin{restatable}{theorem}{graphpackingarbitraryalgo}\label{theorem:graph_packing_arbitrary_algo}
	Let $H$ be an arbitrary graph such that it contains at least 3 vertices. Then, $\graphpackprob{H}$
	can be solved in time $2^{\mathcal{O}\left(w \cdot \log(w)\right)} \cdot n^{\mathcal{O}\left(1\right)}$
	for all $n$-vertex graphs $G$ where $w$ is the treewidth of $G$.
\end{restatable}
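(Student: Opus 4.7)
The plan is a dynamic program over a nice tree decomposition of width $O(w)$, which can be computed in time $2^{O(w)} \cdot n^{O(1)}$. At a node $t$ with bag $B_t$, a state records the combinatorial arrangement of partial copies of $H$ that cross the current separator: (i) a partial map $\sigma_t \colon B_t \to V(H)$ identifying which bag vertices are currently used and which role of $H$ they play; (ii) an equivalence relation on $\dom{\sigma_t}$ whose classes $E$ correspond to distinct partial copies, with $\sigma_t$ injective on each class; and (iii) for each class $E$ a set $R_E$ with $\sigma_t(E) \subseteq R_E \subseteq V(H)$ recording every role already committed inside that partial copy, including roles whose $G$-vertex has been forgotten. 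The DP value $\mathrm{val}(t,s)$ is the maximum number of already-completed copies of $H$ packable in the subtree rooted at $t$ subject to leaving the boundary in state $s$. The number of such states per bag is $2^{O(w \log w)}$: the equivalence relation on at most $w+1$ elements gives the Bell number $B_{w+1} = 2^{O(w \log w)}$, while $\sigma_t$ and the $R_E$'s contribute only $2^{O(w)}$ since $|V(H)|$ is a constant.

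The transitions follow the standard nice-decomposition pattern. At an \emph{introduce} node for $v$, each prior state is extended by leaving $v$ out, starting a new singleton class $\{v\}$ with any role, or adding $v$ to an existing class $E$ at a new role $h \notin R_E$; in the last case we check the $G$-edges to every $u \in E$ with $\sigma_t(u) \in N_H(h)$ and reject whenever $N_H(h) \cap (R_E \setminus \sigma_t(E)) \neq \emptyset$, since the latter would require edges to vertices already forgotten from $E$, which cannot exist by the tree-decomposition property. At a \emph{forget} node for $v$ we require $N_H(\sigma_t(v)) \subseteq R_E$ for $v$'s class $E$, then remove $v$; if the class becomes empty we complete the copy and increment the value by $1$ exactly when $R_E = V(H)$, and reject otherwise. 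At a \emph{join} node the two child states must agree on $\sigma_t$; the merged state is obtained by taking the transitive closure of the union of the two equivalence relations, unioning the $R_E$'s over each merged group, rejecting whenever injectivity of $\sigma_t$ on some merged class would fail, and summing the two child values. The answer is read off at the root from the empty state.

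The main obstacle is the correctness of the join step: two children may supply partial copies that the merged state fuses into a single copy, and one must argue that no required $H$-edge between two $G$-vertices already forgotten on opposite sides has been silently skipped. The crucial invariant, ensured by the forget rule, is that whenever $v$ is forgotten from a class $E$, every $H$-neighbor of $\sigma_t(v)$ already lies in $R_E$. By the tree-decomposition property, any $G$-vertex later assigned a role $h \in N_H(\sigma_t(v))$ in a partial copy that will eventually merge with $E$ must share a bag with $v$, hence must live in the same subtree that already forgot $v$; therefore a join cannot introduce a fresh role adjacent in $H$ to an already-forgotten role of $E$, and the introduce-time edge checks handle the remaining cases. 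Consequently every $H$-edge of every completed copy is realized in $G$, and since each of the $O(n)$ nice-decomposition nodes performs only $2^{O(w \log w)} \cdot n^{O(1)}$ work per transition, the total running time is $2^{O(w \log w)} \cdot n^{O(1)}$, as claimed.
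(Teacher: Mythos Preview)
Your approach is essentially the paper's: your state $(\sigma_t,\sim,\{R_E\})$ encodes exactly the paper's types, with $\sigma_t(E)$ playing the role of $Z_j$, $R_E\setminus\sigma_t(E)$ that of $\phi_j^{-1}(\downarrow)$, and $V(H)\setminus R_E$ that of $\phi_j^{-1}(\uparrow)$. The leaf, introduce and forget transitions match.

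There is, however, a genuine gap at the \emph{join}. You only ``reject whenever injectivity of $\sigma_t$ on some merged class would fail'', i.e.\ you police the bag roles only. You must also reject when the \emph{forgotten}-role sets overlap: if some role $r$ lies in $R_E^{\mathrm{left}}\setminus\sigma_t(E)$ and also in $R_E^{\mathrm{right}}\setminus\sigma_t(E)$ for a class being merged, then two distinct $G$-vertices (one per side) carry role $r$, and the completed object uses more than $|V(H)|$ vertices and is not a priori a copy of $H$. The paper enforces exactly this via its compatibility condition $\{\phi_i^{1}(u),\phi_i^{2}(u)\}\neq\{\downarrow\}$ for all $u\in V(H)$; it also requires the two children to carry the \emph{same} partition of the bag (your transitive-closure merge is strictly more permissive and creates the same issue across two left-side classes fused through a right-side class).

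Your edge-correctness paragraph does not rescue this. You argue that ``any $G$-vertex later assigned a role $h\in N_H(\sigma_t(v))$ \dots\ must share a bag with $v$'' by the tree-decomposition property --- but that property goes the other way: it would follow \emph{from} a $G$-edge, which is precisely what you are trying to prove. The correct argument (which the paper's compatibility condition makes immediate) is: once no role is forgotten on both sides of any merge, then for every $H$-edge $\{a,b\}$ both roles lie in the $R_E$ of a single subtree at the moment the later of the two is introduced, and your introduce-time checks within that subtree suffice. The fix is small --- at the join, also reject whenever two merging sub-classes have intersecting forgotten-role sets (equivalently, insist that the two children carry the same partition and that no role is $\downarrow$ on both sides) --- and with it your sketch becomes correct and essentially identical to the paper's proof.
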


Then, we define partitioning problems in which each vertex of $G$ must be
covered the same number of times. For an arbitrary graph $H$, we let $c = 1$
which results in
vertex disjoint copies of $H$.

\defproblem{$\graphpartprob{H}$}
{A graph $G$}
{Is there an $\packing{H}$ of $G$ such that each vertex is covered exactly once?}

Observe that $\graphpartprob{H}$ is a special case of the problem $\graphpackprob{H}$,
which implies an algorithm with running time $2^{\mathcal{O}\left(w \cdot \log w\right)} \cdot n^{\mathcal{O}\left(1\right)}$
for the $\graphpartprob{H}$ problem on graphs of treewidth $w$.
We show that this running time cannot be improved for many choices of $H$.

\begin{theorem}\label{theorem:graph_packing_arbitrary_lower_bound}
	Let $H$ be an arbitrary graph with at least 3 vertices such that
	$H$ is not a block graph.
	Then, there is no
	algorithm for $\graphpartprob{H}$ problem,
	that solves all instances $G$
	in time $2^{o\left(w \cdot \log w\right)} \cdot n^{\mathcal{O}\left(1\right)}$
	where $w$ is the treewidth of $G$,
	unless \ethh\, fails.
\end{theorem}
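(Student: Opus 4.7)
I would establish the $2^{o(w \log w)}$ lower bound by a polynomial-time reduction from $\kkclique$ (equivalently $\kkindset$) to $\graphpartprob{H}$, producing an instance whose primal graph has treewidth $w = \mathcal{O}(k)$. Since $\kkclique$ admits no $2^{o(k \log k)}$-time algorithm under \ethh, any $2^{o(w \log w)} \cdot n^{\mathcal{O}(1)}$ algorithm for $\graphpartprob{H}$ would refute \ethh.

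The key structural hypothesis is that $H$ is not a block graph, so some 2-connected block $B$ of $H$ fails to be a clique. Inside $B$ one can pick two non-adjacent vertices $u, v$; by 2-connectedness, there are at least two internally vertex-disjoint $u$--$v$ paths within $B$. This non-edge is the sole source of ``choice'' that any $H$-partition can exploit: in a graph carefully glued from copies of $B$ (and of the rest of $H$), the only freedom available when partitioning into copies of $H$ is how each copy routes itself across the $u$--$v$ non-edge. I would amplify this freedom into a \emph{selector} gadget of size polynomial in $k$ admitting exactly $k$ legal partitions, each identified with one element of $[k]$.

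With selector gadgets in hand, the rest of the reduction follows the familiar template of slightly-superexponential lower bounds. I would place one selector per row and per column of the $\kkclique$ instance, couple selectors within the same row (resp.\ column) by \emph{equality} gadgets so that every partition assigns a single label to each row and column, and join each row--column pair through an \emph{edge-verification} gadget that admits a full partition only if the corresponding encoded pair $(i,j)$ is an edge of the input clique instance. The whole construction is laid out on a $k \times k$ grid of gadgets, so a standard row-by-row decomposition yields treewidth $\mathcal{O}(k)$; correctness then reduces to showing that full $H$-partitions of the constructed graph are in bijection with permutation cliques in the input.

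The main technical obstacle will be designing the selector, equality, and edge-verification gadgets \emph{uniformly} for every non-block $H$. The block $B$ can be an arbitrary 2-connected non-clique, and the remainder of $H$ may hang off $B$ at cut vertices and must be absorbed without introducing additional partition flexibility. I would handle this by attaching many rigid pendant copies of $H$ at carefully chosen anchor vertices, so that in any valid partition every copy of $H$ is forced to place its block $B$ in one of a small number of planned positions, after which the remaining choices come precisely from routing across the $u$--$v$ non-edge. Proving that no ``rogue'' partitions escape this rigidity---and that the $k$ intended partitions really are all realizable---is the principal technical content, and the choice analysis will likely be the most delicate case distinction depending on the concrete structure of $B$.
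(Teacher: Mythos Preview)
Your high-level plan is right: reduce from $\kkindset$, target treewidth $\mathcal{O}(k)$, and exploit the non-clique block $B$ of $H$ as the source of combinatorial freedom. Where you diverge from the paper is in the mechanism. You propose to use a single non-edge $\{u,v\}$ in $B$ and build selector/equality/edge-verification gadgets by hand, anticipating a delicate case analysis on the structure of $B$. The paper instead takes a \emph{minimum separator} $S$ of $B$, splits it arbitrarily into two parts $U$ and $D$, and introduces $k$ copies of each; the permutation is then encoded directly by which $U$-copy is matched to which $D$-copy in each of the $k$ copies of $H$ that must cross the separator. The rigidity argument (that every copy of $H$ in a valid partition uses exactly one full $U^{(s,a)}$ and one full $D^{(s,b)}$) is a short counting argument using the minimality of $S$ and the number of blocks of $H$ isomorphic to $B$---no pendant-copy trickery is needed.

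The other difference is modularity. Rather than designing bespoke selector and edge gadgets, the paper first proves a general lemma (\cref{lemma:gadget_arb_any_relation}): for any $H$ and any relation $R\subseteq\{0,1\}^\ell$ that is $\regular{(x,\abs{H})}$, there is an internally coherent gadget of pathwidth $\mathcal{O}(\ell)$ that $H$-realizes $R$. With this black box, the reduction just writes down the relations it needs (the $\srelheavy{k,\cdot}$ relations to force ``exactly one row uncovered'' and a modified relation to forbid each edge of the $\kkindset$ instance) and invokes the lemma. This completely sidesteps the case analysis on $B$ that you flag as the main obstacle; all the $H$-specific work is absorbed once, in the gadget lemma, rather than repeated for each gadget type.
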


When $H$ is a clique, we consider the more general problem in which a vertex
$v \in V(G)$ can be covered more than once, at most $c$ times for some $c \geq 1$.
Therefore, we distinguish between two variants
of the problem: one where each clique can be selected at most once,
and another without restriction.
The $\multicliquepackprob{c,d}$ problem asks the maximum size of a $\multicliquepacking{c}{d}$ in $G$.

\defproblemoutput{$\multicliquepackprob{c,d}$} 
{A graph $G$}
{The size of a largest $\multicliquepacking{c}{d}$ in $G$.}

Similar to the $\multicliquepackprob{c,d}$ problem, 
the $\singlecliquepackprob{c,d}$ problem asks the maximum size of a $\singlecliquepacking{c}{d}$ in $G$.

\defproblemoutput{$\singlecliquepackprob{c,d}$} 
{A graph $G$}
{The size of a largest $\singlecliquepacking{c}{d}$ in $G$.}

In \cref{section:algo_clique} we show that $\singlecliquepackprob{c,d}$ admits a single-exponential time
algorithm where the same algorithmic result also applies to $\multicliquepackprob{c,d}$
with slight modifications.

\begin{restatable}{theorem}{singlecliquepackingalgo}\label{theorem:single_clique_packing_algo}
	Let $c \geq 1$ and $d \geq 3$ be integers. Then, $\singlecliquepackprob{c,d}$
	can be solved in time $(c+1)^{w} \cdot n^{\mathcal{O}\left(1\right)}$
	for all $n$-vertex graphs $G$ given together with a tree decomposition of width
	at most $w$.
\end{restatable}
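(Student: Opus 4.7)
The plan is to run a bottom-up dynamic program on a nice tree decomposition of $G$, augmented with a family of \emph{introduce-clique} nodes. First I would preprocess the decomposition as follows. By the clique property, every $d$-clique of $G$ is contained in some bag; moreover, the subtree of bags containing all vertices of a given clique $C$ is non-empty (Helly property) and hence has a unique topmost node $t_C$. For each $d$-clique $C$ of $G$ I insert at $t_C$ a new unary node marked ``introduce $C$''. Since each bag contains at most $\binom{w+1}{d}$ size-$d$ cliques, this enlarges the decomposition only polynomially, and every $d$-clique of $G$ is offered exactly once during the bottom-up pass, while still satisfying $V(C) \subseteq X_{t_C}$ at the moment of the offer.

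The DP stores, for every node $t$ and every coverage function $f \colon X_t \to \{0, 1, \ldots, c\}$, the value $A[t][f]$: the maximum size of a partial $\singlecliquepacking{c}{d}$ built from cliques offered below $t$, such that every $v \in X_t$ is covered exactly $f(v)$ times, every already forgotten vertex was covered at most $c$ times, and every clique is used at most once. This gives at most $(c+1)^{w+1}$ states per node. The transitions at the standard nodes are routine: introduce-vertex forces $f(v) = 0$ and copies the rest; forget-vertex takes $A[t][f] = \max_{0 \leq k \leq c} A[t'][f \cup \{v \mapsto k\}]$; introduce-clique $C$ takes $A[t][f] = \max\bigl(A[t'][f],\, A[t'][f - \mathbf{1}_{V(C)}] + 1\bigr)$, the second branch being used only when $f(v) \geq 1$ for every $v \in V(C)$. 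Single-use is automatic because each clique is offered at exactly one node. For the $\multicliquepackprob{c,d}$ variant I would instead use $A[t][f] = \max_{k \geq 0,\, f \geq k \cdot \mathbf{1}_{V(C)}} \bigl(A[t'][f - k \cdot \mathbf{1}_{V(C)}] + k\bigr)$, still only $\OO(c)$ work per state.

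The main obstacle is the join node, where $A[t][f] = \max_{f_1 + f_2 = f}\bigl(A[t_1][f_1] + A[t_2][f_2]\bigr)$ with coordinate-wise addition. Direct enumeration of all valid splits $f = f_1 + f_2$ totals $\bigl((c+1)(c+2)/2\bigr)^{w+1}$ operations per node, giving only a $\bigl((c+1)(c+2)/2\bigr)^w \cdot n^{\OO(1)}$ bound. To reach the advertised base $(c+1)$, I would adapt the fast subset-convolution technique of Björklund, Husfeldt, Kaski, and Koivisto from the Boolean lattice to the product lattice $\{0, 1, \ldots, c\}^{X_t}$: encode the additive values as polynomial exponents $x^{A[\cdot][\cdot]}$ to translate max-plus into sum-product, split the convolution by total coverage ranks so that each piece is a downward-closed sum convolution computable by a zeta/Möbius transform applied coordinate-wise through length-$(c+1)$ convolutions, and then extract the leading-degree term to recover the max. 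A standard accounting gives $(c+1)^{w+1} \cdot (wc)^{\OO(1)}$ time per join.

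Correctness is proved by induction on the tree decomposition: the standard transitions faithfully update the semantics of $A[t][f]$, and the clique property ensures every clique is entirely processed on one side of every join node, so the additive coverage and single-use invariants both survive joins. Summing over the polynomially many nodes of the augmented decomposition yields the claimed $(c+1)^w \cdot n^{\OO(1)}$ running time, with the answer read off at $A[\mathrm{root}][\emptyset]$. I expect the generalized fast convolution to be the main technical step that needs a careful self-contained exposition; every other piece is a routine extension of the classical $c=1$ clique-packing DP that underlies the $2^w \cdot n^{\OO(1)}$ algorithm cited in the introduction.
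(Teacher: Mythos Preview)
Your approach is essentially the paper's: the same DP state $f\colon X_t\to\{0,\ldots,c\}$, the same identification of the join-node $(\max,+)$ convolution as the only non-routine step, and the same trick of encoding the numeric values as exponents to reduce $(\max,+)$ to a sum--product convolution over $\{0,\ldots,c\}^{X_t}\times[0,n]$. Two organizational differences are worth noting. First, you handle each $d$-clique at a dedicated introduce-clique node, whereas the paper folds this into the forget node by enumerating, when $v$ is forgotten, all $\ell$-element sets (for $0\le\ell\le c$) of $d$-cliques through $v$ lying inside $X_{t'}$; your version is cleaner and is the standard ``introduce-edge/introduce-clique'' augmentation. Second, the paper does not re-derive the join convolution at all: it simply cites the required $(c+1)^{w}\cdot w^{2}\cdot n^{\mathcal{O}(1)}$ bound as Theorem~4.30 of Focke et al.\ (on $(\sigma,\rho)$-domination, where exactly the same coordinate-wise-addition join arises).

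One caution about your sketch of that derivation. A direct rank-stratified zeta/M\"obius lift of Bj\"orklund--Husfeldt--Kaski--Koivisto to the chain product $\{0,\ldots,c\}^{X_t}$ computes the \emph{coordinatewise-max} convolution $\sum_{g\vee h=f}A(g)B(h)$, not the \emph{coordinatewise-sum} convolution $\sum_{g+h=f}A(g)B(h)$ that you need: the condition ``$g\vee h=f$ and $|g|+|h|=|f|$'' is equivalent to ``$g+h=f$'' only when $c=1$ (e.g.\ for $c=2$, $n=1$, $f=2$ it misses $(g,h)=(1,1)$). So the phrase ``downward-closed sum convolution computable by a zeta/M\"obius transform'' hides a genuine extra idea for $c\ge2$. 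The $(c+1)^{w}$ bound is correct, but you should either invoke the black box (van Rooij--Bodlaender--Rossmanith, ESA~2009, or the cited Focke et al.\ theorem) or give a self-contained argument that actually handles addition rather than join.
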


Moreover, we define partitioning problems where $H$ is a clique.

\defproblem{$\multicliquepartprob{c,d}$}
{A graph $G$}
{Is there a $\multipartitioning{c}{K_d}$ of $G$?}

\defproblem{$\singlecliquepartprob{c,d}$}
{A graph $G$}
{Is there a $\singlepartitioning{c}{K_d}$ of $G$?}

Similarly, we prove that this running time is optimal to polynomial factors in
the size of the input graph.
\begin{theorem}\label{theorem:multi_clique_packing_lower_bound}
	Let $c \geq 1$ and $d \geq 3$ be integers. If there exists an $\varepsilon > 0$ such that
	$\multicliquepartprob{c,d}$ can be solved in time $(c+1 - \varepsilon)^{w} \cdot n^{\mathcal{O}\left(1\right)}$
	for all $n$-vertex graphs $G$ given together with a path decomposition of width
	at most $w$, then the \ppseth\ fails.
\end{theorem}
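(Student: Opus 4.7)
The plan is to reduce from a canonical $(c+1)$-ary CSP whose hardness is captured by \ppseth: for every $\varepsilon>0$, a $(q-\varepsilon)^{\pw}\cdot n^{O(1)}$ algorithm for $q$-ary CSP with domain $[q]$ would refute \ppseth, and we take $q=c+1$. Given a CSP instance $\phi$ on $N$ variables whose primal graph has pathwidth $\pw$, we construct an $n$-vertex graph $G$ with a path decomposition of width $\pw+O(1)$ such that $\phi$ is satisfiable if and only if $G$ admits a $\multicliquepartitioning{c}{d}$. A $(c+1-\varepsilon)^{\pw(G)}\cdot n^{O(1)}$ algorithm on $G$ would then solve $\phi$ within the time ruled out by \ppseth.

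The central ingredient is a \emph{variable gadget} with exactly $c+1$ states. For each CSP variable $x$ we designate a \emph{state vertex} $v_x$ and attach to it $c$ candidate cliques $L_x^1,\dots,L_x^c$ (each a copy of $K_d$ through $v_x$) on one ``side'' and $c$ candidate cliques $R_x^1,\dots,R_x^c$ on the other. Since every valid partition must cover $v_x$ exactly $c$ times, if $k$ of the $L_x^i$ are chosen then exactly $c-k$ of the $R_x^j$ must also be chosen; the value $k\in\{0,1,\dots,c\}$ encodes the state. To prevent the \emph{multi} flexibility from inflating the state space, each candidate clique carries auxiliary vertices whose $c$-fold partition constraint forces that clique to appear at most once in the collection. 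Neighbouring appearances of $x$ along the path decomposition are linked by a chain of identical state vertices with \emph{equality gadgets} that propagate the state, so that the current value of $x$ is available wherever the primal graph of $\phi$ requires it.

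For each constraint of $\phi$ on variables $x_{i_1},\dots,x_{i_r}$ we insert a \emph{constraint gadget} that simultaneously reads the current copies of these state vertices and admits a local completion of the partition precisely when the tuple of states is a satisfying assignment; such relational gadgets can be assembled from equality/disequality primitives, and since $r$ is a constant, each adds only $O(1)$ vertices to the relevant bag. Laying out these gadgets along a path decomposition of $\phi$'s primal graph and inserting them column-wise yields an $n$-vertex graph $G$ whose path decomposition has width $\pw+O(1)$, as each bag needs only the $\pw$ active state vertices plus a bounded number of gadget-internal vertices.

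The main obstacle is the analysis of the variable gadget in the multi-setting: one must pin down the multiplicities of the auxiliary vertices so tightly that each candidate clique contributes at most one selection, without introducing new local freedoms that would themselves bump up the state count above $c+1$ and weaken the base of the exponent. A secondary technicality is adapting the construction from the triangle case ($d=3$) to arbitrary $d\geq 3$ by padding each gadget clique with $d-3$ filler vertices whose coverage must be carefully balanced; this must be done while retaining the $O(1)$ size of each gadget so that the bag-width overhead stays constant, preserving the tight $\pw+O(1)$ bound required for the reduction to yield the claimed $(c+1-\varepsilon)^{\pw}$ lower bound.
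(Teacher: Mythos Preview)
Your proposal has a genuine gap: the constraint gadgets you posit cannot exist in general. There is a divisibility obstruction that you do not address. Fix any $\coherent{(c,K_d)}$ gadget $G$ with portal vertices $p_1,\ldots,p_\ell$ and internal vertex set $I$. Any $\multipacking{c}{K_d}$ of $G$ consisting of $k$ cliques contributes total coverage $k\cdot d$, so if it covers each internal vertex exactly $c$ times and each portal $p_i$ exactly $r_i$ times, then $k\cdot d = c\cdot\abs{I} + \sum_i r_i$, whence $\sum_i r_i \equiv -c\cdot\abs{I}\pmod d$. In other words, \emph{every} relation realizable by a fixed gadget has all its tuples of the same weight modulo $d$; it is $\regular{(x,d)}$ for some $x$ determined by the gadget. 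With a single state vertex per variable, your constraint gadget on two portals would have to realize (up to complementation) an arbitrary $S\subseteq\{0,\ldots,c\}^2$, which is generically not $\regular{(x,d)}$ for any $x$. Your assertion that ``such relational gadgets can be assembled from equality/disequality primitives'' is therefore unjustified, and indeed false as stated. This is not a technicality that padding with $d-3$ filler vertices resolves.

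The paper's reduction is built precisely around this obstruction. Rather than one state vertex per variable, it uses a block of $\ell$ state vertices (with $\ell$ a multiple of $d$) and encodes each value of the CSP variable by a vector in $W=\{x\in\{0,\ldots,c\}^\ell:\w(x)\equiv 0\pmod d\}$. All the relations actually needed---the copy relation $\{u\odot\Compl{c}{u}:u\in W\}$ and the constraint relation $R_j$---then have every tuple of weight $\ell c$ or $2\ell c$, which is $0\bmod d$, so \cref{lemma:gadget_clique_reg_relation} applies. The price is that the pathwidth becomes $\ell\cdot p+\mathcal{O}(1)$ rather than $p+\mathcal{O}(1)$, which would ruin a reduction from $(c{+}1)$-CSP. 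The paper absorbs this blow-up by invoking \cref{theorem:pwseth_equiv} with alphabet $B=(c+1)^{\ell-\abs{H}}$ instead of $B=c+1$: since $\abs{W}\ge B$, the $\ell$ vertices per variable faithfully encode $B$ values, and a short calculation (using the choice of $\ell$ in \cref{eq:ell_choice}) shows that a $(c+1-\varepsilon)^{\ell p}$ algorithm still yields a $(B-\varepsilon')^{p}$ algorithm for $B$-CSP.

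Finally, your stated ``main obstacle''---controlling multiplicities in the multi-setting so that each candidate clique is selected at most once---is not where the difficulty lies. The paper's gadgets \strict{c,K_d}-realize their relations and thus work identically for multi- and single-packings; the multi/single distinction is handled uniformly. The real work is circumventing the $\bmod\ d$ constraint on realizable relations while keeping the base of the exponent at $c+1$.
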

Finally, we show that the lower bound result in \cref{theorem:single_clique_packing_lower_bound}
can also be transferred similarly.
\begin{theorem}\label{theorem:single_clique_packing_lower_bound}
	Let $c \geq 1$ and $d \geq 3$ be integers. If there exists an $\varepsilon > 0$ such that
	$\singlecliquepartprob{c,d}$ can be solved in time $(c+1 - \varepsilon)^{w} \cdot n^{\mathcal{O}\left(1\right)}$
	for all $n$-vertex graphs $G$ given together with a path decomposition of width
	at most $w$, then the \ppseth\ fails.
\end{theorem}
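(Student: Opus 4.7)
The plan is to prove Theorem~\ref{theorem:single_clique_packing_lower_bound} by adapting the reduction that establishes Theorem~\ref{theorem:multi_clique_packing_lower_bound}. That reduction already gives the matching $(c+1-\varepsilon)^w$ hardness for the multi-partition variant under \ppseth, and its overall architecture --- a reduction from a \ppseth-hard pathwidth-parameterized problem, producing a long ``tape'' of bags with per-position state encoders glued together by pairwise consistency gadgets and a global verification layer --- is retained unchanged. Only the local gadgets that exploit multiplicity of a single $K_d$ need to be modified.

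The core re-encoding is as follows. The original reduction uses, for each of the $w$ positions in a bag, a single designated $K_d$-copy that may be taken with multiplicity $0,1,\ldots,c$ to encode one of $c+1$ states; such multi-use is illegal in the single variant. I would replace each designated $K_d$ by $c$ distinct candidate $K_d$'s that pairwise share the ``state vertex'' $v$ but are otherwise vertex-disjoint: attach $c$ fresh $(d-1)$-cliques $C_1,\ldots,C_c$ to $v$ so that each $\{v\}\cup C_i$ is a $K_d$, and attach a private dummy structure to each $C_i$ that can complete its required coverage independently of whether $\{v\}\cup C_i$ is itself chosen. Selecting any $i$-subset of these $c$ candidate $K_d$'s now covers $v$ exactly $i$ times, which is precisely the information that the consistency and verification gadgets of the original construction depend on; since those gadgets only ever test the coverage count at state vertices, they transfer verbatim.

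Since $c$ and $d$ are fixed constants, each per-position modification inserts only $O_{c,d}(1)$ vertices into the bags that already contain $v$, and these vertices can be introduced and then forgotten locally. Consequently the modified graph admits a path decomposition of width $w + O_{c,d}(1)$, and any $(c+1-\varepsilon)^w\cdot n^{\mathcal{O}(1)}$ algorithm for $\singlecliquepartprob{c,d}$ would, via the reduction, yield an algorithm of the same form for $\multicliquepartprob{c,d}$, contradicting Theorem~\ref{theorem:multi_clique_packing_lower_bound} and hence \ppseth.

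The main obstacle will be ensuring that the $c$ candidate $K_d$'s are \emph{exactly} the $K_d$'s through $v$ in the modified graph: stray $K_d$'s accidentally formed among the new $(d-1)$-cliques, their dummy structures, and the surrounding gadgetry would invalidate the intended state correspondence. Preventing this requires laying out the $C_i$ and their dummy attachments so that no $d-1$ vertices in $N(v)$ outside a single $C_i$ are pairwise adjacent, and analogously scrubbing the dummy structure so that it does not create unintended $K_d$'s touching $v$ or the consistency gadgets. Once this structural invariant is verified, the bijection between multi-partitions of the original reduced graph and single-partitions of the modified one is immediate, and the rest of the argument is a routine translation of Theorem~\ref{theorem:multi_clique_packing_lower_bound}.
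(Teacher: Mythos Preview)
Your overall strategy—derive the single-clique bound from the multi-clique bound by a reduction that adds only $O_{c,d}(1)$ to the pathwidth—is precisely what the paper does. But your execution rests on a misreading of the multi-clique construction. You model it as placing, at each bag position, ``a single designated $K_d$-copy that may be taken with multiplicity $0,1,\ldots,c$'', and you propose replacing that clique by $c$ distinct $K_d$'s through a shared vertex plus private dummies. In the actual construction (Lemma~\ref{lemma:multi_clique_hardness}) there are no such designated cliques: the state vertices $a^{i,j}_x$ are bare portals, and \emph{all} coverage comes from the internal packings of the attached relation gadgets, which are produced by Lemma~\ref{lemma:gadget_clique_reg_relation}. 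Those gadgets already \strict{c,K_d}-realize their relations, i.e., they achieve every admissible portal-coverage pattern with a \emph{set} of distinct $K_d$'s as well as with a multiset. Hence the only place where repeated copies could matter—inside the gadgets—already works for the single variant; your proposed fix targets a component that does not exist, and no modification to the construction is needed at all. The paper says exactly this just before Lemma~\ref{lemma:single_clique_hardness}.

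For a standalone argument, the paper also gives the clean black-box reduction you are reaching for (Lemma~\ref{lemma:single_clique_hardness}): given an arbitrary multi-instance $G$, for \emph{every} $K_d$-subgraph $X$ of $G$ attach a constant-size gadget that \dist{c,K_d}-realizes $\eqrel{d}{[0,c]}$, with its $d$ portals identified with $V(X)$. In a single-partition of the resulting graph, each such gadget independently selects a value $a_X\in\{0,\ldots,c\}$ and contributes $a_X$ distinct $K_d$'s to the coverage of $V(X)$, simulating multiplicity $a_X$ for $X$ in a multi-partition of $G$; conversely any multi-partition is simulated this way. Since each $K_d$ of $G$ lives in some bag of the given path decomposition, inserting the gadget's constant-many vertices into that bag raises the width by $O(1)$. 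This is the general, certified-$\coherent{(c,K_d)}$ version of the local replacement you sketch, applied uniformly to every $K_d$ rather than to hypothetical ``designated'' ones, and it sidesteps the stray-$K_d$ obstacle you correctly identify because the equality gadget's coherence is established once and for all in Section~\ref{section:gadgets}.
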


\section{Technical Overview}\label{sec:tech_overview}
In this section, we will give an overview of the techniques and ideas presented in the paper.

\subsection{Preliminaries}\label{sec:prelim}
A graph $H$ is $k$-connected if for each $A \subseteq V(H)$ such that $\abs{A} = k-1$
it holds that $H \setminus A$ is connected. A graph is also called biconnected if it is
$2$-connected, and a block is a maximal $2$-connected component of $H$.
Similarly, a graph $H$ is called a block graph if every block of $H$ is
a clique.

A vertex $v$ of a connected graph $H$ is a cutvertex if $G \setminus v$
is disconnected. In this paper, we refer to both the blocks of a graph and the nodes of the
corresponding block tree interchangeably, with a slight abuse of notation.
While formally distinct, we find it convenient to treat them as equivalent
entities for the sake of clarity and brevity.
For a vertex $h \in V(H)$ and a block $B \in \blocks$ , we write $h \in B$ if $h \in V(B)$.

\begin{definition}
	For a function $f$ and a set $X$, we let $\funcrest{f}{X}$ denote the restriction of $f$ to $X \cap \dom{f}$. Similarly, we let $\funcrem{f}{X}$ denote the restriction of $f$ to $\dom{f} \setminus X$. Finally, for $v \not\in \dom{f}$ and a value $y$, we let $\funcadd{f}{v}{y}$ denote a function $g$ which is defined as
	\begin{align*}
		g(x) = \begin{cases}
			f(x) &\text{if }  x \in \dom{f}\\
			y &\text{if } x = v 
		\end{cases}.
	\end{align*}
\end{definition}

We use the Iverson bracket $\iverson{P}$, which is defined
to be $1$ if the proposition $P$ is true and $0$ otherwise.

\subsection{Gadgets}
In this paper, we derive our lower bounds via reductions from well‐studied
base problems whose intractability is established under standard complexity
hypotheses. Each reduction makes use of compact,
purpose‐built gadgets — small graphs whose behavior can be precisely engineered.
Embedding these gadgets into our constructions yields intuitive, transparent
proofs that highlight the underlying ideas without excessive technical
overhead.

More formally, we define a gadget $G$ as a graph with designated portal vertices $\{p_1,
\ldots, p_{\ell}\} \subseteq V(G)$, where the vertices $\internal{G} \coloneqq
\left( G \setminus \{p_1, \ldots, p_{\ell}\} \right) $ are called internal
vertices. We say that a graph $E$ is an extension of a gadget $G$ if $E$
contains $G$ as an induced subgraph, where each internal vertex $v$ of $G$
satisfies
\begin{equation*}
	N_E(v) \subseteq V(G).
\end{equation*}
In other words, in an extension $E$,
only the portal vertices of $G$ can have neighbors outside of $G$.
\begin{figure}[htpb]
	\centering
	\includegraphics[scale=1]{./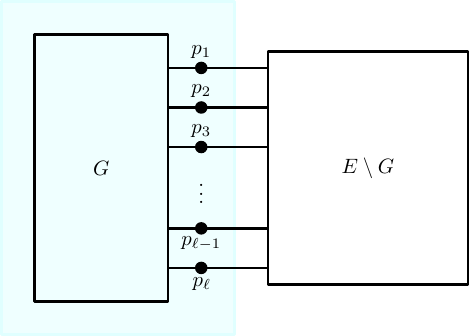}
	\caption{Illustration of an extension of gadget $G$ with $\ell$ portal
		vertices: the blue region on the left highlights $G$ along with its
		portals, while the rectangle on the right depicts the remainder
		$E \setminus G$.}
	\label{fig:gadget_intro}
\end{figure}

We also require gadgets to behave in a structured way. Whenever a copy of $H$
intersects the gadget, it must lie entirely within the gadget when connected
to a larger graph. We formalize this in the following definition.

\begin{definition}
	Let $G$ be a gadget and $P_G$ be the set of its portal vertices.
	For a graph $H$ and $c \geq 1$,
	we say that $G$ is $\coherent{(c,H)}$ if,
	for any extension $E$ of $G$ and any $\multipartitioning{c}{H}$ / $\singlepartitioning{c}{H}$
	$\mathcal{Z}$ of $E$, the following holds:
	If there exists $Z \in \mathcal{Z}$ that contains an internal
	vertex of $G$, then all vertices in $Z$ must belong to $G$. 
	Formally,
	\begin{equation*}
		\Bigl(V(Z) \cap \bigl( V(G) \setminus P_G \bigr) \Bigr) \neq \emptyset \implies V(Z) \subseteq V(G).
        \end{equation*}
\end{definition}

Next, we define the relation realized by a gadget.
\begin{definition}\label{definition:gadget_realizing}
	Let $H$ be a graph, $c \geq 1$ and 
	$G$ be a gadget with portal vertices $P_G = \{p_1, \ldots, p_\ell\}$. 
	We say $G$ \dist{c,H}-realizes (respectively, \arb{c,H}-realizes) a relation $R \subseteq
	\{0,\ldots,c\}^{\ell}$ if the following holds:
	\begin{equation*}
		r \in R \iff
		\parbox[t][][t]{12cm}{There exists a $\singlepacking{c}{H}$ (respectively, $\multipacking{c}{H}$) 
			 $\mathcal{Z}$ of $G$
			such that
			$\mathcal{Z}$ covers each
			internal vertex of
			$G$ exactly $c$ times, and each portal
			vertex $p_i$ exactly $r_i$
		times for $1 \leq i \leq \ell$.} 
	\end{equation*}
	We say that $G$ \strict{c,H}-realizes $R$ if it both \dist{c,H}
	and \arb{c,H}-realizes $R$.
\end{definition}

\begin{remark}
	When $c = 1$, the notions of \dist{c,H}-realization and \arb{c,H}-realization
	coincide. In this case, we simply say that the relation $R$ is $H$-realized
	by $G$, instead of
	using the term ``\strict{1,H}''.
\end{remark}
We say a relation $R \subseteq \{0,\ldots,c\} ^{\ell}$
is $\regular{(x,d)}$ for $d \geq 1$ and  $0 \leq x \leq d - 1$ if for each $r \in R$, the weight of $r$ is equivalent to $x \mod d$, i.e. 
$\w(r) \coloneqq \left( \sum_{i \in [\ell]} r_i \right)  =  x \mod d$.
Moreover, a relation $R$ has weight $X$ if $\Bigl( \max_{r \in R} \w(r) \Bigr)  = X$.
Recall that a gadget is a small, engineered graph used to enforce specific behaviors in a reduction.
We streamline the lower bound constructions by building general‑purpose gadgets in \cref{section:gadgets} that can realize any relation.
This way, we avoid repetitive constructions and keep the focus on the core ideas.
\begin{restatable}{lemma}{gadgetarbanyrelation}\label{lemma:gadget_arb_any_relation}
	Let $H$ be an arbitrary graph, $\ell \geq 1$ be an integer and $R
	\subseteq \{0, 1\}^{\ell}$ be a relation that is
	$\regular{(x, \abs{H})}$ for some $0 \leq x \leq \abs{H} -1$. Then,
	there exists a $\coherent{(1,H)}$ gadget $G$ that $H$-realizes the
	relation $R$ and the size of $G$ is bounded by some function of
	$\ell$. Moreover, for relations with constant weight,
	it holds that $\pw(G) = \mathcal{O}\left(\ell\right)$.
\end{restatable}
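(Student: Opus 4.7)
The plan is to construct $G$ modularly as an interconnection of smaller building‑block gadgets, one per element of $R$. For each tuple $r \in R$, I would build a sub‑gadget $G_r$ with $\ell + 1$ portals: the $\ell$ external portals that $G$ will expose, plus one auxiliary \emph{switch} portal $s_r$. The sub‑gadget $G_r$ is engineered so that its internal vertices admit an $H$-packing partitioning the internal part in essentially only two configurations: either $s_r$ is uncovered and each external portal $p_i$ is covered exactly when $r_i = 1$, or $s_r$ is covered and no external portal is touched. This switch behaviour is what lets a single tuple of $R$ be activated at a time.

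Next, I would construct a \emph{selector} gadget $S$ with $|R|$ portals $s_{r^{(1)}}, \ldots, s_{r^{(|R|)}}$ that $H$-realizes the relation ``exactly one switch is uncovered''. Such a selector can be assembled from primitives like pairwise AT‑MOST‑ONE and a global AT‑LEAST‑ONE constraint; each of these is itself $(x', \abs{H})$-regular for an appropriate $x'$, so the construction stays inside what $H$-packing can realize. The final gadget $G$ is obtained by taking the disjoint union of the $G_r$'s, identifying corresponding external portals across them, and attaching the switch portals to $S$. Internal coherence is inherited from each component, since any copy of $H$ meeting an internal vertex must remain inside a single sub‑gadget by construction. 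The realized relation is exactly $R$: selecting the unique uncovered switch $s_{r^{(j)}}$ forces $G_{r^{(j)}}$ into its active configuration and thus produces the tuple $r^{(j)}$ on the external portals.

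The hard part is constructing $G_r$ for an arbitrary fixed $H$, because $\abs{H}$ must divide the number of vertices covered by any $H$-packing of $G_r$ in either configuration. The $(x, \abs{H})$-regularity of $R$ is exactly the divisibility condition that makes this possible: since $\w(r) \equiv x \pmod{\abs{H}}$ for every $r \in R$, I can pad each $G_r$ with a uniform number of internal vertices so that both the active and the inactive configurations have total vertex counts divisible by $\abs{H}$. Concretely, for each $i$ with $r_i = 1$ I would attach a copy of $H$ rooted at $p_i$ with the remaining vertices internalized, and wire these copies through a common ring of $H$-copies linked to $s_r$, so that rotating the ring globally flips the configuration. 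To obtain the pathwidth bound $\pw(G) = \mathcal{O}(\ell)$ for constant‑weight $R$, I would arrange the $G_r$'s and $S$ along a linear spine: each $G_r$ then touches only a bounded number of vertices beyond the $\ell$ portals, so a path decomposition of width $\mathcal{O}(\ell)$ is obtainable.
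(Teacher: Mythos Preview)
Your high-level architecture is essentially the same as the paper's: one sub-construction per tuple of $R$, a selection mechanism that activates exactly one of them, and the activated one imprints its pattern on the shared portals. The paper's selector is simply $\lvert H\rvert$ shared ``central'' vertices $v_1,\ldots,v_{\lvert H\rvert}$ to which every per-tuple toggle gadget is attached; since these must be covered exactly once, exactly one toggle fires. Your explicit selector gadget $S$ plays the same role.

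Where your proposal falls short of a proof is in the two places you wave your hands. First, the per-tuple gadget $G_r$: ``a copy of $H$ rooted at $p_i$ \ldots\ wired through a common ring of $H$-copies linked to $s_r$, so that rotating the ring globally flips the configuration'' is not a construction. For arbitrary $H$ this is exactly where the difficulty lies, and the paper handles it by building $G_r$ out of three previously established primitives: a \emph{toggle} gadget realizing $\{(0^{\lvert H\rvert}1^{\lvert H\rvert}),(1^{\lvert H\rvert}0^{\lvert H\rvert})\}$, an \emph{equality} gadget realizing $\eqrel{s_i+\psi+\lvert H\rvert}{\{0,1\}}$ (which is where the $(x,\lvert H\rvert)$-regularity is consumed, via the $\psi$ slack vertices), and a chain of $\cneqrel{1}$ gadgets routing the equality-gadget outputs to the correct portals. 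All of these rest ultimately on the Kirkpatrick--Hell $\cneqrel{1}$ and $\eqrel{\lvert H\rvert}{\{0,1\}}$ gadgets for arbitrary $H$; you need some analogue of this tower, and the ring idea does not supply it.

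Second, your selector and the pathwidth bound. You propose building $S$ from pairwise \textsc{At-Most-One} and a global \textsc{At-Least-One}, but those are themselves relations you would need to $H$-realize, so the argument is circular unless you ground it in concrete primitives. More importantly, $S$ has $\lvert R\rvert$ portals, and $\lvert R\rvert$ can be as large as $\ell^{O(1)}$ even for constant-weight $R$; it is not clear your $S$ has pathwidth $O(\ell)$. The paper avoids this because its ``selector'' is just $\lvert H\rvert$ vertices: after deleting the $\ell$ portals and these $\lvert H\rvert$ central vertices, the construction falls apart into $\lvert R\rvert$ disjoint pieces each of \emph{constant} size (since the weight, and hence each $s_i$, is constant), so the whole gadget has pathwidth $O(\ell)$.
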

When $c \geq 1$, we require the relation to have more structure, i.e., the relation
should be $\regular{(0,\abs{H})}$.
However, this restriction can be easily handled in our lower bound constructions.
\begin{restatable}{lemma}{gadgetcliqueregrelation}\label{lemma:gadget_clique_reg_relation}
	Let $c \geq 1$, $H$ be a clique, $\ell \geq 1$ be a constant and $R \subseteq \{0, \ldots, c\}^{\ell}$ be a $\regular{(0,\abs{H})}$ relation.
	Then, there exists a $\coherent{(c,H)}$ gadget $G$ that \strict{c,H}-realizes the relation $R$.
	Moreover, the size of $G$ is bounded by some function of $\ell$.			
\end{restatable}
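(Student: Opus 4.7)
The plan is to construct $G$ in two stages: first, for each tuple $r \in R$ build a coherent singleton gadget $G_r$ that \strict{c,H}-realizes $\{r\}$; second, combine the $G_r$'s into a single gadget that realizes $R$.

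For the singleton $G_r$, write $d = \abs{H}$ and $\sum_i r_i = k \cdot d$, using the regularity hypothesis on $R$. I would build a structure of $k$ ``portal cliques'' $C_1, \ldots, C_k$, each a copy of $K_d$ in $G_r$, such that each portal $p_i$ appears in exactly $r_i$ of them; this multiset distribution is consistent because $\sum_i r_i = kd$, and wherever some $r_i > k$ I would pad with additional dummy cliques to enlarge $k$ without changing the target portal vector. Each $C_j$ contains a fresh internal ``pivot'' vertex $a_j$ that is made adjacent only to the other $d-1$ vertices of $C_j$, so that $C_j$ is the unique $K_d$ covering $a_j$ in any extension of $G_r$. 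The $c$-fold coverage requirement on $a_j$ then forces $C_j$ to be used $c$ times in any valid \arb{c,H}-packing; for the \dist{c,H} variant the same effect is achieved by replacing $a_j$ with $c$ parallel copies sharing the same neighborhood. Summing the forced covers over all pivots yields portal vector exactly $r$, and any stray internal vertices introduced during construction are absorbed by appending disjoint internal $K_d$'s.

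To realize $R = \{r^{(1)}, \ldots, r^{(t)}\}$, I would introduce \emph{shadow portals} $p_i^{(j)}$ (internal to the combined gadget $G$) that are equalized to the true portal $p_i$ via a ``copy'' sub-gadget obtained by applying the singleton construction with $\ell = 2$. Each $G_{r^{(j)}}$ is attached to the shadow portals $p_i^{(j)}$, and a one-hot selector on an auxiliary set of shadow portals forces exactly one index $j_0$ to be ``active''; the other $G_{r^{(j)}}$ operate in an ``idle'' mode contributing zero to the portal covers (feasible since $(0,\ldots,0)$ is trivially $\regular{(0,d)}$ and realizable by disjoint internal $K_d$'s). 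Coherence of $G$ follows because every internal vertex has all its neighbors inside $G$. Strictness -- simultaneous \dist{c,H} and \arb{c,H}-realization -- follows from the parallel-pivot device, which ensures every achievable \arb{c,H} signature is also achievable by a \dist{c,H}-packing, while any \dist{c,H}-packing is automatically an \arb{c,H}-packing.

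The main obstacle is the rigidity argument for $G_r$: showing that no valid packing yields a portal cover vector different from $r$. This requires a careful analysis of the portal-portal edges added inside each $C_j$, ruling out spurious $K_d$'s in extensions by relying on coherence and on the fact that every internal pivot has only $d-1$ neighbors, all inside its clique $C_j$. The regularity condition $\sum r_i \equiv 0 \pmod d$ is precisely what permits the $k$ portal cliques $(C_j)_{j \in [k]}$ to be constructed with a consistent per-portal multiplicity. The size bound follows because each $G_{r^{(j)}}$ uses $\OO(\mathrm{poly}(\ell, c, d))$ vertices and $\abs{R} \leq (c+1)^{\ell}$ singletons are combined, which is a function of $\ell$ alone since $c$ and $d$ are fixed.
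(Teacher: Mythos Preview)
Your singleton gadget $G_r$ over-counts by a factor of $c$. If the pivot $a_j$ has exactly $d-1$ neighbors, then $C_j$ is indeed the unique $K_d$ containing $a_j$; but since $a_j$ is internal it must be covered $c$ times, which forces $C_j$ to be used $c$ times in any \arb{c,H}-packing. Hence every other vertex of $C_j$ receives $c$ covers from this clique alone, and a portal $p_i$ lying in $r_i$ of the cliques receives $c\cdot r_i$ covers in total, not $r_i$. For any $r_i\ge 2$ this already exceeds the cap $c$, so no valid packing exists at all. The parallel-pivot device for the \dist{c,H} variant has the dual defect: each of the $c$ parallel pivots is itself an internal vertex of degree $d-1$ that must be covered $c$ times, yet the unique $K_d$ containing it may be used at most once in a \dist{c,H}-packing. (There is also a minor arithmetic slip: with one pivot per clique, the portal slots total $k(d-1)$, not $kd$.)

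The combination stage has a second gap: if $G_{r^{(j)}}$ strictly realizes the \emph{singleton} $\{r^{(j)}\}$, then it has no ``idle mode'' --- every valid packing of it covers its shadow portals according to $r^{(j)}$, so the selector cannot zero out the inactive branches, and the true portals would accumulate $\sum_j r^{(j)}$. You would need each per-tuple sub-gadget to realize something like $\{r^{(j)},\mathbf 0\}$, or to carry an explicit enable portal, which your construction does not provide. The paper avoids both problems by first attaching a gadget that covers every internal vertex exactly $c-1$ times (\cref{lemma:multi_partition_coverrel_c_H}), reducing the residual task to a single additional cover per internal vertex; it then places $\abs{H}$ shared central vertices with one toggle gadget per tuple, so that the single remaining cover of the central vertices selects exactly one active branch, which propagates through equality and $\cneqrel{1}$ gadgets to the portals. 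The missing ingredient in your plan is precisely this $(c-1)$-cover reduction.
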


\subsection{Single-Exponential Lower Bound}
In this section, we give an overview of how the above-described gadgets
are used to prove \cref{theorem:single_clique_packing_lower_bound}.
Classically, when proving conditional lower bounds based on \seth,
one needs to find a reduction from \sat. However, this usually involves
carrying out repetitive, unnecessary work that is not specific to the problem
one is working on. 
One of the strengths of the framework introduced by Lampis \cite{lampisPrimalPathwidthSETH2025}
is removing the need for such repetitive constructions.

The primal graph of a \csp\ instance $\psi$ is a graph $G$ that has a vertex for each variable
of $\psi$, and there is an edge between $x,y \in V(G)$ if $x$ and $y$ appear together
in a constraint. The pathwidth of the \csp\ instance $\psi$ is defined to be the pathwidth of its primal graph.
Similarly, by path decomposition of $\psi$, we mean a path decomposition of the primal
graph of $\psi$. The following lemma from \cite{lampisPrimalPathwidthSETH2025}
implies that we can assume the path decomposition to be nice.

\begin{lemma}[Lemma~2.1 in \cite{lampisPrimalPathwidthSETH2025}, restated for CSPs]
	There is a linear-time algorithm that takes as input a \csp\ formula $\psi$
	with $n$ variables and $m$ constraints and a path decomposition of its primal
	graph of width $p$ and outputs a nice path decomposition $B_1, \ldots, B_t$ of
	$\psi$ containing at most $t = \mathcal{O}\left(p \cdot m\right)$ bags,
	as well as an injective function $b$ from the set of constraints of $\psi$
	to $[t]$ such that for each constraint $c$, $B_{b(c)}$ contains
	all the variables of $c$.
\end{lemma}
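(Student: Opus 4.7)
The plan is to first massage the input path decomposition into standard nice form, then pad it so that each constraint admits a distinct witness bag, while keeping width and the linear‑time guarantee intact.

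\emph{Normalization and niceness.} First I would eliminate redundant bags from the given path decomposition (consecutive duplicates, and bags contained in a neighbor), obtaining a cleaned sequence of at most $n$ bags. I would then convert to nice form in a single left‑to‑right sweep: between any two consecutive cleaned bags $B_i$ and $B_{i+1}$, insert a chain of forget nodes for $B_i \setminus B_{i+1}$ followed by a chain of introduce nodes for $B_{i+1} \setminus B_i$. In the resulting decomposition every variable is introduced and forgotten at most once, so the number of bags is $\mathcal{O}(n)$, and the sweep is clearly linear time.

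\emph{Bounding $t$ by $\mathcal{O}(p\cdot m)$.} Without loss of generality I may delete any variable that appears in no constraint, as its assignment is irrelevant; in linear time this leaves every variable belonging to at least one constraint. Writing $X_c$ for the variable set of constraint $c$, we have $n \leq \sum_c |X_c|$. Since $X_c$ forms a clique in the primal graph, it must lie inside some single bag of the original decomposition by the Helly property, so $|X_c| \leq p+1$. Thus $n \leq (p+1)\,m$, and the nice decomposition constructed above has $\mathcal{O}(p\cdot m)$ bags.

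\emph{Injective witness function $b$.} For each constraint $c$, the clique $X_c$ lies in at least one bag of the nice decomposition; pick such a bag $B_{\pi(c)}$. If several constraints land on the same bag, I would duplicate that bag: to turn one occurrence of a bag $B$ into two, insert after it the pair ``forget $v$, introduce $v$'' for some $v\in B$, which reproduces $B$ using two extra bags while preserving niceness and the width bound. Each constraint triggers at most one duplication, so the total overhead is $\mathcal{O}(m)$ bags and the final count stays $\mathcal{O}(p\cdot m)$. A left‑to‑right pass then assigns $b(c)$ to one of the distinct copies that sit at the chosen position, making $b$ injective.

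\emph{Main obstacle.} The only delicate step is the last one: handling the duplications so that niceness, width and the $\mathcal{O}(p\cdot m)$ bag count all survive simultaneously, and so that a constraint $c$ with empty $X_c$ (if any are allowed) is still assigned to a bag. This is essentially standard pathwidth bookkeeping — pick vertices to re‑forget from a non‑empty bag when necessary, and handle the empty‑constraint case with a dedicated introduce/forget slot at the start — but it is where one must be careful; no genuinely deep argument is hidden.
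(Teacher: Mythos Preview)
This lemma is quoted from Lampis and the present paper does not prove it, so there is no ``paper's proof'' to compare against; the only question is whether your sketch is correct.

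Your normalisation step and the bound $n\le (p+1)m$ are fine. The duplication step, however, contains a genuine error. Inserting ``forget $v$, introduce $v$'' after a bag $B$ yields the segment $B,\;B\setminus\{v\},\;B$, and now $v$ lies in the first $B$, is absent from $B\setminus\{v\}$, and reappears in the second $B$. The set of bags containing $v$ is no longer contiguous, so the result is not a path decomposition at all, let alone a nice one. More generally, in a nice path decomposition consecutive bags differ by exactly one vertex, and one checks that there is no way to reproduce a bag $B$ at an adjacent position without either breaking the interval property for some vertex or passing through a bag strictly larger than $B$.

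The easy repair is to use ``introduce $w$, forget $w$'' for a \emph{fresh} isolated dummy variable $w$ (a new one per duplication), giving $B,\;B\cup\{w\},\;B$: this is a valid nice path decomposition, adds at most $2m$ bags, and raises the width by~$1$. Since the lemma as stated does not claim the output has width exactly $p$, and a $+O(1)$ to the width is immaterial for every use in this paper, this suffices. Alternatively one pads the cleaned decomposition with repeated bags \emph{before} going to nice form and works with a notion of ``nice'' that tolerates consecutive equal bags. Either way, your outline is essentially right and you correctly flagged this as the delicate step, but the specific trick you wrote down does not work.
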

The following conjecture from \cite{lampisPrimalPathwidthSETH2025}, called \ppseth,
will form the basis of our hardness results.

\begin{conjecture}[Conjecture 1.1 from \cite{lampisPrimalPathwidthSETH2025}.]
	\label{conjecture:pwseth}
	For all $\eps>0$ we have the following: there exists no algorithm which
	takes as input a $3-\textsc{SAT}$ instance $\phi$ on $n$ variables and a path
	decomposition of its primal graph of width $\pw$ and correctly decides
	if $\phi$ is satisfiable in time $(2-\eps)^{\pw}n^{O(1)}$.
\end{conjecture}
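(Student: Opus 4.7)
The plan is a pathwidth-preserving reduction from \ksat{3} to $\singlecliquepartprob{c,d}$. Given a \ksat{3} formula $\phi$ with a nice path decomposition of its primal graph of width $p$, I would build a graph $G$ of pathwidth $w$ such that $\phi$ is satisfiable iff $G$ admits a $\singlepartitioning{c}{K_d}$, and with $w \leq (s/k)\cdot p + \mathcal{O}_k(1)$ for constants $k = k(\varepsilon)$ and $s = \lceil (k\log 2 + \log d)/\log(c+1)\rceil$. A hypothetical $(c+1-\varepsilon)^{w}\cdot n^{\mathcal{O}(1)}$ algorithm for $\singlecliquepartprob{c,d}$ then decides $\phi$ in time $(2-\delta)^{p}\cdot n^{\mathcal{O}(1)}$ for some $\delta=\delta(\varepsilon)>0$, since $s/k \to \log 2/\log(c+1)$ as $k\to\infty$ and thus $(c+1-\varepsilon)^{s/k}<2$ for $k$ large enough, contradicting \ppseth.

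Group the variables of $\phi$ into consecutive blocks $B_1,\ldots,B_{n/k}$ of $k$ variables each, using an ordering derived from the nice decomposition (discussed below). By our choice of $s$, the set $S \deff \{r\in\{0,\ldots,c\}^{s} : \w(r)\equiv 0 \pmod d\}$ has $\abs{S}\geq 2^{k}$, so I fix an injection $\iota_B\from\{0,1\}^{k}\hookrightarrow S$ for each block. For each bag of $\phi$ in which $B$ is active, I introduce a fresh $s$-tuple of \emph{state vertices} for $B$, and I chain consecutive $s$-tuples of the same block along the path decomposition via \emph{copy gadgets} that \strict{c,K_d}-realize the equality relation $\{(r,r):r\in S\}$. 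The first $s$-tuple of $B$ is anchored by an \emph{assignment gadget} that \strict{c,K_d}-realizes $\im\iota_B$; and for every clause $C$, a \emph{clause gadget} is attached, at the bag of $\phi$ containing all variables of $C$ (which exists by the restated lemma), to the $s$-tuples of the (at most three) blocks supplying the variables of $C$, realizing the relation of concatenations $\iota_{B_1}(x_1)\cdot\iota_{B_2}(x_2)\cdot\iota_{B_3}(x_3)$ of assignments satisfying $C$. All three families are instantiated via \cref{lemma:gadget_clique_reg_relation}; the realized relations are automatically $\regular{(0,d)}$ by the definition of $S$.

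Correctness follows from the $\coherent{(c,K_d)}$ property of every gadget: in any $\singlepartitioning{c}{K_d}$ of $G$, every $K_d$ lies entirely inside one gadget, so the partition restricts to valid single-packings within each; the copy gadgets then force all $s$-tuples of the same block to encode the same element of $S$, the assignment gadget pins it to $\im\iota_B$, and the clause gadget forces the decoded assignments to satisfy every clause. Conversely, any satisfying assignment $\alpha$ of $\phi$ lifts to a valid $\singlepartitioning{c}{K_d}$ by installing, inside each gadget, the single-packing guaranteed by \strict{c,K_d}-realization with $r=\iota_B(\alpha|_B)$. The global single-packing property is maintained because gadgets overlap only at portal vertices, coherence prevents any $K_d$ from straddling two gadgets, and clique reuse is ruled out because a single-partitioning is a set (not a multiset) and internal cliques of different gadgets involve disjoint internal vertices.

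The main obstacle is the pathwidth analysis. The copy gadgets allow each $s$-tuple of state vertices to live only in a short window of the path decomposition around the bags of $\phi$ in which it is needed, so that at any bag of $G$ the only state vertices present are those of the blocks currently active. The crucial sub-step is to show that at most $\lceil p/k\rceil + \mathcal{O}(1)$ blocks are simultaneously active; this holds when the variables of $\phi$ are reordered — without increasing pathwidth, using the standard interval representation of pathwidth-$p$ graphs — so that the active variables at each bag form a consecutive window in introduction order. The resulting path decomposition of $G$ then has, per bag, $\mathcal{O}(s\cdot p/k)$ state vertices together with the $\mathcal{O}_k(1)$ internal vertices of the single gadget being processed at that bag, yielding $w\leq (s/k)p + \mathcal{O}_k(1)$ as required.
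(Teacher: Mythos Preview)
The statement you were asked to address is \cref{conjecture:pwseth}, the \ppseth\ itself. This is a complexity-theoretic \emph{hypothesis}, not a theorem: the paper does not prove it, and neither can you---it is an assumption on par with \seth\ or \ethh, used to derive conditional lower bounds. There is no ``paper's own proof'' to compare against because the paper offers none.

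Your proposal is in fact a proof sketch for a different statement, namely \cref{theorem:single_clique_packing_lower_bound}: ``if \ppseth\ holds, then $\singlecliquepartprob{c,d}$ has no $(c+1-\varepsilon)^{w}\cdot n^{\mathcal{O}(1)}$ algorithm.'' What you wrote is the contrapositive---a fast algorithm for the partitioning problem yields a fast algorithm for pathwidth-parameterized \ksat{3}---which is exactly the shape of that theorem's proof, not of the conjecture.

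Read as a proof of \cref{theorem:single_clique_packing_lower_bound}, your route differs from the paper's. The paper does not reduce directly from \ksat{3}; it invokes \cref{theorem:pwseth_equiv} to start from \kcsp{2} over alphabet $[B]$ with $B=(c+1)^{\ell-\abs{H}}$, then reduces to $\multicliquepartprob{c,d}$ (\cref{lemma:multi_clique_hardness}), and finally passes to the single-clique version via a separate reduction (\cref{lemma:single_clique_hardness}). This buys simplicity: one variable per CSP variable, no variable-grouping, no block-consecutiveness argument for the pathwidth bound. Your direct-from-\ksat{3} approach is viable in principle and is the classical Lokshtanov--Marx--Saurabh style, but the step you flag as ``the main obstacle''---reordering variables so that at most $\lceil p/k\rceil+\mathcal{O}(1)$ blocks are simultaneously active---needs more care than you give it: an arbitrary nice path decomposition need not have variable intervals that can be grouped this cleanly, and the standard fix (padding/bundling arguments) should be spelled out rather than asserted.
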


Moreover, we have the following result from \cite{lampisPrimalPathwidthSETH2025},
which proves the equivalence of falsifying \ppseth\ and finding a faster algorithm
for \textsc{2-CSP}.

\begin{theorem}[Theorem 3.2 from \cite{lampisPrimalPathwidthSETH2025}, shortened]
\label{theorem:pwseth_equiv}
For each $B\ge 3$ the following statements are equivalent:

\begin{enumerate}
\item The \ppseth\ is false.

\item There exist $\eps>0, b>0$ and an algorithm that takes as input a
\kcsp{2} instance $\psi$ on alphabet $[B]$, together with a path decomposition
of $\psi$, and decides if $\psi$ is
satisfiable in time $(B-\eps)^{\pw} \cdot \abs{\psi}^{b}$.
\end{enumerate}
\end{theorem}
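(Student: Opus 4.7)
The plan is to transfer the lower bound for the multi-clique partition problem (\cref{theorem:multi_clique_packing_lower_bound}) to the single-clique partition setting. The key enabling fact is that the clique gadgets furnished by \cref{lemma:gadget_clique_reg_relation} are strict-realizing: the set of portal coverage patterns they realize as a $\multipacking{c}{K_d}$ coincides with the set they realize as a $\singlepacking{c}{K_d}$. By \cref{theorem:pwseth_equiv} applied with $B = c+1$, it suffices to construct a polynomial-time reduction from a \kcsp{2} instance $\psi$ on alphabet $[c+1]$, given together with a path decomposition of its primal graph of width $p$, into an instance $G$ of $\singlecliquepartprob{c,d}$ with path decomposition of width $w \leq p + \mathcal{O}(1)$, such that $G$ admits a $\singlecliquepartitioning{c}{d}$ if and only if $\psi$ is satisfiable.

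For the construction, I would reuse the reduction used in the proof of \cref{theorem:multi_clique_packing_lower_bound}. Processing the bags $B_1, \ldots, B_t$ of a nice path decomposition of $\psi$ from left to right, one introduces, for every variable $x$ occurring in a bag $B_i$, a portal vertex $u_x^i$ whose coverage count in a partition encodes the value of $x$. For every two consecutive bags $B_i, B_{i+1}$ that share $x$, an equality gadget is attached forcing $u_x^i$ and $u_x^{i+1}$ to be covered the same number of times, and for every constraint $C$ assigned to a bag, a constraint gadget realizing the allowed-tuple relation of $C$ is attached on the corresponding portals. Every gadget is instantiated through \cref{lemma:gadget_clique_reg_relation}, after padding the underlying relation with a constant number of fixed-weight portal components to satisfy the $\regular{(0,d)}$ condition. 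Because every gadget is $\coherent{(c,K_d)}$ and \strict{c,K_d}-realizes its relation, each clique in any $(c,K_d)$-partition of $G$ lies inside a single gadget, and the portal coverage patterns achievable via $\singlepacking{c}{K_d}$s coincide, gadget by gadget, with those achievable via $\multipacking{c}{K_d}$s. Composing over all gadgets, $G$ admits a $\singlecliquepartitioning{c}{d}$ iff $G$ admits a $\multicliquepartitioning{c}{d}$ iff $\psi$ is satisfiable.

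If some $\varepsilon > 0$ yields a $(c+1-\varepsilon)^{w} \cdot n^{\mathcal{O}(1)}$ algorithm for $\singlecliquepartprob{c,d}$, substituting $w \leq p + \mathcal{O}(1)$ and $n = \abs{\psi}^{\mathcal{O}(1)}$ produces a $(c+1-\varepsilon)^{p} \cdot \abs{\psi}^{\mathcal{O}(1)}$ algorithm for \kcsp{2} on alphabet $[c+1]$, contradicting \ppseth\ via \cref{theorem:pwseth_equiv}. The main obstacle is the tight pathwidth accounting: each variable must contribute only a single portal per active bag, since any multiplicative blow-up of $w$ beyond $p + \mathcal{O}(1)$ would make $(c+1-\varepsilon)^{w}$ exceed $(c+1)^{p}$ and destroy the reduction. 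The $\regular{(0,d)}$ requirement of \cref{lemma:gadget_clique_reg_relation} would naively suggest $d$ portals per variable, so the engineering lives in absorbing the weight correction inside the constant-size internals of each gadget rather than inflating the external portal set. Once this is arranged in the multi-clique reduction, the transfer to the single-clique setting is automatic, since strict-realization makes the two packing variants indistinguishable at the portal interface of every gadget.
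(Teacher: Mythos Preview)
Your proposal does not address the stated theorem at all. \Cref{theorem:pwseth_equiv} is a result quoted from Lampis~\cite{lampisPrimalPathwidthSETH2025}; the present paper does not prove it and uses it as a black box. What you have written is instead a proof sketch for \cref{theorem:single_clique_packing_lower_bound}, in which \cref{theorem:pwseth_equiv} appears only as an invoked tool (you write ``By \cref{theorem:pwseth_equiv} applied with $B = c+1$''). So there is nothing to compare: the paper has no proof of the statement you were asked to prove, and your text proves a different statement.

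Even read as an attempt at \cref{theorem:single_clique_packing_lower_bound}, your outline diverges from the paper in a way that introduces a real gap. You insist on alphabet $B=c+1$ and one portal per variable per bag, aiming for $w\le p+\mathcal{O}(1)$, and you acknowledge that the $\regular{(0,d)}$ requirement of \cref{lemma:gadget_clique_reg_relation} is the obstacle, proposing to ``absorb the weight correction inside the constant-size internals of each gadget.'' But you give no mechanism for doing this, and the paper does not do it either: in \cref{lemma:multi_clique_hardness} each variable is represented by $\ell$ portals with $\ell$ a multiple of $d$, the alphabet is taken to be $B=(c+1)^{\ell-d}$, and the pathwidth satisfies $\pw(G)=\ell\cdot p+\mathcal{O}(1)$; the running-time calculation then absorbs the multiplicative factor $\ell$ by choosing $\ell$ large as a function of $\varepsilon$. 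Your additive-overhead plan would need a genuinely new gadget that \strict{c,K_d}-realizes a non-$\regular{(0,d)}$ relation on a single external portal, which neither you nor the paper supplies. For the record, the paper's actual proof of \cref{theorem:single_clique_packing_lower_bound} goes via the separate reduction \cref{lemma:single_clique_hardness} from $\multicliquepartprob{c,d}$ to $\singlecliquepartprob{c,d}$, though it also remarks (as you do) that strict realization would let the construction of \cref{lemma:multi_clique_hardness} be reused verbatim.
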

\begin{figure}[htpb]
	\centering
	\includegraphics[scale=0.9]{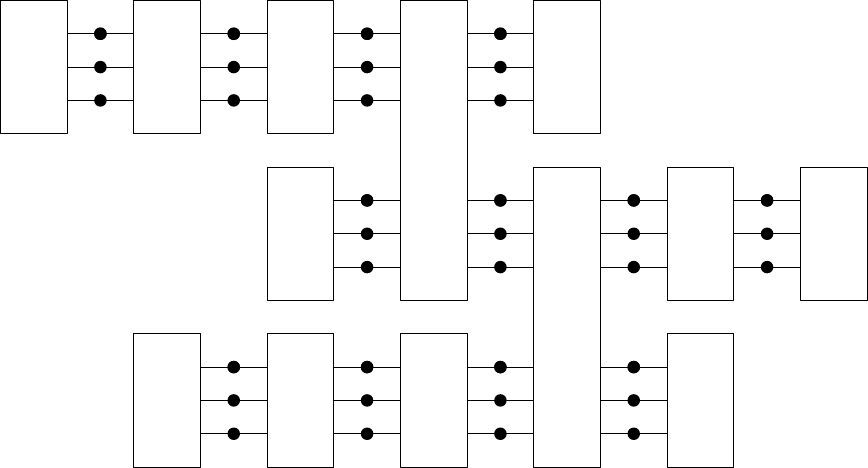}
	\caption{A high-level description of the lower bound construction used
		to prove \cref{theorem:multi_clique_packing_lower_bound}. The
		coverage counts of all vertices in a column represent an
		assignment to the variables of the \kcsp{2} instance. The
		rectangles depict the gadgets that enforce the constraints;
		observe that they interact locally with the vertices, which
	allows us to bound the pathwidth of the constructed instance.}
	\label{fig:single_exponential_lb}
\end{figure}
Motivated by \cref{theorem:pwseth_equiv}, \cref{section:lb_clique} presents a
reduction from the \kcsp{2} problem to the \singlecliquepartprob{c,d} problem.
The core idea is to encode each variable’s $B$ possible assignments using a
set of vertices whose coverage count in a $\singlecliquepacking{c}{d}$ represents
the chosen value. We then introduce gadgets each enforcing a single constraint of the \kcsp{2} instance via carefully specified
relations so that only coverage patterns corresponding to satisfying
assignments are possible. By designing these gadgets to interact locally, we
ensure the pathwidth of the resulting instance increases by at most a constant,
completing the reduction.

\subsection{Slightly Superexponential Lower Bound}
We now give a high-level description of the proof of \cref{theorem:graph_packing_arbitrary_lower_bound}.
Let $H$ be a graph with at least $3$ vertices that is not a block graph.
Next, we will describe the lower bound for the $\graphpartprob{H}$ problem
which is stated in \cref{theorem:graph_packing_arbitrary_lower_bound}.
Recall that $\graphpartprob{H}$ is solvable in slightly super-exponential
time for any fixed graph $H$,
i.e., there is an algorithm with running time $\mathcal{O}^*\left(2^{O(tw \cdot \log tw)}\right)$.\footnote{The $O^*(f(k))$ notation suppresses polynomial factors in the input size $n$; that is, $O^*(f(k)) = O(f(k) \cdot n^c)$ for some constant $c$.}
We show that this running time is optimal under \ethh, i.e. there exists no algorithm for $\graphpartprob{H}$ with running time
$\mathcal{O}^*\left(2^{o(tw \cdot \log(tw))}\right)$.
To that end, we use an auxiliary problem
for which such a lower bound was presented in \cite{cyganParameterizedAlgorithms2015}.
Specifically, we define the $\kkindset$ problem where given a graph $G$
on a vertex set $[k] \times [k]$, we ask whether there exists an independent set
$X$ in $G$ that contains exactly one vertex from each row and each column.

\defproblem{$\kkindset$}
{A graph $G$ on the vertex set $[k] \times [k]$}
{Is there an independent set $X$ of $G$ such that $X$ induces a permutation on $[k] \times [k]$?}

The following hardness result was presented for the analogous $\kkclique$ problem
in \cite{cyganParameterizedAlgorithms2015} which translates easily to $\kkindset$.
Below we state it for $\kkindset$.

\begin{theorem}[Theorem 14.14 in \cite{cyganParameterizedAlgorithms2015}]\label{theorem:kkindset_lb_result}
	Unless \ethh\ fails, $\kkindset$ cannot be solved in time $2^{o( k \cdot \log k)}$.
\end{theorem}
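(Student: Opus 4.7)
The plan is to derive this hardness result directly from the known $\kkclique$ lower bound by a parameter-preserving reduction via graph complementation. Given any instance $G$ of $\kkclique$ on vertex set $[k] \times [k]$, I form the complement graph $\overline{G}$ on the same vertex set, in which two vertices are adjacent exactly when they are non-adjacent in $G$. The construction is clearly polynomial-time and preserves the vertex set exactly, so the condition of \emph{inducing a permutation} on $[k] \times [k]$---which depends solely on the vertex set---is invariant under this transformation. Since a set $X \subseteq V(G)$ is a clique in $G$ if and only if $X$ is an independent set in $\overline{G}$, the graph $G$ admits a $k \times k$ permutation clique if and only if $\overline{G}$ admits a $k \times k$ permutation independent set.

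Consequently, any algorithm solving $\kkindset$ in time $2^{o(k \cdot \log k)}$ would, composed with the complementation step, immediately yield an algorithm with the same running time for $\kkclique$, contradicting Theorem~14.14 of \cite{cyganParameterizedAlgorithms2015} under \ethh. The argument is genuinely symmetric, as $\overline{\overline{G}} = G$, which further confirms that the two problems are equivalent up to this transformation. I do not foresee any substantive obstacle: the only care required is to verify that the parameter $k$ is preserved exactly (not merely up to a constant factor) and that the complementation does not increase the instance size beyond polynomial, both of which are immediate from the definition of $\overline{G}$.
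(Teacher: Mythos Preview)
Your proposal is correct and matches the paper's approach: the paper does not give a formal proof but simply notes that the hardness result from \cite{cyganParameterizedAlgorithms2015} for $\kkclique$ ``translates easily'' to $\kkindset$, and your complementation argument is precisely this translation.
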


To prove \cref{theorem:graph_packing_arbitrary_lower_bound}, we reduce the
$\kkindset$ problem to $\graphpartprob{H}$. The construction relies on
structural properties of $H$. Let $B$ be a block in $H$ with a minimum
separator $S$ such that $B \setminus S$ has at least two connected components.
Since $H$ contains a non-clique block, such $B$ and $S$ exist.
We partition $S$ into two subsets, $U$ and $D$, and create $k$ copies of each.
The construction in the proof of \cref{theorem:graph_packing_arbitrary_lower_bound}
ensures that any $\packing{\abs{H}}$ includes $k$ copies of
$H$, each covering exactly one copy of $U$ and one of $D$. This yields $k! = 2^{\mathcal{O}\left(k \cdot \log k\right)}$
configurations, corresponding to all permutations of a set of size $k$.
Additionally, for each edge $e$ in the $\kkindset$ instance, we introduce a
gadget to prevent $e$ from being included in the set of vertices
induced by this permutation. Moreover, these gadgets are designed to preserve
pathwidth by interacting only locally.

\subsection{Algorithmic Results}
Recall that a tree decomposition of a graph $G = (V, E)$ is a pair
$(T,\{X_t\}_{t \in T})$, where $T$ is a tree whose every node $t$ is assigned a
bag $X_t \subseteq V(G)$, satisfying
the following properties:
\begin{enumerate}
	\item For every vertex $v \in V$, there exists at least one bag $X_t$
		such that $v \in X_t$.
	\item For every edge $(u, v) \in E$, there exists a bag $X_t$ such that
		$\{u, v\} \subseteq X_t$.
	\item For every vertex $v \in V$, the set of nodes $\{t \in V(T) \mid v
		\in X_t\}$ induces a connected subtree of $T$.
\end{enumerate}
The \emph{width} of a tree decomposition is the size of its largest bag minus
one, and the \emph{treewidth} of $G$ is the minimum width over all possible
tree decompositions of $G$.
A nice tree decomposition is a rooted tree decomposition 
in which each node is one of the four types: leaf, introduce, forget or join.
We refer the reader to \cite{cyganParameterizedAlgorithms2015} for more details.

The algorithmic results in
\cref{theorem:graph_packing_arbitrary_algo,theorem:single_clique_packing_algo}
follow a fairly standard dynamic programming framework over tree
decompositions. We define a suitable set of states for each bag in the
decomposition, capturing the essential information needed to extend partial
solutions. The main challenge of the approach lies in carefully designing these states
and proving the correctness of the update rules that determine how states
transition from one bag to the next. Usually what determines the running time is
the state representation and
transitions to the specific structural constraints of our problem.

In particular, the single-exponential algorithm for the
$\singlecliquepackprob{c,d}$ problem defines the state of a bag based on how
many times each vertex in the bag is covered by a partial solution. This yields
$(c+1)^{\ell}$ states for a bag of size $\ell$, naturally leading to a running
time of $(c+1)^{tw} \cdot n^{\mathcal{O}\left(1\right)}$. 

In contrast, the slightly superexponential algorithm for the
$\graphpackprob{H}$ problem requires keeping track of a partition of the bag, in which
each part corresponds to a partial copy of $H$. This results in
$\mathcal{O}(\ell^{\ell})$ possible states for a bag of size $\ell$, leading to an overall
running time of $2^{\mathcal{O}(tw \cdot \log tw)} \cdot n^{\mathcal{O}(1)}$.

\section{Gadgets and Relations}
\label{section:gadgets}
In this paper, we consider specific combinations of graphs $H$
and integer values for $c$.
Specifically, define the sets
\begin{align*}
	\mathcal{C}_1 &\coloneqq \Big\{ (s,K_d) \,\Big|\, s \geq 2, d \geq 3 \Big\},\\
	\mathcal{C}_2 &\coloneqq \Bigl\{(1,H) \,\Big|\, H \text{ is an arbitrary graph on at least $3$ vertices} \Bigr\}
\end{align*}
where $K_d$ is the clique on $d$ vertices.
Next, we describe a general approach for proving that a gadget strictly
realizes a given relation.

\begin{remark}\label{remark:strict_real_proof_guide}
	Let $c \geq 1$ be an integer, $H$ be a graph. Let $G$ be a gadget with portal vertices
	$p_1, \ldots, p_{\ell}$ and $R$ be a relation of arity $\ell$.
	Note that since any $\singlepacking{c}{H}$ of $G$ is, by definition, also a
	$\multipacking{c}{H}$ of $G$, to show that the gadget $G$ \strict{c,H}-realizes $R$,
	it is enough to prove the following two statements:
	\begin{enumerate}
		\item Given a $\multipacking{c}{H}$ $\mathcal{Z}$ of $G$ that covers each
	internal vertex exactly $c$ times and each portal vertex $p_i$ exactly $r_i$ times,
	it holds that $r \coloneqq (r_1, \ldots, r_{\ell}) \in R$.
		\item For any $r \in R$, there exists a $\singlepacking{c}{H}$
			$\mathcal{Z}$ of $G$ such that all internal vertices
			of $G$ are covered exactly $c$ times, and each portal vertex
			$p_i$ is covered $r_i$ times by $\mathcal{Z}$.
	\end{enumerate}
	These two statements together imply that $G$
	both \dist{c,H}- and \arb{c,H}-realizes $R$, i.e., $G$
	\strict{c,H}-realizes $R$.
\end{remark}

Given a relation $R$, we call a gadget $G$ that realizes $R$
an $R$-gadget, if $c$ and $H$ are clear from the context.
Let $G$ have $\ell$ portal vertices $p_1, \ldots, p_\ell$, and let $E$ be a graph.
Attaching $G$ to vertices $v_1, \ldots., v_\ell$ in $E$ means choosing an arbitrary
bijection $\alpha$ and identifying each $p_i$ with $v_{\alpha(i)}$ for $i \in [\ell]$.
If we want to specify the order, then we say we attach
$(p_1, \ldots, p_\ell)$ to $(v_1, \ldots, v_\ell)$ in $E$.
Now let $\ell \geq 1$ and define the relations
\begin{align*}
	\coverrel{\ell}{y} &\coloneqq \Bigl\{ \bigl(y,\ldots,y \bigr) \Bigr\} \subseteq \mathbb{Z}_{\geq 0}^{\ell} \text{ where } y \in \mathbb{Z}_{\geq 0},\\
	\eqrel{\,\ell}{X} &\coloneqq \bigcup_{x \in X} \coverrel{\ell}{x}, \text{ where } X \subseteq \mathbb{Z}_{\geq 0},\\
	\sumrel{\,\ell}{z} &\coloneqq \biggl\{ \bigl(x_1, \ldots, x_\ell \bigr) \biggm| 0 \leq x_i \leq z \text{ for } 1 \leq i \leq \ell, \,\sum_{i = 1}^{\ell} x_i = z\biggr\}, \text{ where } z \geq 1\\
\end{align*}
and
\begin{equation*}
	\cneqrel{z} \coloneqq \bigl\{(x,z-x) \mid x \in \{0, \ldots, z\} \bigr\} = \sumrel{\,2}{z} \text{ for } z \geq 1. 		
\end{equation*}
In \cite{kirkpatrickComplexityGeneralGraph2006}, a $\cneqrel{1}$-gadget
and an $\eqrel{\abs{H}}{\{0,1\}}$-gadget are presented for an arbitrary graph $H$.
Moreover, it is shown that
both are $\coherent{(1,H)}$.

Recall that we say a relation $R \subseteq \{0,\ldots,c\} ^{\ell}$
is $\regular{d}$ if for each $r \in R$, the weight of $r$ is divisible by $d$, i.e. 
$\w(r) \coloneqq \left( \sum_{i \in [\ell]} r_i \right)  =  0 \mod d$.
Moreover, given an integer $c \geq 1$ and a vector $s \in \{0, \ldots, c\}^{\ell}$,
we define
\begin{equation*}
	\Compl{c}{s} \coloneqq \bigl( c - s_1, \ldots, c-s_{\ell} \bigr).
\end{equation*}
Similarly, for a relation $R \subseteq \{0, \ldots, c\}^{\ell}$, we let
\begin{equation*}
	\Compl{c}{R} \coloneqq \Bigl\{ \Compl{c}{r} \Bigm| r \in R\Bigr\}.
\end{equation*}
For two tuples $s = \left( s_1, \ldots, s_{\ell_1} \right) $ and
$r = \left( r_1, \ldots, r_{\ell_2} \right) $ we define a new tuple that
concatenates them, i.e.
\begin{equation*}
	s \odot r = \left(  s_1, \ldots, s_{\ell_1},  r_1, \ldots, r_{\ell_2} \right).
\end{equation*}

Finally, given a graph $G$ and a vertex $v \in V(G)$, we define the operation of
replacing $v$ with $t$ copies. We create a graph $G'$ such that $V(G') = \Bigl(V(G) \setminus \{v\} \Bigr) \cup \{v_1, \ldots, v_t\}$
where
\begin{equation*}
	N_{G'}(v_1) = N_{G'}(v_2) = \ldots = N_{G'}(v_t) = N_{G}(v).
\end{equation*}

\subsection{Gadgets constructions}
In this section, we present gadgets that
realize relations with distinct copies of $H$, which will be used in
\cref{section:lb_clique}.
Next, we present a $\cneqrel{c}$-gadget.
\begin{lemma}\label{lemma:single_partition_neq}
	Let $(c,H) \in \left( \mathcal{C}_1 \cup \mathcal{C}_2 \right) $,
	then there exists a $\coherent{(c,H)}$ gadget $G$ 
	with $2$ portal vertices that \strict{c,H}-realizes the relation
	$\cneqrel{c}$. Furthermore,
	the size of $G$ is bounded by a constant.
\end{lemma}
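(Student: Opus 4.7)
The proof splits naturally along the two subfamilies of $\mathcal{C}_1 \cup \mathcal{C}_2$. When $c = 1$ (i.e., $(c,H) \in \mathcal{C}_2$) the distinction between single- and multi-packings disappears, so by Remark~\ref{remark:strict_real_proof_guide} it suffices to $H$-realize $\cneqrel{1}$; I invoke the constant-size, $\coherent{(1,H)}$ $\cneqrel{1}$-gadget of Kirkpatrick and Hell~\cite{kirkpatrickComplexityGeneralGraph2006} already acknowledged in the excerpt. The remainder of the argument treats the interesting case $(c,H) \in \mathcal{C}_1$, i.e., $H = K_d$ with $d \geq 3$ and $c \geq 2$.

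The skeleton of $G$ will be a clique $K_n$ on the vertex set $\{p_1, p_2, v_1, \ldots, v_{n-2}\}$, where $n$ depends only on $c$ and $d$, satisfies the divisibility $d \mid c(n-1)$, and is chosen large enough to carry out the combinatorial designs below. Since $n$ is constant, $\abs{V(G)}$ is constant, and internal coherence is automatic: every $v_i$ is internal and can have no neighbor outside $V(G)$ in any extension $E$, so every $K_d$-copy in $E$ that meets some $v_i$ lies entirely in $V(G)$.

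For the forward direction, I build for each $x \in \{0, \ldots, c\}$ a single-packing of $K_n$ that covers every $v_i$ exactly $c$ times and $(p_1, p_2)$ with multiplicity $(x, c-x)$. It consists of (i) $x$ distinct $K_d$-copies of the form $\{p_1\} \cup S$ plus $c-x$ copies $\{p_2\} \cup S'$ for carefully chosen $(d{-}1)$-subsets $S, S' \subseteq \{v_1, \ldots, v_{n-2}\}$, together with (ii) a supplementary family of purely internal $K_d$-copies inside $K_{n-2}$ topping up every $v_i$ to coverage exactly $c$. The divisibility $d \mid c(n-1)$ makes the residual coverage demand on the internals a multiple of $d$, and for $n$ large enough such a supplementary family exists by a standard clique-packing design argument.

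For the converse direction, I count the total vertex-coverage of any valid packing $\mathcal{Z}$ satisfying the hypotheses of Remark~\ref{remark:strict_real_proof_guide}: it equals $c(n-2) + y_1 + y_2 = d \cdot \abs{\mathcal{Z}}$, so $y_1 + y_2 \equiv c \pmod{d}$, where $y_i$ is the coverage of $p_i$. When $c < d$, this forces $y_1 + y_2 = c$ and hence $(y_1, y_2) \in \cneqrel{c}$. When $c \geq d$, however, extra sums $c - d, c + d, \ldots$ also fall in the allowed range, so the bare $K_n$ skeleton realizes a strictly larger relation than $\cneqrel{c}$. To eliminate these unwanted patterns I attach, using Lemma~\ref{lemma:gadget_clique_reg_relation}, a $\coherent{(c, K_d)}$ auxiliary gadget that shares the portals $p_1, p_2$ and strictly realizes a suitable $\regular{(0,d)}$ relation whose projection onto $(p_1, p_2)$ is exactly $\cneqrel{c}$; any additional portals introduced by the auxiliary gadget are closed off by companion sub-gadgets that saturate them, so the final composed gadget retains only $p_1, p_2$ as external portals.

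The main technical obstacle is exactly the $c \geq d$ subcase. One has to verify simultaneously that (a) every $(x, c-x) \in \cneqrel{c}$ still admits a joint single-packing witnessing it, (b) no portal pattern outside $\cneqrel{c}$ can be realized after the gadgets are composed, and (c) the internal coverage requirements of both gadgets remain simultaneously satisfiable. Constructing the companion sub-gadgets that saturate the auxiliary portals -- whose induced relations are not themselves $\regular{(0,d)}$ in general -- and checking this compatibility is where the bulk of the work will lie.
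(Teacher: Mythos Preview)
Your handling of $(c,H)\in\mathcal{C}_2$ matches the paper's. For $(c,H)\in\mathcal{C}_1$, however, your plan has a fatal circularity: in the $c\ge d$ subcase you invoke Lemma~\ref{lemma:gadget_clique_reg_relation}, but that lemma is established \emph{downstream} of the present one. Its proof chain runs through Lemmas~\ref{lemma:multi_partition_coverrel_c_H}, \ref{lemma:multi_partition_eqrel}, and \ref{lemma:single_partition_eqrel_abs_H}, each of which explicitly uses the $\cneqrel{c}$-gadget you are trying to build here. Your suggestion to ``close off'' extra portals with companion sub-gadgets has the same problem: those companions would again need to be produced from scratch. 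Independently, even in the $c<d$ subcase your forward direction rests on an unspecified ``standard clique-packing design argument'' that must produce a \emph{set} (not multiset) of $K_d$-copies achieving prescribed, non-uniform residual coverages on the internal vertices; this is not routine and would need a real proof.

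The paper avoids all of this with a direct, self-contained construction. Take $c$ disjoint copies $A^{(1)},\ldots,A^{(c)}$ of $K_{d-1}$ together with an independent set $v_1,\ldots,v_{c+1}$, and make every $v_j$ complete to every $A^{(i)}$; declare $v_c,v_{c+1}$ the portals. The only $K_d$'s in $G$ are the sets $A^{(i)}\cup\{v_j\}$, so every packing is a multiset of such pairs. Since each $A^{(i)}$ consists of internal vertices and must be covered exactly $c$ times, the packing has size $c^2$; the $c-1$ internal $v_j$'s absorb $c(c-1)$ of these, leaving exactly $c$ cliques through a portal, one portal each, which forces $y_1+y_2=c$ outright (no modular ambiguity). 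The forward direction is equally explicit: for given $x$, take $A^{(i)}\cup\{v_j\}$ for all $i\in[c],\,j\in[c-1]$, then $A^{(i)}\cup\{v_c\}$ for $i\le x$ and $A^{(i)}\cup\{v_{c+1}\}$ for $i>x$; all $c^2$ copies are distinct. Internal coherence follows because any $K_d$ in an extension that meets an internal vertex but also a vertex outside $G$ would, by $2$-connectivity of $K_d$, have to contain both (non-adjacent) portals. No design theory, no auxiliary lemmas, no case split on $c$ versus $d$.
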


\begin{figure}[htpb]
	\centering
	\includegraphics[scale=1]{./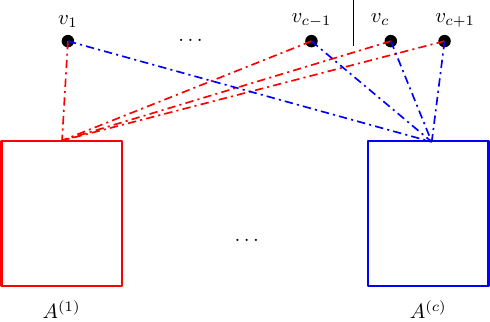}
	\caption{The gadget described in \cref{lemma:single_partition_neq}. Each $A^{(i)}$
	is a copy of $H$ with one vertex removed. The dash-dotted lines between $A^{(i)}$ and $v_{j}$ for
	$i \in [c]$ and $j \in [c+1]$ imply that $v_j$ is connected to each vertex
	of $A^{(i)}$. Red and blue colors are used to distinguish connections corresponding to different $A^{(i)}$'s.}
	\label{fig:single_neqrel_two}
\end{figure}

\begin{proof}
	As mentioned before, for $(c,H) \in \mathcal{C}_2$ there exists
	a $\coherent{(1,c)}$ gadget in \cite{kirkpatrickComplexityGeneralGraph2006}
	that $\abs{H}$-realizes the relation $\cneqrel{1}$.
	Therefore, we focus on the $(c,H) \in \mathcal{C}_1$ case and
	continue with the construction of the gadget.

	\subparagraph{Construction of the gadget.}		
	We introduce $c$ copies of $H$, called $A_1, \ldots,
	A_c$. Then, for each $i \in [c]$, we replace a vertex $v \in A_i$ with
	$c+1$ copies of $v$, which are denoted by $v^{i}_j$ for $j \in [c+1]$.
	Then, for each $j \in [c+1]$, we identify the vertices $\{v^{i}_j\}_{i
	\in [c]}$ and call the new vertex $v_j$. Finally, we designate $v_c$
	and $v_{c+1}$ as the portal vertices of the gadget we call $G$. This is
	the whole construction.

	\subparagraph{The $c$-coherence of $G$.}
	Let $E$ be an extension of $G$ and $\mathcal{Z}$ be a
	$\singlepartitioning{c}{H}$ / $\multipartitioning{c}{H}$
	of $E$.	Suppose for a contradiction that there exists $Z \in \mathcal{Z}$
	that contains an internal vertex $u$ of $G$, together with a vertex
	$x \in E \setminus G$. In $Z$, there should be two vertex disjoint paths
	from $u$ to $x$, because otherwise $Z$ would have a cutvertex, which is a
	contradiction since $Z$ is isomorphic to a clique. 
	Therefore, $Z$ should contain both $v_c$ and $v_{c+1}$, the two
	portal vertices of $G$.
	However, this leads to a contradiction,
	since there are no edges between $v_c$ and $v_{c+1}$ by construction.

	\subparagraph{The relation realized by $G$.}	
	Following \cref{remark:strict_real_proof_guide}, let
	$\mathcal{Z}$ be a $\multipacking{c}{H}$ of $G$ that covers each
	internal vertex exactly $c$ times and the portal vertices $a$ and $b$ times.
	We can partition $\mathcal{Z}$ into two: copies of $H$ that contain
	a portal vertex and those that do not. Let $\left( \mathcal{Z}_1,\mathcal{Z}_2 \right) $
	denote this partition of $\mathcal{Z}$. By the arguments above in the $c$-coherence section,
	each copy
	of $H$ in $\mathcal{Z}_1$ should contain exactly one portal vertex and all vertices of $A_i$
	for some $i \in [c]$. Similarly, each copy of $\mathcal{Z}_2$ should contain
	$v_i$ for some $i \in [c-1]$ and vertices of $A_j$ for some $j \in [c]$.
	Observe that the size of $\mathcal{Z}$ should be $c^{2}$ so that each
	internal vertex of $G$ is covered exactly $c$ times. On the other hand,
	the size of $\mathcal{Z}_2$ is $c \cdot (c-1)$ which implies that the size
	of $\mathcal{Z}_1$
	is equal to $c$. Since each copy of $H$ in $\mathcal{Z}_1$ covers exactly
	one portal vertex, $\mathcal{Z}$ covers $v_c$ and $v_{c+1}$ exactly $c$
	times in total, and we have $a + b = c$.

	On the other hand, for each $(x, c-x) \in \cneqrel{c}$, it is
	straightforward to construct a $\singlepacking{c}{H}$ of $H$ in $E$ that
	covers each internal vertex of $G$ exactly $c$ times, and the portal
	vertices $v_c$ and $v_{c+1}$ a total of $c = a + b$ times.
	Specifically, initialize $\mathcal{K} = \emptyset$. For each $i \in
	[c]$ and $j \in [c-1]$, add to $\mathcal{K}$ a copy of $H$ covering the
	vertices of $A_i$ and $v_j$. Then, 
	for $i \in [x]$, add to $\mathcal{K}$ a copy of $H$ covering $A_i$ and
	$v_c$. For the remaining $c-x$ sets (i.e., for $i \in \{x+1, \ldots,
	c\}$), add a copy covering $A_i$ and $v_{c+1}$.
	This construction yields a $\singlepacking{c}{H}$ $\mathcal{K}$ that
	covers each internal vertex of $G$ exactly $c$ times, and the portal
	vertices $x$ and $c-x$ times, respectively.
	
	\subparagraph{The size of $G$.}
	The size bound follows from the fact that we introduce $c$ copies of
	$H$ together with $c+1$ many portal vertices.
\end{proof}

Next, we make the following observation: by introducing $\coherent{(c,H)}$ gadgets that \strict{c,H}-realize the relation
$\cneqrel{c}$, we can assume that
without loss of generality all the gadgets introduced in this section are
$\coherent{(c,H)}$.
 
\begin{observation}\label{obs:multi_coherence}
	Let $G$ be a gadget with portal vertices $p_1, \ldots, p_{\ell}$
	that $(c,H)$-realizes some relation $R$.
	Then, for each $i \in [\ell]$, we introduce a
	$\coherent{(c,H)}$ gadget $N_i$ with portal vertices $a_i$ and $b_i$
	that \strict{c,H}-realizes the relation $\cneqrel{c}$.
	Then, we identify
	$a_i$ and $p_i$ for each $i \in [\ell]$. Let $G'$ denote the
	gadget constructed in this manner with portal vertices $b_1, \ldots, b_{\ell}$.
	Then, $G'$ $(c,H)$-realizes the relation $R$,
	and since each $N_i$ is $\coherent{(c,H)}$, $G'$ is also $\coherent{(c,H)}$.
\end{observation}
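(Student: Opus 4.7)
The plan is to verify the two claims separately: that $G'$ is $\coherent{(c,H)}$, and that $G'$ $(c,H)$-realizes $R$.

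For coherence, take any extension $E$ of $G'$ and any $\multipartitioning{c}{H}$/$\singlepartitioning{c}{H}$ $\mathcal{Z}$ of $E$, and consider some $Z \in \mathcal{Z}$ that contains an internal vertex of $G'$. First note that $E$ is automatically an extension of each $N_i$: internal $N_i$-vertices are internal to $G'$, so their $E$-neighbors lie in $V(G')$, and because $N_i$ is an induced subgraph meeting the rest of $G'$ only at $a_i$, these neighbors actually lie in $V(N_i)$. If $Z$ meets some internal $N_i$-vertex, then the $(c,H)$-coherence of $N_i$ forces $V(Z) \subseteq V(N_i) \subseteq V(G')$ and we are done. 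Otherwise $V(Z) \cap V(N_i) \subseteq \{a_i, b_i\}$ for every $i$, and the chosen internal-to-$G'$ vertex of $Z$ sits in $V(G)$.

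The key structural fact I would invoke from Lemma~\ref{lemma:single_partition_neq} is that, in the $\cneqrel{c}$-gadget, $a_i$ and $b_i$ are non-adjacent in $N_i$ and each of their $N_i$-neighbors is an internal vertex of $N_i$. After deleting all internal $N_i$-vertices (over every $i$) from $E$, each $b_i$ becomes isolated from $V(G)$ inside $V(G')$, because every $V(G')$-path between them had to pass through the deleted vertices. Since $Z$ is connected, disjoint from the deleted set, and meets $V(G)$, no portal $b_i$ of $G'$ belongs to $V(Z)$; as the only $E$-edges leaving $V(G')$ are incident to the portals $b_i$, we conclude $V(Z) \subseteq V(G) \subseteq V(G')$.

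For the realization claim, coherence lets me decompose any valid partial packing as $\mathcal{Z}_G \sqcup \mathcal{Z}_{N_1} \sqcup \cdots \sqcup \mathcal{Z}_{N_\ell}$, where $\mathcal{Z}_G$ consists of copies contained in $V(G)$ and each $\mathcal{Z}_{N_k}$ of copies contained in $V(N_k)$. Truly internal vertices of $G$ are covered only by $\mathcal{Z}_G$, internal $N_k$-vertices only by $\mathcal{Z}_{N_k}$, and the portal $b_k$ of $G'$ only by $\mathcal{Z}_{N_k}$; the identified vertex $p_k = a_k$ is covered by both. Writing $r_k$ for the coverage of $p_k$ by $\mathcal{Z}_G$, the coverage of $a_k$ being exactly $c$ forces $\mathcal{Z}_{N_k}$ to contribute $c - r_k$ at $a_k$, and then the $\cneqrel{c}$-realization by $N_k$ pins the coverage of $b_k$ at $r_k$. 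Hence the coverage vector of $G'$ at $(b_1, \ldots, b_\ell)$ matches the coverage vector of $G$'s portals under $\mathcal{Z}_G$, which lies in $R$ because $G$ realizes $R$. Conversely, given $r \in R$, I would glue a realizing packing of $G$ for $r$ with packings of each $N_k$ witnessing $(c - r_k, r_k) \in \cneqrel{c}$; the overlaps at the shared $a_k = p_k$ sum exactly to $c$, producing a valid packing of $G'$. The argument carries over to both \dist{c,H}- and \arb{c,H}-realization since copies in $\mathcal{Z}_G$ and $\mathcal{Z}_{N_k}$ automatically live on essentially disjoint vertex sets (overlapping only at $a_k$), so distinctness is preserved on combining. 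The main obstacle is the coherence subcase in which $Z$ avoids every internal $N_i$-vertex but still meets $V(G)$: it rests on the specific structure of the $\cneqrel{c}$-gadget from Lemma~\ref{lemma:single_partition_neq}, which guarantees the separation of $b_i$ from $V(G)$ after removing internal $N_i$-vertices. Everything else is routine bookkeeping with the coverage constraints.
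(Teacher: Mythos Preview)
Your argument is correct and considerably more thorough than the paper's own treatment: the paper states this as an observation whose only justification is the clause ``since each $N_i$ is $\coherent{(c,H)}$, $G'$ is also $\coherent{(c,H)}$,'' with no argument at all for the realization claim. Your decomposition $\mathcal{Z} = \mathcal{Z}_G \sqcup \mathcal{Z}_{N_1} \sqcup \cdots \sqcup \mathcal{Z}_{N_\ell}$ and the bookkeeping of coverages at the shared vertices $a_i=p_i$ are exactly what is needed for the realization part, and your remark on distinctness is sound since $|H|\ge 3$ forces copies in $\mathcal{Z}_G$ and $\mathcal{Z}_{N_i}$ to be different subgraphs.

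What is worth highlighting is that you go further than the paper: you notice that coherence of the $N_i$'s alone does not immediately yield coherence of $G'$. The subcase where $Z$ meets $V(G)$ but misses every internal $N_i$-vertex genuinely needs the extra structural fact that the two portals of the $\cneqrel{c}$-gadget are non-adjacent (so any path from $b_i$ into $V(G)$ inside $G'$ must pass through an internal $N_i$-vertex). This is true for the gadget built in Lemma~\ref{lemma:single_partition_neq} when $(c,H)\in\mathcal{C}_1$; for $(c,H)\in\mathcal{C}_2$ the paper defers to the Kirkpatrick--Hell construction, so strictly speaking you are relying on the same property there. You flag this dependence explicitly, which is appropriate --- the paper's one-line justification glosses over it.
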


\begin{figure}[htpb]
	\centering
	\includegraphics{./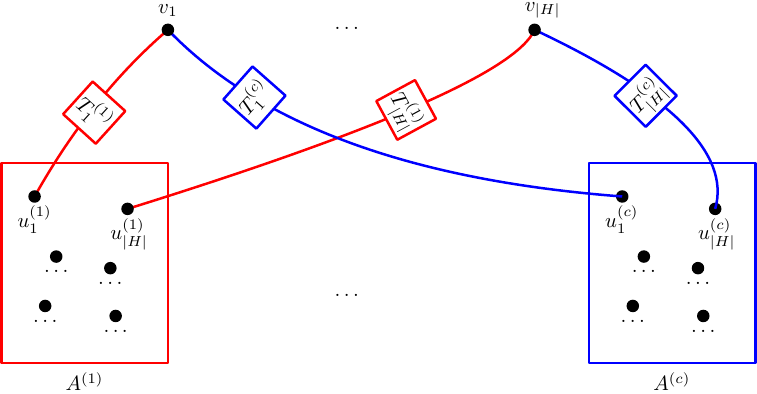}
	\caption{The gadget described in \cref{lemma:single_partition_eqrel_abs_H}. Each $A^{(i)}$
	is a copy of $H$ and each $T^{(i)}_j$ is a $\cneqrel{c}$ gadget from \cref{lemma:single_partition_neq}.
	The red and blue colors are used to distinguish between gadgets connected to each $A^{(i)}$.} 
	\label{fig:single_eqrel_H}
\end{figure}
\begin{lemma}\label{lemma:single_partition_eqrel_abs_H}
	Let $(c,H) \in \left( \mathcal{C}_1 \cup \mathcal{C}_2 \right) $,
	then there exists a $\coherent{(c,H)}$ gadget $G$ 
	with $\abs{H}$ portal vertices that \strict{c,H}-realizes the relation
	$\eqrel{\abs{H}}{[0,c]}$.
	Furthermore, 
	the size of $G$ is bounded by a constant.
\end{lemma}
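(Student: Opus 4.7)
The plan is to construct $G$ as suggested by \cref{fig:single_eqrel_H}. I take $c$ vertex-disjoint copies $A^{(1)}, \ldots, A^{(c)}$ of $H$, label the vertices of $A^{(i)}$ as $a^{(i)}_1, \ldots, a^{(i)}_{\abs{H}}$ via a fixed isomorphism to $H$, introduce $\abs{H}$ fresh portal vertices $p_1, \ldots, p_{\abs{H}}$, and for every pair $(i,j) \in [c] \times [\abs{H}]$ attach a fresh copy $T^{(i)}_j$ of the $\cneqrel{c}$-gadget from \cref{lemma:single_partition_neq} by identifying its two portals with $a^{(i)}_j$ and $p_j$. The design intuition is that each $A^{(i)}$ will be used in the packing with some multiplicity $k_i \in \{0, \ldots, c\}$; because the internal vertex $a^{(i)}_j$ must be covered exactly $c$ times, the copies lying inside $T^{(i)}_j$ must account for exactly $c - k_i$ of that coverage, and the $\cneqrel{c}$-relation of $T^{(i)}_j$ then forces those copies to contribute $k_i$ to $p_j$. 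Summing over $i$ yields the common value $\sum_i k_i$ at every $p_j$, matching the target relation $\eqrel{\abs{H}}{[0,c]}$.

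The hard part, and the main structural claim I would prove, is the following: in every $\multipacking{c}{H}$ $\mathcal{Z}$ of $G$ that covers every internal vertex exactly $c$ times, each copy in $\mathcal{Z}$ is either equal to some $A^{(i)}$ or lies entirely inside the vertex set of some $T^{(i)}_j$. If a copy $Z$ meets an internal vertex of some $T^{(i)}_j$, the $\coherent{(c,H)}$ property of that sub-gadget immediately gives $Z \subseteq V(T^{(i)}_j)$. Otherwise, from any $a^{(i)}_j \in Z$ the only neighbors available inside $Z$ are other vertices of $A^{(i)}$, since the $T^{(i)}_j$-internals are excluded and $a^{(i)}_j$ has no further neighbors in $G$; moreover no portal $p_{j'}$ can belong to $Z$, because every neighbor of $p_{j'}$ in $G$ is an excluded $T$-internal. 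The connectivity of $H$ then confines $Z$ to $V(A^{(i)})$, and a cardinality count forces $Z = A^{(i)}$. Applying the same reasoning inside any extension $E$ of $G$ simultaneously shows that $G$ itself is $\coherent{(c,H)}$: any copy containing an internal vertex of $G$ is forced to stay inside $V(T^{(i)}_j)$ or $V(A^{(i)})$, both of which are subsets of $V(G)$.

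Once the structural claim is in place, the two verifications required by \cref{remark:strict_real_proof_guide} become routine. For the forward direction, given any qualifying $\multipacking{c}{H}$, I split it into copies equal to some $A^{(i)}$ (with multiplicities $k_1, \ldots, k_c$) and copies contained in some $T^{(i)}_j$; the $\cneqrel{c}$-realization of $T^{(i)}_j$ then yields that the latter contribute exactly $k_i$ to $p_j$, so $p_j$ receives total coverage $\sum_i k_i$, which is the same for every $j$ and at most $c$ by the packing condition, hence an element of $\eqrel{\abs{H}}{[0,c]}$. For the reverse direction, given any $r \in \{0, \ldots, c\}$, I choose a subset $S \subseteq [c]$ with $\abs{S} = r$, include each $A^{(i)}$ with $i \in S$ exactly once in the packing, and for every $(i,j)$ invoke the \strict{c,H}-realization of $\cneqrel{c}$ inside $T^{(i)}_j$ to select a single-packing that covers $a^{(i)}_j$ exactly $c - \ind{i \in S}$ times and $p_j$ exactly $\ind{i \in S}$ times. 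The union is a valid $\singlepacking{c}{H}$ of $G$ with portal coverages $(r, \ldots, r)$. The size of $G$ is $\mathcal{O}(c \cdot \abs{H})$ times the constant size of the $\cneqrel{c}$-gadget, hence a constant.
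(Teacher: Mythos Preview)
Your proposal is correct and follows essentially the same construction and argument as the paper. In fact, your structural claim---that every copy in a qualifying packing is either some $A^{(i)}$ or lies entirely inside some $T^{(i)}_j$---makes explicit a step the paper glosses over (it simply lets $x_i$ be ``the number of copies of $A^{(i)}$'' and asserts the $T^{(i)}_j$ sub-packing covers $u^{(i)}_j$ exactly $c-x_i$ times), and your coherence argument for $G$ supplies a justification the paper's proof omits entirely.
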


\begin{proof}
	We start with the construction of the gadget.
	\subparagraph{Construction of the gadget.}
	For each $i \in [c]$, let 
	let $A^{(i)}$ be a copy of $H$. Then, for each vertex $u^{(i)}_j$ of $A^{(i)}$ for $j \in [\abs{H}]$,
	we create a $\cneqrel{c}$ gadget $T^{(i)}_j$ from \cref{lemma:single_partition_neq},
	and identify one of its portal vertices with $u^{(i)}_j$.
	Let $v^{(i)}_j$ denote the other portal vertex of $T^{(i)}_j$.
	Finally, for each $j \in [\abs{H}]$, we identify the vertices $\{v^{(1)}_j, \ldots, v^{(c)}_j\}$
	and call the resulting vertex $v_j$.
	We call the resulting gadget $G$ and let $\{v_1, \ldots, v_{\abs{H}}\} $ be the portal vertices of $G$.
	This is the whole construction.

	\subparagraph{The relation realized by $G$.}
	To show that $G$ \strict{c,H}-realizes the relation $\eqrel{\abs{H}}{[0,c]}$, we
	will follow \cref{remark:strict_real_proof_guide}.
	Suppose that $G$ has a $\multipacking{c}{H}$ $\mathcal{K}$ that covers
	all internal vertices of $G$ exactly $c$ times and each portal vertex
	$v_j$ exactly $r_j$ times for $j \in [\abs{H}]$ and for some $r_j \in \{0, \ldots, c\}$.
	Let $i \in [c]$ and $x_i$ be the number of copies of $A^{(i)}$ in $\mathcal{K}$.
	For each $j \in [\abs{H}]$,
	the gadget $T^{(i)}_j$ is $\coherent{(c,H)}$,
	therefore $\mathcal{K}$ contains a $\multipacking{c}{H}$ $\mathcal{T}^{(i)}_j
	\subseteq \mathcal{K}$ of $T^{(i)}_j$ that covers $u^{(i)}_j$ exactly
	$c - x_i$ times, so that $u^{(i)}_j$ is covered $c$ times in total. Then,
	$\mathcal{T}^{(i)}_j$ covers $v_j$ exactly $x_i$ times for each $j \in [\abs{H}]$.
	Since this holds for each $i \in [c]$, 
	it holds that each $v_j$ is covered $\alpha \coloneqq \sum_{i \in [c]} x_i$ times in
	total. Hence we get $0 \leq \alpha \leq c$ and moreover $r_j = \alpha$
	for each $j \in [\abs{{H}}]$. Therefore, we have $r \coloneqq (r_1,
	\ldots, r_{\abs{H}}) \in \eqrel{\abs{H}}{[0,c]}$.

	Now let $\bm{v} = (x, \ldots, x) \in \eqrel{\abs{H}}{[0,c]}$
	for $x \in \{0, \ldots, c\}$.
	We will now construct a $\singlepacking{c}{H}$ $\mathcal{K}$ of $G$,
	that covers each internal vertex of $G$ exactly $c$ times, and each
	portal vertex $x$ times.
	For $1 \leq i \leq x$, we add to $\mathcal{K}$ a copy of
	$A^{(i)}$. Note that since $\mathcal{K}$ is a $\singlepacking{c}{H}$,
	$\mathcal{K}$ cannot contain another copy of $A^{(i)}$ for $i \in [x]$.
	Then, for each $i \in [x]$ and $j \in [\abs{H}]$, we pick a $\singlepacking{c}{H}$ of $T^{(i)}_j$
	that covers $u^{(i)}_j$ and $v_{j}$ exactly $c-1$ and $1$ time, respectively.
	Finally, for each $x+1 \leq i \leq c$ and $j \in [\abs{H}]$, we add to $\mathcal{K}$
	a $\singlepacking{c}{H}$ of $T^{(i)}_j$ that covers $u^{(i)}_j$ exactly $c$ times.
	All in all, $\mathcal{K}$ satisfies the required properties.

	\subparagraph{The size of $G$.}
	Observe that $G$ is constructed from $c$ copies of $H$ and
	$c \cdot \abs{H}$ many copies of $\cneqrel{c}$-gadget,
	whose size depends on $c$ and $\abs{H}$.
\end{proof}

We continue with an
$\eqrel{k \cdot \abs{H}}{[0,c]}$-gadget.
In this case, using the appropriate gadgets introduced previously,
the same construction results in a gadget that realizes the relation with
arbitrary and distict copies of $H$.

\begin{figure}[htpb]
	\centering
	\includegraphics[scale=0.7]{./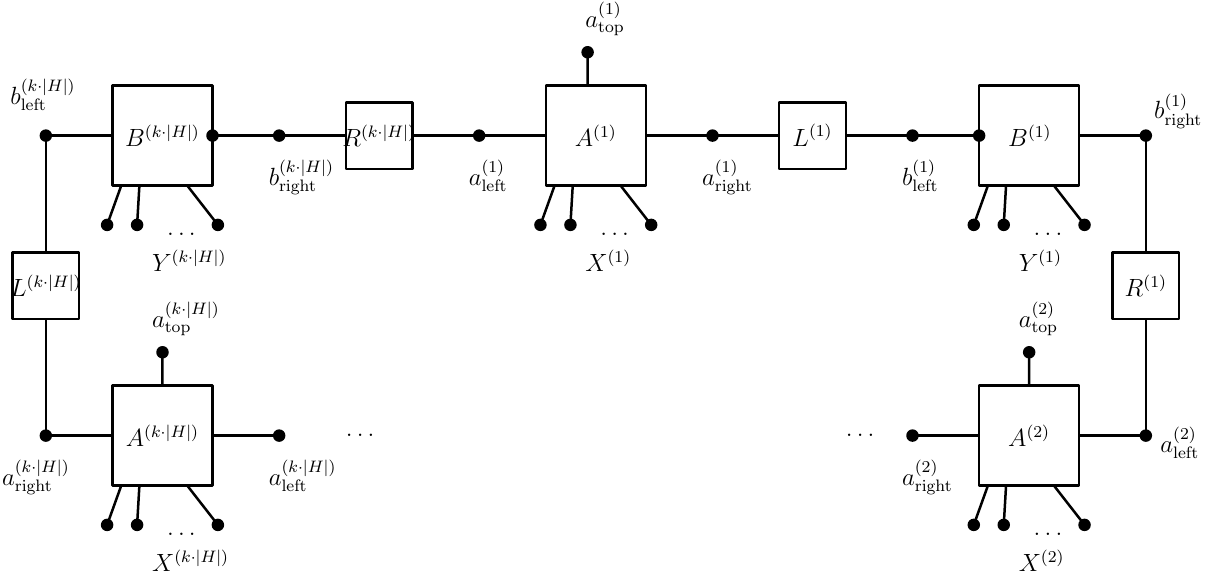}
	\caption{The gadget described in \cref{lemma:multi_partition_eqrel}. The equality gadgets
	that are attached to $X^{(i)}$ and $Y^{(i)}$ are omitted in the figure for the sake of presentation.}
	\label{fig:multi_eqrel_k_cdot_H}
\end{figure}

\begin{lemma}\label{lemma:multi_partition_eqrel}
	Let $(c,H) \in \bigl( \mathcal{C}_1 \cup \mathcal{C}_2 \bigr)$
	and $k \geq 1$ be a constant. Moreover, let $R$ denote the relation $\eqrel{k \cdot \abs{H}}{[0,c]}$.
	Then, there exists a $\coherent{(c,H)}$ gadget $G$ 
	with $k \cdot \abs{H}$ portal vertices that \strict{c,H}-realizes $R$.
	Moreover, the size of $G$ is bounded by some function of $k$.
\end{lemma}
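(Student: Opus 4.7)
Plan: The plan is to construct $G$ by starting with the $\eqrel{\abs{H}}{[0,c]}$-gadget from Lemma~\ref{lemma:single_partition_eqrel_abs_H} and iteratively extending its arity by groups of $\abs{H}$ new portals, each step adding a fresh $\eqrel{\abs{H}}{[0,c]}$-gadget as a ``new chunk'' together with a ``linker'' $\eqrel{\abs{H}}{[0,c]}$-gadget that enforces equality between the chunk's value and the existing gadget's value.

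Concretely, suppose we already have a gadget $G_{k-1}$ with $(k-1)\abs{H}$ portals realizing $\eqrel{(k-1)\abs{H}}{[0,c]}$ (the base case $k=1$ is Lemma~\ref{lemma:single_partition_eqrel_abs_H}). To construct $G_k$, I take a fresh $\eqrel{\abs{H}}{[0,c]}$-gadget $G^{\mathrm{new}}$ with value $\alpha_{\mathrm{new}}$ and a linker $\eqrel{\abs{H}}{[0,c]}$-gadget $L$ with value $\beta$. I identify one portal of $G_{k-1}$ with one portal of $L$, and one portal of $G^{\mathrm{new}}$ with another portal of $L$, making both identified vertices internal of the combined gadget. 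A direct coverage computation, using that every internal vertex must be covered exactly $c$ times, shows that $\alpha_{k-1} + \beta = c$ and $\alpha_{\mathrm{new}} + \beta = c$, and hence $\alpha_{k-1} = \alpha_{\mathrm{new}}$. The $\abs{H}-2$ remaining portals of $L$ are then handled by attaching further $\eqrel{\abs{H}}{[0,c]}$-gadgets as ``absorbers'' (each making one leftover vertex internal), which contribute additional external portals of coverage $c - \beta = \alpha_{k-1}$. By choosing the topology of the absorber tree carefully, I arrange for the total number of external portals of $G_k$ to equal exactly $k \cdot \abs{H}$.

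Coherence of $G$ is immediate from Observation~\ref{obs:multi_coherence}, since every component used is either a $\cneqrel{c}$-gadget or an $\eqrel{\abs{H}}{[0,c]}$-gadget, both of which are $\coherent{(c,H)}$. The strict realization is verified along the lines of Remark~\ref{remark:strict_real_proof_guide}: in one direction, tracking coverages through the recursive linker--absorber structure shows that every external portal receives the same coverage $\alpha \in \{0, \ldots, c\}$ in any $\multipacking{c}{H}$ of $G$ covering internal vertices exactly $c$ times; in the other direction, for any $\alpha \in \{0, \ldots, c\}$, we construct a $\singlepacking{c}{H}$ of $G$ by including $\alpha$ copies inside $G_{k-1}$ and making compatible choices in $G^{\mathrm{new}}$, $L$, and all absorbers. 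The main obstacle I expect is the combinatorial bookkeeping needed to match the arity precisely to $k \cdot \abs{H}$, which may require mixing absorbers of different depths, recursively absorbing excess external portals, or using auxiliary $\cneqrel{c}$-gadgets from Lemma~\ref{lemma:single_partition_neq}; since only a function of $k$ many sub-gadgets are involved, the overall size bound in the statement is preserved.
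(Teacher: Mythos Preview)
Your recursive linker--absorber construction is sound and genuinely different from the paper's. The paper instead builds a single cycle of length $2k\abs{H}$: it takes $k\abs{H}$ copies $A^{(i)}$ of the $\eqrel{\abs{H}}{[0,c]}$-gadget alternating with $k\abs{H}$ copies $B^{(i)}$, links consecutive gadgets by $\cneqrel{c}$-gadgets so that every $A^{(i)}$ carries value $x$ and every $B^{(i)}$ carries $c-x$, and designates one portal of each $A^{(i)}$ as an output portal (giving exactly $k\abs{H}$ outputs). The remaining portals of the $A^{(i)}$ and $B^{(i)}$ are absorbed by ``filler'' $\eqrel{\abs{H}}{[0,c]}$-gadgets; the whole point of taking $k\abs{H}$ rather than $k$ copies is that each collection of leftover portals then has size divisible by $\abs{H}$, so the fillers consume them exactly and no residual arity issue ever arises. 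Your tree-shaped recursion is more modular but runs into the bookkeeping you flag. That bookkeeping can be completed: if you attach one $\eqrel{\abs{H}}{[0,c]}$-absorber to each of the $\abs{H}-2$ leftover linker portals, a single recursive step adds exactly $(\abs{H}-1)+(\abs{H}-2)(\abs{H}-1)-1=\abs{H}(\abs{H}-2)$ new external portals, so after $k-1$ steps the total is $\abs{H}\bigl(1+(k-1)(\abs{H}-2)\bigr)$, always a multiple of $\abs{H}$; the excess over $k\abs{H}$, namely $\abs{H}(k-1)(\abs{H}-3)$, is itself a multiple of $\abs{H}$ and can be eliminated by identifying each surplus group of $\abs{H}$ portals with all portals of a fresh $\eqrel{\abs{H}}{[0,c]}$-gadget. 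One small imprecision: the clause ``since every component is coherent'' is not what Observation~\ref{obs:multi_coherence} asserts --- that observation makes \emph{any} gadget coherent by wrapping its portals in $\cneqrel{c}$-gadgets, regardless of the internal structure --- but your invocation of it is correct in conclusion.
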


\begin{proof}
	We start with the construction of the gadget.
	
	\subparagraph{Construction of the gadget.}		
	For each $1 \leq i \leq k \cdot \abs{H}$, we create two gadgets, $A^{(i)}$ and $B^{(i)}$,
	which \strict{c,H}-realize the relation $\eqrel{\abs{H}}{[0,c]}$.
	Note that such gadgets exist by \cref{lemma:single_partition_eqrel_abs_H}.
	Moreover, let $P^{(i)}_A$ and
	$P^{(i)}_B$ denote the portal vertices of $A^{(i)}$ and $B^{(i)}$ such that
	\begin{align*}
		P^{(i)}_A &= \{\aleft{i}, \aright{i}, \aup{i}\} \cup X^{(i)} \\
		P^{(i)}_B &= \{\bleft{i}, \bright{i}\} \cup Y^{(i)}.
	\end{align*}
	where $Y^{(i)} \coloneqq \Bigl\{ y^{(i)}_{1}, \ldots, y^{(i)}_{\abs{H} -
	2}\Bigr\}$ and $X^{(i)} \coloneqq \Bigl\{ x^{(i)}_{1}, \ldots, x^{(i)}_{\abs{H} - 3}\Bigr\}$.
	Note that $X^{(i)} = \emptyset$ if $\abs{H} = 3$.
        
	Then, for each $1 \leq i \leq k \cdot \abs{H}$, we create two copies of
	the $\cneqrel{c}$ gadget from \cref{lemma:single_partition_neq}, which
	we call $L^{(i)}$ and $R^{(i)}$. We identify the portal vertices of $L^{(i)}$
	with $\aright{i}$ and $\bleft{i}$, and similarly, identify the portal vertices
	of $R^{(i)}$ with $\bright{i}$ and $\aleft{i + 1 \mod k \cdot \abs{H}}$.

	Then, for $1 \leq j \leq \abs{H} -
	3$, we partition the set of vertices $\Bigl\{ x^{(i)}_j \mid 1 \leq i \leq k \cdot
	\abs{H}\Bigr\}$ into consecutive groups of size $\abs{H}$. For each set $S$ in the
	partition, we introduce an $\eqrel{\abs{H}}{[0,c]}$-gadget and identify its
	portal vertices with $S$.
	Similarly, for $1 \leq j \leq \abs{H} - 2$, we partition the set of vertices
	$\Bigl\{ y^{(i)}_j \mid 1 \leq i \leq k \cdot \abs{H}\Bigr\}$ into consecutive groups
	of size $\abs{H}$, introduce an $\eqrel{\abs{H}}{[0,c]}$-gadget for each set
	$S$ in the partition and identify its portal vertices with $S$.
	We call the gadgets introduces in this step filler gadgets.

	The portal vertices of the whole gadget is the set $\Bigl\{ \aup{i}
	\mid 1 \leq i \leq k \cdot \abs{H}\Bigr\}$. This concludes the
	construction of the gadget $G$.

	\subparagraph{The relation realized by $G$.}
	To show that $G$ \strict{c,H}-realizes $\eqrel{k \cdot
	\abs{H}}{[0,c]}$, we will follow \cref{remark:strict_real_proof_guide}.
	Suppose there exists a $\multipacking{c}{H}$ $\mathcal{K}$ such that
	each internal vertex of $G$ is covered $c$ times. Observe that the
	gadget $A^{(1)}$ is $\coherent{(c,H)}$ and realizes the relation
	$\eqrel{\abs{H}}{[0,c]}$, hence there exists a $\multipacking{c}{H}$
	$\mathcal{A}^{(1)} \subseteq \mathcal{K}$ of $A^{(1)}$ such that covers
	each of its portal vertices $x$ times for some $x \in [0,c]$.
	Similarly, since the gadget $L^{(1)}$ is $\coherent{(c,H)}$ and
	realizes the relation $\cneqrel{c}$, there exists a
	$\multipacking{c}{H}$ $\mathcal{L}^{(1)} \subseteq \mathcal{K}$ that
	covers $\aright{1}$ and $\bleft{1}$ exactly $c$ times in total. Now
	observe that $\aright{1}$ can only be covered by $\mathcal{A}^{1}$ and
	$\mathcal{L}^{(1)}$, which implies that $\mathcal{L}^{(1)}$ cover
	$\aright{1}$ exactly $c-x$ times. This argument can be iteratively
	applied to demonstrate that each $\aup{i}$ is covered exactly $x$ times
	for some $x \in [0,c]$.
	
	Conversely, let $\bm{v} = (x, \ldots, x) \in \eqrel{k \cdot \abs{H}}{[0,c]}$
	for $x \in \{0, \ldots, c\}$.
	We will now construct a $\singlepacking{c}{H}$ $\mathcal{K}$ of $G$,
	that covers each internal vertex of $G$ exactly $c$ times, and each
	portal vertex $x$ times.	
	For each $1 \leq i \leq k \cdot \abs{H}$, we add to $\mathcal{K}$ a
	$\singlepacking{c}{H}$ of $A^{(i)}$ that covers each of its portal
	vertices $x$ times.
	Similarly, we also add a $\singlepacking{c}{H}$ of $B^{(i)}$ that covers its
	portal vertices exactly $c - x$ times.
	Then, we add to $\mathcal{K}$ a $\singlepacking{c}{H}$ of $L^{(i)}$
	that covers $\aright{i}$ and $\bleft{i}$ exactly $c-x$ and $x$ times,
	respectively. Note that such a $\singlepacking{c}{H}$ exists because $L^{(i)}$
	\strict{c,H}-realizes the relation $\cneqrel{c}$.
	Next, we do the same with $R^{(i)}$.
	Finally, for each filler gadget, we add to $\mathcal{K}$ a $\singlepacking{c}{H}$
	of it that covers its portal vertices either $c-x$ and $x$ times,
	depending on whether they are attached to $A^{(i)}$ or $B^{(i)}$ respectively,
	for some $1 \leq i \leq k \cdot \abs{H}$.
	Observe that each portal vertex $\aup{i}$ is covered $x$ times, whereas
	all the internal vertices are covered exactly $c$ times.

	\subparagraph{The size of $G$.}
	Finally, to bound the size of $G$, observe that
	$G$ consists of $\mathcal{O}\left(k \cdot \abs{H}\right)$ many copies of
	$\eqrel{\abs{H}}{[0,c]}$ and $\cneqrel{c}$ gadgets, whose size
	depend on $\abs{H}$. Since $\abs{H}$ is constant, the size claim follows.
\end{proof}

Next, we provide a $\coverrel{k \cdot \abs{H}}{c-1}$-gadget.
\begin{figure}[htpb]
	\centering
	\includegraphics[scale=0.8]{./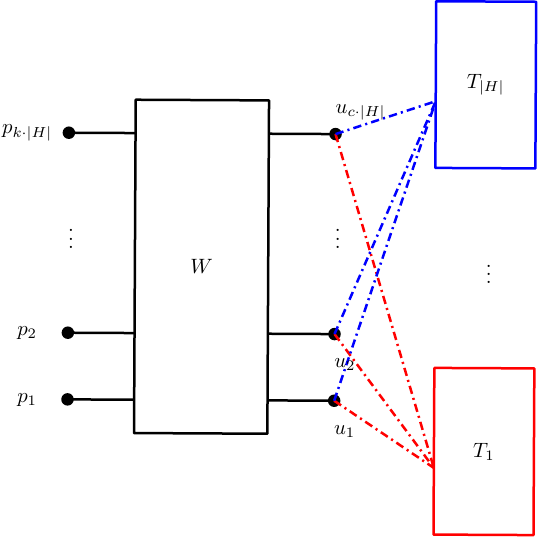}
	\caption{The gadget described in \cref{lemma:multi_partition_coverrel_c_H}.
	The dash-dotted lines between $T_i$ and $u_j$ indicate that $u_j$ is connected to the neighbourhoof of $v$ in $T_i$.}
	\label{fig:multi_single_cover_c-1}
\end{figure}

\begin{lemma}\label{lemma:multi_partition_coverrel_c_H}
	Let $(c,H) \in \mathcal{C}_{1}$ and $k \geq 1$ be an integer.
	Then, there exists a gadget $G$ 
	with $k \cdot \abs{H}$ portal vertices that \strict{c,H}-realizes the relation
	$\coverrel{k \cdot \abs{H}}{c-1}$.
	Moreover, the size of $G$ is bounded by some function of $k$.	
\end{lemma}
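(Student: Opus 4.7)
The plan is to build the $\coverrel{k \cdot \abs{H}}{c-1}$-gadget following the structure illustrated in \cref{fig:multi_single_cover_c-1}: introduce copies $T_1, T_2, \ldots$ of $H$, each with a designated vertex $v$ shared across the copies (so that a single vertex $v$ lies in every $T_i$, and each $T_i$ together with its neighborhood of $v$ forms a $K_d$), and connect every portal $u_j$ to $N_{T_i}(v)$ for each $i$. This creates, in addition to the cliques $T_i$ themselves, further $K_d$-cliques of the form $\{u_j\} \cup N_{T_i}(v)$. To then enforce that all portal coverages coincide, I would attach the $\eqrel{k \cdot \abs{H}}{[0,c]}$-gadget from \cref{lemma:multi_partition_eqrel} to the portals, and finally apply \cref{obs:multi_coherence} to attach $\cneqrel{c}$-gadgets from \cref{lemma:single_partition_neq} at each portal so that the whole construction becomes $\coherent{(c, H)}$.

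The correctness argument rests on two complementary constraints. First, a direct coverage-counting argument, analogous to the one used in \cref{lemma:single_partition_neq}, shows that in any $\multipacking{c}{H}$ covering each internal vertex of $G$ exactly $c$ times, the cliques $T_i$ must all be picked in order to cover the shared vertex $v$; consequently, exactly $c - 1$ of the cliques $\{u_j\} \cup N_{T_i}(v)$ per $T_i$ must also be picked so that each vertex in $N_{T_i}(v)$ reaches its required coverage of $c$. This pins down the total portal coverage. Second, the equality constraint provided by the attached $\eqrel$-gadget forces all portal coverages to be equal, and combined with the total coverage constraint this uniquely determines each $r_j = c - 1$, yielding the tuple $(c-1, \ldots, c-1)$.

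For the backward direction of \cref{remark:strict_real_proof_guide}, I would exhibit an explicit $\singlepacking{c}{H}$ achieving $(c-1, \ldots, c-1)$ by picking every $T_i$ together with a regular pattern of $\{u_j\} \cup N_{T_i}(v)$-cliques balanced across $i$ and $j$. The size bound is immediate, since each building block has size bounded by a function of the constants $c$ and $\abs{H}$ and the number of blocks is bounded by a function of $k$. The main obstacle is making this counting work uniformly for all values of $k$: a naive construction runs into divisibility issues when $c$ does not divide $k \cdot \abs{H}$, so careful parameter tuning together with additional filler gadgets, along the lines of those used in the proof of \cref{lemma:multi_partition_eqrel}, will be needed to ensure that exactly the single tuple $(c-1, \ldots, c-1)$ is realized and that no spurious coverage patterns sneak in.
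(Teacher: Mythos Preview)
Your proposal has a genuine gap in the forward direction. You claim that attaching the $\eqrel{k\abs{H}}{[0,c]}$-gadget to the portals ``forces all portal coverages to be equal'', but that is not what the equality gadget does: when used as a sub-gadget it only guarantees that its \emph{own} contribution to each of its portal vertices is the same value $y\in[0,c]$. Since your portals $u_j$ also receive coverage directly from the $T$-part via the cliques $\{u_j\}\cup N_{T_i}(v)$, their total coverages $r_j = y + \sum_i b_{ij}$ need not coincide. Concretely, once the shared vertex $v$ is covered $c$ times (and note that in a multi-packing this does \emph{not} force all $T_i$ to be picked---one may take $T_1$ $c$ times), the $\{u_j\}\cup N_{T_i}(v)$ cliques can be concentrated on a few indices $j$, producing tuples such as $(c,\ldots,c,0,\ldots,0)$; hence the realized relation is strictly larger than $\{(c-1,\ldots,c-1)\}$. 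The divisibility issues you anticipate are a symptom, but filler gadgets in the style of \cref{lemma:multi_partition_eqrel} cannot repair this: the problem is not a modular mismatch but the absence of any mechanism that forces equal \emph{total} coverage at the portals.

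The paper avoids this by decoupling the portals from the $T$-part entirely. There is no shared vertex $v$; instead $v$ is replaced by $c\cdot\abs{H}$ copies, and these copies (identified across the $\abs{H}$ many $T_i$'s) become \emph{internal} vertices $u_1,\ldots,u_{c\abs{H}}$. The actual portals $p_1,\ldots,p_{k\abs{H}}$ are fresh vertices with no edges into the $T$-part at all; they lie only inside an equality gadget $W$ of arity $(k+c)\abs{H}$, attached jointly to the $p_j$'s and the $u_j$'s. Now the coverage of each $p_j$ is exactly the common value $x$ contributed by $W$, and $x$ is pinned to $c-1$ by a double count on the internal $u_j$'s: the $\abs{H}$ cores $T_i$ each require exactly $c$ cliques to cover their $\abs{H}-1$ internal vertices, and each such clique covers exactly one $u_j$, so the $c\abs{H}$ vertices $u_j$ receive total $T$-part coverage $\abs{H}\cdot c$; since each $u_j$ needs $c-x$ from the $T$-part, we get $c\abs{H}(c-x)=\abs{H}\cdot c$, hence $x=c-1$. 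The separation of roles---portals only in the equality gadget, counting vertices only in the $T$-part, linked through $W$---is the key idea your construction is missing.
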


\begin{proof}
	We start with the construction of $G$.
	
	\subparagraph{Construction of the gadget.}		
	Let $v$ be an arbitrary vertex of $H$ and
	$T$ be a copy of $H$ where we replace $v$ with $c \cdot \abs{H}$
	copies.
	Next, we create $\abs{H}$ many copies
	of $T$, which are denoted by $T_1, \ldots, T_{\abs{H}}$.
	Let $v^{i}_j$ be the $j$'th copy of $v$ in $T_i$.
	Then for each $1 \leq j \leq c \cdot \abs{H}$, we identify the
	vertices $v^{i}_j$ for $1 \leq i \leq \abs{H}$, and call the new
	vertex $u_j$.
	Finally, we introduce $k \cdot \abs{H}$ many vertices
	$p_1, \ldots, p_{k \cdot \abs{H}}$ and introduce an gadget
	$W$ that \strict{c,H}-realizes the relation $\eqrel{(k + c) \cdot \abs{H}}{[0,c] }$,
	which exists by \cref{lemma:multi_partition_eqrel}.
	Then, we identify its portal vertices with $\{p_1, \ldots, p_{k \cdot \abs{H}}\} \cup \{u_1, \ldots, u_{c \cdot \abs{H}}\} $.
	We let $G$ denote the gadget constructed in this manner,
	and let $p_1, \ldots, p_{k \cdot \abs{H}}$ be the portal vertices of $G$.

	\subparagraph{The relation realized by $G$.}	
	To show that $G$ \arb{c,H}-realizes $\coverrel{k \cdot \abs{H}}{c-1}$,
	we will follow \cref{remark:strict_real_proof_guide}.
	Suppose that there exists a $\multipacking{c}{H}$ $\mathcal{K}$ of $G$
	such that each internal
	vertex of $G$ is covered $c$ times. 
	Since $W$ is $\coherent{(c,H)}$ and $(c,H)$-realizes the relation
	$\eqrel{(k + c) \cdot \abs{H}}{[0,c] }$, there exists a $\multipacking{c}{H}$
	$\mathcal{W} \subseteq \mathcal{K}$ of $W$
	that covers each of its portal vertices $x$ times for some $x \in [0,c]$.
	Hence, each $u_j$ needs to be covered $c-x$ times in total by the copies
	of $H$ arising from $\multipacking{c}{H}$ of each $T_i$.
	By a double counting argument, one can prove that $x$ is equal to $c-1$,
	which shows that $\mathcal{K}$ covers each portal vertex exactly $c-1$ times.

	Now let $\bm{v} = (c-1, \ldots, c-1) \in \coverrel{k \cdot \abs{H}}{c-1}$.
	We will construct a $\singlepacking{c}{H}$ $\mathcal{K}$ of $G$,
	that covers its portal vertices according to $\bm{v}$.
	Since $W$ \strict{c,H}-realizes $\eqrel{k \cdot \abs{H}}{[0,c]}$,
	there exists a $\singlepacking{c}{H}$ $\mathcal{W}$
	of $W$ such that $\mathcal{W}$ covers each internal vertex of $W$ exactly $c$
	times, and covers each vertex in $\{p_1, \ldots, p_{k \cdot \abs{H}}\} \cup \{u_1, \ldots, u_{c \cdot \abs{H}}\} $
	exactly $c-1$ times. Then, for each $i \in \abs{H}$ and $j \in c \cdot \abs{H}$,
	we add to $\mathcal{K}$ a $\singlepacking{c}{H}$ of $T_i$ such that
	in total each $u_j$ is covered once for $j \in [c \cdot \abs{H}]$.
	Hence, all internal vertices are covered exactly $c$ times,
	where the portal vertices are covered $c-1$ times.
	
	\subparagraph{The size of $G$.}	
	Observe that $G$ is constructed from $k \cdot \abs{H}$ many portal vertices,
	$\abs{H}$ many
	copies of $H$, together with $c \cdot \abs{H}$ many copies of
	$\neqgadget$-gadget. Since $c$ and $\abs{H}$ are constants,
	the size claim follows.
\end{proof}

Recall that we say a relation $R \subseteq \{0,\ldots,c\} ^{\ell}$
is $\regular{d}$ if the weight of each tuple in $R$ is equal to
$0 \mod d$.
Let $(c,H) \in \left( \mathcal{C}_1 \cup \mathcal{C}_2 \right)$ and
$R \subseteq \{0, \ldots, c\}^{\ell}$ be an $\regular{\abs{H}}$ relation.
We will now introduce gadgets that \strict{c,H}-realizes $R$.
Next, we introduce a gadget that covers either the first or the second half
of its portal vertices, which we call a toggle gadget.

\begin{figure}[htpb]
	\centering
	\includegraphics[scale=0.8]{./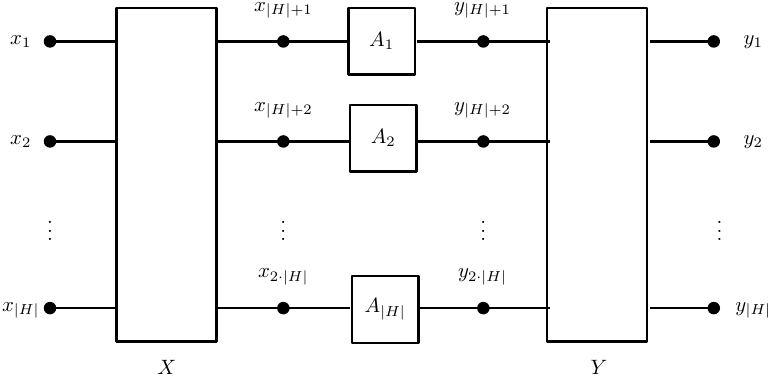}
	\caption{The gadget described in \cref{lemma:gadget_half_zero}.}
	\label{fig:selector_gadget}
\end{figure}

\begin{lemma}[Toggle gadget]\label{lemma:gadget_half_zero}
	Let $H$ be an arbitrary graph.
	Define the relation
	\begin{equation*}
		R = \Big\{\underbrace{\left( 0,\ldots,0,1,\ldots,1 \right) }_{\text{$\abs{H}$ zeros followed by $\abs{H}$ ones}} ,\quad \underbrace{\left( 1,\ldots,1,0,\ldots,0 \right) }_{\text{$\abs{H}$ ones followed by $\abs{H}$ zeros}}\Big\} \subseteq \{0,1\}^{2 \cdot \abs{H}}.
	\end{equation*}
        Then there exists a gadget $G$ that $H$-realizes $R$.
	Moreover the size of $G$ is constant.
\end{lemma}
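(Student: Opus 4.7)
The plan is to realize $R$ by cross-linking two equality gadgets via $\abs{H}$ inequality gadgets, so that the two halves of the portal set are forced to take opposite common values in $\{0,1\}$. Intuitively, each half is tied together by an $\mathrm{EQ}$ constraint, and the two halves are coupled by $\cneqrel{1}$ constraints on auxiliary portals that collapse into internal vertices of the combined gadget.

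Concretely, I would first invoke \cref{lemma:multi_partition_eqrel} with $c=1$ and $k=2$ to obtain two $\coherent{(1,H)}$ gadgets $G_A$ and $G_B$, each \strict{1,H}-realizing $\eqrel{2\abs{H}}{[0,1]}$ with $2\abs{H}$ portals. Label the portals of $G_A$ as $a_1,\ldots,a_{\abs{H}},x_1,\ldots,x_{\abs{H}}$ and those of $G_B$ as $b_1,\ldots,b_{\abs{H}},y_1,\ldots,y_{\abs{H}}$. Then for each $i\in[\abs{H}]$, take a fresh $\coherent{(1,H)}$ $\cneqrel{1}$-gadget $N_i$ from \cref{lemma:single_partition_neq} with portals $u_i,v_i$ and identify $x_i$ with $u_i$ and $y_i$ with $v_i$. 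The resulting gadget $G$ has declared portal set $\{a_1,\ldots,a_{\abs{H}},b_1,\ldots,b_{\abs{H}}\}$, so the identified vertices $x_i=u_i$ and $y_i=v_i$ become internal to $G$.

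To verify that $G$ $H$-realizes $R$, I follow \cref{remark:strict_real_proof_guide}. For the forward direction, take any $H$-packing $\mathcal{Z}$ of $G$ covering every internal vertex exactly once and each portal $p$ a number $r_p$ of times. The coherence of the five sub-gadgets forces $\mathcal{Z}$ to decompose cleanly into sub-packings $\mathcal{Z}_A,\mathcal{Z}_B,\mathcal{Z}_{N_1},\ldots,\mathcal{Z}_{N_{\abs{H}}}$, since no copy of $H$ can straddle a sub-gadget boundary. By the $\eqrel{2\abs{H}}{[0,1]}$ relation, every portal of $G_A$ is covered by $\mathcal{Z}_A$ with a common value $v_A\in\{0,1\}$, and every portal of $G_B$ by $\mathcal{Z}_B$ with a common value $v_B\in\{0,1\}$. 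For each $i$, writing $\alpha_i$ for the cover of $x_i$ contributed by $\mathcal{Z}_{N_i}$ and $\beta_i$ for that of $y_i$, internality of $x_i$ and $y_i$ in $G$ gives $v_A+\alpha_i=1$ and $v_B+\beta_i=1$, hence $\alpha_i=1-v_A$ and $\beta_i=1-v_B$. The $\cneqrel{1}$ constraint $\alpha_i+\beta_i=1$ then forces $v_A+v_B=1$, so $r\in R$. For the converse, given $r\in R$ the required pair $(v_A,v_B)\in\{(0,1),(1,0)\}$ is determined, and the \strict{1,H}-realization of each sub-gadget lets us assemble a witnessing $\singlepacking{1}{H}$ by combining the sub-packing of $G_A$ at value $v_A$, of $G_B$ at value $v_B$, and of each $N_i$ in the unique way consistent with the above equalities.

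The size bound is immediate since we compose a constant number ($2+\abs{H}$) of sub-gadgets, each of constant size by \cref{lemma:single_partition_neq,lemma:multi_partition_eqrel}. The main delicate point is the bookkeeping of cover counts at the identified vertices: the key identity $v_A+v_B=1$ only goes through because the contributions from $G_A$ (resp.\ $G_B$) and $N_i$ at each $x_i$ (resp.\ $y_i$) must sum to exactly $1$, and this in turn hinges crucially on coherence ruling out copies of $H$ that could use vertices from two distinct sub-gadgets simultaneously.
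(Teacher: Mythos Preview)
Your construction and argument are essentially identical to the paper's: two $\eqrel{2\abs{H}}{[0,1]}$-gadgets whose second halves are linked by $\abs{H}$ copies of the $\cneqrel{1}$-gadget, with the first halves serving as the portals of $G$. The correctness verification via coherence and the cover-count algebra $v_A+v_B=1$ matches the paper's reasoning (and is in fact spelled out a bit more carefully than in the paper); the only slip is the phrase ``the five sub-gadgets'', which should read ``the $2+\abs{H}$ sub-gadgets''.
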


\begin{proof}
	We start with the construction of the gadget.
	\subparagraph{Construction of the gadget.}
	Let $X,Y$ be
	gadgets $H$-realizing the relation $\eqrel{2 \cdot \abs{H}}{[0,1]
	}$ which exist by \cref{lemma:multi_partition_eqrel}. 
	Let the sets $\{x_1, \ldots, x_{2 \cdot \abs{H}}\}$ and $\{y_1, \ldots, y_{2 \cdot \abs{H}}\}$  be the
	portal vertices of $X$ and $Y$ respectively. 
	Observe that $(1,H) \in \mathcal{C}_2$, therefore by \cref{lemma:single_partition_neq} there exist a gadget
	that $H$-realizes the relation $\cneqrel{1}$.
	For each $1 \leq i \leq \abs{H}$, we introduce a
	$\cneqrel{1}$-gadget $A_i$ and
	identify its portal vertices with $x_{i + \abs{H}}$ and $y_{i + \abs{H}}$.
	The set of portal vertices of the gadget is
	\begin{equation*}
		\{x_1, \ldots, x_{\abs{H}}, y_{1}, \ldots, y_{\abs{H}}\}.
	\end{equation*}
	We call the the whole gadget $G$. This is the whole construction.

	\subparagraph{The relation realized by $G$.}
	To show that $G$ $(1,H)$-realizes the relation $R$,
	we will follow \cref{definition:gadget_realizing}.
	Let $\bm{v} = (0, \ldots, 0, 1, \ldots, 1) \in R$.
	We will now show that there exists a $\singlepacking{1}{H}$ $\mathcal{K}$ of
	$G$ that covers the portal vertices of $G$ according to $\bm{v}$.
	For $\bm{v} = (1, \ldots, 1, 0, \ldots, 0) \in R$, the same arguments
	work by the symmetry of the gadget.

	We start by adding a $\singlepacking{1}{H}$ of $X$ to $\mathcal{K}$, where
	each $x_i$ is covered once for $1 \leq i \leq 2 \cdot \abs{H}$.
	Then, for each $1 \leq i \leq \abs{H}$, we add to $\mathcal{K}$ a $\singlepacking{1}{H}$ of $A_i$ which covers $y_i$ for $1 \leq i \leq \abs{H}$.
	Finally, we add to $\mathcal{K}$ a $\singlepacking{1}{H}$ of $Y$ that does
	not cover any of its portal vertices. All in all, $\mathcal{K}$ is a
	$\singlepacking{1}{H}$ of $G$ that covers each internal vertex and $\{x_i\}_{1 \leq i \leq \abs{H}}$ once,
	whereas $\{y_i\}_{1 \leq i \leq \abs{H}}$ are not covered.

	Now suppose that there exists a $\multipacking{c}{H}$ $\mathcal{K}$ of $G$
	such that each internal
	vertex of $G$ is covered $c$ times.
	Since $X$ and $Y$ are both $\eqrel{2 \cdot \abs{H}}{[0,1]
	}$-gadgets, this implies that there exists $\mathcal{K}_1, \mathcal{K}_2 \subseteq \mathcal{K}$ such that they cover the portal vertices of $X$ and $Y$ according to $\eqrel{2 \cdot \abs{H}}{[0,1]}$. However, exactly one set of the vertices from $\{x_i\}_{1 \leq i \leq \abs{H}}$ and $\{y_i\}_{1 \leq i \leq \abs{H}}$ should be covered because of the $\cneqrel{1}$ gadgets $A_1, \ldots, A_{\abs{H}}$.
	Hence, the portal vertices are covered according to $R$.

	\subparagraph{The size of $G$.}
	Finally, observe that the size of $G$ depends only on $\abs{H}$ which is a constant.
\end{proof}

\begin{figure}[htpb]
	\centering
	\includegraphics[scale=0.8]{./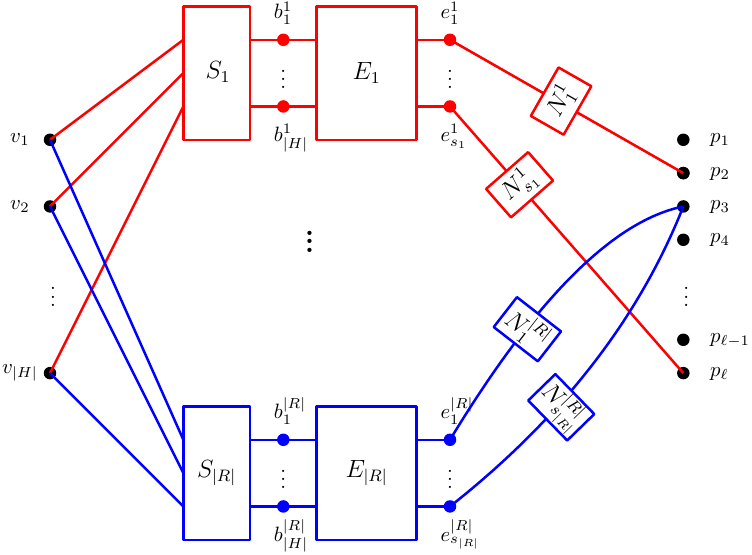}
	\caption{The gadget described in \cref{lemma:gadget_arb_any_relation} for $\psi = 0$.}
	\label{fig:arb_relation}
\end{figure}

Using the previously introduced gadgets, we now prove \cref{lemma:gadget_arb_any_relation}
and construct gadgets that realize
arbitrary relations. We begin with the case $(c, H) \in \mathcal{C}_2$, i.e.,
$c = 1$ and $H$ is an arbitrary graph.

\gadgetarbanyrelation*

\begin{proof}
	Let $0 \leq \psi \leq \abs{H} - 1$ such that
	$(x + \psi) = 0 \mod \abs{H}$.
	We start with the construction of the gadget.
	\subparagraph{Construction of the gadget.}		
	
	Introduce $\abs{H}$ many cental vertices
	$v_1, \ldots, v_{\abs{H}}$. We then introduce the portal vertices $\{p_1, \ldots, p_{\ell}\}$
	and the slack vertices $z_1, \ldots, z_{\psi}$.
	Then, for each $r^{i} \in R$, we do the following:
	\begin{enumerate}
		\item Introduce a toggle gadget $S_i$ from \cref{lemma:gadget_half_zero}
			with portal vertices
			$a^{i}_1, \ldots, a^{i}_{2 \cdot \abs{H}}$,
			and identify the vertices $v_1, \ldots, v_{\abs{H}}$ with $a^{i}_1, \ldots, a^{i}_{\abs{H}}$.

		\item Let $s_i \coloneqq w(r^{i})$, i.e. the weight of $r^{i}$
			which satisfies $s_i = x \mod \abs{H}$.
			Using \cref{lemma:multi_partition_eqrel}, create a gadget $E_i$ that $1$-realizes the relation $\eqrel{s_i + \psi + \abs{H}}{\{0,1\} }$
			with the portal vertices $\{b^{i}_1, \ldots, b^{i}_{\abs{H}}, e^{i}_1, \ldots, e^{i}_{s_i + \psi}\}$.
			Note that the gadget is well-defined since $(s_i + \psi) = 0 \mod \abs{H}$ by the definition of $R$.
			Identify the portal vertices $b^{i}_1, \ldots, b^{i}_{\abs{H}}$ with
			$a^{i}_{\abs{H} + 1}, \ldots, a^{i}_{2 \cdot \abs{H}}$.

		\item  Then, for each $r^{i} \in R$, fix a function $\alpha_i$ from $[s_i]$ to $\ell$
			such that for each $j \in \ell$ we have $\abs{\alpha_i^{-1}(j)} = r^{i}_j$.
			Then, for each $t \in [s_i]$, we introduce a $\cneqrel{1}$ gadget $N^{i}_t$,
			and identify its portal vertices with $e^{i}_t$ and $p_{\alpha_i(t)}$.
			Then, for each $u \in [\psi]$, we introduce a $\cneqrel{1}$ gadget $\tilde{N}^{i}_u$
			and identify its portal vertices with $e^{i}_{s_i + u}$ and $z_{u}$.
	\end{enumerate}
	This is the whole construction of the gadget $G$.

	\subparagraph{The relation realized by $G$.}
	To show that $G$ $H$-realizes the relation $R$,
	we will follow \cref{definition:gadget_realizing}.
	Let $r^i \in R$.
	We will now show that there exists a $\packing{H}$ $\mathcal{K}$ of
	$G$ that covers the portal vertices of $G$ according to $r^i$.

	We consider the gadget $S_i$ corresponding to $r^i$, and add to $\mathcal{K}$
	a $\packing{H}$ of $S_i$ that covers $v_1, \ldots, v_{\abs{H}}$ once.
	Then, we add to $\mathcal{K}$ a $\packing{H}$ of $E_i$ that covers
	all its portal vertices once.
	Then, for each $t \in [s_i]$, we add to $\mathcal{K}$ a $\packing{H}$
	that covers $p_{\alpha_i(t)}$.
	Note that each portal vertex $p_j$ is covered $\abs{\alpha^{-1}_{i}(j)} = r^{i}_j$ times.
	For all $j \in [\abs{R}] \setminus \{i\}$, we add to $\mathcal{K}$ a $\packing{H}$ of
	$S_j$ that covers $b^{j}_1, \ldots, b^{j}_{\abs{H}}$ once. Then, we add
	to $\mathcal{K}$ a $\packing{H}$ of $E_j$ that doesn't cover
	any of its portal vertices. Finally,
	for each $t \in [s_j]$, we add to $\mathcal{K}$ a $\packing{\abs{H}}$
	of $N^{j}_t$ that covers $e^{j}_{t}$.
	All in all, each internal vertex is covered $c$ times, and each portal vertex
	is covered corresponding to $r^{i}$.

	Now suppose that there exists a $\packing{H}$ $\mathcal{K}$ of $G$
	such that each internal
	vertex of $G$ is covered $c$ times.
	Recall that if $c \geq 2$, then in the construction of $G$ we have
	added a gadget $F$ that \arb{c,H}-realizes the relation $\coverrel{\abs{Z}}{c-1}$.
	In that case, since $F$ is $\coherent{(c,H)}$, there exists $\mathcal{F} \subseteq \mathcal{K}$
	that covers each vertex in $Z$ exactly $c-1$ times.
	Observe that $v_1, \ldots, v_{\abs{H}}$ need to be covered exactly once
	by $\mathcal{K} \setminus \mathcal{F}$.
	Therefore, there exists $i \in [\abs{R}]$ such that $\mathcal{K}$ contains
	a $\packing{\abs{H}}$ of $\mathcal{S}_i$ of $S_i$ such that
	$\mathcal{S}_i$ covers $v_1, \ldots, v_{\abs{H}}$.
	This holds because $S_i$ is $\coherent{(1,H)}$.
	By a similar logic, $\mathcal{K}$ also contains a $\packing{\abs{H}}$ of $E_i$
	that covers all its portal vertices, because otherwise $b^{i}_1, \ldots, b^{i}_{\abs{H}}$
	wouldn't be covered $c$ times.
	Finally, for each $t \in [s_j]$, since $N^{i}_t$ is $\coherent{(1,H)}$,
	$\mathcal{K}$ contains a $\packing{\abs{H}}$ of $N^{i}_t$ that
	covers the portal vertex $p_{\alpha_i(t)}$ once.
	It is easy to see that this is the only copies of $H$ that can cover portal vertices,
	hence the portal vertices are covered according to $r^{i}$.
	
	\subparagraph{The size and pathwidth of $G$.}	
	Finally, to prove the size claim, first observe that $R$ can
	have at most $(c+1)^{\ell}$ many tuples,
	and for each tuple in $R$ we add gadgets of constant size.
	Since the number of central vertices and portal vertices are equal to $\ell$ and $\abs{H}$,
	respectively,
	it holds that all the size of $G$ is a function of $\ell$.
	Moreover, observe that if we remove the portal vertices $p_1, \ldots, p_{\ell}$
	and the central vertices $v_1, \ldots, v_{\abs{H}}$,
	the pathwidth can decrease by at most $\ell + \abs{H}$.
	However, the remaining graph has constant pathwidth since the weight of the
	relation is constant. Therefore, the pathwidth of $G$ is at most $\mathcal{O}\left(\ell\right)$.
\end{proof}

Next, we prove \cref{lemma:gadget_clique_reg_relation} by constructing a gadget
that can \strict{c,H}-realize any $\regular{(0,\abs{H})}$ relation $R$ for $(c,H) \in \mathcal{C}_1$.

\gadgetcliqueregrelation*
\begin{proof}
	The proof—and the corresponding gadget construction—closely follows
	that of \cref{lemma:gadget_arb_any_relation}, but we include all
	details here for completeness. In particular, for $c = 1$,
	the result follows from \cref{lemma:gadget_arb_any_relation},
	therefore we assume $c \geq 2$ in the following.

	Define $0 \leq \gamma < \abs{H}$ such that
	$(\abs{R} + \gamma) \bmod \abs{H} = 0$. To simplify the construction,
	we define a new relation $R'$ by adding $\gamma$ duplicate copies of an
	arbitrary tuple from $R$, so that $\abs{R'} = \abs{R} + \gamma$ becomes
	divisible by $\abs{H}$.
	We emphasize that this modification is merely a notational
	convenience—it abstracts the need to repeat certain steps (specifically
	Steps 1-3 below) $\gamma$ additional times,
	without modifying the relation.
	Therefore, in the following, without loss of generality we may assume that
	$\abs{R} = 0 \mod \abs{H}$.
	
	\subparagraph{Construction of the gadget.}		
	
	Introduce $\abs{H}$ many cental vertices
	$v_1, \ldots, v_{\abs{H}}$. We then introduce the portal vertices $\{p_1, \ldots, p_{\ell}\}$. 
	Then, for each $r^{i} \in R$, we do the following:
	\begin{enumerate}
		\item Introduce a toggle gadget $S_i$ from \cref{lemma:gadget_half_zero}
			with portal vertices
			$a^{i}_1, \ldots, a^{i}_{2 \cdot \abs{H}}$,
			and identify the vertices $v_1, \ldots, v_{\abs{H}}$ with $a^{i}_1, \ldots, a^{i}_{\abs{H}}$.

		\item Let $s_i \coloneqq w(r^{i})$, i.e. the weight of $r^{i}$.
			Using \cref{lemma:multi_partition_eqrel}, create a gadget $E_i$ that $1$-realizes the relation $\eqrel{s_i + \abs{H}}{\{0,1\} }$
			with the portal vertices $\{b^{i}_1, \ldots, b^{i}_{\abs{H}}, e^{i}_1, \ldots, e^{i}_{s_i}\} $.
			Note that the gadget is well-defined since $s_i = 0 \mod \abs{H}$ by the definition of $R$.
			Identify the portal vertices $b^{i}_1, \ldots, b^{i}_{\abs{H}}$ with
			$a^{i}_{\abs{H} + 1}, \ldots, a^{i}_{2 \cdot \abs{H}}$.

		\item  Then, for each $r^{i} \in R$, fix a function $\alpha_i$ from $[s_i]$ to $\ell$
			such that for each $j \in \ell$ we have $\abs{\alpha_i^{-1}(j)} = r^{i}_j$.
			Then, for each $t \in [s_i]$, we introduce a $\cneqrel{1}$ gadget $N^{i}_t$,
			and identify its portal vertices with $e^{i}_t$ and $p_{\alpha_i(t)}$.
	\end{enumerate}
	Finally, let $G$ denote gadget constructed so far and $Z$ denote its internal vertices.
			Observe that $\abs{Z}$ is a multiple of $\abs{R}$,
			which itself is a multiple of $\abs{H}$.
			Therefore, we can assume that $\abs{Z} = 0 \mod H$ as well.
			Then, we introduce a gadget $F$ that \strict{c,H}-realizes the relation
			$\coverrel{\abs{Z}}{c-1}$, using \cref{lemma:multi_partition_coverrel_c_H}.
			Finally, we identify the portal vertices of $F$ with the vertices of $Z$.
	This is the whole construction of the gadget $G$.

	\subparagraph{The relation realized by $G$.}

	To show that $G$ \strict{c,H}-realizes the relation $R$,
	we will follow \cref{remark:strict_real_proof_guide}.
	Let $r^i \in R$.
	We will now show that there exists a $\singlepacking{c}{H}$ $\mathcal{K}$ of
	$G$ that covers the portal vertices of $G$ according to $r^i$.

	We start by adding a $\singlepacking{c}{H}$ of $F$ to $\mathcal{K}$ that covers
	each vertex in $Z$ exactly $c-1$ times.
	Then, we consider the gadget $S_i$ corresponding to $r^i$, and add to $\mathcal{K}$
	a $\singlepacking{1}{H}$ of $S_i$ that covers $v_1, \ldots, v_{\abs{H}}$ once.
	Then, we add to $\mathcal{K}$ a $\singlepacking{1}{H}$ of $E_i$ that covers
	all its portal vertices once.
	Then, for each $t \in [s_i]$, we add to $\mathcal{K}$ a $\singlepacking{1}{H}$
	that covers $p_{\alpha_i(t)}$.
	Note that each portal vertex $p_j$ is covered $\abs{\alpha^{-1}_{i}(j)} = r^{i}_j$ times.
	For all $j \in [\abs{R}] \setminus \{i\}$, we add to $\mathcal{K}$ a $\singlepacking{1}{H}$ of
	$S_j$ that covers $b^{j}_1, \ldots, b^{j}_{\abs{H}}$ once. Then, we add
	to $\mathcal{K}$ a $\singlepacking{1}{H}$ of $E_j$ that doesn't cover
	any of its portal vertices. Finally,
	for each $t \in [s_j]$, we add to $\mathcal{K}$ a $\singlepacking{1}{\abs{H}}$
	of $N^{j}_t$ that covers $e^{j}_{t}$.
	All in all, each internal vertex is covered $c$ times, and each portal vertex
	is covered corresponding to $r^{i}$.

	Now suppose that there exists a $\multipacking{c}{H}$ $\mathcal{K}$ of $G$
	such that each internal
	vertex of $G$ is covered $c$ times.
	Recall that if $c \geq 2$, then in the construction of $G$ we have
	added a gadget $F$ that \arb{c,H}-realizes the relation $\coverrel{\abs{Z}}{c-1}$.
	In that case, since $F$ is $\coherent{(c,H)}$, there exists $\mathcal{F} \subseteq \mathcal{K}$
	that covers each vertex in $Z$ exactly $c-1$ times.
	Observe that $v_1, \ldots, v_{\abs{H}}$ need to be covered exactly once
	by $\mathcal{K} \setminus \mathcal{F}$.
	Therefore, there exists $i \in [\abs{R}]$ such that $\mathcal{K}$ contains
	a $\multipacking{1}{\abs{H}}$ of $\mathcal{S}_i$ of $S_i$ such that
	$\mathcal{S}_i$ covers $v_1, \ldots, v_{\abs{H}}$.
	This holds because $S_i$ is $\coherent{(1,H)}$.
	By a similar logic, $\mathcal{K}$ also contains a $\multipacking{c}{\abs{H}}$ of $E_i$
	that covers all its portal vertices, because otherwise $b^{i}_1, \ldots, b^{i}_{\abs{H}}$
	wouldn't be covered $c$ times.
	Finally, for each $t \in [s_j]$, since $N^{i}_t$ is $\coherent{(1,H)}$,
	$\mathcal{K}$ contains a $\multipacking{1}{\abs{H}}$ of $N^{i}_t$ that
	covers the portal vertex $p_{\alpha_i(t)}$ once.
	It is easy to see that this is the only copies of $H$ that can cover portal vertices,
	hence the portal vertices are covered according to $r^{i}$.
	
	\subparagraph{The size of $G$.}	
	Finally, to prove the size claim, first observe that $R$ can
	have at most $\mathcal{O}\left((c+1)^{\ell}\right)$ many tuples,
	and for each tuple in $R$ we add gadgets of constant size.
	Since the number of central vertices and portal vertices are equal to $\ell$ and $\abs{H}$,
	respectively,
	it holds that all the size of $G$ is a function of $\ell$.
\end{proof}

\section{Lower Bounds for Clique Partitioning Problems}
\label{section:lb_clique}
In this section we prove \cref{theorem:multi_clique_packing_lower_bound,theorem:single_clique_packing_lower_bound}.
\cref{theorem:pwseth_equiv} says that in order to prove a conditional lower bound
based on \ppseth, one can start the reduction from the \kcsp{2} problem.
In the following, we will present a reduction from the \kcsp{2} problem
to the $\multicliquepartprob{c,d}$. Subsequently, we will describe another reduction from
$\multicliquepartprob{c,d}$ to $\singlecliquepartprob{c,d}$ problem, and prove
\cref{theorem:multi_clique_packing_lower_bound,theorem:single_clique_packing_lower_bound}.

\subsection{Lower Bounds}
We first prove \cref{theorem:multi_clique_packing_lower_bound}.
To that end, intuitively, we show that a fast algorithm for the $\multicliquepartprob{c,d}$
problem implies a fast algorithm for the \kcsp{2} problem.
\begin{lemma}\label{lemma:multi_clique_hardness}
	Let $c \geq 1$ and $d \geq 3$ be integers. Suppose there exists an
	$\varepsilon > 0$ such that $\multicliquepartprob{c,d}$ can be solved
	in time $(c+1 - \varepsilon)^{\pw\left( G \right) } \cdot
	n^{\mathcal{O}\left(1\right)}$ for all $n$-vertex graphs $G$ given together
	with a path decomposition of width at most $\pw(G)$. Then,
	there exist $\eps^{\prime}, b^{\prime}>0$, an integer $B \geq 1$ and an
	algorithm that takes as input a \kcsp{2} instance $\psi$ on alphabet
	$[B]$, together with a path decomposition of $\psi$, and decides if $\psi$
	is satisfiable in time $(B-\eps^{\prime})^{\pw}
	\cdot \abs{\psi}^{b^{\prime}}$.
\end{lemma}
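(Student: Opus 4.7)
The plan is to reduce the $\kcsp{2}$ problem on alphabet $[B]$ with $B = c+1$ to the $\multicliquepartprob{c,d}$ problem, in such a way that the pathwidth of the constructed instance exceeds that of the \csp\ instance by only a constant additive term. Combined with \cref{theorem:pwseth_equiv}, this will immediately yield the claim.

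\textbf{Construction overview.} Given a $\kcsp{2}$ instance $\psi$ on alphabet $[B] = \{0, \ldots, c\}$ together with a nice path decomposition $B_1, \ldots, B_t$ of width $\pw$ (and the function $b$ from constraints to bags provided by the preliminaries), I would build a graph $G$ laid out in columns and rows as suggested by \cref{fig:single_exponential_lb}. For each variable $x$ of $\psi$ and each bag $B_j$ with $x \in B_j$, introduce a vertex $v^j_x$. In any $\multicliquepartitioning{c}{K_d}$ of $G$ the coverage count of $v^j_x$ is an element of $\{0,\ldots,c\} = [B]$, which I will interpret as the value assigned to $x$. To enforce consistency along a column I attach, between $v^j_x$ and $v^{j+1}_x$ whenever $x$ appears in two consecutive bags, an equality gadget provided by \cref{lemma:gadget_clique_reg_relation}, which forces both to be covered the same number of times. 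For each $2$-variable constraint $C$ on variables $x,y$, I attach to $v^{b(C)}_x$ and $v^{b(C)}_y$ a constraint gadget realizing the relation $R_C \subseteq \{0,\ldots,c\}^2$ that lists the satisfying pairs of $C$.

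\textbf{Main obstacle.} The principal technical point is that \cref{lemma:gadget_clique_reg_relation} only produces gadgets for relations that are $\regular{(0,d)}$, whereas neither $R_C$ nor the equality relation $\eqrel{2}{[0,c]}$ needs to have all tuples of weight divisible by $d$. The remedy I plan to use is to add to each gadget a bounded number of auxiliary portal vertices that are internal to the construction — each attached to its own small \emph{slack} subgadget that can absorb any coverage between $0$ and $c$ — and then replace each tuple $(a,b) \in R_C$ by a padded tuple of weight $\equiv 0 \pmod d$ by distributing appropriate coverage to these auxiliary portals. The enlarged relation is then $\regular{(0,d)}$ and can be realized by \cref{lemma:gadget_clique_reg_relation}. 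Since the auxiliary portals have no neighbours outside the gadget, the padding does not interact with the rest of $G$ and preserves the intended semantics. The same padding is applied to the equality gadgets.

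\textbf{Correctness and pathwidth analysis.} Coherence of the equality and constraint gadgets, guaranteed by \cref{lemma:gadget_clique_reg_relation}, implies that every $\multicliquepartitioning{c}{K_d}$ of $G$ splits into gadget-local parts, which in turn yields a bijection between satisfying assignments of $\psi$ and $\multicliquepartitioning{c}{K_d}$'s of $G$. To bound the pathwidth, I take the path decomposition of $G$ whose $j$-th bag contains $\{v^j_x : x \in B_j\}$ together with, while it is being processed, the internal vertices and portals of each equality gadget crossing from bag $j-1$ to bag $j$ and the internal vertices of each constraint gadget for a constraint $C$ with $b(C)=j$. Because every such gadget has constant size and touches at most a constant number of column vertices (at most $4$, coming from at most two bags), this path decomposition has width $\pw + \mathcal{O}(1)$. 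Hence an $(c+1-\eps)^{\pw(G)} \cdot n^{\mathcal{O}(1)}$ algorithm for $\multicliquepartprob{c,d}$ decides $\psi$ in time $(B-\eps)^{\pw+\mathcal{O}(1)} \cdot \abs{\psi}^{\mathcal{O}(1)} = (B-\eps')^{\pw} \cdot \abs{\psi}^{b'}$ for suitable constants $\eps', b'>0$, completing the reduction.
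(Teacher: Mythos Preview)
Your proposal has a genuine gap at exactly the point you flag as the ``main obstacle''. The slack subgadget you want does not exist. Suppose $G$ is any gadget with portals $p_1,\ldots,p_\ell$, and $\mathcal Z$ is a $\multipacking{c}{K_d}$ covering every internal vertex exactly $c$ times and each $p_i$ exactly $r_i$ times. Counting vertex--clique incidences in two ways gives
\[
d\,\lvert\mathcal Z\rvert \;=\; c\bigl(\lvert V(G)\rvert-\ell\bigr)+\sum_{i=1}^{\ell}r_i,
\]
so $\sum_i r_i$ is forced into a single residue class modulo $d$. In particular, a one-portal gadget can only realize coverages in a fixed residue class mod $d$, never all of $\{0,\ldots,c\}$; and more damagingly, \emph{any} combined gadget you attach at the two ``real'' portals $v_x,v_y$ (constraint gadget plus whatever slack apparatus, with all auxiliary portals made internal) can only realize pairs $(a,b)$ with $a+b$ in a fixed residue class mod $d$. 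A $2$-\textsc{CSP} constraint such as $\{(0,0),(0,1)\}$ therefore cannot be enforced this way. This kills the plan of taking $B=c+1$ with one vertex per variable.

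The paper circumvents this obstruction not by padding the gadget but by padding the \emph{encoding}: each variable is represented by $\ell$ vertices (with $\ell$ a suitably large multiple of $d$), and values in $[B]$ are injected into the set of vectors in $\{0,\ldots,c\}^{\ell}$ of weight $\equiv 0 \pmod d$. Every relation that arises (the copy relation, the constraint relations, the boundary relations) then has constant weight mod $d$ automatically, so \cref{lemma:gadget_clique_reg_relation} applies directly. The price is an $\ell$-fold blowup in pathwidth, which is why one cannot simply take $B=c+1$: instead the paper sets $B=(c+1)^{\ell-d}$ and chooses $\ell$ large enough that $(c+1-\varepsilon)^{\ell p}\le (B-\varepsilon')^{p}$. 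Your running-time paragraph, which assumes $\pw(G)=\pw+\mathcal O(1)$, would not survive this change and needs the careful balancing the paper performs.
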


\begin{proof}
	Let $\varepsilon$, $b$ be as in the lemma statement and let $\mathcal{A}$ be the
	hypothetical algorithm for $\multicliquepartprob{c,d}$.
	Moreover, we let $H$ denote the clique $K_d$ and let $\ell$ be the
	smallest integer that is a multiple of $\abs{H}$ such that
	\begin{equation}\label{eq:ell_choice}
		\left( 1 - \varepsilon \right) \cdot \abs{H} < \frac{\varepsilon}{2} \cdot \left( l - \abs{H} \right).
	\end{equation}
	Observe that $\ell$ is a constant that only depends on $\varepsilon$ and $H$.
	Let $B = \left( c+1 \right)^{\ell - \abs{H}}$ and $\varepsilon' = \frac{\varepsilon}{2}$. We
	will now present a reduction from \textsc{2-CSP} with
	alphabet size $B$ to $\multicliquepartprob{c,d}$. The idea is as follows: in order to make use of
	regular relations, we will represent each variable of the
	\textsc{2-CSP} instance by $\ell$ vertices. Note that each of the
	$\ell$ vertices can be covered between $0$ and $c$ times, which can
	be thought of as the state of a vertex. In total, $\ell$ vertices
	combined give rise to $(c+1)^{\ell}$ states. These states can also be
	visualized as vectors $r \in \{0, \ldots, c\}^{\ell}$. Next, we
	consider the following subset of vectors
	\begin{equation*}
		Z \coloneqq \{x \in (c+1)^{\ell} \mid w(x) = 0 \mod \abs{H}\}. 
	\end{equation*}
	Observe that we have $\abs{Z} \geq (c+1)^{\ell - \abs{H}} = B$, because
	we can append to each vector in $r' \in \{0, \ldots, c\}^{\ell - \abs{H}}$
	at most $\abs{H}$ many $1$'s so that the new vector $r \in \{0, \ldots, c\}^{\ell}$
	constructed this way satisfies $w(r) = 0 \mod \abs{H}$.
	Hence there exists an injective map $\Phi \from [B]
	\to Z$ where we define $W \coloneqq \im\left( \Phi \right) $. We use the set $W$ to simulate $B$
	many assignments to a variable of the \textsc{2-CSP} instance.
	Observe that $W$, as a relation, is $\regular{(0,\abs{H})}$.

	\subparagraph{Construction of the $\multicliquepartprob{c,d}$ instance.}
	Let $\psi$ be a \textsc{2-CSP} instance with variables $x_1, \ldots,
	x_n$, contraints $C_1, \ldots, C_m$ and alphabet $[D]$. Let
	$\mathcal{P} = (B_1, \ldots, B_t)$ be a nice path decomposition of
	width $p$. Finally, let $b \from [m]
	\to [t]$ be a function that maps each constraint $C_i$ of $\psi$ to a
	bag such that $B_{b(i)}$ contains the variables occurring in $C_i$. We will
	construct an instance of the $\multicliquepartprob{c,d}$ problem as
	follows:

	\begin{enumerate}
		\item For each $1 \leq i \leq n$, define $l(i) \in [t]$ to be the
			smallest integer such that $x_i \in B_{l(i)}$.
			Similarly, let $r(i)$ be the largest integer such that
			$x_i \in B_{r(i)}$.
			For each $i \in [n]$ and $l(i) \leq j \leq r(i) + 1$,
			introduce $\ell$ vertices $\{a^{i,j}_{1}, \ldots,
			a^{i,j}_{\ell}\}.$
		\item Define the relation
			\begin{equation*}
				W^{C} \coloneqq \Compl{c}{W}.
			\end{equation*}
			Observe that $W$ is $\regular{(0,\abs{H})}$ by construction.
			The same also holds for $W^{C}$, because
			for each $x \in W^{C}$ such that $x = \Compl{c}{s}$ for $s \in W$,
			we have
			\begin{equation*}
				w(x) = \Bigl( \ell \cdot c - w(s) \Bigr)  = 0 \mod \abs{H},
			\end{equation*}
			where the last equivalence holds because
			$\ell$ and $w(s)$ are both equivalent to $0 \mod \abs{H}$.
			Introduce two gadgets $L^{i}$ and $R^{i}$ that \arb{c,H}-realize
			the relation $W^{C}$ and $W$, respectively, which exist by \cref{lemma:gadget_clique_reg_relation}.
			Then, identify the portal
			vertices of $L^{i}$ with $(a^{i,l(i)}_1, \ldots,
			a^{i,l(i)}_\ell)$, and similarly, identify the portal
			vertices of $R^{i}$ with the vertices
			$\{a^{i,r(i) + 1}_{1}, \ldots, a^{i, r(i)+1}_{\ell}\}$.
		\item   Let $j \in [t]$. We say that $j$ represents $s$ for $s \in [m]$ if $b(s) = j$.
			In that case, we
			define $S_j$ to be the relation, and $i_1$ and $i_2$ to
			be the indices of the variables associated with
			the constraint $C_s$.
			We define the new relation 
			\begin{equation*}
				R_j \coloneqq \bigg\{ \Phi(u_1) \odot \Phi(u_2) \odot \Compl{c}{\Phi(u_1)} \odot \Compl{c}{\Phi(u_2)}  \biggm| (u_1, u_2) \in S_j\bigg\}.
			\end{equation*}
			Observe that $R_j$ is $\regular{(0,\abs{H})}$ because for each
			$x \in R_j$ we have
			\begin{equation*}
				w(x) = \left( 2 \cdot \ell \cdot c \right) = 0 \mod \abs{H}
			\end{equation*}
			since $\ell$ is a multiple of $\abs{H}$.
			Hence, we can introduce a gadget $N^{j}_{i_1,i_2}$ by \cref{lemma:gadget_clique_reg_relation} that \arb{c,H}-realizes $R_j$.
			Then, we identify the portal vertices of $N^{j}_{i_1,i_2}$ with
			the vertices
			\begin{equation}\label{eq:N_j_portal_vertices}
				\left( a^{i_1,j}_{1}, \ldots, a^{i_1,j}_{\ell} \right) \odot \left( a^{i_2,j}_{1}, \ldots, a^{i_2,j}_{\ell} \right) \odot \left(  a^{i_1,j+1}_{1}, \ldots, a^{i_1,j+1}_{\ell} \right) \odot \left( a^{i_2,j+1}_{1}, \ldots, a^{i_2,j+1}_{\ell} \right).
			\end{equation}
			Next, we define the relation $\copyrel \subseteq \{0, \ldots, c\}^{2 \cdot \ell}$ where
			\begin{equation}\label{eq:copyrel_def}
				\copyrel \coloneqq \{\bm{u} \odot \Compl{c}{\bm{u}} \mid \bm{u} \in W \} 
			\end{equation}
			and let
			\begin{equation*}
				\Gamma_j \coloneqq \begin{cases}
					\{i_1, i_2\} &\text{if $j$ represents $s$,}\\
					\emptyset &\text{otherwise.} 
				\end{cases}
			\end{equation*}
			Observe that $\copyrel$ is $\regular{(0,\abs{H})}$ because for
			each $x \in \copyrel$ we have
			\begin{equation*}
				w(x) = \ell \cdot c = 0 \mod \abs{H}.
			\end{equation*}
			By \cref{lemma:gadget_clique_reg_relation}, there exists a gadget
			that \arb{c,H}-realizes $\copyrel$.
			Next, for each $i \in \bigl( [n] \setminus \Gamma_j \bigr)$ such that $l(i) \leq j \leq r(i)$, we introduce a gadget $F^{j}_i$ that \arb{c,H}-realizes the
			relation $\copyrel$ and identifies the portal vertices of $F^{j}_i$
			with
			\begin{equation}\label{eq:F_ij_portal_vertices}
				\left( a^{i,j}_1, \ldots, a^{i,j}_\ell \right) \odot \left( a^{i,j+1}_1, \ldots, a^{i,j+1}_\ell  \right).
			\end{equation}
			Finally, for each $i \in [n]$ and $j \in [t]$ such that $l(i) \leq j \leq r(i)$, we define the gadget that covers $i$ at step $j$ as

			\begin{equation*}
				K^{j}_i \coloneqq \begin{cases}
					N^{j}_{i_1,i_2} &\text{if $j$ represents $s \in [m]$ and $i \in \{i_1,i_2\} $} \\
					F^{j}_i &\text{otherwise.} 
				\end{cases}
			\end{equation*}
	\end{enumerate}
	This is the whole construction of the $\multicliquepartprob{c,d}$ instance which we call $G$.

	\subparagraph{Equivalence of the instances.}
	We now prove that $\psi$ is satisfiable if and only if $G$ admits a
	$\multipartitioning{c}{K_d}$, by establishing each direction
	separately.

	Suppose that there exists an assignment $(\alpha_1, \ldots, \alpha_n)$
	to $(x_1, \ldots, x_n)$ such that $\psi$ is satisfied.
	In the following, we will describe a $\multipartitioning{c}{K_d}$ $\mathcal{K}$ of $G$.
	For each $i \in
	[n]$, define $\bm{a}_i \coloneqq \Phi(\alpha_i) \in W$. Now let $j \in
	[t]$.
	Observe that since $\bm{a}_i \in W$ for each $i \in [n]$, it holds that
	there exists a $\multicliquepacking{c}{d}$ of
	$L^{i}$ that covers $(a^{i,l(i)}_1, \ldots,
	a^{i,l(i)}_\ell)$ according to $\Compl{c}{\bm{a}_i}$. Moreover, there exists a $\multicliquepacking{c}{d}$
	of $R^{i}$ that covers $(a^{i,r(i)+1}_1, \ldots,
	a^{i,l(i)+1}_\ell)$ according to $\bm{a}_i$. In both cases, the internal
	vertices of the gadgets are covered exactly $c$ times.

	Now let $j \in [t]$. For all $i \in \bigl( [n] \setminus \Gamma_j \bigr)$,
	there exists a $\multicliquepacking{c}{d}$ of $F^{j}_i$ that covers
	\begin{equation}\label{eq:F_j_portal_vertices}
		\left( a^{i,j}_1, \ldots, a^{i,j}_\ell \right) \odot \left( a^{i,j+1}_1, \ldots, a^{i,j+1}_\ell  \right)
	\end{equation}
	according to
	$\bm{a}_i \odot \Compl{c}{\bm{a}_i}$, because $\bm{a}_i \in W$.
	Moreover, if $j$ represents $s$ for some $s \in [m]$, and
	$x_{i_1}, x_{i_2}$ are the variables corresponding to $C_s$,
	then there exists a $\multicliquepacking{c}{d}$ that covers
	\begin{equation*}
		\left( a^{i_1,j}_{1}, \ldots, a^{i_1,j}_{\ell} \right)  \odot \left( a^{i_2,j}_{1}, \ldots, a^{i_2,j}_{\ell} \right) \odot \left(  a^{i_1,j+1}_{1}, \ldots, a^{i_1,j+1}_{\ell} \right) \odot \left( a^{i_2,j+1}_{1}, \ldots, a^{i_2,j+1}_{\ell} \right)
	\end{equation*}
	according to
	$\bm{a}_{i_1} \odot \bm{a}_{i_2} \odot \Compl{c}{\bm{a}_{i_1}} \odot \Compl{c}{\bm{a}_{i_2}}$.
	Again, in both cases, the internal vertices of the gadgets are covered
	exactly $c$ times.
	Next, we prove that
	the remaining vertices are covered $c$ times as well.
	
	\begin{claim}\label{claim:multi_clique_proof_forward}
		It holds that for each $i \in [n]$, $j \in [t]$ such that $l(i) \leq j \leq r(i) + 1$ and $x \in [\ell]$,
		$a^{i,j}_x$ is covered exactly $c$ times by $\mathcal{K}$.
	\end{claim}

	\begin{claimproof}
		We prove the claim by induction on $j$.
		Let $j = 1$, $i \in [n]$ and suppose that $l(i) \leq j \leq
		r(i) + 1$. Since $1 \leq l(i) \leq j = 1$, we have that $l(i) =
		1 = j$. Consider the vertices $\left( a^{i,1}_{1}, \ldots,
		a^{i,1}_{\ell} \right)$, which are covered according to
		$\Compl{c}{\bm{a}_i}$ by $L^{i}$. Moreover, $\left( a^{i,1}_{1}, \ldots, a^{i,1}_{\ell} \right)$
		is covered according to $\bm{a}_i$ by $K^{1}_i$ which follows from
		\cref{eq:F_j_portal_vertices} or \cref{eq:N_j_portal_vertices},
		depending on whether $i \in \Gamma_1$ or not, respectively.
		All in all, $a^{i,1}_x$
		is covered exactly $c$ times for $x \in [\ell]$.
	
		Now suppose that the claim holds for $1 < j \leq t$, and we will prove the claim
		for $j + 1$. Let $i \in [n]$
		such that $l(i) \leq j + 1 \leq r(i) + 1$.
		Consider
		the vertices $\left( a^{i,j+1}_{1}, \ldots, a^{i,j+1}_{\ell} \right)$,
		which are covered according to $\Compl{c}{\bm{a}_i}$
		by $K^{j}_i$.
		This follows from \cref{eq:F_j_portal_vertices,eq:N_j_portal_vertices}.
		Now, observe that we have either $j < r(i)$ or $j = r(i)$.
		In both cases, by using the arguments in the $j = 1$ case,
		one can conclude that $\left( a^{i,j+1}_{1}, \ldots, a^{i,j+1}_{\ell} \right)$
		is covered according to $\bm{a}_i$ by $K^{j+1}_i$.
		Therefore, all in all, it holds that each vertex $a^{i,j}_x$ is covered
		exactly $c$ times for $x \in [\ell]$.
	\end{claimproof}
	Now for the reverse implication, suppose that $G$ has a $\multipartitioning{c}{K_d}$.
	Since each gadget used in the construction of $G$ is $\coherent{(c,K_d)}$,
	this implies that for each gadget there is
	a $\multicliquepacking{c}{d}$ 
	that covers its interval vertices exactly $c$ times, and its portal vertices
	according to the relation associated with it.
	In particular, for each $i \in [n]$,
	let $\bm{b}_i \in W^{C}$ denote the vector such that $L^{i}$ covers the vertices
	$\{a^{i, l(i)}_{1},\ldots, a^{i, l(i)}_{\ell}\}$ according to $\bm{b}_i$.
	
	By induction, one can show that for each $j \in [t]$ and $i \in [n]$ such that $l(i) \leq j \leq r(i)$,
	the tuple $(a^{i, j}_{1},\ldots, a^{i, j}_{\ell})$ is covered by $K^{j}_{i}$
	according to $\bm{z}_i$ where $\bm{z}_i = \Compl{c}{\bm{b}_i}$.
	Hence, we let $\alpha_i \coloneqq \Phi^{-1}\left(\bm{z}_i\right) \in B$.
	Next, we will prove that $A \coloneqq (\alpha_1, \ldots, \alpha_n)$ is a satisfying assignment
	for $\psi$.
	To that end, let $s \in [m]$ and $j = b(s)$. To show that $C_s$ is satisfied by $A$,
	let $x_{i_1}$ and $x_{i_2}$ be the variables associated with $C_s$.
	By the above discussion, we know that $\left( a^{i_1,j}_{1}, \ldots, a^{i_1,l}_{\ell} \right)$ and
	$\left( a^{i_2,j}_{1}, \ldots, a^{i_2,j}_{\ell} \right)$ is covered
	according to $\bm{z}_{i_1}$ and $\bm{z}_{i_2}$, respectively.
	By the definition of $R_j$, $a_{i_1} = \Phi^{-1}( \bm{z}_{i_1})$
	and $a_{i_2} = \Phi^{-1}( \bm{z}_{i_2})$ satisfy $C_s$.
	Therefore, the assignment $A$ satisfies all constraints of $\psi$,
	and $\psi$ is satisfied if and only if $G$ has a
	$\multipartitioning{c}{K_d}$.

	\subparagraph{Pathwidth and size of the constructed instance.}
	To bound the pathwidth of $G$, we will create a path decomposition by
	following $\{B_j\}_{j \in [t]}$. Specifically, for each $j \in [t]$,
	we first create a new path decomposition $\mathcal{X} = \left( X_1, \ldots, X_t \right)$
	where each $X_j$ is a copy of $B_j$ and we replace each $x_i \in B$ with
	the vertices $a^{i,j}_1, \ldots, a^{i,j}_{\ell}$. Note that the size of each
	$X_j$ is at most $p \cdot \ell$ for $j \in [t]$.

	In the following, we will add the remaining vertices of $G$ to the bags in $\mathcal{X}$
	such that $\mathcal{X}$ is a valid path decomposition of $G$.
	Let $j \in [t]$. For each $i \in [n] \setminus \Gamma_j$ such that $l(i) \leq j \leq r(i)$,
	we replace the vertices $\{a^{i,j}_x\}_{x \in [\ell]}$ with $\{a^{i,j+1}_x\}_{x \in [\ell]}$ as follows.	
	After the bag $X_j$ ,we first insert the bag $X'_{j,i} \coloneqq \left( X_j \cup V\left( F^{j}_i \right) \right)$,
	and then add another bag $X^{''}_{j,i}$ where we replace $\{a^{i,j}_x\}_{x \in [\ell]}$ in $X'_{j,i}$ with
	$\{a^{i,j+1}_x\}_{x \in [\ell]}$, and finally, another bag $X^{'''}_{j,i}$ where we remove the vertices
	$V\left( F^{j}_i \right)$ from $X^{''}_{j,i}$. For a fixed $j \in [t]$,
	we keep adding the bags iteratively for each $i \in [n] \setminus \Gamma_j$ such that $l(i) \leq j \leq r(i)$, until all vertices in all the gadgets
	are contained in the bag decomposition. Note that, by our construction,
	edges that are adjacent to a vertex in $F^{j}_i$ are also covered by either $X'_{j,i}$ or $X^{''}_{j,i}$.
	
	Finally, for each $s \in [m]$ and $j = b(s)$, we add the gadgets
	$N^{j}_{i_1,i_2}$ to the tree decomposition in a similar way.
	Since the size of each $N^{j}_{i_1,i_2}$ and $F^{j}_i$ is a function of $\ell$,
	it is bounded by a constant.
	All in all, the pathwidth of $G$ increases at most by a constant.
	We have
	\begin{equation*}
		\pw(G) = l \cdot p + \mathcal{O}\left(1\right).
	\end{equation*}
	Finally, since $t$ and $m$ are bounded by a polynomial of $n$,
	it follows from the construction that
	\begin{equation*}
		V(G) = n^{\mathcal{O}\left(1\right)}.
	\end{equation*}

	\subparagraph{Running Time.}
	Constructing the graph $G$ from $\psi$ takes time polynomial in $n$.
	By our assumption on $\mathcal{A}$, the whole reduction takes time
	\begin{align*}
		(c+1)^{(1-\varepsilon) \cdot \pw(G)} \cdot V(G)^{b} &= (c+1)^{(1 - \varepsilon) \cdot l \cdot p} \cdot n^{\mathcal{O}\left(1\right)}\\
								    &= (c+1)^{(1 - \varepsilon) \cdot (\ell - \abs{H}) \cdot p} \cdot (c+1)^{(1 - \varepsilon) \cdot \abs{H} \cdot p} \cdot n^{\mathcal{O}\left(1\right)}\\
								    &= B^{(1 - \varepsilon) \cdot p} \cdot (c+1)^{(1 - \varepsilon) \cdot \abs{H} \cdot p} \cdot n^{\mathcal{O}\left(1\right)}\\
								    &< B^{(1 - \varepsilon) \cdot p}  \cdot (c+1)^{\frac{\varepsilon}{2} \cdot \left( \ell - \abs{H} \right)  \cdot p}\cdot n^{\mathcal{O}\left(1\right)}\\
								    &= B^{( 1 - \varepsilon)\cdot p} \cdot B^{\frac{\varepsilon}{2}\cdot p} \cdot n^{\mathcal{O}\left(1\right)}\\								    
								    &= B^{( 1 - \varepsilon')\cdot p} \cdot \abs{\psi}^{b'}
	\end{align*}
	where the inequality follows from \cref{eq:ell_choice} and $b'$ is a constant.
\end{proof}

The proof of \cref{theorem:multi_clique_packing_lower_bound} follows from
\cref{theorem:pwseth_equiv,lemma:multi_clique_hardness}. We note here that the same
construction can be used to prove \cref{theorem:single_clique_packing_lower_bound},
because the gadgets used in \cref{lemma:multi_clique_hardness} \strict{c,K_d}-realize
their relations (see \cref{definition:gadget_realizing,lemma:gadget_clique_reg_relation}). However, by presenting a simple reduction,
we demonstrate
that a fast algorithm for $\singlecliquepartprob{c,d}$ implies a fast algorithm
for $\multicliquepartprob{c,d}$, which is used to prove \cref{theorem:single_clique_packing_lower_bound}
in a more formal way.

\begin{lemma}\label{lemma:single_clique_hardness}
	Let $c \geq 1$ and $d \geq 3$ be integers. Suppose there exists an
	$\varepsilon > 0$ such that $\singlecliquepartprob{c,d}$ can be solved
	in time $(c+1 - \varepsilon)^{\pw\left( G \right) } \cdot
	n^{\mathcal{O}\left(1\right)}$ for all $n$-vertex graphs $G$ given together
	with a path decomposition of width at most $\pw(G)$. Then,
	there exist $\eps^{\prime}>0$ and $b^{\prime}>0$ 
	such that $\multicliquepartprob{c,d}$ can be solved
	in time $(c+1 - \varepsilon^{\prime})^{\pw\left( G \right) } \cdot
	n^{b^{\prime}}$ for all $n$-vertex graphs $G$ given together
	with a path decomposition of width at most $\pw(G)$.
\end{lemma}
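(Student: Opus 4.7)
The plan is a polynomial-time graph transformation: given an instance $G$ of $\multicliquepartprob{c,d}$ together with a path decomposition of width $\pw(G)$, we produce an instance $G'$ of $\singlecliquepartprob{c,d}$ such that (i) $G$ admits a $\multipartitioning{c}{K_d}$ if and only if $G'$ admits a $\singlepartitioning{c}{K_d}$, (ii) $\pw(G') \leq \pw(G) + \mathcal{O}(1)$, and (iii) $|V(G')| = n^{\mathcal{O}(1)}$. Given such a reduction, running the hypothesized $(c+1-\varepsilon)^{\pw(G')}\cdot|V(G')|^{\mathcal{O}(1)}$ single-partition algorithm on $G'$ immediately yields the promised $(c+1-\varepsilon')^{\pw(G)}\cdot n^{b'}$ multi-partition algorithm, absorbing the additive $\mathcal{O}(1)$ pathwidth overhead into a slightly smaller $\varepsilon'$.

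The construction attaches, to each clique-forming substructure of $G$, a constant-size \emph{copy-distinguishing} gadget built with the strict-realization toolkit of \cref{section:gadgets}. The purpose of these gadgets is to supply several internal $K_d$-subgraphs that serve as \emph{surrogate} copies whenever the multi-partition uses the same $G$-clique with multiplicity greater than one. The naive alternative of replacing each vertex by $c$ twins with the same external neighborhood preserves the combinatorics but inflates pathwidth by a factor of $c$, which would be fatal for the intended conclusion; localizing the extra vertices in small gadgets attached via a bounded number of portals avoids this, since the gadgets can be inserted into the path decomposition of $G$ with only constant overhead per bag.

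Both directions of the equivalence follow the same blueprint. Given a $\multipartitioning{c}{K_d}$ of $G$ with multiplicities $\{m_C\}_C$, each clique $C$ with $m_C\geq 1$ contributes one copy of $C$ itself to the single-partition of $G'$, while the remaining $m_C-1$ uses are realized by picking distinct surrogate $K_d$'s from the attached gadgets; the internal coverages of the gadgets are then completed by their remaining internal $K_d$'s, which exist by the $\regular{(0,\abs{H})}$ flexibility of \cref{lemma:gadget_clique_reg_relation}. Conversely, a $\singlepartitioning{c}{K_d}$ of $G'$ projects to a multi-partition of $G$ by tallying, for each $G$-clique $C$, the indicator that $C$ itself is used plus the number of surrogate $K_d$'s attributed to $C$; coherence of the gadgets, which follows from strict realization, ensures this tally is well-defined and leaves every original vertex covered exactly $c$ times.

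The hard part is the precise design and placement of the gadgets so that (a) each surrogate $K_d$ can be unambiguously attributed to a specific $G$-clique in the backward direction, (b) the surrogate choices made independently for different cliques do not clash at shared vertices, and (c) every such interaction remains coherent under both single- and multi-packings, ruling out spurious solutions. The $\regular{(0,\abs{H})}$ constructions of \cref{lemma:gadget_clique_reg_relation} supply both the flexibility and the coherence needed, but combining them into a globally consistent reduction while routing the surrogate choices around each original vertex and preserving pathwidth up to an additive constant is the technical core of the argument.
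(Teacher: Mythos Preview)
Your high-level plan---a polynomial-time reduction increasing pathwidth by only an additive constant, then running the hypothesised single-partition algorithm---is exactly right, and the running-time bookkeeping in your final paragraph is correct. But you are making the gadget layer far harder than it needs to be, and you leave the construction as an open ``technical core''.

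The paper's construction is one line and removes every difficulty you flag. Do not keep the edges of $G$ at all. Let $G'$ have the vertex set of $G$; for each $d$-clique $X$ of $G$, attach a $\coherent{(c,K_d)}$ gadget $E_X$ that \strict{c,K_d}-realizes $\eqrel{d}{[0,c]}$ (this is exactly \cref{lemma:single_partition_eqrel_abs_H}), identifying its $d$ portals with $V(X)$. That is the whole reduction. Since the gadgets are internally coherent and the original vertices carry no edges among themselves, any $\singlepartitioning{c}{K_d}$ of $G'$ splits into one single-packing per $E_X$, and by the realized relation each $E_X$ covers all of its portals the same number $a_X\in[0,c]$ of times. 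Declaring the multiplicity of $X$ in $G$ to be $a_X$ gives a $\multipartitioning{c}{K_d}$, since every original vertex $v$ is covered exactly $\sum_{X\ni v}a_X=c$ times; the converse is equally direct. There is no surrogate-attribution problem and no clash problem, because nothing ever has to decide ``use the original copy versus a surrogate''---the equality gadget already \emph{is} the multiplicity counter.

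Your ``original clique plus $m_C-1$ surrogates'' scheme is not wrong, but it is a detour: once you notice that a surrogate is just a packing of an $\eqrel{d}{[0,c]}$-gadget covering all $d$ portals once, the original clique becomes redundant, and dropping it (together with all the original edges) is what collapses your ``hard part'' to nothing. For pathwidth, each $d$-clique $X$ lies in some bag of the given decomposition; duplicate that bag and add the constant-size $V(E_X)$ to it, giving $\pw(G')\le\pw(G)+\mathcal{O}(1)$.
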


\begin{proof}
	Let $\varepsilon$ and $b$ be as in the statement of
	the lemma. Moreover, let $\mathcal{A}$ be the hypothetical algorithm
	for $\singlecliquepartprob{c,d}$.

	\subparagraph{Construction of the $\singlecliquepartprob{c,d}$ instance.} Let $G$
	be an instance of $\multicliquepartprob{c,d}$.
	Moreover, we let $H$ denote the clique $K_d$ and
	construct a $\singlecliquepartprob{c,d}$ instance $G'$ as follows. Let
	$G'$ have the same vertex set as $G$. We call these vertices the
	original vertices of $G'$. Moreover, for each clique $X$ in $G$ of size
	$d$, we add a  gadget $E_X$ that \dist{c,H}-realizes the relation
	$\eqrel{d}{[0,c]}$,whose existence was given in
	\cref{lemma:single_partition_eqrel_abs_H}. We identify the portal
	vertices of $E_X$ with $V(X)$. This is the whole construction.

	\subparagraph{Equivalence of Instances.} We will now prove that $G$
	admits a $\multipartitioning{c}{K_d}$
	if and only if
	$G'$ admits a $\singlepartitioning{c}{K_d}$.	
	
	Suppose that $G$ has a $\multipartitioning{c}{K_d}$ which is denoted by $\mathcal{Z}$.
	For each clique $X$ in $G$, let $a_x$ denote the number of occurences of $X$
	in $\mathcal{Z}$. We construct a $\singlepartitioning{c}{K_d}$ $\mathcal{K}$ for $G'$
	as follows. For each clique $X$ in $G$, consider a $\singlecliquepacking{c}{d}$
	of $E_X$ such that the portal vertices of $E_X$ are covered $a_X$ times. Add this
	$\singlecliquepacking{c}{d}$ to $\mathcal{K}$.  Since the gadgets are disjoint,
	the copies of $K_d$ are also disjoint. The internal vertices of the gadgets are covered
	exactly $c$ times. Moreover, the original vertices of $G'$ are also covered $c$ times, since
	each equality gadget simulates a clique. Hence $\mathcal{K}$ is a valid $\singlepartitioning{c}{K_d}$
	and therefore $G'$ is a yes instance.

	Now suppose that $G'$ has a $\singlepartitioning{c}{K_d}$ $\mathcal{K}$. Then, for
	each clique $X$ in $G$, let $a_X$ denote the number of times $E_X$ covers
	its portal vertices. We construct a $\multicliquepacking{c}{d}$ $\mathcal{Z}$
	by including each $X$ exactly $a_X$ times in $\mathcal{Z}$. Moreover, $\mathcal{Z}$
	covers each vertex exactly $c$ times, hence it is a valid $\multipartitioning{c}{K_d}$.
	Therefore, $G$ is a yes-instance.

	\subparagraph{Running Time.} First, we show that the pathwidth of $G'$
	is $p + \mathcal{O}(1)$. Take the path decomposition of $G$
	and for each clique $X$, let $B_X$ denote the bag that contains the vertices
	of $X$. Note that since $X$ is a clique, there exists such a bag. Moreover, we may assume each bag $B_X$
	is unique for each clique $X$, duplicating bags if necessary.
	Then, we simply add the vertices of $E_X$ to the bag $B_X$, increasing its
	size at most by a constant. Therefore, we get a new path decomposition
	for $G'$ where the size of a bag is at most $p + \mathcal{O}\left(1\right)$.
	Hence, it holds that $\pw(G') = p + \mathcal{O}\left(1\right)$.

	Constructing the graph takes time polynomial in $n = V(G)$, and we also have $V(G') = n^{\mathcal{O}\left(1\right)}$. Therefore, the whole reduction takes time
	\begin{align*}
		(c+1 - \varepsilon)^{\pw(G')} \cdot V(G')^{b} &= (c+1 - \varepsilon)^{p + \mathcal{O}\left(1\right)} \cdot n^{\mathcal{O}\left(1\right)}\\
		&= (c+1 - \varepsilon)^{p} \cdot n^{\mathcal{O}\left(1\right)}\\
		&= (c+1 - \varepsilon)^{\pw(G)} \cdot n^{b'}
	\end{align*}
	where $b'$ is a constant.
\end{proof}

The proof of \cref{theorem:single_clique_packing_lower_bound} follows from
\cref{theorem:multi_clique_packing_lower_bound} and \cref{lemma:single_clique_hardness}.

\section{Lower Bound for $\graphpartprob{H}$ Where $H$ Is Not a Block-Graph}
\label{section:lb_arbitrary}
Let $D$ be a set and $R \subseteq D^{n}$ be a relation of arity $n = k \cdot m$, for some integers $k, m \geq 1$. We represent each tuple $r \in R$ using double-index notation $r_{(i,j)}$, where $1 \leq i \leq k$ and $1 \leq j \leq m$, corresponding to a partition of $[n]$ into $k$ consecutive blocks of size $m$. The correspondence between the single-index and double-index notation is given by
\begin{equation}
r_{(i,j)} = r_{(i-1)\cdot m + j}.
\end{equation}
We always specify the decomposition $n = k \cdot m$ when introducing the relation
using double-index notation.

We also define the \emph{stacking operation} to construct a tuple in $D^{k \cdot m}$ from $k$ vectors in $D^m$. Given $\mathbf{v}^{(1)}, \ldots, \mathbf{v}^{(k)} \in D^m$, we define
\begin{equation}
	\stack{\mathbf{v}^{(1)}, \ldots, \mathbf{v}^{(k)}} \in D^{k \cdot m}
\end{equation}
as the tuple $\mathbf{v}$ satisfying
\begin{equation}
	\mathbf{v}_{(i,j)} \coloneqq \mathbf{v}^{(i)}_j.
\end{equation}
This operation arranges the $k$ vectors consecutively, aligning with the double-index partitioning.
\begin{definition}\label{definition:relations}
Let $x, y \geq 1$ and let $\mathcal{X} = (\mathcal{X}_1, \ldots, \mathcal{X}_u)$ be a partition of $[x]$.  
For each $w = (w_1, \ldots, w_u) \in \mathcal{X}_1 \times \cdots \times \mathcal{X}_u$, define $r^w, s^w \in \{0,1\}^{x \cdot y}$ by
\begin{align}
	\tau^w &\coloneqq \stack{\mathbf{u}^{(1)}, \ldots, \mathbf{u}^{(x)}},
\end{align}
where for each $1 \leq i \leq x$, the vectors $\mathbf{u}^{(i)} \in \{0,1\}^y$ are defined as
\begin{equation}
\mathbf{u}^{(i)} \coloneqq 
\begin{cases}
\overrightarrow{\mathbf{0}} & \text{if } i \in \{w_1, \ldots, w_u\}, \\
\overrightarrow{\mathbf{1}} & \text{otherwise},
\end{cases}
\end{equation}
and $\overrightarrow{\mathbf{0}}, \overrightarrow{\mathbf{1}} \in \{0,1\}^y$ denote the all-zero and all-one row vectors of length $y$, respectively.
We also define the relations $\relheavy{\mathcal{X}}{x,y} \subseteq \{0,1\}^{x \cdot y}$ as
\begin{align}
\relheavy{\mathcal{X}}{x,y} &\coloneqq \bigl\{\, \tau^w \mid w \in \left( \mathcal{X}_1 \times \cdots \times \mathcal{X}_u \right)  \,\bigr\}.
\end{align}
We write $\srelheavy{x,y}$ as shorthand for $\relheavy{[x]}{x,y}$.
\end{definition}

\begin{theorem}\label{theorem:reduction_graphpart_kk_indset}
	Let $H$ be a graph with at least $3$ vertices that is not a block graph.
	Suppose there exists an algorithm that solves every $n$-vertex $\graphpartprob{H}$ instance $G$ in time $2^{o\bigl( \pw(G) \cdot \log  \pw(G) \bigr) } \cdot n^{\mathcal{O}\left(1\right)}$. Then there exists an algorithm that solves every instance of $\kkindset$ in time $2^{o(k \cdot \log(k))}$.
\end{theorem}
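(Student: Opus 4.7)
The plan is to build a polynomial-time reduction from $\kkindset$ on a graph $G^\star$ with vertex set $[k]\times[k]$ to an instance $G$ of $\graphpartprob{H}$ of size $k^{\mathcal{O}(1)}$ and pathwidth $\mathcal{O}(k)$. Feeding such $G$ into the hypothetical $2^{o(\pw(G)\log \pw(G))}\cdot n^{\mathcal{O}(1)}$ algorithm yields a $2^{o(k\log k)}$ algorithm for $\kkindset$, which is exactly what we want, as it would contradict \cref{theorem:kkindset_lb_result} under \ethh.

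\textbf{Exploiting the non-clique block of $H$.} Since $H$ is not a block graph, it contains a block $B$ that is not a clique; being $2$-connected and non-complete, $B$ admits a minimum vertex separator $S\subseteq V(B)$ with $|S|\ge 2$ such that $B\setminus S$ has at least two components. I would fix such $B$ and $S$ and a partition $S=U\sqcup D$ into two nonempty parts. The separator property is the engine of the whole construction: any copy of $H$ placed into $G$ whose image meets one copy of $U$ and one copy of $D$ is irrevocably committed to exactly that pair, because the two sides of $B\setminus S$ are bridges that cannot be rewired across different $U$- or $D$-copies.

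\textbf{Encoding permutations and enforcing independence.} For each row $i\in[k]$ I would introduce a row-port $\hat U_i$ (a copy of $U$) together with $k$ attached cell-gadgets $C_{i,1},\ldots,C_{i,k}$; symmetrically, for each column $j\in[k]$ introduce a column-port $\hat D_j$ and $k$ attached cell-gadgets on the $D$-side. Each $C_{i,j}$ hosts exactly one ``half'' of $B$ together with filler copies of the remaining blocks of $H$, and is engineered so that a single copy of $H$ traversing $C_{i,j}$ simultaneously consumes all of $\hat U_i$ and all of $\hat D_j$. Around each row-port and column-port I would attach coherent filler gadgets---realizable via \cref{lemma:gadget_arb_any_relation}, after padding the required relation with constant-weight slack coordinates so that it is $\regular{(x,|H|)}$---to force exactly one traversed cell per row and per column, so that valid $H$-partitions of the base instance correspond bijectively to permutations $\pi\colon[k]\to[k]$. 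Then, for each edge $\{(i_1,j_1),(i_2,j_2)\}$ of $G^\star$, I would attach an edge-gadget from \cref{lemma:gadget_arb_any_relation} that $H$-realizes the constant-arity, constant-weight relation forbidding the simultaneous selection of both cells, so that the surviving $H$-partitions are precisely those corresponding to permutations that form independent sets in $G^\star$.

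\textbf{Pathwidth bound and main obstacle.} A path decomposition of $G$ is obtained by sweeping over $[k]\times[k]$ in row-major order while keeping all column-ports $\hat D_1,\ldots,\hat D_k$ and the current row-port $\hat U_i$ in the active bag, inserting each cell- and edge-gadget locally with only $\mathcal{O}(1)$ extra vertices in scope (the constant-pathwidth clause of \cref{lemma:gadget_arb_any_relation} for constant-weight relations is crucial here, which is why every relation I use is chosen to have constant weight). This gives $\pw(G)=\mathcal{O}(k)$ and $|V(G)|=k^{\mathcal{O}(1)}$, closing the running-time calculation. The main obstacle is the forcing step: one has to design the filler copies of $H$ attached to each $C_{i,j}$ so that \emph{exactly one} row-$i$ cell and \emph{exactly one} column-$j$ cell are actually traversed by a real copy of $H$, while every non-selected cell is cleanly absorbed by internal filler copies that do not leak structure across rows or columns. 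When $H$ has an intricate block tree beyond the chosen block $B$, arranging these local fillers without creating spurious copies of $H$ that bypass the separator $S$ is the delicate part; once this coherence is established, the equivalence of instances, the size bound, and the pathwidth bound go through routinely and deliver the claimed reduction.
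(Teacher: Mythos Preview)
Your high-level strategy matches the paper's: exploit a minimum separator $S=U\sqcup D$ in a non-clique block $B$, lay down $k$ copies of $U$ and $k$ copies of $D$, force any $\partitioning{H}$ to pair them up via a permutation, and use the relation-realizing gadgets of \cref{lemma:gadget_arb_any_relation}. That part is sound, and your identification of the ``filler/coherence'' issue is on point.

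The concrete gap is in your pathwidth argument for the edge gadgets. You propose a single static $k\times k$ grid of cell-gadgets and a row-major sweep that keeps all $\hat D_1,\ldots,\hat D_k$ and the current $\hat U_i$ in the bag. But an edge $\{(i_1,j_1),(i_2,j_2)\}$ with $i_1\neq i_2$ requires the edge-gadget to read selection information from two different rows; neither the column-ports $\hat D_j$ nor any constant-size local indicator can encode \emph{which} row consumed column $j_1$. If you give each cell an indicator port and delay the edge-gadget until the later row, then in the worst case (e.g.\ $(k,k)$ adjacent to all other cells) $\Omega(k^2)$ indicator ports must persist simultaneously, blowing the pathwidth to $\Omega(k^2)$. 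So ``inserting each edge-gadget locally with $\mathcal{O}(1)$ extra vertices'' does not go through.

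The paper avoids this by a layering/propagation trick that your plan is missing: instead of one static grid, it builds $|E(G^\star)|$ consecutive copies (``layers'') of the entire $\{U^{(s,i)},D^{(s,j)},X^{(\ell,s,i,j)}\}$ structure, one per edge $e_s$ of the $\kkindset$ instance. Copy-gadgets realizing $\srelheavy{k,\cdot}$ force the encoded permutation to be identical across consecutive layers, and at layer $s$ a single $\mathcal{O}(k)$-arity gadget on the two relevant rows $i_s,i'_s$ forbids the pair $(j_s,j'_s)$. Since each bag holds only one layer's $\mathcal{V}^s$ (size $k|S|$) plus one gadget (pathwidth $\mathcal{O}(k)$ by the constant-weight clause of \cref{lemma:gadget_arb_any_relation}), the pathwidth stays $\mathcal{O}(k)$ while $|V(G')|=|E(G^\star)|\cdot k^{\mathcal{O}(1)}=k^{\mathcal{O}(1)}$. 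Adding this propagation layer is the missing ingredient in your reduction.
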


\begin{proof}
	Recall that for a graph $G$, the set of vertices $S \subseteq V(G)$ is called
	a separator if $G \setminus S$ consists of at least two connected components.
	Observe that by definition $H$ has at least one block that is not a clique, which in
	particular implies that $B$ has at least $4$ vertices.
	Moreover, each such block has a separator of size at least $2$.
	In the following, let $B$ be a block of $H$ with a separator $S$
	of minimum cardinality.
	Partition the set $S$ into two sets $U$ and $D$ arbitrarily.
	Let $C_1, \ldots,C_t$ be the connected components of $H \setminus S$ for $t \geq 2$.
	
	\subparagraph{Construction of the $\graphpartprob{H}$ instance.}
	Recall that for any relation $R$ which is $\regular{(x, \abs{H})}$ for some $0 \leq x \leq \abs{H} -1$,
	there exists a $\coherent{(1,H)}$ gadget by \cref{lemma:gadget_arb_any_relation}
	that $H$-realizes $R$. Therefore, in the following,
	we simply assume the existence of a gadget for any relation considered,
	assuming is $\regular{(x, \abs{H})}$ for some $0 \leq x \leq \abs{H} -1$
	Now let $G$ be an instance of $\kkindset$ such that
	\begin{equation*}
		E(G) = \{e_1, \ldots, e_{\abs{E(G)}}\} .
	\end{equation*}
	\begin{figure}[htpb]
		\centering
		\includegraphics[scale=0.7]{./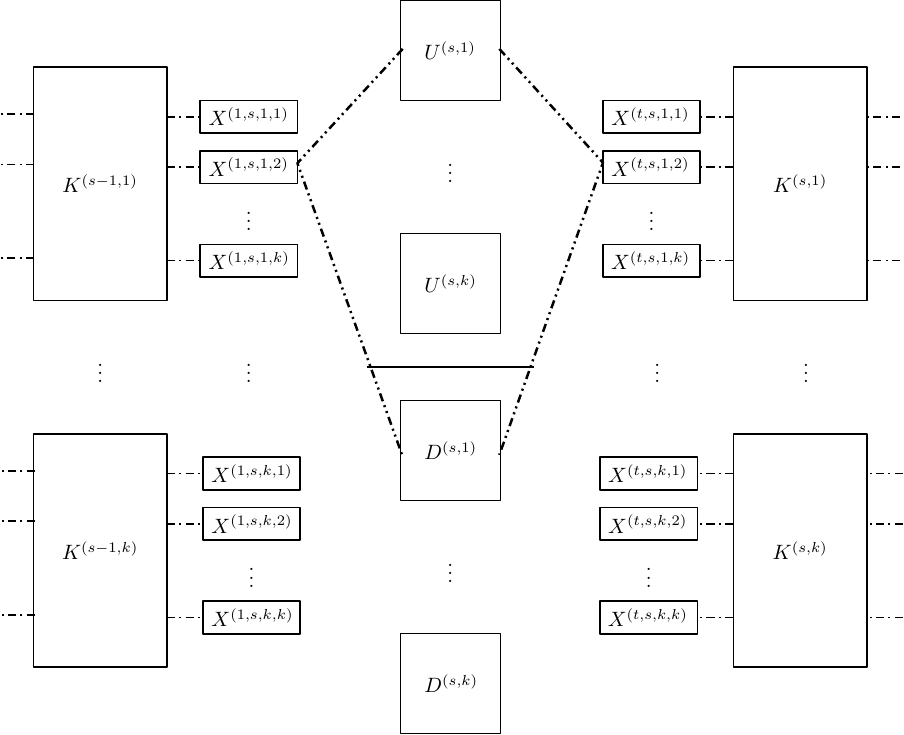}
		\caption{Part of the construction used to establish the lower bound in \cref{theorem:reduction_graphpart_kk_indset}. For clarity of presentation, some of the connections between $U^{(s,i)}, D^{(s,j)}$ and $X^{(\ell, s, i,j)}$ are omitted.} 
		\label{fig:X_U_D}
	\end{figure}	
	For each $1 \leq s \leq \abs{E(G)}$, we introduce
	$k$ copies of $U$ denoted by $U^{(s,1)}, \ldots, U^{(s,k)}$, and $k$ copies of
	$D$ denoted by $D^{(s,1)}, \ldots, D^{(s,k)}$. 
	We define
	\begin{align*}
		\mathcal{U}^{s} &\coloneqq \bigcup_{i = 1}^{k} U^{(s,i)}, \\
		\mathcal{D}^{s} &\coloneqq \bigcup_{i = 1}^{k} D^{(s,i)}
	\end{align*}
	and
	\begin{equation*}
		\mathcal{V}^{s} \coloneqq \mathcal{U}^{s} \cup \mathcal{D}^{s}.
	\end{equation*}
	
	Then, for each $1 \leq \ell \leq
	t$ and $1 \leq s \leq
	\abs{E(G)}$, we introduce $k^{2}$ copies of $C_\ell$ denoted by
	$X^{(\ell, s,i,j)}$ for $1 \leq i,j \leq k$,
	where we have $ V\left( X^{(\ell, s,i,j)} \right)   = \Big\{ x^{(\ell, s,i,j)}_1, \ldots, x^{(\ell, s,i,j)}_{\abs{C_{\ell}}} \Big\}$ and
	\begin{equation*}
		 \mathcal{X}^{(\ell, s,i,j)} \coloneqq \left( x^{(\ell, s,i,j)}_1, \ldots, x^{(\ell, s,i,j)}_{\abs{C_{\ell}}} \right).
	\end{equation*}
	We also let
	\begin{equation*}
		\mathcal{X}^{(\ell, s,i)} \coloneqq \stack{\mathcal{X}^{(\ell, s,i,1)} , \mathcal{X}^{(\ell, s,i,2)} ,  \,\ldots\, \mathcal{X}^{(\ell, s,i,k)}}.
	\end{equation*}
	\begin{figure}[htpb]
	  \centering
	  \begin{subfigure}[t]{0.48\textwidth}
	    \centering
	    \includegraphics[page=1, width=0.7\textwidth]{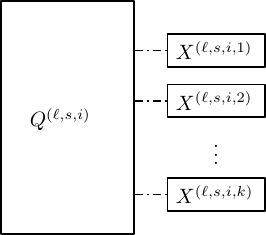}
	  \end{subfigure}
	  \hfill
	  \begin{subfigure}[t]{0.48\textwidth}
	    \centering
	   \includegraphics[page=2, width=\textwidth]{figures/arb_lb_gadgets_parts.pdf}
	  \end{subfigure}
	  \caption{Part of the construction used to establish the lower bound in \cref{theorem:reduction_graphpart_kk_indset}.}
	  \label{fig:arb_lb_gadgets_1}
	\end{figure}
	Then, for each $1 \leq \ell \leq t$, $1 \leq s \leq \abs{E(G)}$ and $1 \leq i,j \leq k$,
	we add the edges between $U^{(s,i)}, D^{(s,j)}$ and $X^{(\ell,s,i,j)}$
	according to their copies in $H$.
	For each $s \in \bigl[\abs{E(G)}\bigr]$, $\ell \in \{2,\ldots,t-1\} $
	and $i \in [k]$, we introduce a gadget $Q^{(\ell,s,i)}$ that $H$-realizes the relation
	$\srelheavy{k, \abs{C_\ell}}$ (which is $\regular{(\abs{C_{\ell}}, \abs{H})}$) and identify the portal vertices of
	$Q^{(\ell,s,i)}$ with $\mathcal{X}^{(\ell, s,i)}$. We also define $\mathcal{Q}^{(\ell,s)} = \bigcup_{i = 1}^{k} V(Q^{(\ell,s,i)} ) $.

	Now let $1 \leq s \leq \abs{E(G)} - 1$ and $e_s = \big\{  (i_s, j_s), (i'_s, j'_s) \big\}$
	be the $s$th edge of $G$. Without loss of generality, assume that $i_s \leq i'_s$.
	For each $i \in \{1, \ldots, k\} \setminus \{i_s, i'_s\}$,
	introduce a gadget $K^{^{(s,i)}}$ that realizes the relation $\srelheavy{k,
	\abs{C_t} + \abs{C_1}}$ and identify the portal vertices with
	\begin{equation*}
		\stack{\mathcal{X}^{(t, s,i,1)} \odot \mathcal{X}^{(1, s+1, i,1)},\, \mathcal{X}^{(t, s,i,2)} \odot \mathcal{X}^{(1, s+1, i,2)},\, \ldots,\, \mathcal{X}^{(t, s,i,k)} \odot \mathcal{X}^{(1, s+1, i,k)}}.
	\end{equation*}
	
	Then, define $\mathcal{Y} = \Bigl( \{1, \ldots, k\},  \{k+1, \ldots, 2\cdot k\}  \Bigr)$, $w = \bigl( j_s, j'_s \bigr)$ and introduce a gadget $Z^{s}$ that realizes the relation 
	\begin{equation*}
		\relheavy{\mathcal{Y}}{2k, \abs{C_t} + \abs{C_1}}  \setminus \{\tau^w\} 
	\end{equation*}
	where $\tau^{w}$ is defined in \cref{definition:relations}. 
	Then we let
	\begin{align*}
		\alpha_1 &\coloneqq \stack{\mathcal{X}^{(t, s,i_s,1)} \odot \mathcal{X}^{(1, s+1, i_s,1)},\, \mathcal{X}^{(t, s,i_s,2)} \odot \mathcal{X}^{(1, s+1, i_s,2)},\, \ldots,\, \mathcal{X}^{(t, s,i_s,k)} \odot \mathcal{X}^{(1, s+1, i_s,k)}},\\
		\alpha_2 &\coloneqq \stack{\mathcal{X}^{(t, s,i'_s,1)} \odot \mathcal{X}^{(1, s+1, i'_s,1)},\, \mathcal{X}^{(t, s,i'_s,2)} \odot \mathcal{X}^{(1, s+1, i'_s,2)},\, \ldots,\, \mathcal{X}^{(t, s,i'_s,k)} \odot \mathcal{X}^{(1, s+1, i'_s,k)}}
	\end{align*}
	and identify the portal vertices of $Z^{s}$ with $\stack{\alpha_1, \alpha_2}$.
	We also define
	\begin{equation*}
		\mathcal{C}^{s} \coloneqq \left( \bigcup_{i \in \{1, \ldots, k\} \setminus \{i_s, i'_s\}} V(K^{(s,i)}) \right) \cup V(Z^{s}).
	\end{equation*}
	For $s = \abs{E(G)}$ we introduce a gadget $F$ for the edge $\Big\{ 
	(i_s, j_s), (i'_s, j'_s) \Big\} $ that $H$-realizes the relation
	$\relheavy{\mathcal{Y}}{2k, \abs{C_t}} \setminus \{\tau^{w}\}$
	and identify its portal vertices with
	\begin{equation*}
		\stack{\mathcal{X}^{(t, s,i_s)} \odot \mathcal{X}^{(t, s, i'_s)} }.
	\end{equation*}

	Finally, for $s = 1$ and $i \in [k]$ we introduce a gadget $A_i$
	that $H$-realizes the relation $\srelheavy{k, \abs{C_1}}$ and identify
	its portal vertices with $\mathcal{X}^{(1,1,i)}$.
	Similarly, for
	$s = \abs{E(G)}$ and $i \in [k] \setminus \{i_s, i'_s\}$ we introduce
	a gadget $E_i$ that realizes the relation $\srelheavy{k, \abs{C_t}}$ and
	we identify
	its portal vertices with $\mathcal{X}^{(t,\abs{E(G)},i)}$.
	We also define $\mathcal{E} \coloneqq \left( \bigcup_{i \in \{1, \ldots, k\} \setminus \{i_s, i'_s\}} V(E_i) \right) \cup V(F)$.
	This concludes the construction of $G'$.

	\subparagraph{Equivalence of the instances.} Next, we will prove that $G$ has a permutation $k \times k$
	independent set if and only if $G'$ can be partitioned by vertex disjoint
	copies of $H$.
	
	First, assume that $G$ has a permutation $k \times k$ independent set,
	$\mathcal{T} = \{(i_1, j_1), \ldots, (i_k, j_k) \}$.
	Let $\mathcal{H} \coloneqq \emptyset$.

	Let $1 \leq s \leq \abs{E(G)}$. Observe that for each $1 \leq q \leq k$,
	$G'$ induced on the vertices of $\{X_{1}^{(s, i_q, j_q)},\ldots, X_{t}^{(s, i_q, j_q)}, U^{(s, i_q)}, D^{(s, j_q)}\} $ is a copy of $H$.
	Since they are vertex disjoint, we can add these $\abs{E(G)} \cdot k$ many
	copies of $H$ to $\mathcal{H}$.
	Observe that for $s = 1$, for each $i \in [k]$ there exists one, and only one $1 \leq j_i \leq k$
	such that $X^{(1,1,i,j_i)}$ is covered by a copy of $H$ in $\mathcal{H}$.
	Since $A_i$ realizes the relation $\srelheavy{k, \abs{C_1}}$, there exists
	an $\partitioning{H}$ of $A_i$ that covers $X^{(1,1,i,j)}$ for $j \in \left( [k] \setminus \{j_i\}  \right) $.
	We add this $\partitioning{H}$ to $\mathcal{H}$.
	Similarly, for each gadget used in the construction, one can verify that
	there exist an $\partitioning{H}$ of the gadget such that the remaining vertices are covered.
	Hence, $G'$ admits an $\partitioning{H}$.
	
	Now suppose that $G'$ is a yes instance, i.e. there exists an $H$-partition
	$\mathcal{H}$ of $G'$ that covers each vertex of $G'$ exactly once.
	Observe that for each $1 \leq \ell \leq t$ and $1 \leq i \leq k$, there exists a $\kappa(\ell,i) \in [k]$ such that
	$X^{\bigl(\ell, 1, i, \kappa(\ell,i)\bigr)}$ is left uncovered by the gadgets attached
	to $\mathcal{X}^{(\ell, 1, i)}$.
	This is because of the relations realized by those gadgets.
	Since each gadget used in the construction is $\coherent{(1,c)}$, this implies that
	there exists a collection $\mathcal{M}^{1} = \{M^{(1,1)}, M^{(1,2)}, \ldots, M^{(1,k)} \}  \subseteq \mathcal{H}$ of $k$ copies of $H$ covering
	the vertices
	\begin{equation*}
		\Omega^{1} \coloneqq \bigcup_{\ell \in [t], i \in  [k]} X^{\bigl(\ell, 1, i, \kappa(\ell,i)\bigr)}  \cup \{U^{(1,i)}\}_{i \in [k]} \cup \{D^{(1,i)}\}_{i \in [k]}.   
	\end{equation*}

	\begin{claim}\label{claim:arb_lb_M}
		For each $g \in [k]$, there exists $a \in [k]$, $b \in [k]$
		such that $M^{(1,g)}$ covers $U^{(1,a)}, D^{(1,b)}$, and no vertex
		from
		\begin{equation*}
			\mathcal{V}^{1} \setminus \left( U^{(1,a)} \cup D^{(1,b)} \right).
		\end{equation*}
	\end{claim}

	\begin{claimproof}
		Let $\Phi \geq 1$ denote the number of blocks in $H$ isomorphic to $B$.
		Observe that $C_1, \ldots, C_t$ contains in total $\Phi - 1$ blocks
		isomorphic to $B$. Moreover, observe that each block in $X^{\bigl(\ell, 1, i, \kappa(\ell,i)\bigr)}$
		is covered by a copy of $H$ in $\mathcal{M}^{1}$, because $\mathcal{M}^{1}$
		is an $\partitioning{H}$ and two vertex-disjoint copies of $H$ cannot
		both cover a cutvertex.
		Therefore, $k$ copies of $H$ give rise to $k \cdot \Phi$ blocks isomorphic to $B$, in total.

		Now, suppose that there exists $g \in [k]$ such that $M^{(1,g)}$ covers
		strictly less than $\abs{S}$ vertices from $V^{1}$.
		Since the size of the minimum separator that is contained strictly inside $B$ is equal
		to $\abs{S}$, it turns out that $\mathcal{M}^{1}$ contains at most $k \cdot \Phi - 1$
		block isomorphic to $B$, which is a contradiction.
		Therefore, each copy of $H$ in $\mathcal{M}^{1}$ contains at least $\abs{S}$
		vertices from $\mathcal{V}^{1}$. Since there are $k$ copies of $H$ in
		$\mathcal{M}^{1}$ and $\mathcal{V}^{1} = k \cdot \abs{S}$,
		it holds that for each $g \in [k]$ there exists $a,b \in [k]$ such
		that $M^{(1,g)}$ covers $U^{(1,a)}$ and $D^{(1,b)}$.
	\end{claimproof}

	\cref{claim:arb_lb_M} implies that for each $i \in [k]$, there exists $\kappa(i)$
	such that $\kappa(\ell,i) = \kappa(i)$ for $\ell \in [t]$,
	and there exists a copy of $H$ in $M^{1}$ that consitsts of $U^{(1,i)}$, 
	$D^{(1, \kappa(i))}$ and $X^{(\ell, 1, i, \kappa(i))}$ for $\ell \in [t]$.
	Since $M^{1}$ is an $\partitioning{H}$, it holds that $\{(i, \kappa(i)\}_{i \in [k]}$ is a permutation
	on $[k]$. It is easy to define $\mathcal{M}^{2}, \ldots, \mathcal{M}^{\abs{E(G)}}$
	and verify that the choice of $\kappa(i)$ is propagated through the construction.
	Moreover, for each edge $e = \{(a,b), (a', b')\}$ of $G$,
	there exists a gadget that makes sure that we don't have $e \subseteq \{(i, \kappa(i) )\}_{i \in [k]}$.
	Hence, $\{(i, \kappa(i) )\}_{i \in [k]}$ is a permutation $k \times k$ independent set.

	\subparagraph{Pathwidth of $G'$.} 
	In the following, we describe a path decomposition of $G'$ with width equal to $\mathcal{O}\left(k\right)$.
	First, we remark that all gadgets used in the construction have constant weight.
	Therefore, by \cref{lemma:gadget_arb_any_relation}, each gadget with $\ell$
	portal vertices admits a path decomposition with width at most $\mathcal{O}\left(\ell\right)$.

	Next, we define the bags $B_1, \ldots, B_{\abs{E(G)}}$ such that $B_s = \mathcal{V}^{s}$
	for $1 \leq s \leq \abs{E(G)}$.
	Then, before $B_1$, we introduce the following bags:
	\begin{enumerate}
		\item $\Bigl(B_1 \cup \mathcal{X}^{(1,1,i)} \cup V\left( A_i \right)\Bigr)$ for $i \in [k]$, and
		\item $\Bigl(B_1 \cup \mathcal{X}^{(\ell,1,i)} \cup V\left(Q^{(\ell,1,i)}\right)\Bigr)$ for $2 \leq \ell \leq t-1$ and $i \in [k]$.
	\end{enumerate}
	Observe that since the number of portals of $A_i$ and
	$V\left(Q^{(\ell,1,i)}\right)$ are bounded by $\mathcal{O}\left(\kappa\right)$,
	both gadgets have path decompositions of
	width at most $\mathcal{O}\left(\kappa\right)$ by
	\cref{lemma:gadget_arb_any_relation}.
	For $2 \leq s \leq \abs{E(G)}$ and $i \in [k]$, define the sets
	\begin{align*}
		\alpha^{(s,i)} &\coloneqq \mathcal{X}^{(\ell, s-1, i)} \cup V\left( K^{s-1, i} \right) \cup \mathcal{X}^{(1, s, i)},\\
		\beta^{s} &\coloneqq \mathcal{X}^{(\ell, s-1, i_s)} \cup \mathcal{X}^{(\ell, s-1, i'_s)} \cup V(Z^{s}) \cup \mathcal{X}^{(1, s, i_s)} \cup \mathcal{X}^{(1, s, i'_s)}.
	\end{align*}
	For $2 \leq s \leq \abs{E(G)}$, we insert between $B_{s-1}$ and $B_s$ a sequence of bags which starts with $B_s$,
	and at each step replaces
	\begin{enumerate}
		\item $V\left( U^{(s-1, i)} \right)$ with $\left( U^{(s, i)} \right)$ by adding/removing the set $\alpha^{(s,i)}$, for each $i \in \left( [k] \setminus \{i_s, i'_s\}  \right) $,
		\item $V\left( U^{(s-1, i_s)} \right) \cup V\left( U^{(s-1, i'_s)} \right)$ with $V\left( U^{(s, i_s)} \right) \cup V\left( U^{(s, i'_s)} \right)$, by adding/removing $\beta^{s}$, and
		\item $\bigcup_{i = 1}^{k} D^{(s-1, i)}$ with $\bigcup_{i = 1}^{k} D^{(s, i)}$.
	\end{enumerate}
	We note that $\abs{\alpha^{(s,i)}}$ and $\abs{\beta^{s}}$ are also bounded by $\mathcal{O}\left(k\right)$.
	Then, before $B_s$, we add the bags
	\begin{equation*}
		\left( B_s \cup \mathcal{X}^{(\ell, s, i)} \cup V\left( Q^{(\ell, s ,i)} \right)  \right) 
	\end{equation*}
	for $2 \leq \ell \leq t-1$ and $i \in [k]$.
	
	Finally, after the bag $B_s$, we insert the bags
	\begin{enumerate}
		\item $\Bigl(B_s \cup \mathcal{X}^{(\ell,\abs{E(G)},i)} \cup V\left( E_i \right)\Bigr)$ for $i \in [k] \setminus \{i_s, i'_s\}$, and
		\item $\Bigl(B_s \cup \mathcal{X}^{(\ell,\abs{E(G)},i_s)}, \mathcal{X}^{(\ell,\abs{E(G)},i'_s)}, V\left( F \right)  \Bigr)$.
	\end{enumerate}
	It can be verified that all edges are covered by the path decomposition and each bag has size
	at most $\mathcal{O}\left(k\right)$.
	
	\subparagraph{Running Time.}
	Constructing the graph $G'$ takes time polynomial in $k$, and we also have $n = V(G') = k^{\mathcal{O}\left(1\right)}$ and
	$\pw(G') = \mathcal{O}\left(k\right)$. Therefore, the whole reduction takes time
	\begin{align*}
		2^{o\bigl( \pw(G') \cdot \log  \pw(G') \bigr) } \cdot n^{\mathcal{O}\left(1\right)} &= 2^{o(k \cdot \log(k))} \cdot k^{\mathcal{O}\left(1\right)}\\
		&= 2^{o(k \cdot \log(k))}.
	\end{align*}
\end{proof}
The proof of \cref{theorem:graph_packing_arbitrary_lower_bound} follows from \cref{theorem:kkindset_lb_result,theorem:reduction_graphpart_kk_indset}.

\section{Algorithm for $\singlecliquepackprob{c,d}$ Parameterized by Treewidth}
\label{section:algo_clique}
In this section, we present the proof of \cref{theorem:single_clique_packing_algo},
which we state here for completeness.
\singlecliquepackingalgo*

To prove the theorem, we construct a dynamic programming algorithm solving
the $\singlecliquepackprob{c,d}$ problem for $c \geq 1$ and $d
\geq 3$. Note that the case
$d = 1$ is trivial, while the case $d = 2$ corresponds to the maximization
version of the General Factor problem, which is known to be solvable in
polynomial time \cite{marxDegreesGapsTight2021b}. Therefore, we restrict our
attention to the case $d \geq 3$, as stated in
\cref{theorem:single_clique_packing_algo}.

Let $G$ be a graph, and let $\mathcal{T} = (T, {X_t}_{t \in V(T)})$ be a nice
tree decomposition of $G$. For each node $t$ in the decomposition, we define a
set of states encoding the necessary information to extend partial solutions
for the subtree rooted at $t$. Using dynamic programming, we compute the
optimal value for each state in a bottom-up manner, combining solutions from
child nodes according to the structure of the decomposition. The solution to
the original problem is then obtained from the computed values at the root.
In order to define the subproblems for a node $t$,
we first define the notion of types for the bag $X_t$.
\begin{definition}
	Given a graph $G$ and a node $t$ of its tree decomposition $\mathcal{T}$,
	a type for the bag $X_t$ is a function $f \from X_t \to [0, c]$. We let
	$\cliquetype{t}$ denote the set of all types of $X_t$.
\end{definition}

We now define a variant of $\singlecliquepacking{c}{d}$ for a node $t \in \mathcal{T}$.
\begin{definition}\label{definition:clique_packing}
	Let $c \geq 1$ and $d \geq 3$ be integers.
	Given a node $t \in \mathcal{T}$, a clique packing in $G_t$ is a set
	$\mathcal{P} = \{ Y_1, Y_2, \ldots, Y_\ell \} $ such that
	\begin{enumerate}
		\item $Y_i \subseteq V_t$ and $Y_i \nsubseteq X_t$ for $1 \leq i \leq \ell$,
		\item $G_t[Y_i]$ is a clique of size $d$,
		\item For all $x \in V_t$ it holds that
			\begin{equation*}
				\covernum{x,\mathcal{P}} \coloneqq \abs{\{P \in \mathcal{P} \mid x \in P\} } \leq c,
			\end{equation*}
			i.e., there are at most $c$ cliques in $\mathcal{P}$ that contain $x$.
	\end{enumerate}
\end{definition}

We say that a clique packing
has type $f \in \cliquetype{t}$, if for all $x \in X_t$, the number of cliques in $\mathcal{P}$ that contain $x$
is equal to $f(x)$. More formally, for all $x \in X_t$
\begin{equation*}
	\covernum{x,\mathcal{P}} = f(x).
\end{equation*}
Now we can describe the dynamic programming on $\mathcal{T}$.
For each node $t \in \mathcal{T}$ and type $f \in \cliquetype{t}$, we define
\begin{equation*}
	\partialpackings{t}{f} \coloneqq \{\mathcal{P} \mid \mathcal{P} \text{ is a clique packing in } G_t \text{ of type } f\} 
\end{equation*}
and
\begin{equation}\label{eq:clique_forget_A_def}
	A[t,f] \coloneqq \max_{\mathcal{P} \in \partialpackings{t}{f}} \abs{\mathcal{P}}.
\end{equation}
We call a clique packing $\mathcal{P} \in \partialpackings{t}{f}$ a witness for $A[t,f]$.
The solution to the original problem is given by $A[t_{\text{root}}, \emptyset]$
which follows from the fact that $X_{t_{\text{root} }} = \emptyset$.

\subsection{Update Rules For The Dynamic Programming Algorithm}
\label{sec:update_dp_clique_packing}
Next, in \cref{sec:update_dp_clique_packing_leaf,sec:update_dp_clique_packing_introduce,sec:update_dp_clique_packing_forget,sec:update_dp_clique_packing_join} we state the update rules for the dynamic programming algorithm for the
introduce, forget, join and leaf nodes.

\subsubsection{Leaf Node}
\label{sec:update_dp_clique_packing_leaf}
Let $t$ be a leaf node. Since $\mathcal{T}$ is a nice tree decomposition, it holds that
$X_t = \emptyset$ and there is only one function from the empty set, which we denote
by $\emptyset$. All in all, there is no witness for $A[t,\emptyset]$, and we let $A[t,\emptyset] = 0$.

\subsubsection{Introduce Node}
\label{sec:update_dp_clique_packing_introduce}
Let $t \in \mathcal{T}$ be an introduce node with a
single child $t'$ such that $X_{t} = X_{t'} \cup \{v\}$ and $f \in
\cliquetype{t}$. Recall that since $t$ is an introduce node, all neighbours
of $v$ are included in the bag $X_t$. Therefore,
any clique that contains $v$ should be entirely contained in $X_t$, which can not
be a part of a clique packing. Observe tahat if $f(v) = 0$,
then we can simply remove $v$ from $G_t$.
All in all, it holds that
\begin{equation*}
	A[t,f] = \begin{cases}
		0 &\text{if } f(v) \neq 0\\
		A[t', \funcrem{f}{v}] &\text{if } f(v) = 0.
	\end{cases}
\end{equation*}

\subsubsection{Forget Node}
\label{sec:update_dp_clique_packing_forget}
Now we state the update rule for the forget node.
Let $t \in \mathcal{T}$ be a forget node with a
single child $t'$ such that $X_{t'} = X_t \cup \{v\}$
and let $f \in \cliquetype{t}$.
For each $\ell \geq 1$, we define
\begin{align*}
	\Omega_{\ell} \coloneqq \Bigl\{ &\{ C_1, \ldots, C_\ell \}  \Bigm|  \{v\}
		\subseteq  C_i \subseteq X_{t'} \text{ and } C_i  \text{ is a clique of size $d$ for } 1 \leq i \leq
		\ell,\\
					&\covernum{x, \{C_1, \ldots, C_{\ell}\} } \leq f(x) \text{ for } x \in
	X_{t'}  \Bigr\}.
\end{align*}
We also set $\Omega_{0} = \emptyset$.
Moreover, given $0 \leq \zeta \leq c$
and $\mathcal{C} \in \Omega_{\ell}$ for some $\ell \geq 0$,
we define the function $\cliqueupdforget_{ f, \mathcal{C},\zeta } \from X_{t'} \to
[0, c]$  such that for $x \in X_{t'}$
\begin{equation}\label{eq:updforget_defin}
	\cliqueupdforget_{ f, \mathcal{C}, \zeta }\left( x \right) \coloneqq \begin{cases}
		f(x) - \covernum{x, \mathcal{C}} &\text{if } x \neq v\\
		\zeta &\text{if } x = v. 
	\end{cases}
\end{equation}

\begin{lemma}
Let $t \in \mathcal{T}$ be a forget node with a
single child $t'$ such that $X_{t'} = X_t \cup \{v\}$
and let $f \in
\cliquetype{t}$.
Then, it holds that
\begin{equation*}
	A[t,f] = \max_{\substack{\ell, \zeta \geq 0 \\ \text{s.t. } \ell + \zeta \leq c}} \,\max_{\mathcal{C} \in \Omega_{\ell}}\, \biggl( A[t', \cliqueupdforget_{f, \mathcal{C}, \zeta}] + \ell \biggr).
\end{equation*}	
\end{lemma}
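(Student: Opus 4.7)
The plan is to prove the equality by establishing both inequalities, using the structural observation that because $X_{t'} = X_t \cup \{v\}$ and $V_t = V_{t'}$, any $d$-clique $Y \subseteq V_t$ with $Y \not\subseteq X_t$ falls into exactly one of two cases: (a) $Y \subseteq X_{t'}$, in which case $Y$ must contain $v$; or (b) $Y \not\subseteq X_{t'}$. This dichotomy lets us split a packing at $t$ cleanly into a ``top'' part handled by $\Omega_\ell$ and a ``bottom'' part handled recursively at $t'$.

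For the direction $A[t,f] \leq \max(\cdots)$, I would take an optimal witness $\mathcal{P} \in \partialpackings{t}{f}$ and define $\mathcal{C} \coloneqq \{Y \in \mathcal{P} \mid Y \subseteq X_{t'}\}$ and $\mathcal{P}' \coloneqq \mathcal{P} \setminus \mathcal{C}$. Setting $\ell \coloneqq |\mathcal{C}|$ and $\zeta \coloneqq \covernum{v, \mathcal{P}'}$, the observation above forces every member of $\mathcal{C}$ to contain $v$, so $\mathcal{C} \in \Omega_\ell$---the bounds $\covernum{x, \mathcal{C}} \leq f(x)$ for $x \in X_t$ are inherited from the corresponding bounds for $\mathcal{P}$. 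Since each $Y \in \mathcal{P}'$ satisfies $Y \not\subseteq X_{t'}$, the family $\mathcal{P}'$ is a valid clique packing at $t'$; a direct cover count at each $x \in X_t$ and at $v$ shows that its type is exactly $\cliqueupdforget_{f, \mathcal{C}, \zeta}$. Finally, $\ell + \zeta = \covernum{v, \mathcal{P}} \leq c$ and $|\mathcal{P}| = \ell + |\mathcal{P}'|$, giving the required upper bound.

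For the reverse direction, I would fix $\ell, \zeta \geq 0$ with $\ell + \zeta \leq c$, a family $\mathcal{C} \in \Omega_\ell$, and a witness $\mathcal{P}'$ for $A[t', \cliqueupdforget_{f, \mathcal{C}, \zeta}]$, and verify that $\mathcal{P} \coloneqq \mathcal{C} \cup \mathcal{P}'$ lies in $\partialpackings{t}{f}$. Cliques in $\mathcal{C}$ lie in $X_{t'}$ whereas cliques in $\mathcal{P}'$ do not, so they are distinct as sets---important because we are in the single-packing setting---and thus $|\mathcal{P}| = \ell + |\mathcal{P}'|$. The condition $Y \not\subseteq X_t$ holds in both cases: members of $\mathcal{C}$ contain $v \notin X_t$, while members of $\mathcal{P}'$ have a vertex outside $X_{t'} \supseteq X_t$. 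Cover counts at $x \in X_t$ sum to $\covernum{x, \mathcal{C}} + (f(x) - \covernum{x, \mathcal{C}}) = f(x)$; the count at $v$ is $\ell + \zeta \leq c$; and counts at vertices outside $X_{t'}$ inherit directly from $\mathcal{P}'$.

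The main subtle point is that a clique $Y \in \mathcal{P}$ containing $v$ need not lie entirely in $X_{t'}$---some neighbour of $v$ inside $Y$ may be realised in a descendant bag that is not propagated up to $X_{t'}$. Such a $Y$ naturally belongs to the ``bottom'' part $\mathcal{P}'$ and contributes to $\zeta$, not to $\ell$; the simple test $Y \subseteq X_{t'}$ in the forward split routes these cliques correctly, and the constraint $\ell + \zeta \leq c$ aggregates the contributions of all $v$-covering cliques without double counting.
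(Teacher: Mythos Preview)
Your proposal is correct and follows essentially the same approach as the paper's proof. The only cosmetic difference is that you characterize the split-off family as $\{Y \in \mathcal{P} \mid Y \subseteq X_{t'}\}$ while the paper writes it as $\{C \in \mathcal{P} \mid \{v\} = C \setminus X_t\}$; these are equivalent given that every $Y \in \mathcal{P}$ already satisfies $Y \not\subseteq X_t$, and the remaining verifications (disjointness of $\mathcal{C}$ and $\mathcal{P}'$, cover counts, the bound $\ell + \zeta \leq c$) line up exactly.
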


\begin{proof}
	To prove the equality, we verify both the $\geq$ and $\leq$ inequalities independently.
	
	\medskip
	\noindent
	\textbf{(The inequality $\leq$)} \quad Let $\mathcal{P}$ be a witness
	for $A[t,f]$ such that $A[t,f] = \abs{\mathcal{P}}$. Observe that the
	number of cliques in $\mathcal{P}$ that contain $v$ is at most $c$
	since $\mathcal{P}$ is a clique packing. Among those cliques, let
	$\mathcal{C}_v$ denote the ones in which  $v$ is the only vertex that
	is not contained in $X_t$. Formally,
	\begin{equation*}
		\mathcal{C}_v \coloneqq \{C \in \mathcal{P} \mid \{v\}  = C \setminus X_t\}.
	\end{equation*}
	We define $\ell_v \coloneqq \abs{\mathcal{C}_v}$ such that $\mathcal{C}_v \in \Omega_{\ell_v}$,
	and let $\zeta_v \coloneqq \covernum{v,\mathcal{P}} - \ell_v$. 
	Observe that $\ell_v + \zeta_v \leq c$.

	Next, consider the clique packing $\mathcal{P}' \coloneqq \mathcal{P}
	\setminus \mathcal{C}_v$, which is obtained by removing the cliques in
	$\mathcal{C}_v$ from $\mathcal{P}$. Intuitively, the reason we remove
	those cliques is that each such clique is entirely contained within
	$X_{t'}$, which by definition is not part of any clique packing for
	$t'$. Then, $\mathcal{P}'$ is a clique packing for $t'$ of size
	$\abs{\mathcal{P}'} = \abs{\mathcal{P}} - \ell_v$, and it has type $\cliqueupdforget_{f,
	\mathcal{C}_v, \zeta_v}$. In particular, $\mathcal{P}'$ is a witness
	for
	\begin{equation*}
	  A[t', \cliqueupdforget_{f, \mathcal{C}_v, \zeta_v}].		
	\end{equation*}
	Hence,
	\begin{equation*}
	  A[t,f] = \abs{\mathcal{P}}
	  = \bigl(\abs{\mathcal{P}'} + \ell_v\bigr)
	  \leq \Bigl( A[t', \cliqueupdforget_{f, \mathcal{C}_v, \zeta_v}] + \ell_v \Bigr) 
	  \leq \max_{\substack{\ell, \zeta \ge 0 \\ \ell + \zeta \le c}} 
	       \,\max_{\mathcal{C} \in \Omega_{\ell}}
	       \bigl( A[t', \cliqueupdforget_{f, \mathcal{C}, \zeta}] + \ell \bigr),
	\end{equation*}
	where the first inequality follows from the definition of $A[t',
	\cliqueupdforget_{f, \mathcal{C}_v, \zeta_v}]$.

	\medskip
	\noindent
	\textbf{(The inequality $\geq$)} \quad Let $\ell_v, \zeta_v \geq 0$ such that $\ell_v + \zeta_v \leq c$ and
	$\mathcal{C}_v \in \Omega_{\ell_v}$. We want to show
	\begin{equation}\label{equation:A_t_f_single_rev_ineq}
		A[t,f] \geq \left( A[t', \cliqueupdforget_{f, \mathcal{C}_v, \zeta_v}] + \ell_v \right).
	\end{equation}
	Since $\ell_v, \zeta_v$ and $\mathcal{C}_v$ are chosen arbitrarily, this will imply
	\begin{equation*}
		A[t,f] \geq \max_{\substack{\ell, \zeta \geq 0 \\ \text{s.t. } \ell + \zeta \leq c}} \,\max_{\mathcal{C} \in \Omega_{\ell}}\, \biggl( A[t', \cliqueupdforget_{f, \mathcal{C}, \zeta}] + \ell \biggr).		
	\end{equation*}
	
	Let $\mathcal{P}_v$ be a witness for $A[t', \cliqueupdforget_{f, \mathcal{C}_v, \zeta_v}]$
	such that $A[t', \cliqueupdforget_{f, \mathcal{C}_v, \zeta_v}] = \abs{\mathcal{P}_v}$.
	In the following we will demonstrate that $\mathcal{P}_v \cup \mathcal{C}_v$
	is a witness for $A[t,f]$,
	i.e. we will prove that $\left( \mathcal{P}_v \cup \mathcal{C}_v \right) \in \partialpackings{t}{f}$.
	Following \cref{definition:clique_packing}, observe that for $Y \in \mathcal{P}_v$ we have
	$Y \nsubseteq X_{t'}$, and since $X_{t} \subseteq X_{t'}$, it holds that $Y \nsubseteq X_{t}$.
	Similarly, for $Y \in \mathcal{C}_v$, we have $v \in Y$, which again implies that
	$Y \nsubseteq X_{t}$.

	Since every clique in $\mathcal{C}_v$ is entirely contained in
	$X_{t'}$, while, by \cref{definition:clique_packing}, no clique in
	$\mathcal{P}_v$ is, it follows that $\mathcal{C}_v \cap \mathcal{P}_v =
	\emptyset$.
	Hence, for $x \in \Bigl(V_t \setminus \bigl( X_t \cup \{v\}  \bigr) \Bigr) = V_t  \setminus X_{t'}$, we have
	\begin{align*}
		\covernum{x, \mathcal{P}_v \cup \mathcal{C}_v} &= \covernum{x, \mathcal{P}_v} + \covernum{x, \mathcal{C}_v}\\
							       &= \covernum{x, \mathcal{P}_v}\\
							       &\leq c,
	\end{align*}
	where the inequality holds because $\mathcal{P}_v$ is a witness for $A[t', \cliqueupdforget_{f, \mathcal{C}_v, \zeta_v}]$.
	Similarly, for $v$ it holds that
	\begin{align*}
		\covernum{v, \mathcal{P}_v \cup \mathcal{C}_v} &= \covernum{v, \mathcal{P}_v} + \covernum{v, \mathcal{C}_v}\\
							       &= \covernum{v, \mathcal{P}_v} + \ell_v\\
							       &= \cliqueupdforget_{f, \mathcal{C}_v, \zeta_v}(v) + \ell_v\\
							       &= \zeta_v + \ell_v\\
							       &\leq c,
	\end{align*}
	where the second equality holds because $\mathcal{C}_v \in \Omega_{\ell_{v}}$, the third equality holds because $\mathcal{P}_v$ is a witness for $A[t', \cliqueupdforget_{f, \mathcal{C}_v, \zeta_v}]$ and the last equality holds by \cref{eq:updforget_defin}. Finally, for $x \in X_t$, we similarly have
	\begin{align*}
		\covernum{x, \mathcal{P}_v \cup \mathcal{C}_v} &= \covernum{x, \mathcal{P}_v} + \covernum{x, \mathcal{C}_v}\\
							       &= \cliqueupdforget_{f, \mathcal{C}_v, \zeta_v}(x) + \covernum{x, \mathcal{C}_v}\\
							       &= f(x) - \covernum{x, \mathcal{C}_v} + \covernum{x, \mathcal{C}_v}\\
							       &=f(x),
	\end{align*}
	where the third equality follows from the definition of $\cliqueupdforget_{f, \mathcal{C}_v, \zeta_v}$.
	Therefore, $\left( \mathcal{P}_v \cup \mathcal{C}_v \right)$ has type $f$ and therefore is a witness for $A[t,f]$.
	We have
	\begin{align*}
		A[t,f] \geq \abs{\mathcal{P}_v \cup \mathcal{C}_v} = \abs{\mathcal{P}_v} + \abs{\mathcal{C}_v} = \Bigl( A[t', \cliqueupdforget_{f, \mathcal{C}_v, \zeta_v}] + \ell_v \Bigr).
	\end{align*}
	This establishes \cref{equation:A_t_f_single_rev_ineq} and completes the proof.
\end{proof}

\subsubsection{Join Node}
\label{sec:update_dp_clique_packing_join}
Now we state the update rule for the join node.
Let $t \in \mathcal{T}$ be a join node with two children
$t_1,t_2 \in \mathcal{T}$ such that $X_t = X_{t_1} = X_{t_2}$
and also let $f \in
\cliquetype{t}$.
For $f_1 \in \cliquetype{t_1}$
and $f_2 \in \cliquetype{t_2}$, let $f_1 \oplus f_2$ be a function from $X_t$ to $[0,c]$
such that for $x \in X_t$ we have
\begin{equation*}
	\left( f_1 \oplus f_2 \right)(x) = f_1(x) + f_2(x).
\end{equation*}
In the case of a join node, the update rule can be expressed compactly using
the $\oplus$ operation, as formalized in the following theorem.
\begin{theorem}\label{theorem:single_clique_join}
	Let $t \in \mathcal{T}$ be a join node with two children
$t_1,t_2 \in \mathcal{T}$ such that $X_t = X_{t_1} = X_{t_2}$.
Also let $f \in
\cliquetype{t}$.
Then it holds that
\begin{equation}\label{equation:single_clique_join}
	A[t, f] = \max_{\substack{f_1 \in \cliquetype{t_1}, f_2 \in \cliquetype{t_2}\\ \text{s.t. } f = f_1 \oplus f_2 }}A[t_1, f_1] + A[t_2, f_2].
\end{equation}
\end{theorem}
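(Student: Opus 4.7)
The plan is to prove both inequalities separately, following the same template as the forget-node update rule. The single structural fact I will invoke is the standard tree decomposition property: no edge of $G$ joins a vertex of $V_{t_1} \setminus X_t$ to a vertex of $V_{t_2} \setminus X_t$, because every edge must be covered by some bag and the subtrees rooted at $t_1$ and $t_2$ share only $X_t$. Combined with the fact that $d \geq 3$ and every clique $C$ appearing in a packing satisfies $C \nsubseteq X_t$, this is exactly what will let me split or merge packings across the join.

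For the $\geq$ direction, I will fix $f_1 \in \cliquetype{t_1}$ and $f_2 \in \cliquetype{t_2}$ with $f = f_1 \oplus f_2$, and take witnesses $\mathcal{P}_1$ and $\mathcal{P}_2$ for $A[t_1,f_1]$ and $A[t_2,f_2]$. I will show that $\mathcal{P} \coloneqq \mathcal{P}_1 \cup \mathcal{P}_2$ is a witness for $A[t,f]$. By \cref{definition:clique_packing}, every $C \in \mathcal{P}_i$ contains a vertex of $V_{t_i} \setminus X_t$, so $\mathcal{P}_1 \cap \mathcal{P}_2 = \emptyset$ and each $C \in \mathcal{P}$ still satisfies $C \nsubseteq X_t$. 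For $x \in X_t$ the coverage counts add, giving $\covernum{x,\mathcal{P}} = f_1(x) + f_2(x) = f(x) \leq c$; for $x \in V_{t_i} \setminus X_t$ the other side contributes nothing, so $\covernum{x,\mathcal{P}} = \covernum{x,\mathcal{P}_i} \leq c$. Hence $\mathcal{P} \in \partialpackings{t}{f}$ and $A[t,f] \geq \abs{\mathcal{P}_1} + \abs{\mathcal{P}_2}$, which proves the inequality.

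For the $\leq$ direction, let $\mathcal{P}$ be a witness for $A[t,f]$. The key claim is that every $C \in \mathcal{P}$ is entirely contained in either $V_{t_1}$ or $V_{t_2}$. Indeed, since $C \nsubseteq X_t$, $C$ contains at least one vertex outside $X_t$; and if $C$ contained both some $u \in V_{t_1} \setminus X_t$ and some $v \in V_{t_2} \setminus X_t$, the edge $uv$ of the clique $C$ would violate the tree decomposition property noted above. Thus $\mathcal{P}$ splits canonically into $\mathcal{P}_1 \sqcup \mathcal{P}_2$, where $\mathcal{P}_i$ collects the cliques lying in $V_{t_i}$, and each $\mathcal{P}_i$ is a clique packing in $G_{t_i}$ in the sense of \cref{definition:clique_packing}. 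Defining $f_i(x) \coloneqq \covernum{x,\mathcal{P}_i}$ for $x \in X_{t_i} = X_t$ yields $f_1 \oplus f_2 = f$ and $\mathcal{P}_i \in \partialpackings{t_i}{f_i}$, so $A[t_1,f_1] + A[t_2,f_2] \geq \abs{\mathcal{P}_1} + \abs{\mathcal{P}_2} = A[t,f]$.

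The only nontrivial step is the canonical splitting in the $\leq$ direction; it is the place where the lower bound $d \geq 3$ is not even needed, only the fact that cliques not contained in $X_t$ must stick to one side of the separator $X_t$. Everything else is bookkeeping: the definition of $\oplus$ was chosen precisely to encode the additive combination of coverage counts on the shared bag, and the two partial solutions interact only through that bag.
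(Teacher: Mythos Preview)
Your proposal is correct and follows essentially the same approach as the paper's proof: both directions are established via the same split/merge of witnesses, with the $\leq$ direction relying on the observation that any clique $C \nsubseteq X_t$ in a witness for $A[t,f]$ must lie entirely inside one of $V_{t_1}$ or $V_{t_2}$. If anything, your edge-based justification for this splitting (an edge $uv$ of the clique would cross the separator) is slightly more explicit than the paper's phrasing, which simply notes that $C$ cannot be contained in both $V_{t_1}$ and $V_{t_2}$ without lying in $X_t$.
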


\begin{proof}
	To prove the equality, we verify both the $\geq$ and $\leq$ inequalities independently.
	
	\medskip
	\noindent
	\textbf{(The inequality $\leq$)} \quad Let $\mathcal{P}$ be a witness for $A[t,f]$ such that $A[t,f] = \abs{\mathcal{P}}$.
	Define the sets
	\begin{align*}
		\mathcal{P}_1 &\coloneqq \{C \in \mathcal{P} \mid C \subseteq V_{t_1}\} \\
		\mathcal{P}_2 &\coloneqq \{C \in \mathcal{P} \mid C \subseteq V_{t_2}\}.
	\end{align*}
	Observe that $\left( \mathcal{P}_1,\mathcal{P}_2 \right) $ is a
	partition of $\mathcal{P}$, since a clique $C$ cannot be simultaneously
	contained in both $V_{t_1}$ and $V_{t_2}$ without being entirely
	contained in $X_t$, contradicting the fact that $\mathcal{P}$ is a
	witness for $A[t,f]$.

	Next, for each $x \in X_t = X_{t_1} = X_{t_2}$, define
	\begin{align*}
		f_1(x) &\coloneqq \covernum{x, \mathcal{P}_1} \quad \text{and} \\
		f_2(x) &\coloneqq \covernum{x, \mathcal{P}_2}.
	\end{align*}

	Since $(\mathcal{P}_1,\mathcal{P}_2)$ is a partition, we have $f_1(x) +
	f_2(x) = f(x)$ for all $x \in X_t$, which implies $f = f_1 \oplus f_2$.
	Because $X_t = X_{t_1} = X_{t_2}$, it follows that $f_1 \in
	\cliquetype{t_1}$ and $f_2 \in \cliquetype{t_2}$.

	By the definitions of $f_1$ and $f_2$, the sets
	$\mathcal{P}_1$ and $\mathcal{P}_2$
	are witnesses for $A[t_1,f_1]$ and $A[t_2,f_2]$ respectively.
	Therefore,
	\begin{align*}
		A[t,f] = \abs{\mathcal{P}} = \abs{\mathcal{P}_1} + \abs{\mathcal{P}_2} &\leq \Bigl(A[t_1,f_1] + A[t_2, f_2]\Bigr)\\
										       &\leq \max_{\substack{f_1 \in \cliquetype{t_1}, f_2 \in \cliquetype{t_2}\\ \text{s.t. } f = f_1 \oplus f_2 }} \biggl( A[t_1, f_1] + A[t_2, f_2]\biggr).
	\end{align*}

	\medskip
	\noindent
	\textbf{(The inequality $\geq$)} \quad Now let $f_1 \in \cliquetype{t_1}$ and $f_2 \in \cliquetype{t_2}$ be arbitrary functions
	such that $f = f_1 \oplus f_2$. We show that
	\begin{equation*}
		A[t,f] \geq A[t_1,f_1] + A[t_2, f_2].
	\end{equation*}
	Since $f_1$ and $f_2$ are chosen arbitrarily, this will imply
	\begin{equation*}
		A[t,f] = \abs{\mathcal{P}} = \abs{\mathcal{P}_1} + \abs{\mathcal{P}_2} \leq \Bigl(A[t_1,f_1] + A[t_2, f_2]\Bigr) \leq \Bigl(\max_{\substack{f_1 \in \cliquetype{t_1}, f_2 \in \cliquetype{t_2}\\ \text{s.t. } f = f_1 \oplus f_2 }}A[t_1, f_1] + A[t_2, f_2]\Bigr).
	\end{equation*}

	Let $\mathcal{P}_1$ and $\mathcal{P}_2$ be witnesses for $A[t_1,f_1]$ and $A[t_2,f_2]$ respectively,
	such that $\abs{\mathcal{P}_1} = A[t_1,f_1]$ and $\abs{\mathcal{P}_2} = A[t_2,f_2]$.
	We claim that $\mathcal{P}_1 \cup \mathcal{P}_2$ is
	a witness for $A[t, f]$.
	Observe that no $Y \in \mathcal{P}_1$ (or $\mathcal{P}_2$) is contained
	in $X_{t_1} = X_{t_2} = X_{t}$.
	Moreover, for $i \in \{1,2\}$ and
	$x \in V_{t_i} \setminus X_t$ we have
	\begin{equation}\label{eq:clique_join}
		\covernum{x, \mathcal{P}_1 \cup \mathcal{P}_2} = \Bigl( \covernum{x, \mathcal{P}_1} + \covernum{x, \mathcal{P}_2} \Bigr) = \Bigl(\covernum{x, \mathcal{P}_i}\Bigr) \leq c.
	\end{equation}
	Note that \cref{eq:clique_join} holds because if we let $j \in \{1,2\}$
	such that $\{i,j\} = \{1,2\}$, then
	\begin{equation*}
		x \in V_{t_i} \implies x \not\in  V_{t_j} \implies \covernum{x, \mathcal{P}_j} = 0.
	\end{equation*}
	On the other hand, for $x \in X_t$,
	\begin{align*}
		\covernum{x, \mathcal{P}_1 \cup \mathcal{P}_2} = \Bigl( \covernum{x, \mathcal{P}_1} + \covernum{x, \mathcal{P}_2} \Bigr) = f_1(x) + f_2(x) = f(x).
	\end{align*}
	Hence, $\mathcal{P}_1 \cup \mathcal{P}_2$ satisfies the conditions of a witness for $A[t,f]$, and
	\begin{align*}
		A[t,f] \geq \abs{\mathcal{P}_1 \cup \mathcal{P}_2} = \abs{\mathcal{P}_1} + \abs{\mathcal{P}_2} = A[t_1,f_1] + A[t_2,f_2].
	\end{align*}
	This completes the proof.
\end{proof}

\subsection{Correctness and Running Time}
	Let $\mathcal{A}$ denote the dynamic programming algorithm that fills the table $A[t,f]$
	for each $t \in \mathcal{T}$ and $f \in \cliquetype{t}$,
	following the update rules in \cref{sec:update_dp_clique_packing}.
	Next, we give the proof of \cref{theorem:single_clique_packing_algo}.
	
	\begin{proof}[Proof of \cref{theorem:single_clique_packing_algo}]
		Let $G$ be a graph and $\mathcal{T}$ be a tree decomposition of $G$
		of width at most $w$.
		We can compute a nice tree decomposition of $G$
		of the same width as $\mathcal{T}$ with $\mathcal{O}\left(w \cdot n\right)$
		nodes, in time $\mathcal{O}\left(w^{2} \cdot n\right)$ \cite{cyganParameterizedAlgorithms2015}.
		Therefore without loss of generality we can assume that $\mathcal{T}$
		is a nice tree decomposition.

		The update rules for each node of $\mathcal{T}$, and their correctness
		were described in \cref{sec:update_dp_clique_packing}.
		Recall that for the root node $t_{\text{R}}$  of $\mathcal{T}$ we have
		that $X_{t_\text{R}} = \emptyset$ and $V_{t_{\text{R}}} = G$, therefore
		$A[t_{\text{R}}, \emptyset]$ is the solution for the instance $G$.

		Now observe that for each $t \in \mathcal{T}$, there are at most $\mathcal{O}\left((c+1)^{w}\right)$
		many functions from $X_t$ to $[0,c]$. Therefore,
		\begin{equation*}
			\abs{\cliquetype{t}} = \mathcal{O}\left((c+1)^{w}\right).
		\end{equation*}
		If we let $t$ be a leaf, introduce or forget node,
		then we can compute the value  $A[t,f]$ for a single $f \in
		\cliquetype{t}$ in constant time, assuming the table
		entries for the children of $t$ are already computed.
		Hence, we can compute all the
		entries $A[t,f]$ over $f \in \cliquetype{t}$ in time $(c+1)^{w}
		\cdot n^{\mathcal{O}\left(1\right)}$.

		If $t$ is a join node, then a naive computation of the $A[t,f]$
		for a single $f \in \cliquetype{t}$ does not take constant time anymore, since we
		have to traverse over all $f_1 \in \cliquetype{t_1}$ (which also determines $f_2 \in \cliquetype{t_2}$
		such that $f = f_1 \oplus f_2$). 
		This results in a running time of $(c+1)^{2w} \cdot n^{\mathcal{O}\left(1\right)}$.
		However,
		the operation in \cref{equation:single_clique_join} can be described
		as a convolution which results in a faster running-time.
		We define the functions
		$\alpha_1, \alpha_2 \from \cliquetype{t} \times [0,n] \to \{0,1\} $ where	
		\begin{equation*}
			\alpha_1(f, i) = \begin{cases}
				1 &\text{if } A[t_1, f] = i\\
				0 &\text{otherwise} 
			\end{cases}
		\end{equation*}
		for $f \in \cliquetype{t}$ and $i \in [0,n]$.
		Similarly,
		\begin{equation*}
			\alpha_2(f, i) = \begin{cases}
				1 &\text{if } A[t_2, f] = i \\
				0 &\text{otherwise.}
			\end{cases}
		\end{equation*}
		Then we compute the function $\beta \from \cliquetype{t} \times [0,2n] \to \{0,1\}$
		where
		\begin{equation*}
			\beta(f,i) = \sum_{f_1 \oplus f_2 = f} \sum_{i_1 + i_2 = i} \alpha_1(f_1, i_1) \cdot \alpha_2(f_2, i_2).
		\end{equation*}
		Observe that $\beta$ is the convolution of the functions $\alpha_1$ and $\alpha_2$.
		Note that $\beta(f,i) \geq 1$ if and only if there exists $f_1, f_2 \in \cliquetype{t}$
		and $i_1,i_2 \in [0,n]$ such that $f = f_1 \oplus f_2$, $i = i_1 + i_2$, $A[t_1,f_1] = i_1$
		and $A[t_2, f_2] = i_2$. Therefore, given the function $\beta$, in order to compute
		\begin{equation*}
			A[t, f] = \max_{\substack{f_1 \in \cliquetype{t_1}, f_2 \in \cliquetype{t_2}\\ \text{s.t. } f = f_1 \oplus f_2 }}A[t_1, f_1] + A[t_2, f_2],
		\end{equation*}
		we simply iterate over all $i \in [0,n]$ and find the maximum value of $i$
		such that $\beta(f,i)$  is strictly positive, i.e., more formally we have
		\begin{equation}\label{equation:single_clique_convolution}
			A[t,f] = \max \Bigl( \bigl\{ 0 \bigr\} \cup \bigl\{ i \in [1,n] \mid \beta(f,i) \geq 1 \bigr\} \Bigr).
		\end{equation}
		Note that it is enough to consider $i \in [0,n]$ since we already
		know that $A[t,f] \leq n$.
		Computing the convolution, i.e. the function $\beta$, takes time $(c+1)^{w} \cdot  w^{2} \cdot n^{\mathcal{O}\left(1\right)}$ by Theorem 4.30 in
		\cite{fockeTightComplexityBounds2023}. Finally, given the function $\beta$,
		we can compute $A[t,f]$ for a single $f \in \cliquetype{t}$ in polynomial time.
		Therefore, all in all, computing $A[t,f]$ for all $f \in \cliquetype{t}$
		takes time
		\begin{equation*}
			(c+1)^{w} \cdot n^{\mathcal{O}\left(1\right)}.
		\end{equation*}
		Since $\mathcal{T}$ has $\mathcal{O}\left(w \cdot n\right)$ nodes, this
		proves the theorem.
	\end{proof}

\section{Algorithm for $\graphpackprob{H}$ Where $H$ Is an Arbitrary Graph }
\label{section:algo_arb}
In this section we prove \cref{theorem:graph_packing_arbitrary_algo} by describing
a dynamic programming algorithm for
the $\graphpackprob{H}$ problem
where $H$ is an arbitrary graph.
We state \cref{theorem:graph_packing_arbitrary_algo} here for completeness.
\graphpackingarbitraryalgo*
Let $G$ be a graph and $\mathcal{T} = (T, \{X_t\}_{t \in V(T)})$ be a nice tree
decomposition of $G$. For a node $t \in T$ of the tree decomposition,
we occasionally adopt a slight abuse of notation and write $t \in \mathcal{T}$.
\subsection{Partial Packings and Types}
In this section, we introduce the notation and formally define the constructs
necessary to describe the dynamic programming algorithm.

\subparagraph{Partial Packings.}
To define the subproblems associated with a node $t \in \mathcal{T}$ for the
dynamic programming algorithm, we first introduce the notion of a
$\partialpacking{H}$. A homomorphism from a graph $H$ to $G$ is a function
$\theta \colon V(H) \to V(G)$ that preserves edges, i.e., for each edge $(a,b) \in E(H)$ it holds that
$\left( \theta(a), \theta(b) \right) \in E(G)$.

\begin{definition}[Partial Packing]\label{definition:part_packing_gen_graph}
    Given a graph $G$, and a node $t$ of its tree decomposition, a
    $\partialpacking{H}$ in $G_t$ is a tuple
    \begin{equation*}
    	\mathcal{P} = \left( h_1, \ldots, h_p,
        S_1, \overline{h}_1, \ldots, S_q, \overline{h}_q
    \right) 
    \end{equation*}
    for $p,q \geq 0$, subject to the following:
    \begin{enumerate}
	\renewcommand{\theenumi}{A\arabic{enumi}}
    	\renewcommand{\labelenumi}{\theenumi.}
		\item \label{item:h_i_inj} The function $h_i$ is an injective
			homomorphism from $H$ to $G_t \setminus X_t$ for $1
			\leq i \leq p$. Each $h_i$ is called a \emph{full copy}
			of $H$ in $\mathcal{P}$. We let
			$\numfullcopies{\mathcal{P}}$ denote the the number of
			full copies in $\mathcal{P}$, i.e., $p$.

	 	\item \label{item:h_bar_j} For each $1 \leq j \leq q$, it holds that
		 \begin{enumerate}
			 \item $S_j$ is a non-empty subset of $V(H)$,
			\item $\overline{h}_j$ is an injective homomorphism
				from $H[S_j]$ to $G_t$ such that
				$(\overline{h}_j)^{-1}(X_t) \neq \emptyset$,
			\item Let
				\begin{equation*}
					\border{j}{\mathcal{P}} = \{u \in S_j \mid \exists y \not\in  S_j \text{ such that } y \in N_{H}(u)\}.
				\end{equation*}
				Then, it holds that $\{\overline{h}_j(u) \mid u
				\in \border{j}{\mathcal{P}}\} \subseteq X_t$.
				That is, any vertex of $S_j$ that
				has a neighbor in $H \setminus S_j$
				must be mapped into $X_t$.
		 \end{enumerate} 
	 \item \label{item:im_disj} The images of functions in $\{h_i\}_{1 \leq i \leq p} \cup \{\overline{h}_j\}_{1 \leq j \leq q}$ are pairwise vertex-disjoint.		 
    \end{enumerate}
\end{definition}

Intuitively, a partial packing consists of full copies of $H$,
together with \textit{partial} copies of $H$ that might continue outside $G_t$.

\subparagraph*{Types.}
We now encode how a partial packing $\mathcal{P}$ intersects the bag $X_t$ by defining a \emph{type}, which summarizes that information more compactly.
The number of
types will depend on the size of the bag, i.e. $\abs{X_t}$, as opposed to the partial
packings whose size can be much larger. The idea is to use
the types to describe solutions for partial packings.

\begin{definition}[Type]\label{definition:type_gen_graph}
	Let $G$ be a graph and $t$ be a node of its tree decomposition. A type for $t$ is a tuple
	\begin{equation*}
		K = \left( \partition, Z_1, \phi_1, \ldots, Z_q, \phi_q \right) 
	\end{equation*}
	where $0 \leq q \leq \abs{X_t}$ and the components satisfy:
	\begin{enumerate}
		\renewcommand{\theenumi}{B\arabic{enumi}}
    		\renewcommand{\labelenumi}{\theenumi.}
		\item \label{item:part} $\partition$ is a function $X_t \rightarrow [0, q]$. For each $i \in [q]$, define
			\begin{equation*}
				X_t(i) \coloneqq \partition^{-1}(i).
			\end{equation*}
		\item \label{item:Z_i_nonempty} For all $i \in [q]$, $Z_i$ is a non-empty subset of $H$ such that $\abs{Z_i} = \abs{X_t(i)}$.
		\item \label{item:phi_i} For all $i \in [q]$, $\phi_i \colon V(H) \to X_t(i) \cup \{\downvertex, \upvertex\}$ satisfies:
		\begin{enumerate}
			\item The function $\phi_i$ restricted to $Z_i$, i.e. ${\phi_i}|_{Z_i}$, is an injective homomorphism from $H[Z_i]$ into $G[X_t(i)]$.
			\item There is no edge in $H$ that connects a vertex $u_1$ with $\phi_i(u_1) = \upvertex$, to a vertex
				$u_2$ with $\phi_i(u_2) = \downvertex$. In other words, $Z_i$ separates ``up'' vertices
				from ``down'' vertices in $H$.
		\end{enumerate}
	\end{enumerate}
\end{definition}

Let \(\generaltype{t}\) be the set of all types for the node \(t\). 
We next show how to derive a type from any partial packing \(\mathcal{P}\).

\subparagraph*{Type of a partial packing.}
Let $\mathcal{P} = \left( h_1, \ldots, h_p, S_1, \overline{h}_1, \ldots, S_q, \overline{h}_q \right)$ be a $\partialpacking{H}$ in $G_t$. We define its associated type
\begin{equation*}
	K = \left( \partition, Z_1, \phi_1, \ldots, Z_q, \phi_q\right)
\end{equation*}
as follows:

\begin{enumerate}
	\renewcommand{\theenumi}{C\arabic{enumi}}
    	\renewcommand{\labelenumi}{\theenumi.}	
	\item \label{item:tp_part} For each $x \in X_t$, we let
		\begin{equation}\label{eq:part_packing_to_type_arb}
			\partition(x) = \begin{cases}
				j &\text{if } x \in \im(\overline{h}_j) \text{ for some } 1 \leq j \leq q\\
				0 &\text{otherwise.} 
			\end{cases}
		\end{equation}
		Thus, $\partition(x) = j$ precisely when $x$ is in the image of $\overline{h}_j$, and $\partition(x)=0$ otherwise.
		By \cref{item:im_disj}, $\partition$ is well-defined.
	\item \label{item:tp_phi_j} For $1 \leq j \leq q$ and $u \in V(H)$, $\phi_j$ satisfies
		\begin{equation}\label{eq:phi_j_packing_to_type_arb}
			\phi_j(u) \coloneqq \begin{cases}
				\downvertex &\text{if } \overline{h}_j(u)\in \left( V_t\setminus X_t \right)\\
				\overline{h}_j(u) &\text{if } \overline{h}_j(u)\in X_t\\
				\upvertex &\text{if } u \not\in \dom{\overline{h}_j},\text{ i.e. } u \not\in S_j.
			\end{cases}
		\end{equation}
		We call $\phi_j$ the \emph{imprint} of $\overline{h}_j$ on $X_t$.
		In words, $\phi_j(u) = \;\downvertex$ if $u$ is mapped to a vertex in $V_t \setminus X_t$, equals $\overline{h}_j(u)$ if $u$ is mapped into $X_t$, and equals $\upvertex$ when $u$ is not in the domain of $\overline{h}_j$, i.e. $u \not\in  S_j$.
	\item \label{item:tp_Z_j} Finally, we let
		\begin{equation}\label{eq:H_j_rest_arb}
			Z_j \coloneqq \phi_j^{-1}(X_t).\
		\end{equation}
		Note that by \cref{eq:phi_j_packing_to_type_arb}, \cref{eq:H_j_rest_arb} is equivalent to $Z_j = \left( \phi_j^{-1} \right) (X_t) =  \left( \overline{h}_j^{-1} \right) (X_t)$.		
\end{enumerate}

These definitions ensure that $K$ satisfies \cref{definition:type_gen_graph} exactly when 
\cref{item:tp_part,item:tp_phi_j,item:tp_Z_j} hold for $(\mathcal{P},K)$. 
Thus each partial packing in $G_t$ determines exactly one type $K\in \generaltype{t}$.

\subsection{Description of the dynamic programming algorithm}
\label{section:dp_arb_graph}
Having defined partial packings and types for a node, we can now describe
the dynamic programming algorithm on a given tree decomposition.
For each node $t \in \mathcal{T}$ and $K \in \generaltype{t}$, define
\begin{equation*}
	\partialpackings{t}{K} \coloneqq \{\mathcal{P} \mid \mathcal{P} \text{ is a } \partialpacking{H} \text{ in } G_t \text{ of type } K\}.
\end{equation*}
Any $\mathcal{P} \in \partialpackings{t}{K}$ is called a \textit{witness} for $A[t,K]$.
Then, for any node $t$ and $K \in \generaltype{t}$ we define the table entry $A[t,K]$ as
\begin{equation*}
	A[t,K] \coloneqq \max_{\mathcal{P} \in \partialpackings{t}{K}} \numfullcopies{\mathcal{P}}, 
      \end{equation*}
i.e. the maximum number of full copies of $H$ found in a partial packing of type $K$.
We will compute the entries of the table $A$ in a bottom-up manner.
Next, we state the update rules for leaf, introduce, forget and join nodes.

\subsubsection*{Leaf Node} Let $t \in \mathcal{T}$ be a leaf node. Since
$\mathcal{T}$ is a nice tree decomposition, it holds that $X_t = V_t = \emptyset$.
Therefore, there is only one possible $K \in \generaltype{t}$ which is $K = (\emptyset)$
and we set $A[t, (\emptyset)] = 0$.

\subsubsection*{Introduce Node}
Let $t \in \mathcal{T}$ be an introduce node with a single child $t'$ such that $X_{t} = X_{t'} \cup \{v\}$,
where $v$ is the vertex introduced at node $t$. For each $K = \left( \partition, Z_1, \phi_1, \ldots, Z_q, \phi_q \right)  \in \generaltype{t}$, define a new type $\genupdintr(K) \in \generaltype{t'}$ such that \begin{equation*}
	\genupdintr(K) \coloneqq \begin{cases}
		\left( \gamma,\, Z_1, \phi_1, \ldots, Z_q, \phi_q\right) 
		& \text{if } \partition(v) = 0, \\
		\left( \gamma,\, Z_1, \phi_1, \ldots, Z_j \setminus u, \funcadd{\phi_j}{u}{\upvertex}, \ldots, Z_q, \phi_q \right) 
		& \begin{aligned}[t]
			&\text{if } \partition(v) = j \text{ for some} \\
			&j \in [q], u = \phi_j^{-1}(v), \abs{Z_j} \geq 2
		\end{aligned}\\
		\left( \gamma,\, Z_1, \phi_1, \ldots, Z_{j-1}, \phi_{j-1}, Z_{j+1}, \phi_{j+1}, \ldots, Z_q, \phi_q \right) 		& \begin{aligned}[t]
			\text{if } & \partition(v) = j \text{ for some} \\
				   &j \in [q], \abs{Z_j} = 1.
		\end{aligned}
	\end{cases}
\end{equation*}
where $\gamma \coloneqq \funcrem{\partition}{v}$ is the function $\partition$ restricted on $\Bigl( \dom{\partition} \setminus \{v\} \Bigr)  = \Bigl( X_t \setminus \{v\}  \Bigr) = X_{t'}$.

\begin{lemma}\label{lemma:introduce_arb_graph}
	Let $t \in \mathcal{T}$ be an introduce node with a single child $t'$
	such that $X_t = X_{t'} \cup \{v\}$.
	Then, for $K \in \generaltype{t}$, it holds that
	\begin{equation*}
		A[t,K] = A[t', \genupdintr\left( K,v \right) ].
	\end{equation*}
\end{lemma}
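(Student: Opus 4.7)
The plan is to establish a value-preserving bijection between $\partialpackings{t}{K}$ and $\partialpackings{t'}{\genupdintr(K)}$, from which the equality $A[t,K] = A[t', \genupdintr(K)]$ follows by taking the maximum of $\numfullcopies{\cdot}$ on each side.

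First I would record the basic properties of an introduce node. By definition $v \notin V_{t'}$, every neighbor of $v$ in $G_t$ lies in $X_t$, and therefore $V_t \setminus X_t = V_{t'} \setminus X_{t'}$. The last identity implies that the collection of full copies in $G_t$ coincides with that in $G_{t'}$, so the bijection will not alter $\{h_1, \ldots, h_p\}$. Moreover $v$ cannot appear in any full copy; so if $v$ is covered by $\mathcal{P}$ it lies in the image of exactly one partial copy $\overline{h}_j$. Writing $u = \overline{h}_j^{-1}(v)$, the key structural consequence is that every $H$-neighbor of $u$ inside $S_j$ must map into $X_t \setminus \{v\} \subseteq X_{t'}$, and hence into $Z_j \setminus \{u\}$, because the homomorphism condition combined with $N_{G_t}(v) \subseteq X_t$ leaves no other option.

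With this in hand, the three branches of $\genupdintr(K)$ correspond to three natural transformations. In Case~1, when $\partition(v) = 0$, $v$ is unused by $\mathcal{P}$ and so $\mathcal{P}$ is itself a partial packing of $G_{t'}$ of type $\genupdintr(K)$; the correspondence is the identity. In Case~2, when $\partition(v) = j$ and $|Z_j| \geq 2$, the forward map replaces $\overline{h}_j$ by its restriction to $S_j \setminus \{u\}$, which is still a legal partial copy: since $|Z_j| \geq 2$ its image still intersects $X_{t'}$, and the structural observation ensures the border condition is preserved, because the only vertices whose border status can change by removing $u$ are its $H$-neighbors in $S_j$, and these already map into $X_{t'}$. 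The inverse map re-attaches $u \mapsto v$, with the homomorphism condition at $v$ following from the type requirement that $\phi_j|_{Z_j}$ be an injective homomorphism from $H[Z_j]$ into $G[X_t(j)]$. In Case~3, when $\partition(v) = j$ and $|Z_j| = 1$, the structural observation combined with the border condition forces $u$ to be isolated in $H[S_j]$ and every vertex of $S_j \setminus \{u\}$ to have all its $H$-neighbors inside $S_j \setminus \{u\}$; the forward map discards the $j$-th partial copy entirely, and the inverse map reinstates it on a suitable set of unused vertices of $V_{t'} \setminus X_{t'}$. In every case $\numfullcopies{\cdot}$ is preserved.

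The main obstacle is verifying that $\genupdintr(K)$ is a legal type in Case~2, i.e., that updating $\phi_j(u)$ from $v$ to $\upvertex$ does not create a forbidden edge of $H$ between a new $\upvertex$-vertex and an existing $\downvertex$-vertex. This is again a direct consequence of the structural observation: any $H$-neighbor of $u$ inside $S_j$ lies in $Z_j$ (its $\phi_j$-value is in $X_{t'}$, not $\downvertex$), and any $H$-neighbor of $u$ outside $S_j$ was already labeled $\upvertex$. In Case~3 with $|S_j| > 1$ the same structural argument shows that $S_j \setminus \{u\}$ decomposes into a union of connected components of $H$ whose images lie entirely in $V_{t'} \setminus X_{t'}$, so the reinstatement step of the inverse map is a matter of matching these components to unused embeddings and can be absorbed into the bookkeeping on how $\mathcal{P}'$ is chosen. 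The remaining verifications—that $\gamma$ agrees with $\partition$ on $X_{t'}$, that the partial copies indexed by $j' \neq j$ transfer unchanged, and that the images remain pairwise disjoint—follow directly from the definition of $\genupdintr$.
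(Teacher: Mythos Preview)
Your Cases 1 and 2 follow the paper's argument essentially verbatim; the structural observation that every $H$-neighbor of $u$ lying in $S_j$ must land in $X_{t'}$ (because all $G_t$-neighbors of the freshly introduced $v$ lie in $X_t$) is precisely the hinge of the paper's Case-2 proof.

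Case 3 is where your plan breaks. For the direction $A[t',K']\le A[t,K]$ the paper reinstates the $j$-th partial copy trivially, taking $S_j=\{u\}$ and $\overline{h}_j(u)=v$. You instead try to reinstate the full $S_j$, embedding the $\phi_j^{-1}(\downvertex)$-components onto ``unused vertices of $V_{t'}\setminus X_{t'}$'', and dismiss the details as something that ``can be absorbed into the bookkeeping on how $\mathcal{P}'$ is chosen''. But you do not get to choose $\mathcal{P}'$: it is an \emph{arbitrary} optimal witness for $A[t',K']$, and it may already occupy every vertex of $V_{t'}\setminus X_{t'}$. Concretely, let $H$ be two disjoint triangles, let $v$ be isolated in $G$, let $G_{t'}$ consist of exactly two triangles with $X_{t'}=\emptyset$, and let $\phi_j$ send one triangle of $H$ to $\downvertex$ and the two non-$u$ vertices of the other to $\upvertex$; then the unique optimal $\mathcal{P}'$ is a single full copy using all six vertices, with nothing left for your reinstatement. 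Here $A[t,K]=0$ while $A[t',K']=1$, so no value-preserving bijection exists at all, and your forward map is not even injective (different $(S_j,\overline{h}_j)$ collapse to the same $\mathcal{P}'$). The paper's one-vertex reinstatement sidesteps the availability issue, though it too does not address the case $\phi_j^{-1}(\downvertex)\neq\emptyset$, in which the packing it produces has the wrong imprint at index $j$.
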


Consider some $K \in \generaltype{t}$ where $K = \left( \partition, Z_1, \phi_1, \ldots, Z_q, \phi_q\right)$.
We prove the lemma by considering three cases, each handled in a separate lemma, based on the value of $v$:
\begin{enumerate}
	\item $\partition(v) = 0$
	\item $\partition(v) = j$ for some $j \in [q]$ and $\abs{Z_j} \geq 2$
	\item $\partition(v) = j$ for some $j \in [q]$ and $\abs{Z_j} = 1$.		
\end{enumerate}

We start with the first case where $\partition(v) = 0$, which implies that for
any partial packing $\mathcal{P}$ that has type $K$, $v$ is not covered by any
partial copy of $H$ in $\mathcal{P}$. Therefore, since $X_t = \left( X_{t'}
\setminus \{v\} \right)$, it holds that $\mathcal{P}$ is also a valid partial
packing for the node $t'$. Observe that the full copies in $\mathcal{P}$ and
$\mathcal{P}'$ are identical, therefore their number also stays the same.

\begin{lemma}\label{lemma:introduce_arb_lem_1}
	Let $t \in \mathcal{T}$ be an introduce node with a single child $t'$
	such that $X_t = X_{t'} \cup \{v\}$.
	Assume $K \in \generaltype{t}$ is given by
	\begin{equation*}
		K = \left( \partition, Z_1, \phi_1, \ldots, Z_q, \phi_q\right)
	\end{equation*}
	with $\partition(v) = 0$. Define
	\begin{equation*}
		K' \coloneqq \genupdintr(K) = \left( \funcrem{\partition}{v}, Z_1, \phi_1,
		\ldots, Z_q, \phi_q\right).
	\end{equation*}
	Then it holds that $A[t,K] = A[t', K']$.
\end{lemma}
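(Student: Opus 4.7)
The plan is to prove the equality by showing that $\partialpackings{t}{K}$ and $\partialpackings{t'}{K'}$ coincide as sets of tuples, from which $A[t,K]=A[t',K']$ follows immediately, since the number of full copies $\numfullcopies{\mathcal{P}}$ is intrinsic to $\mathcal{P}$ and independent of the node under which we interpret it. The key observation driving both inclusions is that when $\partition(v)=0$, the vertex $v$ does not appear in the image of any $h_i$ or $\overline{h}_j$ of any witness: full copies live in $G_t\setminus X_t = V_{t'}\setminus X_{t'}$, which excludes $v$, and by \cref{eq:part_packing_to_type_arb} the condition $\partition(v)=0$ forces $v\notin\im(\overline{h}_j)$ for every $j\in[q]$.

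First I would establish the forward inclusion. Given $\mathcal{P}\in\partialpackings{t}{K}$, I reinterpret it as a tuple in $G_{t'}$ and verify conditions \cref{item:h_i_inj,item:h_bar_j,item:im_disj}. Because $v$ is untouched, every $h_i$ maps into $G_{t'}\setminus X_{t'}$, and every $\overline{h}_j$ maps into $G_{t'}$ with border vertices in $X_{t'}=X_t\setminus\{v\}$; the nonemptiness $\overline{h}_j^{-1}(X_{t'})\neq\emptyset$ follows directly from $\overline{h}_j^{-1}(X_t)\neq\emptyset$ combined with $v\notin\im(\overline{h}_j)$. Computing the type via \cref{item:tp_part,item:tp_phi_j,item:tp_Z_j} then produces the partition $\funcrem{\partition}{v}$ and the same tuples $(Z_j,\phi_j)$, which together yield exactly $K'$.

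Next, for the reverse inclusion I would take $\mathcal{P}'\in\partialpackings{t'}{K'}$ and view it inside $G_t$. Since $V_{t'}\subseteq V_t$ and $X_{t'}\subseteq X_t$, the homomorphism, injectivity, border, and disjointness conditions carry over verbatim with the same target graph, only strengthened by the fact that border vertices now land in $X_t\supseteq X_{t'}$. When reading off the type of $\mathcal{P}'$ interpreted under node $t$, I obtain $\partition(v)=0$ because $v\notin V_{t'}$ forbids $v$ from appearing in any image, while the values of $\partition$ on $X_{t'}$ and the tuples $(Z_j,\phi_j)$ are unchanged, thus reconstructing $K$ exactly.

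The only thing requiring care is the bookkeeping across the definitional clauses \cref{item:h_i_inj}--\cref{item:tp_Z_j}; since the bijection is the identity on the underlying tuple and $v$ is invisible to every witness, no reshuffling or modification of the homomorphisms is needed. Consequently the two sets of partial packings coincide, $\numfullcopies{\mathcal{P}}$ is preserved under the bijection, and the two maxima agree, giving $A[t,K]=A[t',K']$.
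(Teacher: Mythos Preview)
Your proposal is correct and follows essentially the same approach as the paper: both arguments exploit that when $\partition(v)=0$ the vertex $v$ is absent from every image, so the identical tuple $\mathcal{P}$ serves as a witness on both sides. Your phrasing as the literal equality $\partialpackings{t}{K}=\partialpackings{t'}{K'}$ is a slightly cleaner packaging of the same verification that the paper carries out via two separate inequalities.
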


\begin{proof}
		First, observe $K'$ is a valid type for the node $t'$ as
		$X_{t'} = X_t \setminus \{v\}$. In the following, we will show
		that $A[t,K] \leq A[t', K']$ and $A[t,K] \geq A[t', K']$,
		starting with the former.
		
		\medskip
		\noindent
		\textbf{(The inequality $A[t,K] \leq A[t', K']$)} \quad Let $\mathcal{P} = \left( h_1,
			\ldots, h_p,
			S_1, \overline{h}_1, \ldots, S_q, \overline{h}_q
		\right) $ be a witness for $A[t,K]$ such that $A[t,K] =
		\numfullcopies{\mathcal{P}}$.
		It holds that $\mathcal{P}$ does not contain $v$, i.e.
		\begin{enumerate}
			\item $v \not\in \im(h_i)$ for $1 \leq i \leq p$, and
			\item $v \not\in \im(\overline{h}_j)$ for $1 \leq j \leq q$.
		\end{enumerate}
		The first condition holds because $\im(h_i) \subseteq \left(G_t
		\setminus X_t \right)$ for $1 \leq i \leq p$ and $v \in X_t$.
		The second condition also holds, otherwise $\partition(v) \neq
		0$ by \eqref{eq:part_packing_to_type_arb}. Therefore, since
		$G_{t'} = G_t \setminus \{v\}$, it holds that $\mathcal{P}' =
		\mathcal{P}$ is also a partial packing in $G_{t'}$ with type
		$\left(\funcrem{\partition}{v}, Z_1, \phi_1, \ldots, Z_q,
		\phi_q\right)$, i.e. $\mathcal{P}'$ is a witness for $A[t',
		K']$. Therefore
		\begin{equation}\label{eq:part_v_0_ineq_1}
			A[t,K] = \numfullcopies{\mathcal{P}} = \numfullcopies{\mathcal{P}'} \leq A[t', K'].
		\end{equation}

		\medskip
		\noindent
		\textbf{(The inequality $A[t,K] \geq A[t', K']$)} \quad Now let
		$\mathcal{P}'$ be a witness for $A[t', K']$ such that $A[t',
		K'] = \numfullcopies{\mathcal{P}'}$ and let $\mathcal{P} =
		\mathcal{P}'$. Observe that \cref{item:tp_phi_j} holds for
		$\left( \mathcal{P}, K \right) $ if and only if it holds for
		$\left( \mathcal{P}', K' \right)$, because the functions
		$\left( \phi_1, \ldots, \phi_q \right) $ are identical in $K$
		and $K'$. The same argument also shows that \cref{item:tp_Z_j}
		holds for $(\mathcal{P}, K)$. Finally, \cref{item:tp_part}
		holds for $(\mathcal{P}, K)$ because for each $x \in \left( X_t
		\setminus \{v\} \right)$ we have
		\begin{equation*}
			\partition(x) = \funcrem{\partition}{v}(x) = j
		\end{equation*}
		where $x \in \im(\overline{h}_j)$. The last step holds because $\mathcal{P}'$
		is a witness for $A[t', K']$. Note that we also have
		\begin{equation*}
			\partition(v) = 0,
		\end{equation*}
		which is consistent with $\mathcal{P}$ as $v \not\in \im\left( \overline{h}_j \right)$ for any $1 \leq j \leq q$.
		Therefore,
		$\mathcal{P}$ is a witness for $A[t,K]$. Hence
		\begin{equation}\label{eq:part_v_0_ineq_2}
			A[t', K'] = \numfullcopies{ \mathcal{P}' } = \numfullcopies{\mathcal{P}} \leq A[t,K].
		\end{equation}
		All in all, the lemma holds by \cref{eq:part_v_0_ineq_1,eq:part_v_0_ineq_2}.
\end{proof}

Now we prove the special case of \cref{lemma:introduce_arb_graph} where
$\partition(v) = j$ for $j \in [q]$ and $\abs{Z_j} \geq 2$. In that case, for
any partial packing in $G_t$ that has type $K$, there exists a partial copy $Y$ of
$H$ in $\mathcal{P}$ that covers $v$, together with another vertex from $X_t$.
Intuitively, one can remove $v$ from $Y$, and obtain a new partial packing
$\mathcal{P}'$ for the node $t'$ that has type $K'$. Similar to the previous case,
the number of full copies in $\mathcal{P}$ and $\mathcal{P}'$ are equal.

\begin{lemma}\label{lemma:introduce_arb_lem_2}
	Let $t \in \mathcal{T}$ be an introduce node with a single child $t'$
	such that $X_t = X_{t'} \cup \{v\}$.
	Assume $K \in \generaltype{t}$ is given by
	\begin{equation*}
		K = \left( \partition, Z_1, \phi_1, \ldots, Z_q, \phi_q\right)
	\end{equation*}
	with $\partition(v) = j$ for $j \in [q]$, $\abs{Z_j} \geq 2$ and $u = \phi_j^{-1}(v)$.
	Define
	\begin{equation*}
		K'  \coloneqq \genupdintr(K) = \left(\funcrem{\partition}{v},\, Z_1, \phi_1, \ldots, Z_j \setminus u, \funcadd{\phi_j}{u}{\upvertex}, \ldots, Z_q, \phi_q \right).
	\end{equation*}
	Then it holds that $A[t,K] = A[t', K']$.
\end{lemma}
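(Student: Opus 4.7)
The plan is to establish both inequalities $A[t,K] \le A[t',K']$ and $A[t,K] \ge A[t',K']$ via a type-preserving bijection between $\partialpackings{t}{K}$ and $\partialpackings{t'}{K'}$. For the forward direction, given a witness $\mathcal{P}=(h_1,\ldots,h_p,S_1,\overline{h}_1,\ldots,S_q,\overline{h}_q)$ for $A[t,K]$, the hypothesis $\partition(v)=j$ together with $\phi_j(u)=v$ forces $\overline{h}_j(u)=v$. I form $\mathcal{P}'$ by restricting $\overline{h}_j$ from $S_j$ to $S_j\setminus\{u\}$ and leaving every other component intact. The restriction is still an injective homomorphism; its preimage of $X_{t'}$ equals $Z_j\setminus\{u\}$, which is non-empty because $\abs{Z_j}\ge 2$. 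The border condition \cref{item:h_bar_j} remains satisfied because the only vertices newly promoted to the border are the neighbors of $u$ in $S_j\setminus\{u\}$, and each such neighbor $u'$ maps under $\overline{h}_j$ into $N_{G_t}(v)$; since $v$ is introduced at $t$ we have $N_{G_t}(v)\subseteq X_t$, and injectivity of $\overline{h}_j$ excludes $v$ itself, so $\overline{h}_j(u')\in X_t\setminus\{v\}=X_{t'}$. Applying the formulas (\ref{item:tp_part})--(\ref{item:tp_Z_j}) to $\mathcal{P}'$ yields exactly the tuple $K'=\genupdintr(K)$, and the full-copy count is untouched, so $\numfullcopies{\mathcal{P}'}=\numfullcopies{\mathcal{P}}$.

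For the reverse direction, given a witness $\mathcal{P}'$ for $A[t',K']$ with $j$-th entry $(S_j\setminus\{u\},\overline{h}_j')$, I extend $\overline{h}_j'$ to the domain $S_j=(S_j\setminus\{u\})\cup\{u\}$ by sending $u$ to $v$, keeping all other components unchanged. Injectivity of the extension is immediate because $v\notin V_{t'}\supseteq\im(\overline{h}_j')$. The only non-trivial verification is the homomorphism property on $H[S_j]$: for every edge $(u,u')\in E(H)$ with $u'\in S_j\setminus\{u\}$, I must show $(v,\overline{h}_j'(u'))\in E(G_t)$. Here two pieces of structure combine. First, condition \cref{item:phi_i} applied to the valid type $K'$ (in which $\phi_j'(u)=\upvertex$) rules out $\phi_j'(u')=\downvertex$, so $u'\in Z_j\setminus\{u\}$ and $\overline{h}_j'(u')=\phi_j(u')\in X_t(j)$. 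Second, condition \cref{item:phi_i} applied to $K$ states that $\phi_j|_{Z_j}$ is an injective homomorphism into $G[X_t(j)]$, so $(\phi_j(u),\phi_j(u'))=(v,\overline{h}_j'(u'))$ is indeed an edge of $G_t$. Verifying the formulas (\ref{item:tp_part})--(\ref{item:tp_Z_j}) on $\mathcal{P}$ shows it has type $K$, and the count of full copies is preserved.

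The two constructions are clearly inverse, yielding a count-preserving bijection and hence equality of the maxima. The main obstacle is the homomorphism verification in the reverse direction: it is the sole step where the interplay between the data in $K$, the data in $K'$, and the introduce-node property of $v$ (namely $N_{G_t}(v)\subseteq X_t$) is genuinely needed; without combining the separator condition on $K'$ with the homomorphism condition on $\phi_j|_{Z_j}$ in $K$, one cannot conclude that re-inserting $u\mapsto v$ yields a valid partial copy. All remaining conditions are routine bookkeeping, using $V_t\setminus X_t=V_{t'}\setminus X_{t'}$ and the fact that $v$ appears in $\mathcal{P}$ only as the image $\overline{h}_j(u)$, so modifying the $j$-th partial copy affects neither the other partial copies $\overline{h}_{j'}$ nor the full copies $h_1,\ldots,h_p$.
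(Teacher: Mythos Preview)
Your proof is correct and follows essentially the same two-inequality approach as the paper: transform a witness for $A[t,K]$ into one for $A[t',K']$ by deleting $u$ from the $j$-th partial copy, and conversely extend by $u\mapsto v$. The only notable difference is in the reverse-direction homomorphism check, where you invoke the separator condition of \cref{item:phi_i}(b) on $K'$ to force $u'\in Z_j$, whereas the paper uses the border condition \cref{item:h_bar_j}(c) on $\mathcal{P}'$ to reach the same conclusion before appealing to $\phi_j|_{Z_j}$ being a homomorphism.
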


\begin{proof}
	Let us first argue that $K'$ is a valid type for the node $t'$.
	In particular, we must show that $K'$
	satisfies \cref{definition:type_gen_graph}.

	\begin{claim}\label{claim:introduce_arb_lem_2_claim_1}
		It holds that $K'$ is a valid type for the node $t'$.
	\end{claim}

	\begin{claimproof}
		Observe that $\funcrem{\partition}{v}$ is a function from
		$\left( X_t \setminus \{v\} \right) = X_{t'}$ to $[0,q]$.
		Moreover, for $i \in \left( [1,q] \setminus \{j\} \right)$,
		$Z_i$ is a non-empty subset of $V(H)$ such that
		$\abs{Z_i} = \abs{\partition^{-1}(i)} = \abs{\left( \funcrem{\partition}{v} \right)^{-1} (i)}$,
		which follows from the fact that $K$ is a valid type
		for the node $t$.
		Finally, observe that $u = \phi_j^{-1}(v)$ implies that
		$\phi_j(u) = v \in X_t$, hence $u \in Z_j$ by
		\cref{definition:type_gen_graph}. Then, $Z_j \setminus u$ is a non-empty
		subset of $V(H)$ with
		\begin{equation*}
			\abs{Z_j \setminus u} = \left( \abs{Z_j}  - 1\right) = \left( \abs{\partition^{-1}(j)} - 1 \right)
			= \abs{\left( \funcrem{\partition}{v} \right) ^{-1}(j)},
		\end{equation*}
		where the second equality follows from \cref{item:Z_i_nonempty} for $K$.

		Similarly, for $i \in \left( [1,q] \setminus \{j\} \right) $, $\phi_i$ satisfies
		\cref{item:phi_i} in \cref{definition:type_gen_graph}, because $\phi_i$ is
		part of $K \in \generaltype{t}$.
		Moreover, $\funcadd{\phi_j}{u}{\upvertex}$
		is again a function from $H$ to $\left( \funcrem{\partition}{v} \right) ^{-1}(j)$.
		Also, restricted on $Z_j \setminus \{u\} $, the function $\funcadd{\phi_j}{u}{\upvertex}$
		is equal to $\phi_j$ and is an injective homomorphism.
		Therefore, $K'$ satisfies \cref{definition:type_gen_graph} and
		is a valid type for the node $t'$.
	\end{claimproof}
	
	In the following, we will show that $A[t,K] \leq A[t', K']$ and $A[t,K] \geq A[t', K']$,
	starting with the former.

	\begin{claim}\label{claim:introduce_arb_lem_2_claim_2}
		It holds that $A[t,K] \leq A[t', K']$.
	\end{claim}

	\begin{claimproof}
		Let $\mathcal{P} = \left( h_1,
			\ldots, h_p,
			S_1, \overline{h}_1, \ldots, S_q, \overline{h}_q
		\right)$ be a witness for $A[t,K]$ such that $A[t,K] = \numfullcopies{ \mathcal{P} }$.
		Since $P$ has type $K$, by \cref{item:tp_phi_j}
		we have that $v = \phi_j(u) = \overline{h}_j(u)$,
		where the last step holds because $u \in \phi_j^{-1}(X_t) = \overline{h}_j^{-1}(X_t)$ by \cref{eq:H_j_rest_arb}.
		Consider the partial packing
		\begin{equation*}
			\mathcal{P}' = \left( h_1,\, \ldots,\, h_p,\,
				S_1,\, \overline{h}_1,\, \ldots,\, S_j \setminus \{u\},\, \funcrem{\overline{h}_j}{v}  ,\, \ldots,\, S_q,\, \overline{h}_q.
		\right)
		\end{equation*}
		Observe that since $\mathcal{P}'$ is a restriction of $\mathcal{P}$, it is a well-defined
		partial packing for the node $t'$.
	
		Note that $\funcrem{\partition}{v}$ is identical to
		$\partition$, except that its domain no longer includes $v$.
		Likewise, $\funcadd{\phi_j}{u}{\upvertex}$ differs from
		$\phi_j$ only at $u$, and for every $i \in
		[1,q]\setminus\{j\}$, the function $\phi_i$ remains unchanged
		in both $\mathcal{P}$ and $\mathcal{P}'$.
		Hence, \cref{item:tp_part,item:tp_phi_j,item:tp_Z_j} hold for $(\mathcal{P},K)$, if and
		only if
		they hold for $(\mathcal{P}',K')$ since $X_{t'} = \left( X_t \setminus \{v\}  \right) $.
		Therefore, $\mathcal{P}'$ is a witness for
		$A[t', K']$. We get
		\begin{equation}\label{eq:intr_arb_Zj_geq_2_ineq_1}
			A[t,K] = \numfullcopies{\mathcal{P}} = \numfullcopies{ \mathcal{P}'} \leq A[t', K'].
		\end{equation}
		This proves the claim.
	\end{claimproof}

	Now we prove the other direction of the inequality.

	\begin{claim}\label{claim:introduce_arb_lem_2_claim_3}
		It holds that $A[t,K] \geq A[t', K']$.
	\end{claim}

	\begin{claimproof}
		Let 
		\begin{equation*}
			\mathcal{P}' = \left(b_1, \ldots, b_p, \eta_1, \alpha_1, \ldots, \eta_q, \alpha_q\right) 
		\end{equation*}
		be a witness for $A[t',K']$ such that $A[t', K'] =
		\numfullcopies{ \mathcal{P}' } $. In particular, this
		implies that $\phi_i$ is the imprint of $\alpha_i$ on
		$X_{t'}$ for $i \in [1,q] \setminus \{j\}$, and
		$\funcadd{\phi_j}{u}{\upvertex}$ is the imprint of
		$\alpha_j$ on $X_{t'}$. Since
		$\funcadd{\phi_j}{u}{\upvertex}(u) = \upvertex$, by
		\cref{item:tp_phi_j} for $\left( \mathcal{P}', K'
		\right) $, we have that $u \not\in  \eta_j$. Moreover,
		by \cref{item:tp_phi_j}, we have that $Z_i =
		\phi_i^{-1}\left( X_{t'} \right) =
		\alpha_i^{-1}\left( X_{t'} \right) $ for $i \in [1,q]
		\setminus \{j\}$, and
		\begin{equation}\label{eq:Zj_minus_u}
			\left(Z_j \setminus \{u\} \right) = \left( \funcadd{\phi_j}{u}{\upvertex} \right)^{-1}\left( X_{t'} \right) = \alpha_j^{-1}\left( X_{t'} \right).
		\end{equation}

		Now consider the partial packing
		\begin{equation*}
			\mathcal{P} \coloneqq \left( b_1, \ldots, b_p, \eta_1, \alpha_1, \ldots, \eta_j \cup \{u\} , \funcadd{\alpha_j}{u}{v}, \ldots, \eta_q, \alpha_q \right).
		\end{equation*}
		The partial packing $\mathcal{P}$ is constructed from a
		partial packing, $\mathcal{P}'$, by modifying $\eta_j$
		and $\alpha_j$. Hence, in order to prove that $\mathcal{P}$
		is a valid partial packing and verify that
		$\mathcal{P}$ satisfies
		\cref{definition:part_packing_gen_graph}, it is enough
		to check whether the modified graph/function satisfies
		the properties.

		Following \cref{definition:part_packing_gen_graph}, we
		first show that $\zeta \coloneqq
		\funcadd{\alpha_j}{u}{v}$ is an injective homomorphism
		from $H[\eta_j \cup \{u\}]$ to $G_t$.
		We already know that $\zeta$ is an injective
		homomorphism when restricted to $\eta_j$, because
		$\zeta$ and $\alpha_j$ agree on $\eta_j$. Therefore,
		consider an edge $\{u, x\}$ in $H[\eta_j \cup \{u\}]$
		for some $x \in \eta_j \subseteq X_{t'}$. We have
		\begin{equation*}
			\zeta\left( u \right) = v = \phi_j(u)\\
		\end{equation*}
		by definition of $u$, and
		\begin{equation*}
			\zeta\left( x \right) = \alpha_j(x) = \funcadd{\phi_j}{u}{\upvertex}(x) = \phi_j(x)
		\end{equation*}
		where the second equality holds because
		$\funcadd{\phi_j}{u}{\upvertex}$ is the imprint of
		$\alpha_j$ on $X_{t'}$. Moreover, observe that $x \in
		\border{j}{\mathcal{P}'}$, which implies that
		$\alpha_j(x) \in X_{t'}$ and $x \in \left( Z_j
		\setminus \{u\}  \right) $ by \cref{eq:Zj_minus_u}.
		Now recall that $\phi_j$ is a homomorphism on $Z_j$.
		Since $u,x \in Z_j$, it holds that there is an edge
		between $\zeta\left( u \right)$ and $\zeta\left( x
		\right)$. Moreover, $\zeta$ is injective because
		$\alpha_j$ is injective and $x \not\in  X_{t'}$ implies
		$v \not\in \im\left( \alpha_j \right) $. Therefore,
		$\zeta$ is an injective homomorphism.

		Next, we need to prove that the images of the functions
		in $\mathcal{P}$ are disjoint. However this property
		already holds for $\mathcal{P}'$ and we furthermore
		have $v \not\in  V_{t'}$. Hence this property holds for
		$\mathcal{P}$ as well.
		Finally, observe that $\border{j}{\mathcal{P}}
		\subseteq \left( \border{j}{\mathcal{P}'} \cup \{u\}
		\right)$ and $\funcadd{\alpha_j}{u}{v}(u) = v \in X_t$.
		Hence all conditions in
		\cref{definition:part_packing_gen_graph} hold and
		therefore $\mathcal{P}$ is a valid partial packing for the node
		$t$.

		Next, we prove that $\mathcal{P}$ has type $K$, i.e. $\left(
		\mathcal{P}, K \right) $ satisfies
		\cref{item:tp_part,item:tp_phi_j,item:tp_Z_j}. Note that
		\cref{item:tp_part} holds for all $x \in X_{t'}$ since
		$\partition$ and $\funcrem{\partition}{v}$ agree on $X_{t'}$.
		Moreover, we also have $\partition(v) = j$ and $v \in \im\left(
		\funcadd{\alpha_j}{u}{v} \right)$. Hence \cref{item:tp_part}
		holds for $\left( \mathcal{P}, K \right) $.

		We already know that $\phi_i$ is the imprint of
		$\alpha_i$ on $X_{t'}$ for $i \in \left( [1,q]
		\setminus \{j\} \right) $. 
		Since $\im\left( \alpha_i \right) \cap V_t = \im\left( \alpha_i \right) \cap V_{t'}$
		for $i \in \left( [1,q] \setminus \{j\}  \right) $, it holds that $\phi_i$ is the
		imprint of $\alpha_i$ on $X_t$ as well.
		Now recall that $\funcadd{\phi_j}{u}{\upvertex}$ is the imprint of $\alpha_j$ on $X_{t'}$,
		which means for all $s \in \left( V(H) \setminus \{u\}  \right) $, we have
		\begin{equation}\label{eq:phi_j_eq_arb}
			\phi_j(s) = \funcadd{\phi_j}{u}{\upvertex}(s) =
			\begin{cases}
				\downvertex &\text{if } \alpha_j(s) \in \left( V_t \setminus X_t \right) \\
				\alpha_j(s) &\text{if } \alpha_j(s) \in X_t\\
				\upvertex &\text{if } s \not\in \dom{\alpha_j}.
			\end{cases}
		\end{equation}
		Since we can replace $\alpha_j$ with $
		\funcadd{\alpha_j}{u}{v} $ in \eqref{eq:phi_j_eq_arb},
		we get that \cref{item:tp_phi_j} holds for all $s \in
		\left( V(H) \setminus \{u\}  \right) $. We also
		have
		\begin{equation*}
			\phi_j(u) = v = \funcadd{\alpha_j}{u}{v}(u)
		\end{equation*}
		and therefore \cref{item:tp_phi_j} holds for $\left(
		\mathcal{P}, K \right)$.

		Finally, recall that $Z_i = \alpha_i^{-1}\left( X_{t'}
			\right) $ for $i \in \left( [1,q] \setminus
			\{j\} \right) $. Since $\im\left( \alpha_i
		\right)
		\subseteq X_{t'}$, we
		have that
		\begin{equation*}
			Z_i = \alpha_i^{-1}\left( X_{t'} \right)  = \alpha_i^{-1}\left( X_{t} \right).
		\end{equation*}
		We also have
		\begin{equation*}
			\left( \funcadd{\alpha_j}{u}{v} \right)^{-1}(X_t) = \biggl(\{u\} \cup \alpha_j^{-1}\left( X_{t'} \right)\biggr) = \biggl( \{u\} \cup \left( Z_j \setminus \{u\}  \right) \biggr) = Z_j
		\end{equation*}
		since $u \in Z_j$. Therefore,
		\cref{item:tp_Z_j} holds for $\left( \mathcal{P}, K \right)$,
		and $\mathcal{P}$ is a witness for $A[t,K]$, which implies that
		\begin{equation*}
			A[t', K'] = \numfullcopies{\mathcal{P}'} = \numfullcopies{\mathcal{P}} \leq A[t,K].
		\end{equation*}
		This proves the claim.
	\end{claimproof}
	The lemma follows from \cref{claim:introduce_arb_lem_2_claim_2,claim:introduce_arb_lem_2_claim_3}.
\end{proof}

Finally, we prove the final special case of \cref{lemma:introduce_arb_graph}
where $\partition(v) = j$ for $j \in [q]$ and $\abs{Z_j} = 1$. Intuitively,
given any partial packing $\mathcal{P}$ that has type $K$, one can construct a
new partial packing $\mathcal{P}'$ that has type $K'$ by removing the $j$'th
partial copy of $H$ from $\mathcal{P}$. Note that as in the previous cases, the
number of full copies in $\mathcal{P}'$ stays the same.

\begin{lemma}\label{lemma:introduce_arb_lem_3}
	Let $t \in \mathcal{T}$ be an introduce node with a single child $t'$
	such that $X_t = X_{t'} \cup \{v\}$.
	Assume $K \in \generaltype{t}$ is given by
	\begin{equation*}
		K = \left( \partition, Z_1, \phi_1, \ldots, Z_q, \phi_q\right)
	\end{equation*}
	with $\partition(v) = j$ for $j \in [q]$ and $\abs{Z_j} = 1$. Define
	\begin{equation*}
		K' \coloneqq \genupdintr(K) = \left(\funcrem{\partition}{v} ,\, Z_1, \phi_1, \ldots, Z_{j-1}, \phi_{j-1}, Z_{j+1}, \phi_{j+1}, \ldots, Z_q, \phi_q \right).
	\end{equation*}
	Then it holds that $A[t,K] = A[t', K']$.
\end{lemma}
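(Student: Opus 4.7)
The plan is to follow the three-step template of \cref{lemma:introduce_arb_lem_2}: first show that $K'$ is a valid type for $t'$, then establish $A[t,K] \leq A[t',K']$ by restricting witnesses, and finally the reverse inequality by extending them.

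Validity of $K'$ is immediate. Because $\partition(v) = j$ and $\abs{Z_j} = 1$, the vertex $v$ is the unique element of $\partition^{-1}(j)$, so $\funcrem{\partition}{v}$ empties the $j$-th class, and dropping the tuple $(Z_j, \phi_j)$ leaves every remaining clause of \cref{definition:type_gen_graph} intact.

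For $A[t,K] \leq A[t',K']$, I would take a witness $\mathcal{P} = \bigl(h_1, \ldots, h_p, S_1, \overline{h}_1, \ldots, S_q, \overline{h}_q\bigr)$ and define $\mathcal{P}'$ by deleting the $j$-th partial copy. The key structural observation is that at an introduce node every neighbour of $v$ in $G_t$ lies in $X_{t'}$; combined with $\overline{h}_j^{-1}(X_t) = \{u\}$, this forces $u$ to have no neighbour in $S_j \setminus \{u\}$ in $H$, since any such neighbour would have to be mapped by $\overline{h}_j$ into $X_t$, contradicting $\abs{Z_j} = 1$. The surviving $h_i$'s and $\overline{h}_i$'s therefore still take values in $V_{t'}$ and already constitute a valid partial packing of $G_{t'}$; the vertices $\overline{h}_j(S_j \setminus \{u\})$ simply become uncovered, which \cref{definition:part_packing_gen_graph} permits. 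Reading off the type of $\mathcal{P}'$ via \eqref{eq:part_packing_to_type_arb}--\eqref{eq:H_j_rest_arb} yields exactly $K'$, and $\numfullcopies{\mathcal{P}'} = \numfullcopies{\mathcal{P}}$.

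For the reverse inequality, I would start from a witness $\mathcal{P}'$ of $A[t',K']$ and reinsert a $j$-th partial copy whose sole $X_t$-vertex is $u$, mapped to $v$. The singleton homomorphism $\overline{h}_j \colon \{u\} \to \{v\}$ is trivially injective and edge-preserving on $H[\{u\}]$, its image is disjoint from every image in $\mathcal{P}'$ because $v \notin V_{t'}$, and the border condition \cref{item:h_bar_j}(c) is automatic since $\overline{h}_j(u) = v \in X_t$. If $\phi_j^{-1}(\downvertex)$ is non-empty in $K$, then the isolation argument above additionally implies that $H$ has no edge between $u$ and $\phi_j^{-1}(\downvertex)$, so the $\downvertex$-vertices can be realised by an independent sub-homomorphism into $V_{t'} \setminus X_{t'}$, chosen disjoint from the rest of $\mathcal{P}'$. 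Verifying that the reassembled packing has type $K$ is then bookkeeping identical to \cref{claim:introduce_arb_lem_2_claim_3}, and the full-copy count is preserved. I expect the main obstacle to be the careful handling of the $\phi_j^{-1}(\downvertex) \neq \emptyset$ sub-case in the reverse direction, where image-disjointness between the reintroduced $\downvertex$-part and the pre-existing packing $\mathcal{P}'$ must be verified using only the fact that $K$ is realisable at $t$; once this is established, $\numfullcopies{\mathcal{P}} = \numfullcopies{\mathcal{P}'}$ finishes the proof.
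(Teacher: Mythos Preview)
Your overall plan matches the paper's: drop the $j$-th partial copy for $A[t,K]\le A[t',K']$, reinsert it for the reverse. Your forward direction is correct and in fact more careful than the paper's---the observation that $u$ has no $H$-neighbour inside $S_j\setminus\{u\}$ is right, and it explains why removing $(S_j,\overline{h}_j)$ yields a valid witness for $K'$ even when $S_j$ is strictly larger than $\{u\}$.

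The real problem is exactly where you flag it: the reverse direction when $\phi_j^{-1}(\downvertex)\neq\emptyset$. Your hope that the $\downvertex$-vertices ``can be realised by an independent sub-homomorphism into $V_{t'}\setminus X_{t'}$, chosen disjoint from the rest of $\mathcal{P}'$'' is not justifiable, and the hedge ``using only the fact that $K$ is realisable at $t$'' does not help---a \emph{maximum} witness $\mathcal{P}'$ for $A[t',K']$ may already occupy every possible image of $H[\phi_j^{-1}(\downvertex)]$. Concretely, let $H$ be an isolated vertex $u$ together with an edge $\{w_1,w_2\}$; let $X_{t'}=\emptyset$ and $V_{t'}=\{a,b,c\}$ with the single edge $\{b,c\}$; let $t$ introduce an isolated vertex $v$. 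For $K$ with $Z_1=\{u\}$, $\phi_1(u)=v$, $\phi_1(w_1)=\phi_1(w_2)=\downvertex$, one has $A[t,K]=0$ (the edge $\{b,c\}$ must host the $\downvertex$-part, leaving only $\{a\}$ for full copies), whereas $A[t',K']=1$ (the full copy $u\mapsto a$, $w_1\mapsto b$, $w_2\mapsto c$). So the obstacle you anticipated cannot be overcome, and the lemma as stated fails in this sub-case. The paper's own proof has the same gap: it reinserts only the singleton $S_j=\{u\}$ with $\overline{h}_j(u)=v$, but then every $w\neq u$ receives imprint $\upvertex$ rather than $\downvertex$, so the constructed packing does not have type $K$. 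A corrected statement would restrict to $\phi_j^{-1}(\downvertex)=\emptyset$, or equivalently treat such table entries as $-\infty$.
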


\begin{proof}
	First , observe that since $K'$ is obtained from $K$ by removing $Z_j$ and $\phi_j$,
	$K'$ is a valid type for the node $t'$.
	In the following, we will show
		that $A[t,K] \leq A[t', K']$ and $A[t,K] \geq A[t', K']$,
		starting with the former.
	Let $\mathcal{P} = \left( h_1,
		\ldots, h_p,
		S_1, \overline{h}_1, \ldots, S_q, \overline{h}_q
	\right)$ be a witness for $A[t,K]$ such that $A[t,K] = \numfullcopies{ \mathcal{P} }$.
	We define
	\begin{equation*}
		\mathcal{P}' \coloneqq \left( h_1,
		\ldots, h_p,
		S_1, \overline{h}_1, \ldots, S_{j-1}, \overline{h}_{j-1}, S_{j+1}, \overline{h}_{j+1}, \ldots, S_q, \overline{h}_q
		\right)
	\end{equation*}
	Since $\mathcal{P}'$ is obtained from $\mathcal{P}$ by removing
	$S_j$ and $\overline{h}_j$, $\mathcal{P}'$ is a valid partial packing
	for the node $t'$. Therefore,
	\begin{equation}\label{eq:introduce_arb_lem_3_ineq_1}
		A[t,K] = \numfullcopies{ \mathcal{P} } = \numfullcopies{\mathcal{P}'} \leq A[t', K'].
	\end{equation}

	Now let $\mathcal{P}' = \left( h_1,
		\ldots, h_p,
		S_1, \overline{h}_1, \ldots, S_{j-1}, \overline{h}_{j-1}, S_{j+1}, \overline{h}_{j+1}, \ldots, S_q, \overline{h}_q
	\right)$ be a witness for $A[t', K']$ such that $A[t', K'] =
	\numfullcopies{\mathcal{P}'}$. Let us define $S_j = \{v\}$ and
	$\overline{h}_j$ be the function with the domain $\{\phi_j^{-1}(v)\}$
	that maps $\phi_j^{-1}(v)$ to $v$.
	Then
	\begin{equation}\label{eq:introduce_arb_lem_3_ineq_2}
		\mathcal{P} = \left( h_1,
		\ldots, h_p,
		S_1, \overline{h}_1, \ldots, S_{j-1}, \overline{h}_{j-1},  S_{j}, \overline{h}_{j}, S_{j+1}, \overline{h}_{j+1}, \ldots, S_q, \overline{h}_q
	\right)
	\end{equation}
	is a valid partial packing for the node $t$,
	since \cref{item:h_bar_j} is satisfied trivially for $j$ and
	$v$ is not
	contained in any of the functions $\{\overline{h}_i\}_{i \in \bigl([q] \setminus \{j\} \bigr)}$.
	Therefore we get
	\begin{equation*}
		A[t', K'] = \numfullcopies{\mathcal{P}'} = \numfullcopies{\mathcal{P}} \leq A[t,K].
	\end{equation*}
	The lemma holds by \cref{eq:introduce_arb_lem_3_ineq_1,eq:introduce_arb_lem_3_ineq_2}.
\end{proof}

The proof of \cref{lemma:introduce_arb_graph} follows from \cref{lemma:introduce_arb_lem_1}, \cref{lemma:introduce_arb_lem_2} and \cref{lemma:introduce_arb_lem_3}.

\subsubsection*{Forget Node}
Let $t \in \mathcal{T}$ be a forget node with a single child $t'$ such that
$X_{t'} = X_t \cup \{v\}$. For each $K = \left( \partition, Z_1, \phi_1,
\ldots, Z_q, \phi_q \right) \in \generaltype{t}$, we construct three disjoint sets of types for the node $t'$,
corresponding to different ways $v \in X_{t'}$ can appear in these types.

\begin{enumerate}
	\item  We first consider the types for the node $t'$ in which $v$
		appears in the image of a function that maps every other
		element to $\downvertex$.
		Define
		\begin{equation*}
			\genupdforgetone\left( K \right) \coloneqq \Bigl\{ \left( \funcadd{\partition}{v}{q+1} , Z_1, \phi_1, \ldots, Z_q, \phi_q, \{u\}, g_{u,v} \right) \Big|\, u \in V(H) \Bigr\} 
		\end{equation*}
		where $g_{u,v}(u) = v$, and $g_{u,v}(x) = \downvertex$ for $x \in V(H) \setminus \{u\}$.
		We also set
		\begin{equation*}
			\Gamma_1 \coloneqq \max_{K' \in \genupdforgetone\left( K \right)} A[t', K'] + 1.
		\end{equation*}
	\item Next, we consider the types $K'$ which is constructed from $K$ by modifying $Z_j$ and $\phi_j$
		for some $j \in [q]$. In particular, for some $j \in [q]$, we pick a vertex
		$u \in V(H)$ such that $\phi_j(u) =\, \downvertex$, which will be mapped to $v$.
		However, the neighbors of $u$ in $H[Z_j]$ should be mapped to neighbours of $v$, in order
		for the new function to be a homomorphism. In other words, we need
		\begin{equation*}
			N_H(u) \subseteq \phi_j^{-1}\Bigl( N_G(v) \Bigr).
		\end{equation*}
		Therefore we define
		\begin{align*}
			\genupdforgettwo(K) \coloneqq \Bigl\{\Big( \funcadd{\partition}{v}{j},\, &Z_1, \,\phi_1, \,\ldots,\, Z_j \cup \{u\} , \,\funcadd{\phi_j}{u}{v}, \,\ldots, Z_q, \,\phi_q \Big) \\
														     & \Bigm| 1 \leq j \leq q,\, u \in \phi_j^{-1}(\downvertex),\, \left( N_H(u) \cap Z_j \right)  \subseteq \phi_j^{-1}\Bigl( N_G(v) \Bigr)\Bigr\}
		\end{align*}
		and
		\begin{equation*}
			\Gamma_2 \coloneqq \begin{cases}
				\max_{K' \in \genupdforgettwo\left( K \right)} A[t', K'] &\text{if } \genupdforgettwo\left( K \right) \neq \emptyset\\
				0 &\text{otherwise}.
			\end{cases}
		\end{equation*}
	\item Finally, we consider the type derived from $K$ in which $v$ is not
		included in any function's image.
		Hence, we define
		\begin{equation*}
			\Gamma_3 \coloneqq A\Bigl[t', \left( \funcadd{\partition}{v}{0}, Z_1, \phi_1, \ldots, Z_q, \phi_q\right)\Bigr].
		\end{equation*}
\end{enumerate}

Next, we prove that one can compute $A[t,K]$ by taking the maximum of $\Gamma_1, \Gamma_2$ and $\Gamma_3$.
\begin{lemma}\label{lemma:forget_arb_graph}
	Let $t \in \mathcal{T}$ be a forget node with a single child $t'$ such that $X_{t'} = X_t \cup \{v\}$. Let $K = \left( \partition, Z_1, \phi_1, \ldots, Z_q, \phi_q \right)  \in \generaltype{t}$ and $\Gamma_1, \Gamma_2$ and $\Gamma_3$ be defined as above.Then, it holds that
\begin{equation*}
	A[t,K] = \max \Big\{ \Gamma_1, \Gamma_2, \Gamma_3 \Big\}.
\end{equation*}	
\end{lemma}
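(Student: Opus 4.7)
The plan is to prove the equality by establishing the two inequalities separately, mirroring the strategy of \cref{lemma:introduce_arb_graph} but with a case analysis on how the forgotten vertex $v$ appears in a witness. Recall that when moving from $t'$ to $t$, the vertex $v$ leaves the bag, so $v \in V_t \setminus X_t$. In any $\partialpacking{H}$ $\mathcal{P} \in \partialpackings{t}{K}$, exactly one of three possibilities occurs for $v$: either (A) $v$ belongs to the image of some full copy $h_i$, or (B) $v$ belongs to the image of some partial copy $\overline{h}_j$, or (C) $v$ is not covered by any function in $\mathcal{P}$. These three possibilities correspond to $\Gamma_1$, $\Gamma_2$ and $\Gamma_3$ respectively.

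\textbf{Direction $\leq$.} Let $\mathcal{P} = \bigl( h_1, \ldots, h_p, S_1, \overline{h}_1, \ldots, S_q, \overline{h}_q \bigr)$ be a witness for $A[t,K]$, and split by cases. In case (A), say $v = h_i(u)$ for some $u \in V(H)$. Then I reinterpret $h_i$ as a partial copy $(V(H), h_i)$ and observe that, viewed at $t'$, the vertex $v$ now lies in $X_{t'}$ while all other vertices of $\im(h_i)$ still lie in $V_{t'}\setminus X_{t'}$. This yields a partial packing $\mathcal{P}'$ at $t'$ with $p-1$ full copies plus the new partial copy $(V(H), h_i)$; its type is obtained from $K$ by appending $(Z_{q+1},\phi_{q+1}) = (\{u\}, g_{u,v})$ and changing $\partition(v)$ to $q+1$, which is exactly an element of $\genupdforgetone(K)$. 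Hence $\numfullcopies{\mathcal{P}} = \numfullcopies{\mathcal{P}'} + 1 \leq \Gamma_1$. In case (B), let $u = \overline{h}_j^{-1}(v)$. Since $\overline{h}_j$ is an injective homomorphism from $H[S_j]$ to $G_t$, every $y \in N_H(u) \cap S_j$ satisfies $\overline{h}_j(y) \in N_G(v)$, and by $\border{j}{\mathcal{P}} \subseteq \overline{h}_j^{-1}(X_t)$ together with $v \in X_{t'}$, such neighbours landing in $Z_j$ correspond exactly to $y \in \phi_j^{-1}(N_G(v))$. Thus $\mathcal{P}$, now viewed at $t'$, has type $K' \in \genupdforgettwo(K)$ built from $K$ by adding $u$ to $Z_j$, extending $\phi_j$ with $\phi_j(u) = v$, and setting $\partition(v) = j$; this gives $\numfullcopies{\mathcal{P}} \leq \Gamma_2$. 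In case (C), $\mathcal{P}$ is already a valid partial packing at $t'$ of the type appearing in the definition of $\Gamma_3$, so $\numfullcopies{\mathcal{P}} \leq \Gamma_3$.

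\textbf{Direction $\geq$.} Conversely, for each of the three quantities I produce a witness for $A[t,K]$. If $\Gamma_1$ is attained by a witness $\mathcal{P}'$ for some $K' = (\funcadd{\partition}{v}{q+1},Z_1,\phi_1,\ldots,Z_q,\phi_q,\{u\},g_{u,v}) \in \genupdforgetone(K)$, then the partial copy $(S_{q+1},\overline{h}_{q+1})$ of $\mathcal{P}'$ satisfies $S_{q+1} = V(H)$ (since $g_{u,v}$ never equals $\upvertex$), and its only bag-image vertex is $v$; after forgetting $v$, this partial copy has an empty intersection with $X_t$ and hence becomes an injective homomorphism $h_{p+1}$ from $H$ into $G_t \setminus X_t$, i.e. a new full copy. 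The remaining structure is unchanged, so the resulting tuple is in $\partialpackings{t}{K}$ with $\numfullcopies{\mathcal{P}'} + 1$ full copies, yielding $A[t,K] \geq \Gamma_1$. If $\Gamma_2$ is attained by a witness $\mathcal{P}'$ for some $K' \in \genupdforgettwo(K)$, then I simply forget $v$ from the bag: the partial copy $(S_j,\overline{h}_j)$ still contains a vertex mapped into $X_t$ (by \cref{item:Z_i_nonempty} applied to the unchanged $Z_i$ for $i \neq j$, or because $Z_j \setminus \{u\}$ is nonempty iff $\abs{Z_j} \geq 2$; otherwise one must verify $v$ is no longer required to sit in $X_t$, and the partial copy becomes a full copy, but this case is ruled out by the condition $\overline{h}_j^{-1}(X_t) \neq \emptyset$ evaluated at $t$, which reduces to $Z_j \setminus \{u\} \neq \emptyset$ --- a constraint one verifies is already implicit in the chosen $K'$). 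The resulting packing has type $K$ and the same number of full copies. Finally, $\Gamma_3$ is handled identically: any witness for the $t'$-type in its definition is immediately a witness for $A[t,K]$ after forgetting $v$.

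\textbf{Expected obstacle.} The main bookkeeping difficulty is the $\Gamma_2$ direction, where after forgetting $v$ the partial copy indexed by $j$ might lose its only bag anchor (i.e. if $Z_j \cup \{u\} = \{u\}$ in $K'$, which is actually impossible because $Z_j$ was nonempty in $K$). One must check in each sub-case that \cref{definition:part_packing_gen_graph} items \cref{item:h_i_inj}--\cref{item:im_disj} continue to hold and that the resulting tuple still realises $K$ via \cref{item:tp_part,item:tp_phi_j,item:tp_Z_j}. These are routine verifications analogous to those carried out for the introduce node in \cref{claim:introduce_arb_lem_2_claim_2,claim:introduce_arb_lem_2_claim_3}, and the compatibility of $\genupdforgettwo$'s neighbourhood condition with $\overline{h}_j$ being a homomorphism is the only non-trivial check.
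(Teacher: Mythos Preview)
Your proposal is correct and follows essentially the same approach as the paper: a case analysis on how $v$ occurs in a maximising witness for the $\leq$ direction, and reconstructing a witness for $A[t,K]$ from an optimal witness for each $\Gamma_i$ in the $\geq$ direction. The paper organises this into two claims (one per inequality) and additionally verifies up front that every type produced by $\genupdforgetone$, $\genupdforgettwo$, and the $\Gamma_3$ type is a valid element of $\generaltype{t'}$, which you leave implicit.

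One small remark: your hedging in the $\Gamma_2$ case of the $\geq$ direction is unnecessarily muddled. In $K' \in \genupdforgettwo(K)$ the $j$-th set is $Z_j \cup \{u\}$ with $u \in \phi_j^{-1}(\downvertex)$, so $u \notin Z_j$; after forgetting $v$ the $j$-th set reverts to $Z_j$, which is nonempty because $K \in \generaltype{t}$. There is no risk of the partial copy losing its bag anchor, and your ``Expected obstacle'' paragraph already identifies this correctly --- so the parenthetical worry in the main argument can simply be deleted.
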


\begin{proof}
	First, we show that $\Gamma_1, \Gamma_2$ and $\Gamma_3$ are well-defined.
	\begin{claim}
		It holds that each
		\begin{equation*}
			K' \in \biggl( \genupdforgetone\left( K \right) \cup \genupdforgettwo\left( K \right) \cup \Bigl\{ \bigl( \funcadd{\partition}{v}{0}, Z_1, \phi_1, \ldots, Z_q, \phi_q\bigr)\Bigr\} \biggr)
		\end{equation*}
		is a valid type for the node $t'$.
	\end{claim}

	\begin{claimproof}
		Since $K$ is a valid type for the node $t$, every type
		$K' \in \genupdforgetone\left( K \right)$ is also a valid type
		for $t'$, because the function $g_{u,v}$ is trivially an
		injective homomorphism. Similarly, the type
		\begin{equation*}
			K' = \left(
			\funcadd{\partition}{v}{0}, Z_1, \phi_1, \ldots, Z_q,
		\phi_q\right)
		\end{equation*}
		is also a valid type for the node $t'$, as it differs from $K$
		only in the first function.

		Therefore, let us consider a type $K' \in \genupdforgettwo\left( K \right)$.
		It is easy to see that $K'$ satisfies \cref{item:part}
		because $K'$ differs from $K$ only on $Z_j$ and $\phi_j$.
		Moreover, observe that $u \not\in Z_j$ since $\phi_j(u) = \,\downvertex\,$, and
		\begin{equation*}
			\abs{Z_j \cup \{u\} } = \Bigl(\abs{Z_j} + 1\Bigr) = \Bigl(\partition^{-1}(j) + 1\Bigr) = \Bigl(\funcadd{\partition}{v}{j}\Bigr)^{-1}(j),
		\end{equation*}
		hence $K'$ satisfies \cref{item:Z_i_nonempty} as well.
		Next, recall that $\phi_j$ is an injective homomorphism and for any neighbour $u' \in Z_j$ of $u$ in $H$,
		there exists an edge between $\phi_j(u)$ and $\phi_j(u')$ because
		$\left( N_H(u) \cap Z_j \right)  \subseteq \phi_j^{-1}\Bigl( N_G(v) \Bigr)$.
		Therefore, $\funcadd{\phi_j}{u}{v}$ is an injective homomorphism.
		Moreover, since $\left( \funcadd{\phi_j}{u}{v} \right)$ differs from
		$\phi_j$ only on $u$, there is no edge between two vertices
		that are mapped to $\upvertex$ and $\downvertex$.
		Therefore, $K'$ satisfies \cref{item:phi_i} in \cref{definition:type_gen_graph}
		and hence is a valid type for the node $t'$.
	\end{claimproof}
	
	In the following, we show that $A[t,K] \leq \max \Big\{ \Gamma_1,
	\Gamma_2, \Gamma_3 \Big\}$ and $A[t,K] \geq \max \Big\{ \Gamma_1,
	\Gamma_2, \Gamma_3 \Big\}$, starting with the former.

	\begin{claim}\label{claim:forget_arb_graph_claim_1}
		It holds that $A[t,K] \leq \max \Big\{ \Gamma_1, \Gamma_2, \Gamma_3 \Big\}$.
	\end{claim}

	\begin{claimproof}
		Let $\mathcal{P} = \left( h_1,
			\ldots, h_p,
			S_1, \overline{h}_1, \ldots, S_q, \overline{h}_q
		\right)$ be a witness for $A[t,K]$ such that $A[t,K] =
		\numfullcopies{ \mathcal{P} } $. We have three possible
		cases:
		\begin{enumerate}
			\item $\exists\, 1 \leq j \leq p$ such that $v \in \im\left( h_j \right) $,
			\item $\exists\, 1 \leq j \leq q$ such that $v \in \im\left( \overline{h}_j \right) $,
			\item $v \not\in \biggl(\Bigl(\bigcup_{i = 1}^{p} \im(h_i) \Bigr) \cup \Bigl(\bigcup_{i = 1}^{q} \im(\overline{h}_i) \Bigr)\biggr)$.
		\end{enumerate}

		If the first case holds, then we define $u \coloneqq h_j^{-1}(v)$,
		\begin{equation*}
			\mathcal{P}' \coloneqq \left( h_1, \ldots, h_{j-1}, h_{j+1}, \ldots, h_p, S_1, \overline{h}_1, \ldots, S_q, \overline{h}_q, \{u\}, h_j \right)
		\end{equation*}
		and let
		\begin{equation*}
			K' \coloneqq \left( \funcadd{\partition}{v}{q+1} , Z_1, \phi_1, \ldots, Z_q, \phi_q, \{u\}, g_{u,v} \right) \in \genupdforgetone\left( K \right) .
		\end{equation*}
		Note that $v$ belongs to the image of the $(q+1)$-th imprint
		function $g_{u,v}$. Moreover, it is easy to check that
		$g_{u,v}$ is the imprint of $h_j$. Hence, $\mathcal{P}'$ has
		type $K'$ and
		\begin{equation*}
			A[t,K] = \numfullcopies{\mathcal{P}} = \Bigl( \numfullcopies{\mathcal{P}'} + 1 \Bigr)  \leq \Bigl(A[t', K']+1\Bigr) \leq \Bigl(\max_{K' \in \genupdforgetone\left( K \right)} A[t', K'] + 1\Bigr) = \Gamma_1.
		\end{equation*}

		If the second case holds, then we let $u \coloneqq
		\left( \overline{h}_j \right) ^{-1}(v)$. By \cref{eq:phi_j_packing_to_type_arb}, it holds that $\phi_j(u) =
		\,\downvertex$ since $v \not\in  X_t$. Therefore $u \not\in
		Z_j$.
		Note that for any neighbour $u' \in Z_j$ of $u$ in
		$H$,
		there exists an edge between $\phi_j(u') = \overline{h}_j(u')$
		and
		$\overline{h}_j(u) = v$ since $\overline{h}_j$ is an injective
		homomorphism from $H[S_j]$ to $G_t$.
		Therefore it holds that $N_H(u)
		\subseteq \phi_j^{-1}\Bigl( N_G(v) \Bigr)$.
		We let $\mathcal{P}' \coloneqq \mathcal{P}$ and
		\begin{equation*}
			K' \coloneqq \left( \funcadd{\partition}{v}{j}, Z_1, \phi_1, \ldots, Z_j \cup \{u\} , \funcadd{\phi_j}{u}{v}, \ldots, H_q, \phi_q \right) \in \genupdforgettwo\left( K \right).
		\end{equation*}

		Observe that $\mathcal{P}'$ is a valid partial packing for the
		node $t'$. Moreover, $\mathcal{P}'$ has type $K'$ since
		$\funcadd{\phi_j}{u}{v}$ is the imprint of $\overline{h}_j$ on
		$X_{t'}$, as $\overline{h}_j(u) = v \in X_t$. Hence, it holds
		that
		\begin{equation*}
			A[t,K] = \numfullcopies{\mathcal{P}} = \numfullcopies{\mathcal{P}'} \leq A[t', K'] \leq \Bigl(\max_{K' \in \genupdforgetone\left( K \right)} A[t', K']\Bigr) = \Gamma_2.
		\end{equation*}

		Finally, if the third condition holds, then $v$ doesn't belong
		to any partial or full copy of $H$ in $\mathcal{P}$. Hence we
		let
		$\mathcal{P}' \coloneqq \mathcal{P}$ and $K' \coloneqq \left(
			\funcadd{\partition}{v}{0}, Z_1, \phi_1, \ldots, Z_q,
		\phi_q\right)$. It is easy to verify that $\mathcal{P}'$ has
		type $K'$, and therefore
		\begin{equation*}
			A[t,K] = \numfullcopies{\mathcal{P}} = \numfullcopies{\mathcal{P}'} \leq A[t', K'] = \Gamma_3.
		\end{equation*}

		All in all, in any of the three cases, $A[t,K]$ is smaller than either
		$\Gamma_1, \Gamma_2$ or $\Gamma_3$, which implies that
		\begin{equation}\label{eq:forget_ineq_one_arb}
			A[t,K] \leq \max \{\Gamma_1, \Gamma_2, \Gamma_3\}.
		\end{equation}
		This proves the claim.
	\end{claimproof}
	
	Next, we prove the remaining direction of the inequality.

	\begin{claim}\label{claim:forget_arb_graph_claim_2}
		It holds that $A[t,K] \geq \max \Big\{ \Gamma_1, \Gamma_2, \Gamma_3 \Big\}$.
	\end{claim}

	\begin{claimproof}
		The claim is equivalent to proving that $A[t,K] \geq \Gamma_i$ for $i \in
		\{1,2,3\}$. To show that $A[t,K] \geq \Gamma_1$, let $u \in
		V(H)$ such that $\Gamma_1 = A[t',K^{1}] + 1$ where 
		\begin{equation*}
			K^{1} \coloneqq \left( \funcadd{\partition}{v}{q+1} , Z_1, \phi_1, \ldots, Z_q, \phi_q, \{u\}, g_{u,v} \right) \in \genupdforgetone(K).
		\end{equation*}
		Now let
		\begin{equation*}
			Q^{1} \coloneqq \left( b^{1}_1, \ldots, b^{1}_p, \eta^{1}_1, \alpha^{1}_1,\ldots, \eta^{1}_{q + 1}, \alpha^{1}_{q + 1} \right)
		\end{equation*}
		be a witness for $A[t', K^1]$ such that $A[t', K^{1}] =
		\numfullcopies{Q^{1}}$. Since $\im(g_{u,v}) \subseteq
		\{\downvertex, v\} $
		and $g_{u,v}$ is the imprint of $\alpha^{1}_{q + 1}$, it holds
		that $\eta^{1}_{q + 1} = V(H)$, because otherwise there would
		be $x \in V(H)$ such that $g_{u,v}(x) = \upvertex$. Hence
		$\alpha^{1}_{q + 1}$ is a injective homomorphism from $V(H)$ to
		$G_{t'}$ such that $\im(\alpha^{1}_{q + 1}) \cap X_{t'} =
		\{v\}$. We define the partial packing
		\begin{equation*}
			\mathcal{P}^{1} \coloneqq \left( b^{1}_1, \ldots, b^{1}_p, \alpha^{1}_{q+1}, \eta^{1}_1, \alpha^{1}_1,\ldots, \eta^{1}_{q}, \alpha^{1}_{q} \right).
		\end{equation*}
		Note that $\mathcal{P}^{1}$ has an extra full copy of $H$ compared to $\mathcal{Q}^{1}$
		which is $\alpha^{1}_{q+1}$.
		It is easy to verify that $\mathcal{P}^{1}$ is a valid partial packing for the node $t$.

		Moreover, since $\mathcal{P}^{1}$ and $K$ are restrictions of $\mathcal{Q}^{1}$ and $K^{1}$, respectively, \cref{item:tp_part,item:tp_phi_j,item:tp_Z_j}
		hold for $\left( \mathcal{P}^{1}, K \right)$ which are inherited from $\left( Q^{1}, K^{1} \right) $.
		Hence,
		\begin{equation}\label{eq:forget_arb_graph_claim_2_ineq_1}
			A[t,K] \geq \numfullcopies{ \mathcal{P}^{1} } = \left( \numfullcopies{ Q^{1} } + 1 \right) = \left( A[t', K^{1}] + 1  \right) = \Gamma_1.
		\end{equation}

		To show that $A[t,K] \geq \Gamma_2$, let $1 \leq j \leq q$ and
		$u \in \phi_j^{-1}(\downvertex)$ such that $N_H(u) \subseteq
		\phi_j^{-1}\Bigl( N_G(v) \Bigr)$ and  $\Gamma_2 = A[t', K^{2}]$
		where
		\begin{equation*}
			K^{2} \coloneqq \left( \funcadd{\partition}{v}{j}, Z_1, \phi_1, \ldots, Z_j \cup \{u\} , \funcadd{\phi_j}{u}{v}, \ldots, Z_q, \phi_q \right) \in \genupdforgettwo(K).
		\end{equation*}
		Let $Q^{2} = \left( b^{2}_1, \ldots, b^{2}_p, \eta^{2}_1,
		\alpha^{2}_1,\ldots, \eta^{2}_{q }, \alpha^{2}_{q } \right)$ be
		a witness for $A[t', K^{2}]$ such that $A[t', K^{2}] =
		\numfullcopies{Q^{2}}$. The function $\funcadd{\phi_j}{v}{j}$
		is an imprint of $\alpha^{2}_j$ which implies that
		$\alpha^{2}_j(u) = v$.
	
		Using this fact and $X_{t'} = X_t \cup \{v\}$, it is easy to verify that
		$\mathcal{P}^{2} \coloneqq \mathcal{Q}^{2}$ is a valid partial
		packing for the node $t$. Moreover, $K^{2}$ differs from $K$
		only on the $j$'th components, i.e., $Z_j \cup \{u\}$ instead of $Z_j$
		and $\funcadd{\phi_j}{u}{v}$ instead of $\phi_j$.
		Furthermore, $\phi_j$ is the imprint of
		$\alpha^{2}_j$ on $X_t$ which follows from the fact that $\funcadd{\phi_j}{u}{v}$
		is the imprint of $\alpha^{2}_j$ on $X_{t'}$.
		Hence $\mathcal{P}^{2}$ is a
		witness for $A[t,K]$ and we have
		\begin{equation}\label{eq:forget_arb_graph_claim_2_ineq_2}
			A[t,K]  \geq \numfullcopies{ \mathcal{P}^{2} } = \numfullcopies{ Q^{2} } = 
			A[t', K^{2}] = \Gamma_2.
		\end{equation}

		Finally, to prove that $A[t,K] \geq \Gamma_3$, let $K^{3}
		\coloneqq \left( \funcadd{\partition}{v}{0}, Z_1, \phi_1,
		\ldots, Z_q, \phi_q\right)$ and $Q^{3}$ be a witness for $A[t',
		K^{3}]$ such that $A[t', K^{3}] = \numfullcopies{ Q^{3} } $.
		Since $v$ is assigned 0 by $\funcadd{\partition}{v}{0}$, no
		full or partial copy contains $v$. Therefore, $\mathcal{P}^{3}
		\coloneqq Q^{3}$ is also a witness for $A[t,K]$ since $v
		\not\in  X_t$. Hence, we have that
		\begin{equation}\label{eq:forget_arb_graph_claim_2_ineq_3}
			A[t,K] \geq \numfullcopies{ \mathcal{P}^{3} } = \numfullcopies{ Q^{3} }  = A[t', K^{3}] = \Gamma_3.
		\end{equation}

		All in all, it holds that $A[t,K] \geq \max \{\Gamma_1
		\Gamma_2, \Gamma_3\}$ by \cref{eq:forget_arb_graph_claim_2_ineq_1}, \cref{eq:forget_arb_graph_claim_2_ineq_2} and \cref{eq:forget_arb_graph_claim_2_ineq_3}.
		This proves the claim.
	\end{claimproof}
	
	The lemma follows from \cref{claim:forget_arb_graph_claim_1,claim:forget_arb_graph_claim_2}.
\end{proof}

\subsubsection*{Join Node}
Let $t \in \mathcal{T}$ be a join node with two children $t_1, t_2 \in \mathcal{T}$ such that $X_t = X_{t_1} = X_{t_2}$.
Recall that by the properties of the tree decomposition, there exists no edges in $G$ between a vertex in $V_{t_1} \setminus X_{t}$ and a vertex in $V_{t_2} \setminus X_t$.
Next, we define a notion of compatible types, which formalizes the relationship between types
for the child nodes $t_1$ and $t_2$.
The following definition specifies the conditions under which two types are considered compatible.
\begin{definition}[Compatible Types]\label{definition:compatible_types}
Let 
\begin{align*}
	K^1 &= \left( \partition^{1}, Z^{1}_1, \phi^{1}_1, \ldots, Z^{1}_q, \phi^{1}_q \right) \in \generaltype{t_1} \quad \text{and} \\
	K^2 &= \left( \partition^{2}, Z^{2}_1, \phi^{2}_1, \ldots, Z^{2}_q, \phi^{2}_q \right) \in \generaltype{t_2}.
\end{align*}

The types $K^{1}$ and $K^{2}$ are compatible if
\begin{enumerate}
	\item $\partition^{1} = \partition^{2}$,
	\item $Z^{1}_i = Z^{2}_i$ for $1 \leq i \leq q$,
	\item The functions $\phi^{1}_i$ and $\phi^{2}_i$ agree on $Z^{1}_i = Z^{2}_i$, i.e. $\funcrest{\phi^{1}_i}{Z^{1}_i} = \funcrest{\phi^{2}_i}{Z^{2}_i}$,
	\item $\{\phi^{1}_i(u), \phi^{2}_i(u)\} \neq \{\downvertex\}$ for all $u \in V(H)$.
\end{enumerate}

For such compatible $K^1$ and $K^2$, we define
\begin{equation}\label{equation:K1_K2_oplus}
	K^1 \oplus K^2 \coloneqq \left( \partition, Z_1, \phi^{1}_1 \oplus \phi^{2}_1, \ldots, Z_q, \phi^{1}_{q} \oplus \phi^{2}_{q} \right),
\end{equation}
where $Z_i \coloneqq Z^{1}_i = Z^{2}_i$ for $i \in [q]$, and $\partition = \partition^{1} = \partition^{2}$.
For each $i \in [q]$, the function $\phi^{1}_i \oplus \phi^{2}_i$ is defined on $u \in V(H)$ as
\begin{equation}\label{eq:phi_oplus}
	\left( \phi^{1}_i \oplus \phi^{2}_i \right)(u) \coloneqq \begin{cases}
		\upvertex &\text{if } \phi^{1}_i(u) = \phi^{2}_i(u) = \upvertex  \\
			 x &\text{if } \phi^{1}_i(u) = \phi^{2}_i(u) = x\\
			 \downvertex &\text{if } \{ \phi^{1}_i(u), \phi^{2}_i(u) \}  = \{\upvertex, \downvertex\}.
	\end{cases}
\end{equation}
\end{definition}
Next, we show that $K^{1} \oplus K^{2}$ is a type for the node $t$.
\begin{lemma}
	Let $t \in \mathcal{T}$ be a join node with two children $t_1, t_2 \in \mathcal{T}$ such that $X_t = X_{t_1} = X_{t_2}$.
	Moreover, let
	\begin{align*}
		K^1 &= \left( \partition^{1}, Z^{1}_1, \phi^{1}_1, \ldots, Z^{1}_q, \phi^{1}_q \right) \in \generaltype{t_1}\\
		K^2 &= \left( \partition^{2}, Z^{2}_1, \phi^{2}_1, \ldots, Z^{2}_q, \phi^{2}_q \right) \in \generaltype{t_2}
	\end{align*}
	be two compatible types.
	Then, it holds that
	\begin{equation*}
		\left( K^{1} \oplus K^{2} \right) \in \generaltype{t}.
	\end{equation*}
\end{lemma}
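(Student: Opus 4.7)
The plan is to verify directly that $K^1\oplus K^2$, as defined in \cref{equation:K1_K2_oplus,eq:phi_oplus}, satisfies each of the three structural conditions \cref{item:part,item:Z_i_nonempty,item:phi_i} of \cref{definition:type_gen_graph}. Conditions \cref{item:part,item:Z_i_nonempty} are essentially free from compatibility: since $\partition = \partition^1 = \partition^2$ is a function $X_t\to [0,q]$ (using $X_t = X_{t_1} = X_{t_2}$), \cref{item:part} holds, and for each $i\in[q]$ the set $Z_i = Z^1_i = Z^2_i$ is non-empty (as $K^1\in\generaltype{t_1}$) with $|Z_i| = |Z^1_i| = |(\partition^1)^{-1}(i)| = |\partition^{-1}(i)| = |X_t(i)|$.

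The real work is \cref{item:phi_i}. First, I would check that the restriction of $\phi^1_i\oplus\phi^2_i$ to $Z_i$ is an injective homomorphism from $H[Z_i]$ into $G[X_t(i)]$. By compatibility condition 3 the maps $\phi^1_i$ and $\phi^2_i$ agree on $Z_i$, and by \cref{item:phi_i} for $K^1$ the common value on every $u\in Z_i$ lies in $X_t(i)$, so by the second branch of \cref{eq:phi_oplus} we get $(\phi^1_i\oplus\phi^2_i)|_{Z_i} = \phi^1_i|_{Z_i}$, which is already an injective homomorphism into $G[X_t(i)]$ by validity of $K^1$.

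The second half of \cref{item:phi_i} is the only substantive step, and I would argue it by contradiction. Suppose there is an edge $\{u_1,u_2\}\in E(H)$ with $(\phi^1_i\oplus\phi^2_i)(u_1)=\upvertex$ and $(\phi^1_i\oplus\phi^2_i)(u_2)=\downvertex$. Unpacking \cref{eq:phi_oplus}, the first equality forces $\phi^1_i(u_1)=\phi^2_i(u_1)=\upvertex$, while the second forces $\{\phi^1_i(u_2),\phi^2_i(u_2)\}=\{\upvertex,\downvertex\}$. If $\phi^1_i(u_2)=\downvertex$, then in $\phi^1_i$ the edge $\{u_1,u_2\}$ connects an $\upvertex$-vertex to a $\downvertex$-vertex, contradicting \cref{item:phi_i} for $K^1$; otherwise $\phi^2_i(u_2)=\downvertex$ and we get the analogous contradiction for $K^2$.

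The main (and mild) obstacle is keeping the case analysis on \cref{eq:phi_oplus} clean, in particular recognising that an $\upvertex$ output of $\oplus$ can only arise from two $\upvertex$'s, while a $\downvertex$ output always involves at least one $\downvertex$ on one of the two sides; once this is in place the contradiction is immediate. All remaining verifications are bookkeeping, so this suffices to conclude $K^1\oplus K^2\in\generaltype{t}$.
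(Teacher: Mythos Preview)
Your proposal is correct and follows essentially the same route as the paper's proof: both verify \cref{item:part,item:Z_i_nonempty} directly from compatibility, observe that $(\phi^1_i\oplus\phi^2_i)|_{Z_i}=\phi^1_i|_{Z_i}$, and handle the separation condition in \cref{item:phi_i} by the same contradiction argument (an $\upvertex$ at $u_1$ forces both sides to be $\upvertex$, a $\downvertex$ at $u_2$ forces one side to be $\downvertex$, yielding a forbidden $\upvertex$--$\downvertex$ edge in one of $K^1,K^2$).
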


\begin{proof}
	Let
	\begin{equation*}
		K^1 \oplus K^2 = \left( \partition, Z_1, \phi^{1}_1 \oplus \phi^{2}_1, \ldots, Z_q, \phi^{1}_{q} \oplus \phi^{2}_{q} \right)
	\end{equation*}
	where $Z_i \coloneqq Z^{1}_i = Z^{2}_i$ and $\partition = \partition^{1} = \partition^{2}$.
	Next, we will show that $K^{1} \oplus K^{2}$ satisfies the conditions in \cref{definition:type_gen_graph}.
	First, by definition, $\partition = \partition^{1} = \partition^{2}$
	is a function from $X_{t_1} = X_{t_2} = X_t$ to $[0,q]$.
	Moreover, we have that
	$Z_i = Z^{1}_i = Z^{2}_i$ is a non-empty subset of $H$ where
	\begin{equation*}
		\abs{Z_i} = \abs{Z^{1}_i} = \left( \partition^{1} \right)^{-1}(i) = \partition^{-1}(i).
	\end{equation*}
	Since $K^{1}$ and $K^{2}$ are compatible, for $i \in [q]$ we have that
	\begin{equation}\label{eq:phi_1_2_rest_eq}
		\funcrest{\phi^{1}_i}{Z^{1}_i} = \funcrest{\phi^{2}_i}{Z^{2}_i}.
	\end{equation}
	Then, by \cref{definition:type_gen_graph}, $\funcrest{\phi^{1}_i}{Z^{1}_i}$
	(and therefore $\funcrest{\phi^{2}_i}{Z^{2}_i}$) is an injective homomorphism
	from $Z^{1}_i = Z^{2}_i$ to $G[X_{t_1}(i)] = G[X_{t_2}(i)] = G[X_{t}(i)]$.

	Now let $\zeta_i$ denote the function $\phi^{1}_i \oplus \phi^{2}_i$ and let
	$u \in Z_i = Z^{1}_i = Z^{2}_i$. We have that $\phi^{1}_i(u) = \phi^{2}_i(u) \in X_t$
	by \cref{eq:phi_1_2_rest_eq}, which implies that $\zeta_i(u) = \phi^{1}_i(u) = \phi^{2}_i(u)$
	by the definition of $\phi^{1}_i \oplus \phi^{2}_i$. Therefore, 
	$\funcrest{\zeta_i}{Z_i} = \funcrest{\phi^{1}_i}{Z_i} = \funcrest{\phi^{2}_i}{Z_i}$,
	which further implies that $\funcrest{\zeta_i}{Z_i}$ is an injective
	homomorphism from $Z_i$ to $G[X_t]$. Suppose now for a contradiction
	that there exists $u_1 \in \zeta_i^{-1}(\upvertex)$, $u_2 \in \zeta_i^{-1}(\downvertex)$
	and $\{u_1,u_2\} \in E(H)$. We have that $\phi^{1}_i(u_1) = \phi^{2}_i(u_1) = \upvertex$
	and $\{\phi^{1}_i(u_2), \phi^{2}_i(u_2)\} = \{\upvertex, \downvertex\}$.
	This implies that there exists $\ell \in \{1,2\}$ such that $\phi^{\ell}_i(u_1) = \upvertex$
	and $\phi^{\ell}_i(u_2) = \downvertex$, which contradicts the third condition in
	\cref{definition:type_gen_graph}.

	Therefore, $K^{1} \oplus K^{2}$ satisfies the conditions in \cref{definition:type_gen_graph}
	and it holds that $K^{1} \oplus K^{2} \in \generaltype{t}$.
\end{proof}

Next, we demonstrate how the value of $A[t,K]$ can be determined recursively
from the children's values based on compatible types.
Given a join node $t \in \mathcal{T}$ with two children $t_1,t_2 \in \mathcal{T}$,
let us define
\begin{equation*}
	\Omega(t,K) \coloneqq \max_{\substack{K^1 \in \generaltype{t_1}, K^2 \in \generaltype{t_2} \\ \text{ s.t. } K = K^1 \oplus K^2}} A[t_1, K^1] + A[t_2, K^2].
\end{equation*}

\begin{lemma}\label{lemma:join_arb_graph}
	Let $t \in \mathcal{T}$ be a join node with two children $t_1, t_2 \in \mathcal{T}$ such that $X_t = X_{t_1} = X_{t_2}$.
	For each $K \in \generaltype{t}$, we have that
	\begin{equation*}
		A[t,K] = \Omega(t,K).
	\end{equation*}
\end{lemma}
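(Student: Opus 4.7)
The plan is to prove the equality by the usual two inequalities, exploiting the fact that in a tree decomposition there is no edge in $G$ between a vertex of $V_{t_1} \setminus X_t$ and a vertex of $V_{t_2} \setminus X_t$. This separation is what lets us decompose a witness on the $t$ side into two pieces, and lets us stitch two witnesses on the children's side back together.

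For the direction $A[t,K] \leq \Omega(t,K)$, I would take a witness
\begin{equation*}
  \mathcal{P} = \left(h_1, \ldots, h_p, S_1, \overline{h}_1, \ldots, S_q, \overline{h}_q\right)
\end{equation*}
for $A[t,K]$ and split it in two. Each full copy $h_i$ has image in $G_t \setminus X_t$; since $H$ is connected and there is no edge across the separator $X_t$, its image lies entirely in $V_{t_1}\setminus X_t$ or entirely in $V_{t_2}\setminus X_t$, giving a partition of the full copies. For each partial copy, I would set $S_j^\ell \coloneqq \overline{h}_j^{-1}(V_{t_\ell})$ and $\overline{h}_j^\ell \coloneqq \overline{h}_j\restr_{S_j^\ell}$ for $\ell \in \{1,2\}$, noting $S_j^1 \cap S_j^2 = \overline{h}_j^{-1}(X_t)$. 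Using the same separator argument, no edge of $H[S_j]$ can cross between $S_j^1 \setminus S_j^2$ and $S_j^2\setminus S_j^1$, which implies the border condition is preserved and both restrictions are legal partial copies (and non-empty, since $\overline{h}_j^{-1}(X_t)\neq \emptyset$ sits in both). This yields partial packings $\mathcal{P}_1, \mathcal{P}_2$ of respective types $K^1, K^2$ with $\numfullcopies{\mathcal{P}} = \numfullcopies{\mathcal{P}_1} + \numfullcopies{\mathcal{P}_2}$, and a short case check on $u\in V(H)$ verifies that $K^1, K^2$ are compatible and that $K^1 \oplus K^2 = K$ according to \cref{eq:phi_oplus}.

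For the reverse direction $A[t,K] \geq \Omega(t,K)$, I would fix any compatible pair $(K^1, K^2)$ with $K^1 \oplus K^2 = K$, take witnesses $\mathcal{P}_1, \mathcal{P}_2$ for $A[t_1, K^1], A[t_2, K^2]$, and glue them index-wise. The full copies simply unite. For each $j$, the compatibility conditions force $S_j^1 \cap S_j^2 = Z_j^1 = Z_j^2$, on which $\overline{h}_j^1$ and $\overline{h}_j^2$ agree (their common value being $\phi_j^1|_{Z_j}=\phi_j^2|_{Z_j}$), so $\overline{h}_j := \overline{h}_j^1 \cup \overline{h}_j^2$ is a well-defined function on $S_j := S_j^1 \cup S_j^2$. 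I would then verify it is an injective homomorphism: injectivity follows because the images lie in $V_{t_1}\setminus X_t$, $V_{t_2}\setminus X_t$, and $X_t$ in a disjoint way off $Z_j$, while the homomorphism property for edges that cross the two sides is ruled out precisely by the ``$\upvertex$/$\downvertex$ separation'' clause in \cref{item:phi_i} of \cref{definition:type_gen_graph} applied to $\phi_j^1$ (or $\phi_j^2$). Disjointness of images across different indices $j$ follows from $\partition^1 = \partition^2 = \partition$ and from the disjointness inside each $\mathcal{P}_\ell$. A direct type computation then shows the resulting packing $\mathcal{P}$ has type $K$, giving $A[t,K] \geq \numfullcopies{\mathcal{P}_1} + \numfullcopies{\mathcal{P}_2}$, and maximizing over $(K^1, K^2)$ yields the claim.

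The main obstacle, as usual with join nodes in dynamic programs over tree decompositions, is the bookkeeping around partial copies that straddle $X_t$: one has to make sure the split/merge operations respect (i) injectivity, (ii) the border condition \cref{item:h_bar_j}.(c), and (iii) the absence of $\upvertex$--$\downvertex$ edges. All three are handled by the same underlying fact that $X_t$ is a separator and by the fourth compatibility condition in \cref{definition:compatible_types}, so the proof reduces to carefully enumerating the four cases for $u\in V(H)$ (in/out of $Z_j$, in/out of $S_j^\ell$) and matching them against the definition of $\phi_j^1 \oplus \phi_j^2$.
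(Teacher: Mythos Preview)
Your proposal is correct and follows essentially the same approach as the paper: both directions are handled by splitting (respectively merging) a witness across the separator $X_t$, with the same definitions $S_j^\ell = \overline{h}_j^{-1}(V_{t_\ell})$ and $\overline{h}_j = \overline{h}_j^1 \cup \overline{h}_j^2$, and the same use of the fourth compatibility condition to rule out the $\{\downarrow,\downarrow\}$ case and of \cref{item:phi_i} to exclude $\upvertex$--$\downvertex$ edges. The paper simply unpacks your outline into a sequence of explicit claims verifying validity, compatibility, and the identity $K = K^1 \oplus K^2$.
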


In the following, using two separate lemmas, we will show that $A[t,K] \leq
\Omega(t,K)$ and $A[t,K] \geq \Omega(t,K)$, starting with the former, and thereby prove
\cref{lemma:join_arb_graph}.

\begin{lemma}\label{lemma:join_arb_graph_ineq_1}
	Let $t \in \mathcal{T}$ be a join node and $K = \left( \partition, Z_1, \phi_1, \ldots, Z_q, \phi_q \right)  \in \generaltype{t}$. Then we have that
	\begin{equation*}
		A[t,K] \leq \Omega(t,K).
	\end{equation*}
\end{lemma}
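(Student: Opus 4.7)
The plan is to take any witness $\mathcal{P}$ for $A[t,K]$ and split it into two partial packings $\mathcal{P}^1, \mathcal{P}^2$ for $G_{t_1}, G_{t_2}$ whose types $K^1, K^2$ are compatible, satisfy $K^1 \oplus K^2 = K$, and whose full-copy counts add up to $\numfullcopies{\mathcal{P}}$. This yields
\[
    A[t,K] = \numfullcopies{\mathcal{P}} = \numfullcopies{\mathcal{P}^1} + \numfullcopies{\mathcal{P}^2}
    \leq A[t_1,K^1] + A[t_2,K^2] \leq \Omega(t,K).
\]

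Write $\mathcal{P} = (h_1,\ldots,h_p, S_1, \overline{h}_1,\ldots,S_q,\overline{h}_q)$. First I handle full copies: each $h_i$ has image in $G_t \setminus X_t$, and since $H$ is connected and $G$ contains no edges between $V_{t_1} \setminus X_t$ and $V_{t_2} \setminus X_t$, the image of $h_i$ lies entirely in $V_{t_\ell} \setminus X_t$ for a unique $\ell \in \{1,2\}$; I assign $h_i$ to $\mathcal{P}^\ell$ accordingly. For each $j \in [q]$ and $\ell \in \{1,2\}$, I define
\[
    S_j^\ell \coloneqq Z_j \cup \{\, u \in S_j \mid \overline{h}_j(u) \in V_{t_\ell} \setminus X_t \,\}, \qquad \overline{h}_j^\ell \coloneqq \funcrest{\overline{h}_j}{S_j^\ell},
\]
so that $S_j^1 \cap S_j^2 = Z_j$, $S_j^1 \cup S_j^2 = S_j$, and the $Z_j$ portion is shared between the two sides.

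Verifying that each $\mathcal{P}^\ell$ satisfies \cref{definition:part_packing_gen_graph} is largely routine: the injective homomorphism property of $\overline{h}_j^\ell$ descends from that of $\overline{h}_j$, each $S_j^\ell$ is nonempty since $Z_j \subseteq S_j^\ell$, and the images remain pairwise disjoint as restrictions of disjoint images. The only condition that requires more than bookkeeping is the boundary condition in \cref{item:h_bar_j}. Given $u \in \border{j}{\mathcal{P}^\ell}$, I pick a witnessing neighbor $y \in N_H(u) \setminus S_j^\ell$. Either $y \notin S_j$, in which case $u$ is already in $\border{j}{\mathcal{P}}$ and $\overline{h}_j(u) \in X_t$; or $y \in S_j \setminus S_j^\ell$, so $\overline{h}_j(y) \in V_{t_{3-\ell}} \setminus X_t$, and the edge $\overline{h}_j(u)\overline{h}_j(y)$ in $G$ forces $\overline{h}_j(u) \in X_t$ by the no-cross-edge property.

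The type-level statement is then where the real work sits. Let $K^\ell$ be the type of $\mathcal{P}^\ell$ obtained from the recipe in \cref{item:tp_part,item:tp_phi_j,item:tp_Z_j}. Since $\partition^1 = \partition^2 = \partition$ and $Z^1_j = Z^2_j = Z_j$, the first two compatibility conditions of \cref{definition:compatible_types} and the third (agreement on $Z_j$) hold by construction. A fourfold case analysis on $u \in V(H)$, distinguishing $u \in Z_j$ (both imprints equal $\phi_j(u) \in X_t$), $u \in S_j^\ell \setminus Z_j$ (so $\phi^\ell_j(u) = \downvertex$ while $\phi^{3-\ell}_j(u) = \upvertex$), and $u \notin S_j$ (both imprints equal $\upvertex$), then shows simultaneously that the forbidden pair $\{\downvertex,\downvertex\}$ never arises and that $\phi^1_j \oplus \phi^2_j$ coincides with $\phi_j$ via \cref{eq:phi_oplus}. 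Hence $K^1 \oplus K^2 = K$, concluding the proof. The main obstacle is precisely this fourfold case analysis, together with the observation that in the $S_j^\ell \setminus Z_j$ case the ``other'' imprint is forced to be $\upvertex$ rather than $\downvertex$; everything else is a direct appeal to the tree decomposition structure.
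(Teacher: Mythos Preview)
Your proof is correct and follows essentially the same approach as the paper: split each partial homomorphism $\overline{h}_j$ by restricting to $(\overline{h}_j)^{-1}(V_{t_\ell})$, which coincides with your $S_j^\ell = Z_j \cup \overline{h}_j^{-1}(V_{t_\ell}\setminus X_t)$, and then verify compatibility and $K^1\oplus K^2=K$ by the same case analysis on where $u$ is mapped. Your treatment of the border condition is in fact slightly more careful than the paper's, which asserts $\border{j}{\mathcal{P}} = \border{j}{Q^\ell}$ (not literally true when $u$ has a neighbor in $S_j\setminus S_j^\ell$, though the needed conclusion still follows by exactly your case split).
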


\begin{proof}
	Let $\mathcal{P} = \left( h_1, \ldots, h_p,
	S_1, \overline{h}_1, \ldots, S_q, \overline{h}_q
	\right)$ be a witness for $A[t,K]$ such that $A[t,K] = \numfullcopies{ \mathcal{P} 
	}$.
	For each $1 \leq i \leq p$, it holds that either $\im(h_i) \subseteq \left( V_{t_1} \setminus V_{t_2} \right) $
	or $\im(h_i) \subseteq \left( V_{t_2} \setminus V_{t_1} \right)$.
	Otherwise, since there can be no edges between vertices in $ V_{t_1} \setminus V_{t_2} $
	and $ V_{t_2} \setminus V_{t_1} $, it holds that $\im(h_i) \cap X_t \neq \emptyset$, which is a contradiction
	to the fact that $h_i$ is a full copy of $H$ in $\mathcal{P}$.
	Hence we can define
	\begin{equation*}
		B_1 \coloneqq \Bigl\{ h_i \mid 1 \leq i \leq p,\,  \im(h_i) \subseteq \left( V_{t_1} \setminus V_{t_2} \right)  \Bigr\} = \{h^{1}_1, \ldots, h^{1}_{p_1}\} 
	\end{equation*}
	and $B_2 \coloneqq \Bigl(  \{h_1, \ldots, h_p\}  \setminus B_1  \Bigr) = \{h^{2}_1, \ldots, h^{2}_{p_2}\} $
	where $p = p_1 + p_2$.

	Next, for each $\ell \in \{1,2\}$ and $i \in [q]$, we define
	\begin{align*}
		\eta^{\ell}_i &\coloneqq  \left( \overline{h}_i \right)^{-1}\left( V_{t_\ell} \right), \text{ and}\\
		\alpha^{\ell}_i &\coloneqq \funcrest{\overline{h}_i}{\eta^{\ell}_i}.
	\end{align*}
	Note that by \cref{definition:part_packing_gen_graph}, $(\overline{h}_i)^{-1}(X_t) \neq \emptyset$,
	therefore $ \Bigl( \left( \overline{h}_i \right)^{-1}\left( V_{t_\ell} \right) \Bigr) \supseteq \Bigl( (\overline{h}_i)^{-1}(X_t) \Bigr) \neq \emptyset$.
	Hence $\eta^{\ell}_i$ is not an empty set.
	Similarly, for $\ell \in \{1,2\}$, we define the partial packing
	\begin{align*}
		Q^{\ell} &\coloneqq \left( h^{\ell}_1,\, \ldots,\, h^{\ell}_{p_1},\, \eta^{\ell}_1,\, \alpha^{\ell}_1,\, \ldots,\, \eta^{\ell}_{q},\, \alpha^{\ell}_{q} \right)
	\end{align*}
	and the type
	\begin{equation}\label{eq:K_1_2_definition}
		K^{\ell} \coloneqq \left( \partition, Z_1, \phi^{\ell}_1, \ldots, Z_{q}, \phi^{\ell}_{q} \right),
	\end{equation}
	where the function $\phi^{\ell}_i$ is defined for $i \in [q]$ as
	
	\begin{equation}\label{eq:phi_j_defn}
		\phi^{\ell}_i(x) \coloneqq \begin{cases}
			\overline{h}_i(x) &\text{if } x \in (\overline{h}_i)^{-1}(X_t)\\
			\downvertex &\text{if } x \in (\overline{h}_i)^{-1}(V_{t_{\ell}} \setminus X_{t})\\			
			\upvertex &\text{if } x \not\in (\overline{h}_i)^{-1}(V_{t_{\ell}}),\text{ i.e. } x \not\in \eta_i^{\ell}. 
		\end{cases}
	\end{equation}
	for $x \in V(H)$.

	Next, we prove that $K^1,K^2$ are compatible types, and that $Q^1$ and $Q^2$
	are witnesses for $A[t_1, K^1]$ and $A[t_2, K^2]$, respectively.

	\begin{claim}
		It holds that $K^1 \in \generaltype{t_1}$ and $K^2 \in \generaltype{t_2}$.
	\end{claim}

	\begin{claimproof}
		For the sake of presentation, we will prove that $K^1 \in \generaltype{t_1}$,
		however the same arguments work for proving $K^2 \in \generaltype{t_2}$ as well.
		Since the function $\partition$ and the sets $\{Z_i\}_{i \in [q]}$ in $K_1$
		are inherited from $K$, and $K \in \generaltype{t}$,
		the first two conditions in \cref{definition:type_gen_graph}
		hold for $K_1$ as well.
		To demonstrate that the third condition in \cref{definition:type_gen_graph} holds,
		fix an $i \in [q]$.
		Observe that $\phi^{1}_i$ and $\phi_i$ agree on
		$(\overline{h}_i)^{-1}(X_t)$, because for $x \in (\overline{h}_i)^{-1}(X_t)$
		we have $\phi_i(x) = \overline{h}_i(x)$ by \cref{eq:phi_j_packing_to_type_arb}
		and $\phi^{1}_i(x) = \overline{h}_i(x)$ by definition.
		In particular, this implies that
		\begin{equation}\label{eq:phi_restr_H_i}
			\funcrest{\phi^{1}_i}{Z_i} = \funcrest{\phi_i}{Z_i},
		\end{equation}
		which is an injective homomorphism from $H[Z_i]$ to $G[X_t(i)] = G[X_{t_1}(i)]$.

		Finally, define the sets
		\begin{align*}
			S_1 &\coloneqq  \left( \phi^{1}_i \right)^{-1}(X_{t_1}),\\
			S_2 &\coloneqq \Bigl( \phi^{1}_i \Bigr)^{-1}\Bigl( \{\downvertex\}  \Bigr) = \biggl(\Bigl( \overline{h}_i \Bigr)^{-1}\left( V_{t_1} \setminus X_{t_1} \right)\biggr) \subseteq \phi^{-1}_i\left( \{\downvertex\}  \right), \\
			S_3 &\coloneqq \Bigl( \phi^{1}_i \Bigr)^{-1}\Bigl( \{\upvertex\}  \Bigr) = \biggl( V(H) \setminus \Bigl[ (\overline{h}_i)^{-1}(V_{t_1})\Bigr] \biggr) \\
			    &=  \Biggl( \biggl( V(H) \setminus \left( \overline{h}_i \right) ^{-1}(V_t) \biggr) \cup \left( \overline{h}_i \right) ^{-1}\left(  V_{t_2} \setminus X_{t_2} \right)  \Biggr)\\
			    &= \phi_i^{-1}\Bigl( \{\upvertex\}  \Bigr) \cup \left( \overline{h}_i \right) ^{-1}\Bigl(  V_{t_2} \setminus X_{t_2} \Bigr).
		\end{align*}
		Note that $\left( S_1 \cup S_2 \cup S_3 \right) = V(H)$.
		The third condition in \cref{definition:type_gen_graph} is equivalent to
		showing that $S_1$ is an $(S_2, S_3)$ separator in $H$.
		Observe that there are no edges between $S_2 \subseteq \phi^{-1}_i\left( \{\downvertex\}  \right)$
		and $\phi_i^{-1}\left( \{\upvertex\}  \right)$, because $\phi_i$
		already satisfies the third condition in \cref{definition:type_gen_graph}.
		Moreover, there are no edges between $S_2 = \Bigl( \left( \overline{h}_i \right)^{-1}\left( V_{t_1} \setminus X_{t_1} \right)\Bigr)$ and $\left( \overline{h}_i \right) ^{-1}\Bigl(  V_{t_2} \setminus X_{t_2} \Bigr)$,
		because otherwise there would be an edge between $G[V_{t_2} \setminus X_{t}]$
		and $G[V_{t_1} \setminus X_t]$ which follows from $\overline{h}_i$ being an
		injective homomorphism. However, this is a contradiction to the fact that 
		$t_1,t_2$ are children of a join node $t$. This implies that there exist
		no edges between $S_2$ and $S_3$, and hence \cref{item:phi_i} in
		\cref{definition:type_gen_graph} holds for $K^{1}$. Therefore $K^{1} \in \generaltype{t_1}$.
	\end{claimproof}

	\begin{claim}
		It holds that $K^{1}$ and $K^{2}$ are compatible.
	\end{claim}

	\begin{claimproof}
		We prove the claim by going over the conditions in \cref{definition:compatible_types}.
		The first two
		conditions hold by the definition of $K^1$ and $K^2$.

		To prove that $\phi^{1}_i$ and $\phi^{2}_i$ agree on $Z_i$, let $x \in Z_i$.
		Since $\mathcal{P}$ has type $K$, it holds that
		\begin{equation*}
			Z_i = \left( \overline{h}_i\right)^{-1}\left( X_t \right).
		\end{equation*}

		Therefore,
		by definition of $\phi^{1}_i$ and $\phi^{2}_i$, we get
		\begin{equation*}
			\phi^{1}_i(x) = \overline{h}_i(x) = \phi^{2}_i(x),
		\end{equation*}
		and hence the third condition holds.

		Finally, to prove the last condition, suppose that $\phi^{1}_i(x) = \phi^{2}_i(x) = \,\downvertex\,$
		for some $i \in [q]$ and $x \in V(H)$. Then, we have $x \in (\overline{h}_i)^{-1}\left( V_{t_1} \setminus X_{t_1} \right)$ and $x \in (\overline{h}_i)^{-1}\left( V_{t_2} \setminus X_{t_2} \right)$, which implies
		\begin{equation*}
			\overline{h}_i(x) \in \biggl(\Bigl( V_{t_1} \setminus X_t \Bigr) \cap \Bigl( V_{t_2} \setminus X_t \Bigr)\biggr) = \emptyset
		\end{equation*}
		which is a contradiction. Hence the last condition in \cref{definition:compatible_types}
		holds as well, and therefore $K^1$ and $K^2$ are compatible.
	\end{claimproof}
	Since $K^1$ and $K^2$ are compatible, $K^{1} \oplus K^{2}$ is well-defined.
	Next, we show that $K^{1} \oplus K^{2}$ is equal to $K$, as required.
	\begin{claim}\label{claim:K_12_compatible}
		It holds that
		\begin{equation*}
			K = K^1 \oplus K^2.
		\end{equation*}
	\end{claim}

	\begin{claimproof}
		By the definition of $K^{1}$ and $K^{2}$, the claim is equivalent
		to showing that $\phi_i = \left( \phi^{1}_i \oplus \phi^{2}_i \right) $
		for all $1 \leq i \leq k$. Therefore we let $1 \leq i \leq q$, $u \in V(H)$ and show that 
		$\phi_i(u) = \left( \phi^{1}_i \oplus \phi^{2}_i \right)(u)$.

		If $\phi_i(u) = \upvertex$, then $u \not\in  \left( \overline{h}_i \right)^{-1}(V_t)$
		by \cref{eq:phi_j_packing_to_type_arb}, which implies that
		$u \not\in \left( \overline{h}_i \right)^{-1}\left( V_{t_1} \right)$
		and $u \not\in \left( \overline{h}_i \right)^{-1}\left( V_{t_2} \right)$.
		This in turn implies that $\phi^{1}_i(u) = \phi^{2}_i(u) = \,\upvertex\,$
		by \cref{eq:phi_j_defn}, and $\left( \phi^{1}_i \oplus \phi^{2}_i \right)(u) = \,\upvertex\, = \phi_i(u)$.

		If $\phi_i(u) = \,\downvertex\,$, then it holds that $u \in \left(
		\overline{h}_i \right)^{-1}\left( V_t \setminus X_t \right) $
		by \cref{eq:phi_j_packing_to_type_arb}. This implies that either
		$u \in \left(
		\overline{h}_i \right)^{-1}\left( V_{t_1} \setminus X_{t_1} \right)$
		or $u \in \left(
		\overline{h}_i \right)^{-1}\left( V_{t_2} \setminus X_{t_2} \right)$,
		since $V_t \setminus X_t = \left( V_{t_1} \setminus X_t \right) \cup \left( V_{t_2} \setminus X_t \right)$
		and $\left( V_{t_1} \setminus X_t \right) \cap \left( V_{t_2} \setminus X_t \right) = \emptyset$.
		Therefore, we have either $\phi^{1}_i(u) = \,\downvertex\,$ or $\phi^{2}_i(u) = \,\downvertex\,$.
		By \cref{eq:phi_oplus}, we have that $\left( \phi^{1}_i \oplus \phi^{2}_i \right)(u) = \downvertex = \phi_i(u)$.

		Finally, if $\phi_i(u) = x \in X_t$, then $u \in \left( \overline{h}_i \right)^{-1}(X_t)$
		and $\phi^{1}_i(u) = \phi^{2}_i(u) = x$ by \cref{eq:phi_j_defn}.
		Once again, we have that $\left( \phi^{1}_i \oplus \phi^{2}_i \right)(u) = x = \phi_i(u)$.
		Therefore, $\left( \phi^{1}_i \oplus \phi^{2}_i \right) = \phi_i$ and $K = K^{1} \oplus K^{2}$.
		This proves the claim.
	\end{claimproof}
		
	Next step is to show that $Q^{1}$ and $Q^{2}$ are valid partial
	packings for $t_1$ and $t_2$, respectively.

	\begin{claim}
		It holds that $Q^{\ell}$ is a valid partial packing for the node $t_\ell$
		for $\ell \in \{1,2\}$.
	\end{claim}

	\begin{claimproof}
		For clarity of presentation, we will demonstrate the proof for $\ell = 1$.
		The same reasoning can then be applied to the case of $\ell = 2$.

		By definition, $\im\left( h^{1}_i \right) \subseteq V_{t_1} \setminus X_t$
		for $i \in [p_1]$, and $h_1$ is a full copy of $H$.
		For each $1 \leq i \leq q$, by definition, $\eta^{1}_i$ is a non-empty subset
		of $V(H)$, and $\alpha^{1}_i$ is an injective homomorphism from
		$H[\eta^{1}_i]$ to $G_{t_1}$, since $\alpha^{1}_i$ is a restriction of $\overline{h}_i$.
		Moreover, $\left( \alpha^{1}_i \cap X_t \right) \subseteq \left( \im( \overline{h}_i) \cap X_t \right) = \emptyset$.
		Since the functions in $\{\alpha^{1}_i\}_{1 \leq i \leq q}$
		are restrictions of the functions in $\{\overline{h}_i\}_{1 \leq i \leq q}$, and
		$\{h^{1}_i\}_{1 \leq i \leq q} \subseteq \{h_i\}_{1 \leq i \leq q}$
		the images of those functions are disjoint.

		Finally, observe that we have $\border{i}{\mathcal{P}} = \border{i}{Q^{1}_i}$.
		Therefore, for $u \in \border{i}{Q^{1}_i} = \border{i}{\mathcal{P}}$,
		we have that $\overline{h}_i(u) \in X_t$. Then $u \in (\overline{h}_i)^{-1}\left( V_{t_1} \right)$
		and by definition of $\alpha^{1}_i$ we have $\alpha^{1}_i(u) = \overline{h}_i(u) \in X_t$.
		Therefore, $Q^{1}$ is a valid partial packing for the node $t_1$.
	\end{claimproof}

	\begin{claim}\label{claim:Q_i_witness}
		$Q^{\ell}$ is a witness for $A[t_\ell, K_\ell]$ for $\ell \in \{1,2\}$.
	\end{claim}

	\begin{claimproof}
		We will prove the claim for the case $\ell = 1$ for the sake of
		presentation, as the argument for $\ell = 2$ case follows
		analogously.

		First, let us focus on \cref{item:tp_part}.
		Let $x \in X_t$ and suppose that $\partition(x) = 0$.
		This implies that for $j \in [1,q]$ we have that
		$x \not\in \im(\overline{h}_j)$.
		Since $\alpha^1_{j}$ is a restriction of $\overline{h}_j$,
		for $j \in [1,q]$ we immediately get that $x \not\in \im\left( \alpha^1_{j} \right) $.
		Hence \cref{item:tp_part} holds in this case.
		Now suppose that $j = \partition(x) \in [1,q]$. Note that
		\begin{equation*}
			\im(\overline{h}_j) \cap X_t = \im(\alpha^{1}) \cap X_t.
		\end{equation*}
		Since $x \in X_t$, we get that
		\begin{equation*}
			x \in \im(\overline{h}_j) \implies x \in \im(\alpha^{1}_j).
		\end{equation*}
		Hence \cref{item:tp_part} holds in this case as well.

		Now let us focus on \cref{item:tp_phi_j}.
		For $1 \leq j \leq q$ and $u \in V(H)$, this translates to

		\begin{equation}\label{eq:phi_j_prove}
			\phi^{1}_j(u) \coloneqq \begin{cases}
				\alpha^{1}_j(u) &\text{if } u \in (\alpha^{1}_j)^{-1}(X_t)\\
				\downvertex &\text{if } u \in (\alpha^{1}_j)^{-1}(V_{t_1} \setminus X_{t})\\			
				\upvertex &\text{if } u \not\in (\alpha^{1}_j)^{-1}(V_{t_1}).
			\end{cases}
		\end{equation}

		Since $\alpha^{1}_j$ is a restriction of the function $\overline{h}_j$,
		it turns out \cref{eq:phi_j_prove} holds immediately by definition
		of $\phi^{1}_j$ in \cref{eq:phi_j_defn}. Hence \cref{item:tp_phi_j}
		holds as well.

		Finally, to show that \cref{item:tp_Z_j} holds, observe that
		\begin{equation*}
			 Z_i  = \phi^{-1}_i(X_t) = \left( \overline{h}_i \right) ^{-1}(X_t) = \left( \phi^{1}_i \right)^{-1}(X_t).
		\end{equation*}
	\end{claimproof}

	By \cref{claim:K_12_compatible,claim:Q_i_witness}, and the fact that
	$p = p_1 + p_2$, we get that
	\begin{equation}\label{equation:join_ineq_1}
		\begin{aligned}
		A[t,K] = \numfullcopies{\mathcal{P} }  = p = (p_1 + p_2) &= \left( \numfullcopies{Q^{1}} + \numfullcopies{Q^{2}} \right)\\
									    &\leq A[t_1, K^1] + A[t_2,K^2]\\
									    &\leq \Omega(k,T).	
		\end{aligned}
	\end{equation}
	Therefore the lemma holds.
\end{proof}

\begin{lemma}\label{lemma:join_arb_graph_ineq_2}
	Let $t \in \mathcal{T}$ be a join node and $K = \left( \partition, Z_1, \phi_1, \ldots, Z_q, \phi_q \right)  \in \generaltype{t}$. Then we have that
	\begin{equation*}
		A[t,K] \geq \Omega(t,K).
	\end{equation*}
\end{lemma}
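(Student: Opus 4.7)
The plan is to take arbitrary compatible types
$K^{1} = (\partition, Z_1, \phi^{1}_1, \ldots, Z_q, \phi^{1}_q) \in \generaltype{t_1}$
and $K^{2} = (\partition, Z_1, \phi^{2}_1, \ldots, Z_q, \phi^{2}_q) \in \generaltype{t_2}$
with $K = K^{1} \oplus K^{2}$, together with witnesses
$Q^{\ell} = (b^{\ell}_1, \ldots, b^{\ell}_{p_\ell}, \eta^{\ell}_1, \alpha^{\ell}_1, \ldots, \eta^{\ell}_q, \alpha^{\ell}_q)$
for $A[t_\ell, K^{\ell}]$ with $\numfullcopies{Q^{\ell}} = A[t_\ell, K^{\ell}]$ for $\ell \in \{1,2\}$, and to
stitch them into a single partial packing $\mathcal{P}$ in $G_t$ of type $K$ whose number of full copies is
$p_1 + p_2$. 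Since $K^{1}$ and $K^{2}$ are arbitrary, this yields $A[t,K] \geq \Omega(t,K)$.

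The construction is the natural one: for each $j \in [q]$, set
$S_j \coloneqq \eta^{1}_j \cup \eta^{2}_j$ and define $\overline{h}_j \from H[S_j] \to G_t$ by
$\overline{h}_j\restriction_{\eta^{1}_j} = \alpha^{1}_j$ and
$\overline{h}_j\restriction_{\eta^{2}_j} = \alpha^{2}_j$, then take
$\mathcal{P} \coloneqq (b^{1}_1, \ldots, b^{1}_{p_1}, b^{2}_1, \ldots, b^{2}_{p_2}, S_1, \overline{h}_1, \ldots, S_q, \overline{h}_q)$.
To see that $\overline{h}_j$ is well-defined, I would observe that
$(\alpha^{\ell}_j)^{-1}(X_{t_\ell}) = Z_j$ by \cref{item:tp_Z_j} applied to $(Q^{\ell}, K^{\ell})$, and
that the fourth compatibility clause of \cref{definition:compatible_types} forces
$\eta^{1}_j \cap \eta^{2}_j = Z_j$ (a vertex of $\eta^{1}_j \setminus Z_j$ has $\phi^{1}_j(u) = \downvertex$,
so compatibility excludes $\phi^{2}_j(u) = \downvertex$, and $\phi^{2}_j(u) \in X_t$ would force $u \in Z_j$).
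On $Z_j$ the two functions agree, because by the third compatibility clause
$\alpha^{1}_j(u) = \phi^{1}_j(u) = \phi^{2}_j(u) = \alpha^{2}_j(u)$ for $u \in Z_j$.

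The main obstacle is showing that $\overline{h}_j$ is an \emph{injective homomorphism} into $G_t$.
For injectivity, I would use that each $\alpha^{\ell}_j$ is injective and that
$\im(\alpha^{1}_j) \cap \im(\alpha^{2}_j) \subseteq X_t$ (since
$V_{t_1} \cap V_{t_2} = X_t$), combined with the fact that the two functions already
agree on their common preimage $Z_j$. For the homomorphism property, the delicate case is an edge
of $H[S_j]$ with one endpoint $u \in \eta^{1}_j \setminus Z_j$ and the other
$v \in \eta^{2}_j \setminus Z_j$: such an edge would require an edge in $G$ between
$V_{t_1} \setminus X_t$ and $V_{t_2} \setminus X_t$, which is forbidden at a join bag.
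This case is ruled out directly by \cref{item:phi_i} applied to $K^{1}$, since
$\phi^{1}_j(u) = \downvertex$ while $\phi^{1}_j(v) = \upvertex$ (because $v \notin \eta^{1}_j$
by the previous paragraph). The remaining conditions of
\cref{definition:part_packing_gen_graph} follow from the analogous properties of $Q^{1}$ and $Q^{2}$
and the fact that full copies of $Q^{1}$ live in $V_{t_1} \setminus X_t$ and those of $Q^{2}$
in $V_{t_2} \setminus X_t$, so all images are pairwise disjoint. Finally, I would verify that
$\mathcal{P}$ has type $K$ by computing its imprint component-wise: on $Z_j$ it inherits
$\phi^{1}_j = \phi^{2}_j$; on a vertex mapped into $V_{t_\ell} \setminus X_t$ the
value is $\downvertex$; and if $u \notin S_j$ both $\phi^{1}_j(u)$ and $\phi^{2}_j(u)$ equal
$\upvertex$. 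By \eqref{eq:phi_oplus} this matches $\phi^{1}_j \oplus \phi^{2}_j = \phi_j$, so
$\mathcal{P}$ is a witness for $A[t,K]$ and
$A[t,K] \geq \numfullcopies{\mathcal{P}} = p_1 + p_2 = A[t_1, K^{1}] + A[t_2, K^{2}]$, as required.
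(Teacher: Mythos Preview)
Your proposal is correct and follows the same strategy as the paper: pick maximizing compatible types $K^{1},K^{2}$ with witnesses $Q^{1},Q^{2}$, glue the partial maps into $\overline{h}_j$ on $S_j=\eta^{1}_j\cup\eta^{2}_j$, and verify that the resulting $\mathcal{P}$ is a partial packing in $G_t$ of type $K$. If anything, your treatment of well-definedness, injectivity, and the homomorphism check (via the $\upvertex/\downvertex$ separation clause of \cref{item:phi_i}) is more explicit than the paper's, which essentially asserts these points.
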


\begin{proof}
	Let
	\begin{align*}
		K^1 &= \left( \partition^{1}, Z_1, \phi^{1}_1, \ldots, Z_q, \phi^{1}_q \right) \in \generaltype{t_1},\\
		K^2 &= \left( \partition^{2}, Z_1, \phi^{2}_1, \ldots, Z_q, \phi^{2}_q \right) \in \generaltype{t_2}
	\end{align*}
	be two compatible types
	such that $K = K^{1} \oplus K^{2}$ and
	\begin{equation*}
		\Omega(t,K) =  A[t_1, K^{1}] + A[t_2, K^{2}].
	\end{equation*}
	Moreover, let 
	\begin{align*}
		Q_1 &= \left( h^{1}_1,\, \ldots,\, h^{1}_{p_1},\, \eta^{1}_1,\, \alpha^{1}_1,\, \ldots,\, \eta^{1}_q,\, \alpha^{1}_{q} \right),\\
		Q_2 &= \left( h^{2}_1,\, \ldots,\, h^{2}_{p_2},\, \eta^{2}_1,\, \alpha^{2}_1,\, \ldots,\, \eta^{2}_q,\, \alpha^{2}_{q} \right)
	\end{align*}
	be witnesses for $A[t_1, K^{1}]$ and $A[t_2, K^{2}]$ respectively.
	Since $Q^{1}$ has type $K^{1}$ and $Q^{2}$ has type $K^{2}$, we have
	\begin{equation*}
		(\alpha^{1}_j)^{-1}(X_t) = (\alpha^{2}_j)^{-1}(X_t) = Z_j.
	\end{equation*}
	For each $1 \leq j \leq q$, define
	\begin{equation*}
		\overline{h}_j \from \Bigl( \left( \alpha^{1}_j \right) ^{-1}\left( V_{t_1} \right)  \cup \left( \alpha^{2}_j \right) ^{-1}\left( V_{t_2} \right) \Bigr) \to V_t
	\end{equation*}
	by
	\begin{equation*}
		\overline{h}_j(u) \coloneqq \begin{cases}
			\alpha^{1}_j(u) &\text{if } u \in (\alpha^{1}_j)^{-1}(V_{t_1})\\
			\alpha^{2}_j(u) &\text{if } u \in (\alpha^{2}_j)^{-1}(V_{t_2}). 
		\end{cases}
	\end{equation*}
	
	Note that $\overline{h}_j(u)$ is well defined because if $u \in
	(\alpha^{1}_j)^{-1}(V_{t_1}) \cap (\alpha^{2}_j)^{-1}(V_{t_2})$, then
	$\alpha^{1}_j(u) \in X_t$ because otherwise $\{\alpha^{1}_j(u),
	\alpha^{2}_j(u)\} = \{\downvertex\}$, which leads to a contradiction
	by \cref{eq:phi_oplus}. Hence, we get that
	\begin{equation*}
		\alpha^{1}_j(u) = \phi^{1}_j(u) = \phi^{2}_j(u) = \alpha^{2}_j(u).
	\end{equation*}
	For $j \in [q]$, we also define
	\begin{equation}\label{eq:S_i_defn}
		S_i \coloneqq \eta^{1}_i \cup \eta^{2}_i,
	\end{equation}
	and let
	\begin{equation*}
		\mathcal{P} \coloneqq \left( h^{1}_1, \ldots, h^{1}_{p_1}, h^{2}_1, \ldots, h^{2}_{p_2}, S_1, \overline{h}_1, , \ldots, S_q, \overline{h}_q \right).
	\end{equation*}
	Next, we will show that $\mathcal{P}$ is a valid partial packing and has type $K$.
	\begin{claim}\label{claim:join_Pcal_valid_arb}
		It holds that $\mathcal{P}$ is a valid partial packing for the node $t$.
	\end{claim}

	\begin{claimproof}
		To prove the claim, we show that $\mathcal{P}$ satisfies
		\cref{item:h_i_inj,item:h_bar_j,item:im_disj} in \cref{definition:part_packing_gen_graph}.
		First, it is easy to verify that $\{h^{1}_i\}_{i \in [p_1]}$ and $\{h^{2}_i\}_{i \in [p_2]}$
		satisfies \cref{item:h_i_inj}. To show that $\mathcal{P}$ satisfies
		\cref{item:h_bar_j}, observe that $\eta^{1}_i$ and $\eta^{2}_i$
		are both nonempty subsets of $V(H)$, for $i \in [q]$.
		Moreover, by definition, $\overline{h}_i$ is an injective
		homomorphism from
		\begin{equation*}
			\left( \dom{\alpha^{1}_i} \cup \dom{\alpha^{2}_i} \right) = \left( \eta^{1}_i \cup \eta^{2}_i \right) =
			S_i
		\end{equation*}
		to $\left( G_{t_1} \cup G_{t_2} \right) \subseteq G_t$.

		Next, let $u \in \border{i}{\mathcal{P}}$ for some $i \in [q]$.
		Then, there exists $\ell \in \{1,2\}$ such that $u \in \border{i}{Q_1}$,
		which further implies that $\overline{h}_i(u) = \alpha^{\ell}_{i}(u) \in X_t$.
		Finally, it is easy to verify that $\mathcal{P}$ satisfies \cref{item:im_disj}.
		Therefore, $\mathcal{P}$ is a valid partial packing for the node $t$.
	\end{claimproof}

	\begin{claim}\label{claim:join_Pcal_type_K}
		It holds that $\mathcal{P}$ has type $K$.
	\end{claim}

	\begin{claimproof}
		To prove the claim, we show that $(\mathcal{P},K)$ satisfies
		\cref{item:tp_part,item:tp_phi_j,item:tp_Z_j}.
		Let $x \in X_t$ such that $\partition(x) = 0$.
		Then, since $Q^{1}$ has type $K^{1}$, $Q^{2}$ has type $K^{2}$
		and $\partition(x) = \partition^{1}(x) = \partition^{2}(x) = 0$,
		for all $j \in [q]$ it holds that $x \not\in  \Bigl( \im(\overline{h}^{1}_j) \cup \im(\overline{h}^{2}_j) \Bigr) = \im(\alpha_j)$.
		Similarly, if $\partition(x) = \partition^{1}(x) = \partition^{2}(x) = j \neq 0$,
		then $x \in \Bigl( \im(\alpha^{1}_j) \cup \im(\alpha^{2}_j) \Bigr)$, hence
		$x \in \im(\overline{h}_j)$. Thus, $\left( \mathcal{P}, K \right)$
		satisfies \cref{item:tp_part}.

		Next, we show that $\left( \mathcal{P}, K \right)$ satisfies \cref{item:tp_phi_j}.
		\begin{enumerate}
			\item Let $1 \leq i \leq q$ and $u \in
				\overline{h}_i^{-1}\left( V_t \setminus X_t
				\right)$. This implies that either $u \in
				\overline{h}_i^{-1}\left( V_{t_1} \setminus
				X_{t_1} \right) $ or $u \in
				\overline{h}_i^{-1}\left( V_{t_2} \setminus
				X_{t_2} \right)$.
				Therefore, we have $\phi^{1}_i(u) = \,\downvertex\,$ or
				$\phi^{2}_i(u) = \,\downvertex\,$,
				respectively. However, since $K = K^{1} \oplus
				K^{2}$, by \cref{eq:phi_oplus} we get that
				$\phi_i(u) = \,\downvertex\,$.
			\item  If $u \not\in  S_i$, then
				$u \not\in  \eta^{1}_i$ and $u \not\in 
				\eta^{2}_i$ by \cref{eq:S_i_defn}. Then, since
				$Q^{1}$ and $Q^{2}$ are witnesses for $K^{1}$
				and $K^{2}$ respectively, and by
				\cref{eq:phi_j_packing_to_type_arb}, we have
				that $\alpha^{1}_i(u) = \,\upvertex\,$ and
				$\alpha^{2}_i(u) = \,\upvertex\,$. Therefore,
				\cref{eq:phi_oplus} implies that $\phi_i(u) =
				\,\upvertex\,$.
			\item  If $u \in \overline{h}_j^{-1}\left( X_t \right)
				$, then
				\begin{equation*}
					\phi_j(u) = \alpha^{1}_j(u) = \alpha^{2}_j(u) = \overline{h}_j(u),
				\end{equation*}
		\end{enumerate}
		Therefore, $(\mathcal{P}, K)$ satisfies \cref{item:tp_phi_j}.

		Finally, we have $Z_j = (\alpha^{1}_j)^{-1}(X_t) = (\phi^{1}_j)^{-1}(X_t) = \phi_j^{-1}(X_t)$.
		Therefore, $(\mathcal{P}, K)$ satisfies \cref{item:tp_Z_j}
		and hence $\mathcal{P}$ has type $K$.
	\end{claimproof}

	By \cref{claim:join_Pcal_valid_arb,claim:join_Pcal_type_K} we get that
	\begin{equation}\label{equation:join_ineq_2}
		\Omega(t,K) = \Bigl( A[t_1, K^{1}] + A[t_2, K^{2}] \Bigr)  = \Bigl( \numfullcopies{Q^{1}} + \numfullcopies{Q^{2}} \Bigr)  = \bigl(p_1 + p_2\bigr) = \numfullcopies{\mathcal{P}} \leq A[t,K].		
	\end{equation}
	This proves the lemma.
\end{proof}

The proof of \cref{lemma:join_arb_graph} follows from \cref{lemma:join_arb_graph_ineq_1,lemma:join_arb_graph_ineq_2}.

\subsection{Correctness and Running Time}
	Let $\mathcal{A}$ denote the dynamic programming algorithm that fills the table $A[t,f]$
	for each $t \in \mathcal{T}$ and $f \in \cliquetype{t}$,
	following the update rules in \cref{sec:update_dp_clique_packing}.
	Next, we give the proof of \cref{theorem:graph_packing_arbitrary_algo}.
	
	\begin{proof}[Proof of \cref{theorem:graph_packing_arbitrary_algo}]
		Let $G$ be a graph and $\mathcal{T}$ be a tree decomposition of $G$
		of width at most $w$.
		We can compute a nice tree decomposition of $G$
		of the same width as $\mathcal{T}$ with $\mathcal{O}\left(w \cdot n\right)$
		nodes, in time $\mathcal{O}\left(w^{2} \cdot n\right)$ \cite{cyganParameterizedAlgorithms2015}.
		Therefore without loss of generality we can assume that $\mathcal{T}$
		is a nice tree decomposition.

		The update rules for each node of $\mathcal{T}$, and their correctness
		were described in \cref{section:dp_arb_graph}.
		Recall that for the root node $t_{\text{R}}$  of $\mathcal{T}$ we have
		that $X_{t_\text{R}} = \emptyset$ and $V_{t_{\text{R}}} = G$, therefore
		$A[t_{\text{R}}, \emptyset]$ is the solution for the instance $G$.

		Now recall that for each $t \in \mathcal{T}$, and type
		$K = \left( \partition, Z_1, \phi_1, \ldots, Z_q, \phi_q \right)$
		we have $q \leq \abs{X_t} \leq w$ and $Z_i \subseteq V(H)$,
		hence the number of sets $Z_i$ is at most $2^{\abs{H}}$.
		Moreover, the number of different functions $\phi_i \colon V(H) \to X_t \cup \{\upvertex, \downvertex\}$
		is at most $(w+2)^{\abs{H}}$. Hence the number of all tuples of the form
		$(Z_1, \phi_1, \ldots, Z_w, \phi_w)$ is at most
		\begin{equation*}
			\Bigl(2\cdot(w+2)\Bigr)^{\abs{H} \cdot w} = 2^{\mathcal{O}\left(w \cdot \log(w)\right)}.
		\end{equation*}
		Similarly, the number of functions $\partition \colon X_t \to [0,q]$ is at most
		\begin{equation*}
			(q+1)^{\abs{X_t}} \leq (w+1)^{w} = 2^{\mathcal{O}\left(w \cdot \log(w)\right)}.
		\end{equation*}
		Therefore,
		\begin{equation*}
			\abs{\generaltype{t}} = 2^{\mathcal{O}\left(w \cdot \log(w)\right)}.
		\end{equation*}

		If we let $t$ be a node of the tree decomposition,
		then we can compute the value  $A[t,f]$ for a single $f \in
		\generaltype{t}$ in time $2^{\mathcal{O}\left(w \cdot \log(w)\right)} \cdot n^{\mathcal{O}\left(1\right)}$,
		using the recursive rules introduced
		in the previous section. Hence, we can compute all the
		entries $A[t,f]$ for $f \in \generaltype{t}$ in time $2^{\mathcal{O}\left(w \cdot \log(w)\right)} \cdot n^{\mathcal{O}\left(1\right)}$.
		Since $\mathcal{T}$ has $\mathcal{O}\left(w \cdot n\right)$ nodes, this
		proves the theorem.
	\end{proof}

\bibliographystyle{plain}
\bibliography{bib}

\end{document}